\def\arXiv{1} 
\newcommand{\notarxiv}[1]{foo}
\newcommand{\arxiv}[1]{ba}
	\renewcommand{\arxiv}[1]{#1}%
	\renewcommand{\notarxiv}[1]{}%
	\renewcommand{\arxiv}[1]{}%
	\renewcommand{\notarxiv}[1]{#1}%
\def\@IEEEsectpunct{~~}
\theoremstyle{plain}
\newtheorem{theorem}{Theorem}
\newtheorem{lemma}{Lemma}
\newtheorem{fact}{Fact}
\newtheorem{proposition}{Proposition}
\newtheorem{definition}{Definition}
\newtheorem*{definition*}{Definition}
\theoremstyle{definition}
\newtheorem{remark}{Remark}
\newenvironment{psketch}{%
	\proof}{\endproof}
\newcommand*{\propenum}[1]{%
	\expandafter\@propenum\csname c@#1\endcsname%
}
\newcommand*{\@propenum}[1]{%
	$\ifcase#1\or1\or1'\or2\or3\or42%
	\else\@ctrerr\fi$%

}
\AddEnumerateCounter{\propenum}{\@propenum}{1}
\newcommand{\mc}[1]{\mathcal{#1}}
\newcommand{\norm}[1]{\left\|{#1}\right\|} %
\newcommand{\lone}[1]{\norm{#1}_1} %
\newcommand{\ltwo}[1]{\norm{#1}_2} %
\newcommand{\linf}[1]{\norm{#1}_\infty} %
\newcommand{\lfro}[1]{\left\|{#1}\right\|_{\rm F}} %
\newcommand{\norms}[1]{\|{#1}\|} %
\newcommand{\opnorm}[1]{\norm{#1}_{\mathrm{op}}}  %
\newcommand{\lones}[1]{\norms{#1}_1} %
\newcommand{\ltwos}[1]{{\norms{#1}}_2} %
\newcommand{\linfs}[1]{\norms{#1}_\infty} %
\newcommand{\R}{\mathbb{R}} %
\newcommand{\N}{\mathbb{N}} %
\newcommand{\E}{\mathbb{E}\,} %
\newcommand{\I}{\mathbb{I}} %
\newcommand{\ai}[1][i]{A_{#1:}}
\newcommand{\aj}[1][j]{A_{:#1}}
\newcommand{\nnz}{\mathsf{nnz}}
\newcommand{\InnerLoop}{\mathtt{InnerLoop}}
\newcommand{\InnerLoopApprox}{\mathtt{InnerLoop}}
\newcommand{\OuterLoop}{\mathtt{OuterLoop}}
\newcommand{\AddDense}{\mathtt{AddDense}}
\newcommand{\DenseStep}{\mathtt{DenseStep}}
\newcommand{\AddSparse}{\mathtt{AddSparse}}
\newcommand{\MultSparse}{\mathtt{MultSparse}}
\newcommand{\Scale}{\mathtt{Scale}}
\newcommand{\Norm}{\mathtt{Norm}}
\newcommand{\Get}{\mathtt{Get}}
\newcommand{\SumUp}{\mathtt{UpdateSum}}
\newcommand{\Sample}{\mathtt{Sample}}
\newcommand{\GetSum}{\mathtt{GetSum}}
\newcommand{\GetNorm}{\mathtt{GetNorm}}
\newcommand{\Initialize}{\mathtt{Init}}
\newcommand{\AEM}{\mathtt{ApproxExpMaintainer}}
\newcommand{\AEMS}{\mathtt{AEM}}
\newcommand{\IMS}{\mathtt{IM}}
\newcommand{\IM}{\mathtt{IterateMaintainer}}
\newcommand{\WIMS}{\mathtt{WIM}}
\newcommand{\WIM}{\mathtt{WeightedIterateMaintainer}}
\newcommand{\CIMS}{\mathtt{CIM}}
\newcommand{\CIM}{\mathtt{CenteredIterateMaintainer}}
\let\oldnl\nl%
\newcommand{\nonl}{\renewcommand{\nl}{\let\nl\oldnl}}%
\DeclareMathOperator*{\argmax}{arg\,max}
\DeclareMathOperator*{\argmin}{arg\,min}
\providecommand{\diag}{\mathop{\rm diag}}
\providecommand{\sign}{\mathop{\rm sign}}
\newcommand{\dist}{\mathop{\rm dist}}
\newcommand{\ceil}[1]{\left\lceil{#1}\right\rceil}
\newcommand{\floor}[1]{\left\lfloor{#1}\right\rfloor}
\newcommand{\half}{\frac{1}{2}}
\newcommand{\defeq}{\coloneqq}
\newcommand{\grad}{\nabla}
\newcommand{\xset}{\mathcal{X}}
\newcommand{\yset}{\mathcal{Y}}
\newcommand{\zset}{\mathcal{Z}}
\newcommand{\eps}{\epsilon}
\renewcommand{\O}[1]{O\left( #1 \right)}
\newcommand{\Otils}[1]{\widetilde{O}( #1 )}
\newcommand{\Otil}[1]{\widetilde{O}\left( #1 \right)}
\newcommand{\Otilb}[1]{\widetilde{O}\left( #1 \right)}
\newcommand{\Otilbig}[1]{\widetilde{O}\Big( #1 \Big)}
\renewcommand{\floor}[1]{\lfloor #1 \rfloor}
\newcommand{\innermid}{\nonscript\;\delimsize\vert\nonscript\;}
\newcommand{\activatebar}{%
	\begingroup\lccode`\~=`\|
	\lowercase{\endgroup\let~}\innermid 
	\mathcode`|=\string"8000
}
\newcommand{\inner}[2]{\left<#1,#2\right>}
\newcommand{\inners}[2]{\big<#1,#2\big>}
\NewDocumentCommand{\prox}{ m m 
O{\alpha}}{\mathrm{Prox}_{#1}^{#3}(#2)}
\newcommand{\nA}{\norm{A}_{2\rightarrow\infty} }
\newcommand{\ball}{\mathbb{B}}
\newcommand{\x}{^\mathsf{x}}
\newcommand{\y}{^\mathsf{y}}
\newcommand{\Ex}[1][\xi]{\E}
\newcommand{\ones}{\boldsymbol{1}}
\newcommand{\indic}[1]{\I_{\{#1\}}}
\newcommand{\tg}{\tilde{g}}
\newcommand{\td}{\tilde{\delta}}
\newcommand{\rcs}{\mathsf{rcs}}
\newcommand{\cs}{\mathrm{cs}}
\newcommand{\1}{\mathbf{1}}
\newcommand{\gap}{\mathrm{Gap}}
\newcommand{\vepsi}{\varepsilon_{\textup{inner}}}
\newcommand{\vepsout}{\varepsilon_{\textup{outer}}}
\newcommand{\veps}{\varepsilon}
\newcommand{\tveps}{\tilde{\veps}}
\newcommand{\truncate}{\mathsf{truncate}}
\newcommand{\Lmv}{L_{\mathsf{mv}}}
\newcommand{\Lrc}{L_{\mathsf{rc}}}
\newcommand{\Lco}{L_{\mathsf{co}}}
\newcommand{\ellone}{\ell_1\text{-}\ell_1}
\newcommand{\elltwoone}{\ell_2\text{-}\ell_1}
\newcommand{\elltwo}{\ell_2\text{-}\ell_2}
\newcommand{\lttco}{L_{\mathsf{co}}^{2, 2}}
\newcommand{\looco}{L_{\mathsf{co}}^{1, 1}}
\newcommand{\ltoco}{L_{\mathsf{co}}^{2, 1}}
\newcommand{\ltooco}{L_{\mathsf{co}}^{2, 1, (1)}}
\newcommand{\ltotco}{L_{\mathsf{co}}^{2, 1, (2)}}
\newcommand{\ltohco}{L_{\mathsf{co}}^{2, 1, (3)}}
\newcommand{\lttcop}{\left(L_{\mathsf{co}}^{2, 2}\right)}
\newcommand{\loocop}{\left(L_{\mathsf{co}}^{1, 1}\right)}
\newcommand{\ltocop}{\left(L_{\mathsf{co}}^{2, 1}\right)}
\newcommand{\ltoocop}{\left(L_{\mathsf{co}}^{2, 1, (1)}\right)}
\newcommand{\ScM}[0]{\mathtt{ScaleMaintainer}}
\newcommand{\GlobM}[0]{\AEM}
\newcommand{\UpdateSum}{\mathtt{UpdateSum}}
\newcommand{\Del}{\mathtt{Del}}
\newcommand{\tx}[0]{\tilde{x}}
\newcommand{\bg}[0]{\bar{\delta}}
\newcommand{\bx}[0]{\bar{x}}
\newcommand{\tnnz}{\nnz'}
\newcommand{\lam}{\lambda}
\newcommand{\ltnfrac}{\log(\tfrac{n}{\veps\lam})}
\newcommand{\epsaprx}{\varphi}
\newcommand{\epsscm}{\veps_{\textup{scm}}}
\newcommand{\lamscm}{\lambda_{\textup{scm}}}
\newcommand{\omscm}{\omega_{\textup{scm}}}
\newcommand{\clip}{\mathrm{clip}}
\newcommand{\setup}{($\zset$, $\norm{\cdot}_{\cdot}$, $r$, $\Theta$, 
	$\clip$) } 
\title{Coordinate Methods for Matrix Games}
\author{Yair Carmon ~~~ Yujia Jin ~~~  Aaron Sidford ~~~ Kevin 
Tian\\
	\texttt{\{\href{mailto:yairc@stanford.edu}{yairc},%
		\href{mailto:yujiajin@stanford.edu}{yujiajin},%
		\href{mailto:sidford@stanford.edu}{sidford},%
		\href{mailto:kjtian@stanford.edu}{kjtian}\}@stanford.edu}}
\date{}
\author{
 \IEEEauthorblockN{Yair Carmon\IEEEauthorrefmark{1}\IEEEauthorrefmark{2}, Yujia Jin\IEEEauthorrefmark{2}, Aaron Sidford\IEEEauthorrefmark{2}, Kevin Tian\IEEEauthorrefmark{2}}
\IEEEauthorblockA{
	\IEEEauthorrefmark{1}Tel Aviv University
	\IEEEauthorrefmark{2}Stanford University
	\\
	\texttt{\{yairc, yujiajin, sidford, kjtian\}@stanford.edu}}
}
\begin{document}

\pagenumbering{gobble}
\maketitle

\begin{abstract}%
We develop primal-dual coordinate methods for solving bilinear 
saddle-point problems of the form $\min_{x \in \mathcal{X}} 
\max_{y\in\mathcal{Y}} y^\top A x$ which contain linear programming, 
classification, and regression as special cases. Our methods push existing 
fully stochastic sublinear methods and variance-reduced methods towards 
their limits in terms of per-iteration complexity and sample complexity. We 
obtain nearly-constant per-iteration complexity by designing efficient data 
structures 
leveraging Taylor approximations to the exponential and a 
binomial heap. We improve sample complexity via low-variance gradient 
estimators using dynamic sampling distributions that depend on both the 
iterates and the magnitude of the matrix entries. 

Our runtime bounds improve upon those of existing primal-dual methods 
by a factor depending on \emph{sparsity measures} of the $m$ by $n$ 
matrix $A$. For example, when rows and columns have constant 
$\ell_1/\ell_2$ norm ratios, we offer improvements by a factor of $m+n$ 
in the fully stochastic setting and $\sqrt{m+n}$ in the variance-reduced 
setting. We apply our methods to computational geometry problems, i.e. minimum 
enclosing ball, maximum inscribed ball, and linear regression, and obtain improved complexity bounds.
For linear regression with an elementwise nonnegative 
matrix, our guarantees improve on exact gradient methods by a factor of 
$\sqrt{\nnz(A)/(m+n)}$. %
\end{abstract}

\arxiv{\newpage
\setcounter{tocdepth}{2}
\tableofcontents}

\arxiv{\newpage}
\pagenumbering{arabic}

\section{Introduction}

Bilinear minimax problems of the form 
\begin{equation}\label{eq:problem}
\min_{x\in\xset}\max_{y\in\yset}y^\top A x
~\text{where}~
A\in\R^{m\times n},
\end{equation}
are fundamental to machine learning, economics and 
theoretical 
computer 
science~\citep{MinskyP87,VonNeumannM44,Dantzig53}. We focus on three 
important settings characterized by 
different domain geometries. 
When $\xset$ and $\yset$ are probability simplices---which we call the 
$\ellone$ setting---the problem~\eqref{eq:problem} 
corresponds to a zero-sum matrix game and also to a linear 
program in canonical feasibility form. The $\elltwoone$ setting, 
where  $\xset$ is a Euclidean ball and $\yset$ is a simplex, is useful for 
linear classification (hard-margin support vector machines) as well as 
problems in computational geometry~\cite{ClarksonHW10}. Further, the $\elltwo$ 
setting, where both $\xset$ and $\yset$ are Euclidean balls (with general center), includes linear regression.

Many  problems of practical interest are \emph{sparse}, i.e., the number of nonzero elements in $A$, which we denote by $\nnz$, satisfies $\nnz \ll m 
n$. Examples include: linear programs with constraints involving few 
variables, linear classification with 1-hot-encoded features, and linear 
systems that arise from physical models with local interactions. The 
problem description size $\nnz$ plays a central role in several runtime 
analyses of algorithms for solving the problem~\eqref{eq:problem}. 

However, sparsity is not an entirely satisfactory measure of instance 
complexity: it 
is not continuous in the elements of $A$ and consequently it cannot 
accurately reflect the simplicity of ``nearly sparse'' instances with many 
small (but nonzero) elements. Measures of \emph{numerical sparsity}, such 
as the $\ell_1$ to $\ell_2$ norm ratio, can fill this gap~\cite{GuptaS18}. 
Indeed, many 
problems encountered in practice are numerically sparse. Examples include: 
linear programming constraints of the form $x_1 \ge \frac{1}{n}\sum_i x_i 
$, linear 
classification with neural network activations as features, and linear 
systems arising from physical models with interaction whose strength 
decays with distance. 

Existing bilinear minimax solvers do not exploit the numerical sparsity of 
$A$ and their runtime guarantees do not depend on it---the basic limitation 
of these methods is that they do not directly access the large matrix 
entries, and instead sample the full columns and rows in which they 
occur. 
To overcome this limitation, we propose methods that access $A$ a single 
entry at a time, leverage  
numerical sparsity by accessing larger coordinates more frequently, 
and enjoy runtime guarantees that depend explicitly on 
numerical sparsity measures. For numerically sparse large-scale instances 
our runtimes are substantially better than the previous state-of-the-art. 
Moreover, our runtimes subsume the previous state-of-the-art dependence 
on $\nnz$ and $\rcs$,  the maximum number 
of nonzeros in any row or column.

In addition to proposing algorithms with improved runtimes, we develop two 
techniques that may be of broader interest. First, we design non-uniform 
sampling schemes that minimize regret bounds; we use a general framework 
that unifies the Euclidean and (local norms) simplex geometries, possibly 
facilitating future extension. Second, we build a data structure capable of  
efficiently maintaining and sampling from multiplicative weights iterations (i.e.\ entropic projection) \emph{with a fixed dense component}. This 
data structure overcomes limitations of existing techniques for maintaining 
entropic projections and we believe it may prove effective in other settings  
where such projections appear.

\notarxiv{
\begin{table*}[!ht]
	\centering
	\renewcommand{\arraystretch}{1.75}
	\caption{Dependence on $A$ for different methods in different  
			geometries.} \label{table:L}
	\begin{tabular}{c c c c}
		\toprule
		{\bf } & { $\Lmv$ (matrix-vector) } & { $\Lrc$ (row-column) } 
		& { 
		$\Lco$ 
		(coordinate) } \\
		\midrule
		{\bf $\ell_1$-$\ell_1$} & $\max_{i,j}|A_{ij}|$ & $\max_{i,j}|A_{ij}|$ & 
		$\max\Big\{ \max_{i} \ltwo{\ai}, \max_j \ltwo{\aj} \Big\}$ \\ 
		{\bf $\ell_2$-$\ell_1$} & $\max_{i}\ltwo{\ai}$ & $\max_{i}\ltwo{\ai}$ 
		& 
		$\max\Big\{ \max_{i} \lone{\ai} , \lfro{A}\Big\}^\dagger$ 
	\\
		{\bf $\ell_2$-$\ell_2$} & $\opnorm{A}$ & $\norm{A}_\mathrm{F}$ &  
		$\max\Big\{ \sqrt{\sum_i \lone{\ai}^2 }, \sqrt{\sum_j \lone{\aj}^2 } 
		\Big\}$\\
		\bottomrule
	\end{tabular}

	\vspace{2ex}
	\parbox[left]{0.8\textwidth}{\small 
		\emph{Comments:} $\ai$ and $\aj$ are the $i$th row and $j$th column of $A$, respectively. 
		Numerically sparse instances satisfy $\Lco = 
		O(\Lrc)$. $^\dagger\,$In the 
		$\elltwoone$ setting 
		we can also achieve, via alternative sampling schemes, 
		$\Lco=\Lrc\sqrt{\rcs}$ and $\Lco = 
		\max\{ \max_{i} \lone{\ai} , \sqrt{\max_{i} \lone{\ai}\max_{j} 
			\lone{\aj}}\}$.
	}

\vspace{12pt}

	\centering
	\renewcommand{\arraystretch}{1.25}
	\bgroup
	\everymath{\displaystyle}
	\caption{Comparison of iterative methods for 
			bilinear problems}\label{table:runtimes}
	\begin{tabular}{c c c}	
		\toprule
		{ Method } & { Iteration cost } & {Total runtime} \\
		\midrule
		Exact gradient~\cite{Nemirovski04,Nesterov07} & $O(\nnz)$ & $ 
		\Otilbig{\nnz\cdot{\Lmv}\cdot{\eps^{-1}}}$ 
		\\ 
		Row-column~\cite{GrigoriadisK95,ClarksonHW10,BalamuruganB16} & 
		$O(n+m)$ & %
		$\Otilbig{(m + n) \cdot {\Lrc^2} \cdot{\epsilon^{-2}}}$ \\
		Row-column VR~\cite{BalamuruganB16,CarmonJST19} &  $O(n+m)$
		 & $\Otilbig{\nnz+\sqrt{\nnz\cdot 
		 (m+n)}\cdot{\Lrc}\cdot{\eps^{-1}}}$ 
		 \\
		\midrule
		Sparse row-col (folklore) & 
		$\Otil{\rcs}$ & %
		$\Otilbig{\rcs \cdot {\Lrc^2}\cdot{\epsilon^{-2}}}$ \\
		Sparse row-col VR (Appendix~\ref{app:rcs-vr}) & $\Otil{\rcs}$ & 
		$\Otilbig{\nnz + \sqrt{\nnz\cdot 
				\rcs}\cdot{\Lrc}\cdot{\eps^{-1}}}$  \\
		\midrule
		\textbf{Coordinate (Section~\ref{sec:sublinear})} & $\Otil{1}$ & $ 
		\Otilbig{\nnz+{\Lco^2}\cdot{\eps^{-2}}}$ 
		 \\
		\textbf{Coordinate VR (Section~\ref{sec:vr})} & $\Otil{1}$ &
		$\Otilbig{\nnz %
		+\sqrt{\nnz}\cdot{\Lco}\cdot{\eps^{-1}}} $%
		\\
		\bottomrule
	\end{tabular}
\egroup

	\vspace{2ex}
	\parbox[left]{0.8\textwidth}{\small 
	\emph{Comments:} 	 
	$\nnz$ denotes the number of nonzeros in $A\in\R^{m\times n}$ and 
	$\rcs\le \max\{m,n\}$ denotes the maximum number of nonzeros in any 
	row and 
	column of $A$. The quantities $\Lmv,\Lco$ and $\Lrc$ depend on 
	problem 
	geometry (see Table~\ref{table:L}). 
	}

	\vspace{12pt}
	
	\centering
	\renewcommand{\arraystretch}{1.4}
	\bgroup
	\caption{Comparison of complexity for different applications.}\label{table:applications}
	\begin{tabular}{c c c}
		\toprule
		Task                        & Method                    & Runtime $(m\ge n)$ \\ 
		\midrule
		\multirow{2}{*}{\makecell{MaxIB}}      & 
		\citet{ZhuLY16} &    $ 
		\Otilb{mn+{\rho m\sqrt{n}}\cdot{\epsilon^{-1}}}$        \\ %
		& Our method (Theorem~\ref{thm:MaxIB})                &    $ 
		\Otilb{\nnz + 
		{\rho\sqrt{\nnz\cdot\rcs}}\cdot{\epsilon^{-1}}}$      
		  \\ 
		\hline
		\multirow{2}{*}{\makecell{MinEB}}      & 
		\citet{ZhuLY16} &   $ 
		\Otilb{mn+{m\sqrt{n}}\cdot{{\eps^{-1/2}}}}$          \\ %
		& Our method (Theorem~\ref{thm:MinEB})               &  $ 
		\Otilb{\nnz+\sqrt{\nnz\cdot\rcs}\cdot {\eps^{-1/2}}}^\dagger$          
		\\ 
		\midrule
		\multirow{3}{*}{\makecell{Regression\\($A^\top A\succeq\mu I$)}} & 
		AGD~\citep{Nesterov83} &  $ 
		\Otilb{\nnz\cdot{\opnorm{A}}{\frac{1}{\sqrt{\mu}}}}$ \\ 
		& \citet{GuptaS18}&  $ 
		\Otilbig{\nnz+\nnz^{2/3}\cdot{\Big(\sum_{i\in[n]}\|A\|_\mathrm{F}\cdot
		 \|\ai\|_1\cdot \|\ai\|_2\Big)^{1/3}}\frac{1}{\sqrt{\mu}}}$ \\ 
		& Our method (Theorem~\ref{thm:reg})               & 
		$\Otilbig{\nnz+\sqrt{\nnz}\cdot 
			{\max\Big\{\sqrt{\sum_i\lones{\ai}^2}, 
				\sqrt{\sum_j\lones{\aj}^2}\Big\}}{\frac{1}{\sqrt{\mu}}}}$ \\ 
			\bottomrule
			
	\end{tabular}
	
		\vspace{2ex}
		\parbox[left]{0.8\textwidth}{\small 
		\emph{Comments:} 
		 $\rho$ denotes the radii ratio of the minimum ball enclosing the 
		rows of $A$ and maximum ball inscribed in 
		them. $^\dagger\,$For MinEB, 
		we state an upper bound here and refer readers to Section~\ref{app:minEB} for a more fine-grained runtime.
	}
	\egroup
\end{table*} }

\subsection{Our results}\label{sec:our_results}
Table~\ref{table:runtimes} summarizes our runtime guarantees and puts 
them in the context of the best existing results. We consider methods that 
output (expected) $\epsilon$-accurate solutions of the saddle-point 
problem~\eqref{eq:problem}, namely a pair $x,y$ satisfying 
\begin{equation*}
\E\left[ \max_{v\in\yset}v^\top Ax-\min_{u\in\xset}y^\top A u\right]\le \eps.
\end{equation*}
The algorithms in Table~\ref{table:runtimes} are all 
iterative solvers for the general problem $\min_{x\in\xset}\max_{y\in\yset} 
f(x,y)$, specialized to $f(x,y)=y^\top A x$. Each algorithm presents a 
different tradeoff between per-iteration complexity and the required 
iteration count, corresponding to the matrix access modality: exact 
gradient methods compute matrix-vector products in each iteration, 
row-column stochastic gradient methods sample a row and a column in 
each 
iteration, and our proposed coordinate methods take this tradeoff to an  
extreme by sampling a single coordinate of the matrix per 
iteration.\footnote{
	Interior point methods offer 
	an alternative tradeoff between iteration cost and iteration count: the 
	number of required iterations depends on $1/\epsilon$ only 
	logarithmically, but every iteration is costly, requiring a linear system 
	solution which at present takes time $\Omega(\min\{m,n\}^2)$. In the $\ellone$ geometry, 
	the best known runtimes for interior point methods are  
	$\Otils{(\nnz + \min\{m,n\}^2) \sqrt{\min\{m,n\}}}$~\cite{LeeS15}, 
	$\Otils{\max\{m,n\}^\omega}$~\cite{CohenLS18}, and $\Otils{mn + 
	\min\{m,n\}^3}$~\cite{BrandKSS20}. In this 
	paper we are mainly interested in the large-scale low-accuracy regime with $L/\eps<\min(m,n)$ where the runtimes described in 
	Table~\ref{table:runtimes} are favorable (with the exception of \cite{BrandKSS20} in certain cases). 
	Our methods take only few 
	passes over the data, which are not the case for many interior-point methods~\cite{LeeS15,CohenLS18}.  Also, our methods do not rely on a general (ill-conditioned) linear system solver, which is a key ingredient in interior point methods. 
}
In addition, variance reduction (VR) schemes combine both fast stochastic 
gradient computations and infrequent exact gradient computations, 
maintaining the amortized per-iteration cost of the stochastic scheme and 
reducing the total iteration count for sufficiently small $\epsilon$. 

The runtimes in Table~\ref{table:runtimes} depend on the numerical range of $A$ through a matrix norm $L$ 
that changes with both the problem geometry and the type of matrix 
access; 
we use $\Lmv$, $\Lrc$ and $\Lco$ to denote the constants corresponding 
to matrix-vector products, row-column queries and coordinated queries, 
respectively. 
Below, we describe these runtimes in detail. 
In the settings we study, our results are the first theoretical 
demonstration of runtime gains arising from sampling a single coordinate of $A$ at a time, as opposed to entire rows and columns.

\arxiv{
\begin{table}
	\centering
	\renewcommand{\arraystretch}{1.25}
	\begin{tabular}{c c c c}
		\toprule
		{\bf } & { $\Lmv$ (matrix-vector) } & { $\Lrc$ (row-column) } 
		& { 
		$\Lco$ 
		(coordinate) } \\
		\midrule
		{\bf $\ell_1$-$\ell_1$} & $\max_{i,j}|A_{ij}|$ & $\max_{i,j}|A_{ij}|$ & 
		$\max\Big\{ \max_{i} \ltwo{\ai}, \max_j \ltwo{\aj} \Big\}$ \\ 
		{\bf $\ell_2$-$\ell_1$} & $\max_{i}\ltwo{\ai}$ & $\max_{i}\ltwo{\ai}$ 
		& 
		$\max\Big\{ \max_{i} \lone{\ai} , \lfro{A}\Big\}^\dagger$ 
	\\
		{\bf $\ell_2$-$\ell_2$} & $\opnorm{A}$ & $\norm{A}_\mathrm{F}$ &  
		$\max\Big\{ \sqrt{\sum_i \lone{\ai}^2 }, \sqrt{\sum_j \lone{\aj}^2 } 
		\Big\}$\\
		\bottomrule
	\end{tabular}
	\caption{\small\textbf{Dependence on $A$ for different methods in different  
		geometries.} Comments: $\ai$ and $\aj$ denote the $i$th row and $j$th column of $A$, respectively. 
		Numerically sparse instances satisfy $\Lco = 
		O(\Lrc)$. $^\dagger\,$In the 
		$\elltwoone$ setting 
		we can also achieve, via alternative sampling schemes, 
		$\Lco=\Lrc\sqrt{\rcs}$ and $\Lco = 
		\max\{ \max_{i} \lone{\ai} , \sqrt{\max_{i} \lone{\ai}\max_{j} 
			\lone{\aj}}\}$. 
	}\label{table:L}

\vspace{-0.3cm}

\vspace{1cm}

	\centering
	\renewcommand{\arraystretch}{1.25}
	\bgroup
	\everymath{\displaystyle}
	\begin{tabular}{c c c}	
		\toprule
		{ Method } & { Iteration cost } & {Total runtime} \\
		\midrule
		Exact gradient~\cite{Nemirovski04,Nesterov07} & $O(\nnz)$ & $ 
		\Otilbig{\nnz\cdot{\Lmv}\cdot{\eps^{-1}}}$ 
		\\ 
		Row-column~\cite{GrigoriadisK95,ClarksonHW10,BalamuruganB16} & 
		$O(n+m)$ & %
		$\Otilbig{(m + n) \cdot {\Lrc^2}\cdot{\epsilon^{-2}}}$ \\
		Row-column VR~\cite{BalamuruganB16,CarmonJST19} &  $O(n+m)$
		 & $\Otilbig{\nnz+\sqrt{\nnz\cdot 
		 (m+n)}\cdot{\Lrc}\cdot{\eps^{-1}}}$ 
		 \\
		\midrule
		Sparse row-col (folklore) & 
		$\Otil{\rcs}$ & %
		$\Otilbig{\rcs \cdot {\Lrc^2}\cdot{\epsilon^{-2}}}$ \\
		Sparse row-col VR (Appendix~\ref{app:rcs-vr}) & $\Otil{\rcs}$ & 
		$\Otilbig{\nnz + \sqrt{\nnz\cdot 
				\rcs}\cdot{\Lrc}\cdot{\eps^{-1}}}$  \\
		\midrule
		\textbf{Coordinate (Section~\ref{sec:sublinear})} & $\Otil{1}$ & $ 
		\Otilbig{\nnz+{\Lco^2}\cdot{\eps^{-2}}}$ 
		 \\
		\textbf{Coordinate VR (Section~\ref{sec:vr})} & $\Otil{1}$ &
		$\Otilbig{\nnz %
		+\sqrt{\nnz}\cdot{\Lco}\cdot{\eps^{-1}}} $%
		\\
		\bottomrule
	\end{tabular}
\egroup
	\caption{\small\textbf{Comparison of iterative methods for 
	bilinear problems.} Comments: $\nnz$ denotes the number of nonzeros in $A\in\R^{m\times n}$ and 
	$\rcs\le \max\{m,n\}$ denotes the maximum number of nonzeros in any 
	row and 
	column of $A$. The quantities $\Lmv,\Lco$ and $\Lrc$ depend on 
	problem 
	geometry (see Table~\ref{table:L}).
	}\label{table:runtimes}
	
	\vspace{16pt}
	
	\centering
	\renewcommand{\arraystretch}{1.4}
	\bgroup
	\begin{tabular}{c c c}
		\toprule
		Task                        & Method                    & Runtime \\ 
		\midrule
		\multirow{2}{*}{\makecell{MaxIB}}      & 
		\citet{ZhuLY16} &    $ 
		\Otilb{mn+{\rho m\sqrt{n}}\cdot{\epsilon^{-1}}}$        \\ %
		& Our method (Theorem~\ref{thm:MaxIB})                &    $ 
		\Otilb{\nnz + 
		{\rho\sqrt{\nnz\cdot\rcs}}\cdot{\epsilon^{-1}}}^\dagger$      
		  \\ 
		\hline
		\multirow{2}{*}{\makecell{MinEB\\(when $m\ge n$)}}      & 
		\citet{ZhuLY16} &   $ 
		\Otilb{mn+{m\sqrt{n}}\cdot{{\eps^{-1/2}}}}$          \\ %
		& Our method (Theorem~\ref{thm:MinEB})               &  $ 
		\Otilb{\nnz+\sqrt{\nnz\cdot\rcs}\cdot {\eps^{-1/2}}}^\dagger$          
		\\ 
		\midrule
		\multirow{3}{*}{\makecell{Regression\\($A^\top A\succeq\mu I$)}} & 
		AGD~\citep{Nesterov83} &  $ 
		\Otilb{\nnz\cdot{\opnorm{A}}{\frac{1}{\sqrt{\mu}}}}$ \\ 
		& \citet{GuptaS18} &  $ 
		\Otilbig{\nnz+\nnz^{2/3}\cdot{\Big(\sum_{i\in[n]}\|A\|_\mathrm{F}\cdot
		 \|\ai\|_1\cdot \|\ai\|_2\Big)^{1/3}}\frac{1}{\sqrt{\mu}}}$ \\ 
		& Our method (Theorem~\ref{thm:reg})               & 
		$\Otilbig{\nnz+\sqrt{\nnz}\cdot 
			{\max\Big\{\sqrt{\sum_i\lones{\ai}^2}, 
				\sqrt{\sum_j\lones{\aj}^2}\Big\}}{\frac{1}{\sqrt{\mu}}}}$ \\ 
			\bottomrule
	\end{tabular}
	\caption{\small\textbf{Comparison of complexity for different applications.}
		Comments: $\rho$ denotes the radii ratio of the minimum ball enclosing the 
		rows of $A$ and maximum ball inscribed in 
		them. $^\dagger\,$For MaxIB and MinEB, 
		we refer the reader to Section~\ref{app:minEB} for a more fine-grained runtime bound.}
		\label{table:applications}
	\egroup
\end{table}
 }

\paragraph{Coordinate stochastic gradient methods.}
We develop coordinate stochastic 
gradient estimators which allow  per-iteration cost  $\Otil{1}$ and  
iteration count  
$\Otil{n+m+(\frac{\Lco}{\epsilon})^2}$. We define $\Lco$  in 
Table~\ref{table:L}; for each domain geometry, the quantity 
$\frac{\Lco}{\Lrc}$ is a measure of the numerical sparsity of $A$, satisfying 
 \begin{equation*}
 1 \le \frac{\Lco^2}{\Lrc^2} \le \rcs.
 \end{equation*}
Every iteration of our method requires sampling an element in a row or a 
column with probability proportional to its entries. Assuming a matrix 
access model that allows such sampling in time $\Otil{1}$ (similarly 
to~\cite{AzarGRK13,SidfordWWY18,GilyenLT18}), the total runtime of our 
method is 
$\Otil{n+m+(\frac{\Lco}{\epsilon})^2}$. In this case, for numerically sparse 
problems such that $\Lco = O(\Lrc)$, the proposed coordinate 
methods
outperform row-column sampling 
by a factor of $m+n$. Moreover, 
the bound $\Lco^2 \le \Lrc^2(m+n)$ implies that our runtime is never 
worse than that of row-column methods. When only coordinate access 
to the matrix $A$ is initially available, we may implement the required sampling 
access via 
preprocessing in time $O(\nnz)$. This changes the runtime to $\Otil{\nnz 
+   (\frac{\Lco}{\epsilon})^2}$, so that the 
 comparison above holds only when $(\frac{\Lco}{\epsilon})^2 = 
\tilde{\Omega}(\nnz)$. In that regime, the variance  
reduction technique we describe below provides even stronger guarantees. 

\paragraph{Coordinate methods with variance reduction.}
Using our recently proposed 
framework~\citep{CarmonJST19} we design a 
variance reduction algorithm with amortized per-iteration cost $\Otil{1}$, 
required iteration count of $\Otil{\sqrt{\nnz}\cdot \frac{\Lco}{\epsilon}}$ 
and total running time $\Otil{\nnz + \sqrt{\nnz}\cdot 
	\frac{\Lco}{\epsilon}}$. 
In the numerically sparse regime $\Lco = O(\Lrc)$, 
our runtime improves on row-column VR by a factor of $\sqrt{\nnz/ 
(m+n)}$, 
and in general the  
bound $\Lco \le \Lrc\sqrt{m+n}$ guarantees it is never worse. 
Since variance reduction methods always require a single pass over the data 
to compute an exact gradient, this comparison holds regardless of the 
matrix access model. 
In the $\elltwo$ setting we note that for \emph{elementwise 
non-negative} matrices, $\Lco = 
\max\{\ltwos{A\ones},\ltwos{A^\top\ones}\} 
\le \Lmv\sqrt{m+n}$, and consequently our 
method outperforms exact gradient methods by a factor of 
$\sqrt{\nnz/(m+n)}$, even without any numerical or spectral sparsity in 
$A$. Notably, this is the same factor of improvement that row-column VR 
achieves over exact gradient methods in the $\ellone$ and $\elltwoone$ 
regimes.

\paragraph{Optimality of the constant $\Lco$.}
For the $\ellone$ and $\elltwo$ settings, we argue that the constant
$\Lco$ in Table~\ref{table:L} is optimal in the restricted sense that no 
alternative sampling distribution for coordinate gradient estimation can 
have a better variance bound than $\Lco$ (a similar sense of optimality also 
holds for $\Lrc$ in each  
geometry). 
In the $\elltwoone$ setting, a different sampling distribution produces an 
improved (and optimal) constant $\max\{ \max_i \lone{\ai}, 
\opnorm{|A|}\}$, where $\ai$ is the $i$th row of $A$, and $|A|_{ij} = |A_{ij}|$ is the elementwise absolute value of $A$.
However, it is unclear how to efficiently 
sample from this distribution.

\paragraph{Row-column sparse instances.} 
Some problem instances admit a structured form of sparsity where every 
row and column has at most $\rcs$ nonzero elements. In all settings we have $\Lco \le 
\Lrc\sqrt{\rcs}$ and so our coordinate methods naturally improve 
when $\rcs$ is small. Specifically, the sampling distributions and data 
structures we develop in this paper allow us to modify previous methods 
for 
row-column VR \citep{CarmonJST19} to leverage row-column sparsity, 
reducing 
the amortized per-iteration cost from $\O{m+n}$ to $\Otil{\rcs}$.

\paragraph{Applications.}
We illustrate the implications of our results for two problems in 
computational geometry, minimum enclosing ball (Min-EB) and maximum 
inscribed ball (Max-IB), as well as linear regression. For Min-EB and 
Max-IB in the non-degenerate case $m\ge n$, we apply our $\elltwoone$ results to obtain algorithms 
whose runtime bounds coincide with the 
state-of-the-art~\cite{ZhuLY16} for dense problems, but can be 
significantly better for sparse or row-column sparse instances. For linear 
regression we focus on accelerated linearly converging algorithms, i.e., 
those that find $x$ such that $\ltwo{Ax-b}\le \epsilon$ in time 
proportional  to $\mu^{-\half}\log 
\frac{1}{\epsilon}$ where $\mu$ is the smallest eigenvalue of $A^\top A$. 
Within this class and in a number of settings, our reduced variance 
coordinate method offers  
improvement over the state-of-the-art: for instances where 
$\lone{\ai}=O(\ltwo{\ai})$ and $\lone{\aj}=O(\ltwo{\aj})$ for all $i,j$ it 
outperforms~\cite{GuptaS18} by a factor of $\nnz^{1/6}$, and for 
elementwise nonnegative instances it outperforms 
accelerated 
gradient descent by a factor of $\sqrt{\nnz/(m+n)}$. See 
Table~\ref{table:applications} for a detailed runtime comparison. 

\subsection{Our approach}
\label{ssec:intro_approach}

We now provide a detailed overview of our algorithm design and analysis 
techniques, highlighting our main technical insights. We focus on the 
$\ellone$ geometry, since it showcases all of our developments. Our 
technical contributions have two central themes:

\begin{enumerate}[leftmargin=\parindent]
	\item \textbf{Sampling schemes design.} The key to obtaining efficient 
	coordinate methods is carefully choosing the sampling distribution. Here,
	local norms analysis of stochastic mirror descent~\citep{Shalev-Shwartz12} on the one hand enables tight regret bounds, and 
	on the other hand imposes an additional design constraint since the 
	stochastic estimators must be bounded for the analysis to apply. We achieve estimators with improved variance bounds meeting this boundedness constraint by leveraging a ``clipping'' operation introduced by \citet{ClarksonHW10}. Specifically, in the simplex geometry, we truncate large coordinates of our estimators, and show that our method is robust to the resulting distortion. 
	\item \textbf{Data structure design.} Our goal is to perform iterations in 
	$\Otil{1}$ time, but our mirror descent procedures call for updates that 
	change $m + n$ variables in each step. We resolve this tension via data 
	structures that implicitly maintain the iterates. Variance reduction poses 
	a considerable challenge here, because every reduced-variance 
	stochastic gradient contains a dense component that changes all 
	coordinates in a complicated way. In particular, existing data structures 
	cannot efficiently compute the normalization factor necessary for 
	projection to the simplex. 
	We design a data structure that overcomes this hurdle via Taylor 
	expansions, coordinate binning, and a binomial heap-like construction. 
	The data structure computes approximate mirror projections, and we 
	modify the standard mirror descent analysis to show it is stable under the 
	particular structure of the resulting approximation errors.
\end{enumerate}
At the intersection of these two themes is a novel sampling technique we 
call ``sampling from the sum,'' which addresses the same variance 
challenges as the ``sampling from the difference'' technique 
of~\cite{CarmonJST19}, but is more amenable to efficient 
implementation with a data structure. 

\subsubsection{Coordinate stochastic gradient method}

Our algorithm is an instance of stochastic mirror 
descent~\cite{NemirovskiJLS09}, which in 
the $\ellone$ setting produces a sequence of iterates $(x_1, y_1), (x_2, 
y_2), \ldots$ according to
\arxiv{
\begin{equation}\label{eq:l1-smd}
x_{t+1} = \Pi_{\Delta}\left(x_t \circ \exp\{-\eta \tg\x(x_t, y_t)\}\right)
~~\mbox{and}~~
y_{t+1} = \Pi_{\Delta}\left(y_t \circ \exp\{ - \eta \tg\y(x_t, y_t)\}\right),
\end{equation}
}
\notarxiv{
\begin{equation}\label{eq:l1-smd}
\begin{aligned}
x_{t+1} & = \Pi_{\Delta}\left(x_t \circ \exp\{-\eta \tg\x(x_t, y_t)\}\right)\\
~~\mbox{and}~~
y_{t+1} & = \Pi_{\Delta}\left(y_t \circ \exp\{ - \eta \tg\y(x_t, y_t)\}\right),
\end{aligned}
\end{equation}
}
where $\Pi_\Delta(v) = \frac{v}{\lone{v}}$ is the projection onto the simplex ($\exp$ and 
$\log$ are applied to vectors elementwise, and elementwise multiplication is denoted by
$\circ$), $\eta$ is a step size, and $\tg\x, \tg\y$ 
are stochastic gradient estimators for $f(x,y)=y^\top A x$ satisfying
\arxiv{
\begin{equation*}
\E \tg\x(x,y) = \grad_x f(x,y)=A^\top y
~~\mbox{and}~~
\E \tg\y(x,y) = -\grad_y f(x,y)=-A x.
\end{equation*}}
\notarxiv{
\begin{equation*}
\begin{aligned}
\E \tg\x(x,y) & = \grad_x f(x,y)=A^\top y\\
~~\mbox{and}~~
\E \tg\y(x,y) & = -\grad_y f(x,y)=-A x.
\end{aligned}
\end{equation*}}

We describe the computation and analysis of $\tg\x$;  the treatment of 
$\tg\y$ is analogous. To compute $\tg\x(x,y)$, we 
sample $i,j$ from a distribution $p(x,y)$ on $[m]\times[n]$ and let
\begin{equation}\label{eq:l1-tg}
\tg \x(x,y) = \frac{y_i A_{ij}}{p_{ij}(x,y)} e_j,
\end{equation}
where $p_{ij}(x,y)$ denotes the probability of drawing $i,j$ from $p(x,y)$  
and $e_j$ is the $j$th standard basis vector---a simple calculation shows 
that $\E\tg\x = A^\top y$ for any $p$. We first design $p(x,y)$ to 
guarantee an   
$\Otil{(\frac{\Lco}{\epsilon})^2}$ iteration complexity for finding an 
$\epsilon$-accurate solution, and then briefly touch on how to compute 
the resulting iterations in $\Otil{1}$ time.

\paragraph{Local norms-informed distribution design.}
The standard stochastic mirror descent 
analysis~\cite{NemirovskiJLS09} shows that if $\E \linf{\tg\x(x,y)}^2 
\le L^2$ for all $x,y$ (and similarly for $\tg\y$), taking $\eta = 
\frac{\epsilon}{L^2}$ and a choice of $T=\Otil{(\frac{L}{\epsilon})^2}$ 
suffices to ensure that the iterate average $\frac{1}{T} \sum_{t=1}^T (x_t, y_t)$ is 
an $\epsilon$-accurate solution in expectation. Unfortunately, this 
analysis demonstrably fails to yield sufficiently tight bounds for our 
coordinate estimator: there exist instances for which any distribution $p$ produces 
$L\ge n\Lrc$. We tighten the analysis using a \emph{local norms} 
argument~\cite[cf.][Section 2.8]{Shalev-Shwartz12}, showing that 
$\Otil{(\frac{L}{\epsilon})^2}$ iterations suffice whenever $\linf{\eta \tg\x} \le 1$ with probability 1 and for all $x,y$
\begin{equation*}
\E \norm{\tg\x(x,y)}^2_x \le L^2,
~~\mbox{where}~~\norm{\gamma}_x^2 = \sum_j x_j  \gamma_j^2 
\end{equation*}
is the local norm at $x\in\xset$. We take 
\begin{equation}\label{eq:l1-p}
p_{ij} = y_i 
\frac{A_{ij}^2}{\ltwo{\ai}^2}
\end{equation}
 (recalling that $x,y$ are both probability 
vectors). Substituting 
into~\eqref{eq:l1-tg} gives
\arxiv{
\begin{equation*}
\E \norm{\tg\x(x,y)}^2_x =  \sum_{i,j} \frac{y_i^2 A_{ij}^2 x_j}{p_{ij}}
= \sum_{i,j} {y_i \ltwo{\ai}^2 x_j } =  \sum_{i} y_i\ltwo{\ai}^2 \le \max_i 
\ltwo{\ai}^2 \le \Lco^2,
\end{equation*}}
\notarxiv{
\begin{equation*}
\begin{aligned}
\E \norm{\tg\x(x,y)}^2_x & =  \sum_{i,j} \frac{y_i^2 A_{ij}^2 x_j}{p_{ij}}
= \sum_{i,j} {y_i \ltwo{\ai}^2 x_j } \\
& =  \sum_{i} y_i\ltwo{\ai}^2 \le \max_i 
\ltwo{\ai}^2 \le \Lco^2,
\end{aligned}
\end{equation*}
}
with $\Lco = \max\{ \max_i \ltwo{\ai}, \max_j \ltwo{\aj}\}$ as in 
Table~\ref{table:L}.

While this is the 
desired bound on $\E \norm{\tg\x(x,y)}^2_x$, the requirement  
$\linf{\eta \tg\x} \le 1$ does not hold when $A$ has 
sufficiently small elements. We address this by clipping $\tg$: we replace $\eta \tg\x$ with $\clip(\eta \tx\x)$, where
\begin{equation*}\label{eq:clipping}
[\clip(v)]_i \defeq \min\{ |v_i|, 1\} \sign(v_i),
\end{equation*}
the Euclidean projection to the unit box. The clipped gradient estimator clearly satisfies the desired bounds on infinity norm and local norm second moment, but is biased for the true gradient. Following the analysis of~\citet{ClarksonHW10}, we account for the bias by relating it to the second moment via
\[|\inner{\gamma - \text{clip}(\gamma)}{x}| \le \norm{\gamma}_x^2,\]
which allows to absorb the effect of the bias into existing terms in our error bounds. Putting together these pieces yields the desired bound on the iteration count.

\paragraph{Efficient implementation.}
Data structures for performing the update~\eqref{eq:l1-smd} and 
sampling from the resulting iterates in $\Otil{1}$ time are standard in the 
literature \cite[e.g.,][]{ShalevW16}.
We add to these the somewhat non-standard ability to also efficiently track 
the running sum of the iterates. To efficiently sample $i,j\sim p$ according 
to~\eqref{eq:l1-p} we first use the data structure to sample $i\sim y$ in 
$\Otil{1}$ time and then draw $j\in[n]$ with 
probability proportional to $A_{ij}^2$ in time $O(1)$, via either
$O(\nnz)$ preprocessing or an appropriate assumption about the 
matrix access model. The ``heavy lifting'' of our data structure design is dedicated for supporting variance reduction, which we describe in the next section.

\notarxiv{
\begin{table*}[!ht]
	\centering
	\renewcommand{\arraystretch}{2}
	\bgroup
			\everymath{\displaystyle}
			\caption{The distributions used in our coordinate gradient estimator}\label{table:dist-sublinear}
			\begin{tabular}{ccc }
				\toprule
				{Setting} & { $p_{ij}$  } & { $q_{ij}$ } \\
				\midrule
				{\bf $\ell_1$-$\ell_1$} & $ 
				y_i \cdot \frac{A_{ij}^2}{\norm{\ai}_2^2}$ & $ 
				x_j\cdot\frac{A_{ij}^2}{\norm{\aj}_2^2}$ 
				\\[12pt]
				{\bf $\ell_2$-$\ell_1$} & $  
				y_i \cdot \frac{|A_{ij}|}{\lones{\ai}}$ & $  
				\frac{A_{ij}^2}{\norm{A}_{\mathrm{F}}^2}$ 
				\\[12pt]
				{\bf $\ell_2$-$\ell_1$} & $  
				y_i \cdot \frac{|A_{ij}|}{\lones{\ai}}$ & $  
				\propto x_j^2 \cdot \1_{A_{ij}\neq 0}$
				\\[12pt]
				{\bf $\ell_2$-$\ell_1$} & $  
				y_i \cdot \frac{|A_{ij}|}{\lones{\ai}}$ & $  
				\frac{A_{ij} \cdot x_j^2}{\sum_{k \in [n]} \norm{A_{:k}}_1\cdot  x_k^2}$
				\\[12pt]
				{\bf $\ell_2$-$\ell_2$} & $  
				\frac{\lones{\ai}^2}{\sum_{k\in[m]}\lones{\ai[k]}^2}\cdot\frac{|A_{ij}|}{\lones{\ai}}$
				& $  
				\frac{\lones{\aj}^2}{\sum_{k\in[n]}\lones{\aj[k]}^2}\cdot\frac{|A_{ij}|}{\lones{\aj}}$
				\\[12pt]
				{\bf $\ell_2$-$\ell_2$} & $  
				\frac{{ y_i}^2}{\ltwo{y}^2}\cdot\frac{|A_{ij}|}{\lones{\ai}}$ & $  
				\frac{{x_j}^2}{\ltwo{x}^2}\cdot\frac{|A_{ij}|}{\lones{\aj}}$  \\
				\bottomrule
			\end{tabular}
\egroup
		
	\vspace{2ex}
	\parbox[left]{0.8\textwidth}{\small Recall that the estimator is of the form
		$ {\tg(x,y) = \big(\frac{1}{p_{ij}} {y_{i} A_{i j}}\cdot e_{j}, - 
			\frac{1}{q_{lk}}{A_{lk}x_{k}}\cdot e_{l}\big)}$ where $i,j\sim 
		p$ and 
		$l,k\sim q$. }

\vspace{32pt}
	\centering
	\renewcommand{\arraystretch}{2}
	\bgroup
	\everymath{\displaystyle}
	\caption{The distributions $p,q$ used for  our reduced variance 
		coordinate 
		gradient 
		estimator}\label{table:dist-vr}
	\begin{tabular}{ccc}
		\toprule
		{Setting} & { $p_{ij}$  } & { $q_{ij}$ } \\
		\midrule
		{\bf $\ell_1$-$\ell_1$} & $  
		\frac{y_i+2[y_0]_i}{3}\cdot\frac{A_{ij}^2}{\ltwo{\ai}^2}$
		& $  \frac{x_j+2[x_0]_j}{3} 
		\cdot\frac{A_{ij}^2}{\ltwo{\aj}^2}$ \\[12pt]
		{\bf $\ell_2$-$\ell_1$} & $  
		\frac{y_i+2[y_0]_i}{3}\cdot\frac{|A_{ij}|}{\lones{\ai}}$ & 
		$  \frac{A_{ij}^2}{\norm{A}_\mathrm{F}^2}$
		\\[12pt]
		{\bf $\ell_2$-$\ell_1$} & $  
		\frac{y_i+2[y_0]_i}{3}\cdot\frac{|A_{ij}|}{\lones{\ai}}$ & 
		$  \propto [x - x_0]_j^2 \cdot \1_{A_{ij} \neq 0}$
		\\[12pt]
		{\bf $\ell_2$-$\ell_1$} & $  
		\frac{y_i+2[y_0]_i}{3}\cdot\frac{|A_{ij}|}{\lones{\ai}}$ & 
		$ \frac{|A_{ij}| \cdot [x - x_0]_j^2}{\sum_{k \in [n]} \norm{A_{:k}}_1 \cdot [x - x_0]_k^2}$
		\\[12pt]
		{\bf $\ell_2$-$\ell_2$} & $ 
		\frac{\lones{\ai}^2}{\sum_{k\in[m]}\lones{\ai[k]}^2}\cdot\frac{|A_{ij}|}{\lones{\ai}}$
		& $
		\frac{\lones{\aj}^2}{\sum_{k\in[n]}\lones{\aj[k]}^2}\cdot\frac{|A_{ij}|}{\lones{\aj}}$
		\\[12pt]
		{\bf $\ell_2$-$\ell_2$} & $ 
		\frac{[y - y_0]_i^2}{\norm{y - y_0}_2^2} \cdot\frac{|A_{ij}|}{\lones{\ai}}$
		& $
		\frac{[x - x_0]_j^2}{\norm{x - x_0}_2^2}\cdot\frac{|A_{ij}|}{\lones{\aj}}$
		\\[12pt]
		\bottomrule
	\end{tabular}
	
	\vspace{2ex}
	\parbox[left]{0.8\textwidth}{\small Recall that the estimator is of the form
		$ {\tg(x,y) = \big(A^\top y +\frac{1}{p_{ij}} {(y_{i}-y_{0,i}) A_{i 
					j}} \cdot
			e_{j}, - Ax - 
			\frac{1}{q_{lk}}{A_{lk}(x_{k}-x_{0,k})} \cdot e_{l}\big)}$ 
		where 
		$i,j\sim p$ 
		and $l,k\sim q$ and $x_0, y_0$ is a reference point.}
	
	\egroup
\end{table*}
 }

\paragraph{Sampling distributions beyond $\ellone$.} Table~\ref{table:dist-sublinear} lists the sampling 
distributions we develop for the various problem geometries. Note that for 
the $\elltwoone$ setting we give three different distributions for 
sampling the simplex block of the gradient (i.e., $\tg\y$); each distribution 
corresponds to a different parameter $\Lco$ (see comments following  
Table~\ref{table:L}). The distribution $q_{ij} \propto \sqrt{y_i} \,|A_{ij} x_j|$ 
 yields a stronger bound $L$ in the $\elltwoone$ setting, but we do 
not know how to efficiently sample from it.

\subsubsection{Coordinate variance reduction}

\newcommand{\rp}[1][]{_{0#1}}

To accelerate the stochastic coordinate method we apply our recently 
proposed variance reduction 
framework~\cite{CarmonJST19}. This framework operates in 
$\frac{\alpha}{\epsilon}$ epochs, where $\alpha$ is a design parameter 
that trades between full and stochastic gradient computations. Each epoch 
consists of three parts: (i) computing the exact gradient at a reference point 
$(x\rp,y\rp)$, (ii) performing 
$T$ iterations of regularized stochastic 
mirror descent to produce the sequence $(x_1,y_1),\ldots,(x_T, y_T)$ and 
(iii) taking an extra-gradient step from the average of the iterates in (ii). Setting 
$\kappa = 1/(1 + \eta \alpha/2)$, the iterates $x_t$ follow the recursion
\arxiv{
\begin{equation}\label{eq:l1-reg-smd}
x_{t+1} = \Pi_{\Delta}\big(x_t^{\kappa} \circ  x_0^{1-\kappa}
\circ \exp\{ - \eta \kappa [ g\x\rp + 
\td\x(x_t,y_t)]\}\big),~\mbox{where}~\Pi_\Delta(v) = \frac{v}{\lone{v}},
\end{equation}}
\notarxiv
{
\begin{flalign}
& x_{t+1}  = \Pi_{\Delta}\big(x_t^{\kappa} \circ  x_0^{1-\kappa}
\circ \exp\{ - \eta \kappa [ g\x\rp + 
\td\x(x_t,y_t)]\}\big), \nonumber\\
&\quad\mbox{where}~\Pi_\Delta(v) = \frac{v}{\lone{v}},~\mbox{and}~
g\rp\x = A^\top y\rp
\label{eq:l1-reg-smd}
\end{flalign}
}
\arxiv{and $g\rp\x = A^\top y\rp$ }is the exact gradient at the reference point, and  
$\td\x$ is a stochastic gradient \emph{difference} estimator satisfying
\begin{equation*}
\E \td\x(x,y) = \grad_x f(x,y) - \grad_x f(x\rp, y\rp) = A^\top (y-y\rp).
\end{equation*}
The iteration for $y_t$ is similar. In~\cite{CarmonJST19} we show that if 
$\td\x$ satisfies
\begin{equation}\label{eq:l1-centered-basic}
\E \linfs{\td\x(x,y)}^2 \le L^2 \left( \lone{x-x\rp}^2 + 
\lone{y-y\rp}^2\right)~~\forall x,y
\end{equation}
and a similar bound holds on $\E \linfs{\td\y(x,y)}^2$, then 
$T=O(\frac{L^2}{\alpha^2})$ iterations per epoch with step size 
$\eta = \frac{\alpha}{L^2}$ suffice for the overall 
algorithm to return a point with expected error below $\epsilon$.

We would like to design a coordinate-based estimator $\td$ such that the 
bound~\eqref{eq:l1-centered-basic} holds for $L=\Lco$  as in 
Table~\ref{table:L} and 
each iteration~\eqref{eq:l1-reg-smd} takes $\Otil{1}$ time. Since every 
epoch also requires $O(\nnz)$ time for matrix-vector product (exact 
gradient) computations, the overall runtime would be $\Otils{(\nnz + 
\frac{\Lco^2}{\alpha^2})\cdot \frac{\alpha}{\epsilon}}$. Choosing $\alpha 
= 
\Lco/\sqrt{\nnz}$ then gives the desired runtime $\Otils{\nnz + 
\sqrt{\nnz}\cdot \frac{\Lco}{\epsilon}}$.  

\paragraph{Distribution design (sampling from the difference).}
We start with a straightforward adaptation of the general estimator 
form~\eqref{eq:l1-tg}.  To compute $\td\x(x,y)$, we 
sample $i,j\sim p$, where $p$  may depend on 
$x,x\rp,y$ and $y\rp$, and let
\begin{equation}\label{eq:l1-td}
\td \x(x,y) = \frac{(y_i -[y_0]_i) A_{ij}}{p_{ij}} e_j,
\end{equation}
where $e_j$ is the $j$th standard basis vector. As in the previous section, 
we 
find that the requirement~\eqref{eq:l1-centered-basic} is too stringent for 
coordinate-based estimators. Here too, we address this challenge with a 
local norms argument and clipping of the difference estimate. Using the ``sampling from the 
difference'' technique from~\cite{CarmonJST19}, we arrive at
\begin{equation}\label{eq:l1-sampling-difference}
p_{ij} = \frac{\left| y_i - [y_0]_i\right|}{\lone{y-y\rp}} \cdot 
\frac{A_{ij}^2 }{\ltwo{\ai}^2}.
\end{equation}
This distribution satisfies the local norm relaxation 
of~\eqref{eq:l1-centered-basic} with $L^2 = \Lco^2$.

\paragraph{Data structure design.}
Efficiently computing~\eqref{eq:l1-reg-smd} is significantly more 
challenging than its counterpart~\eqref{eq:l1-smd}. To clarify the difficulty 
and describe our solution, we write 
\begin{equation*}
x_t = \Pi_\Delta(\hat{x}_t) = \hat{x}_t / \lone{\hat{x}_t}
\end{equation*}
and break the recursion for the unnormalized iterates
$\hat{x}_t$  into two steps
\begin{flalign}
\hat{x}_{t}' &= \hat{x}_t ^{\kappa} \circ \exp\{v\},~\mbox{and}
\label{eq:l1smd-dense}\\
\hat{x}_{t+1} &= \hat{x}_{t}' \circ \exp\{s_t\},\label{eq:l1smd-sparse}
\end{flalign}
where $v=(1-\kappa)\log x\rp - \eta\kappa g\x\rp$ is a fixed dense 
vector, and $s_t = -\eta \td\x(x_t,y_t)$ is a varying 1-sparse vector. The key task of the data structures is maintaining 
the normalization factor $\lone{\hat{x}_t}$ in near-constant time. 
Standard data structures do not suffice because they lack support for the dense step~\eqref{eq:l1smd-dense}. 

Our high-level strategy is to handle the two steps~\eqref{eq:l1smd-dense} 
and~\eqref{eq:l1smd-sparse} separately. To handle the dense 
step~\eqref{eq:l1smd-dense}, we propose the data structure $\ScM$ 
that efficiently approximates $\lone{\hat{x}_t}$ in the ``homogeneous'' 
case of no sparse updates (i.e.\ $s_t=0$ for all $t$). We then add support for  the 
sparse step~\eqref{eq:l1smd-sparse} using a binomial heap-like 
construction involving $O(\log n)$ instances of $\ScM$.

\renewcommand{\hom}{^\mathrm{hom}}
\newcommand{\bv}{\bar{v}}
\newcommand{\bn}{\bar{n}}
\paragraph{The $\ScM$ data structure.}
When $s_t=0$ for all $t$ the iterates $\hat{x}_t$ admit closed forms
\arxiv{
\begin{equation*}
\hat{x}_{t+\tau} = \hat{x}_t^{\kappa^\tau} \circ \exp\left\{ 
v\sum_{t'=0}^{\tau-1} \kappa^{t'}\right\}
=\hat{x}_t^{\kappa^\tau} \circ \exp\Big\{ \frac{1-\kappa^{\tau}}{1-\kappa} 
v\Big\}
=\hat{x}_t\circ \exp\left\{[1-\kappa^{\tau}]\bv\right\},
\end{equation*}}
\notarxiv{
\begin{equation*}
\begin{aligned}
\hat{x}_{t+\tau} & = \hat{x}_t^{\kappa^\tau} \circ \exp\left\{ 
v\sum_{t'=0}^{\tau-1} \kappa^{t'}\right\}
=\hat{x}_t^{\kappa^\tau} \circ \exp\Big\{ \frac{1-\kappa^{\tau}}{1-\kappa} 
v\Big\}\\
& =\hat{x}_t\circ \exp\left\{[1-\kappa^{\tau}]\bv\right\},
\end{aligned}
\end{equation*}
}
where $\bv = \frac{v}{1-\kappa} - \log x_t$. %
Consequently, we design $\ScM$ to take as initialization $\bn$-dimensional 
vectors $\bx \in \R^{\bn}_{\ge 0}$, and $\bv\in\R^{\bn}$ and provide  approximations of the normalization factor
\begin{equation}\label{eq:exact-partition-function}
Z_\tau (\bx, \bv) = \lone{\bx \circ \exp\{ (1-\kappa^\tau) \bv\})}
\end{equation}
for arbitrary values of $\tau\ge1$. We show how to implement each 
query 
of $Z_\tau (\bx, \bv)$ in amortized time $\Otil{1}$. 
The data 
structure also supports initialization in time $\Otil{\bn}$ and deletions (i.e., 
setting elements of $\bx$ to zero) in amortized time 
$\Otil{1}$.

To efficiently approximate the quantity $Z_\tau (\bx, \bv)$ we replace the 
exponential with its order $p=O(\log n)$  Taylor series. That is, 
we would like to write
\arxiv{
\begin{equation*}
Z_\tau (\bx, \bv) =
\sum_{i\in[\bn]} [\bx]_i e^{(1-\kappa^\tau) [\bv]_i}
\approx \sum_{i\in[\bn]} [\bx]_i \sum_{q=0}^p
\frac{1}{q!} (1-\kappa^\tau)^q [\bv]_i^q
= \sum_{q=0}^p \frac{(1-\kappa^\tau)^q}{q!}
\inner{\bx}{\bv^q}.
\end{equation*}}
\notarxiv{
\begin{equation*}
\begin{aligned}
Z_\tau (\bx, \bv) & =
\sum_{i\in[\bn]} [\bx]_i e^{(1-\kappa^\tau) [\bv]_i}
\\
& \approx \sum_{i\in[\bn]} [\bx]_i \sum_{q=0}^p
\frac{1}{q!} (1-\kappa^\tau)^q [\bv]_i^q\\
& = \sum_{q=0}^p \frac{(1-\kappa^\tau)^q}{q!}
\inner{\bx}{\bv^q}.
\end{aligned}
\end{equation*}
}
The approximation $\sum_{q=0}^p \frac{(1-\kappa^\tau)^q}{q!}
\inner{\bx}{\bv^q}$ is cheap to compute, since for every $\tau$ it is a linear 
combination of the $p=\Otil{1}$ numbers $\{\inner{\bx}{\bv^q}\}_{q\in [p]}$ which we 
can compute once at initialization. However, the Taylor series approximation 
has low multiplicative error only when $|(1-\kappa^\tau) [\bv]_i| = O(p)$, 
which may fail to hold, as we may have $\linf{\bv}=\mathsf{poly}(n)$ in general. 
To handle this, suppose that for a fixed $\tau$ we have an offset 
$\mu\in\R$ and ``active set'' $A\subseteq [\bn]$ such that the following conditions hold for a 
threshold 
$R=O(p)$: (a) the Taylor approximation is valid in $A$, e.g.\ we have  
$|(1-\kappa^\tau)(\bv_i - \mu)|\le 2R$ for all $i\in A$, (b) entries outside 
$A$ are small;  
$(1-\kappa^\tau)[\bv_i - \mu] \le -{R}$ for all 
$i\notin A$, and (c) at least one entry in the active set is large; 
$(1-\kappa^\tau)[\bv_i - \mu] \ge 0$ for some  
$i\in A$. Under these conditions, the entries in $A^c$ are negligibly small 
and we can truncate them, resulting in the approximation
\begin{equation*}
e^{(1-\kappa^\tau)\mu}\left[\sum_{q=0}^p \frac{(1-\kappa^\tau)^q}{q!}
\inner{\bx}{(\bv-\mu)^q}_{A} + {e^{-R}\inner{\bx}\ones}_{A^c}\right],
\end{equation*}
which we show approximates $Z_\tau (\bx, \bv)$ to within $e^{O(R + \log 
n) - \Omega(p)}$ 
multiplicative error, where we used $\inner{a}{b}_S \defeq
\sum_{i\in S} a_i b_i$; here, we also 
require that $ \log\frac{\max_i \bx_i}{\min_i \bx_i} = O(R)$, which we guarantee when choosing the initial $\bx$.

The challenge then becomes efficiently mapping any $\tau$ to 
$\{\inner{\bx}{(\bv-\mu)^q}_{A}\}_{q\in[p]}$ for suitable $\mu$ and $A$. 
We address this by jointly bucketing $\tau$ and $\bv$. Specifically, we map 
$\tau$ into a bucket index $k=\floor{\log_2 
\frac{1-\kappa^\tau}{1-\kappa}}$, pick $\mu$ to be the largest integer 
multiple of $R/((1-\kappa)2^k)$ such that $\mu \le \max_i \bv_i$, and set 
$A=\{i\mid |(1-\kappa)2^k (\bv_i - \mu)| \le R\}$. Since $k \le k_{\max} = 
\floor{\log_2 \frac{1}{1-\kappa}}=O(\log n)$, we argue that computing 
$\inner{\bx}{(\bv-\mu)^q}_{A}$ for every possible resulting $\mu$ and $A$ 
takes at most $O(\bn p \log{\frac{1}{1-\kappa}})=\Otil{\bn}$ time, which we can 
charge to initialization. We further show how to support deletions in 
$\Otil{1}$ time by carefully manipulating the computed quantities.%

\paragraph{Supporting sparse updates.}
Building on $\ScM$, we design the data structure\linebreak $\AEM$ that (approximately)
implements 
the entire mirror descent step~\eqref{eq:l1-reg-smd} in time 
$\Otil{1}$.\footnote{The data structures $\AEM$ and $\ScM$ structure 
support two additional operations 
necessary for our algorithm: efficient approximate sampling from 
$x_t$ and maintenance of a running sum of  
$\hat{x}_\tau$. Given the normalization constant approximation, the 
implementation of these 
operations is fairly 
straightforward, so we do not discuss them in the introduction.}
The data structure maintains vectors $\bx \in \Delta^n$ and $\bv\in\R^n$ 
and $K=\ceil{\log_2 (n+1)}$ instances of $\ScM$ denoted 
$\{\ScM_k\}_{k\in[K]}$. 
The $k$th instance tracks a coordinate subset $S_k \subseteq [n]$ 
such that $\{S_k\}_{k\in[K]}$ partitions $[n]$, and has initial data 
$[\bx]_{S_k}$ and $[\bv]_{S_k}$. We let $\tau_k\ge 0$ denote the ``time 
index'' parameter of the $k$th instance. The data structure satisfies two 
invariants; first, the unnormalized iterate $\hat{x}$ satisfies
\begin{equation}\label{eq:bheap-invariant-1}
[\hat{x}]_{S_k} = \left[
\bx \circ \exp\{ (1-\kappa^{\tau_k})\bv \} \right]_{S_k},~\mbox{for all}~ 
k\in[K].
\end{equation}
Second, the partition satisfies
\begin{equation}\label{eq:bheap-invariant-2}
|S_k| \le 2^{k}-1~\mbox{for all}~k\in[K],
\end{equation}
where at initialization we let $S_K=[n]$ and $S_k = \emptyset$ for $k<K$, 
$\bx=x_0$, $\bv=\frac{v}{1-\kappa} - \log x_0$ and $\tau_K = 0$.

The invariant~\eqref{eq:bheap-invariant-1} allows us to efficiently (in time 
$\Otil{K}=\Otil{1}$) query 
coordinates of $x_t= \hat{x}_t/\lone{\hat{x}_t}$, since $\ScM$ allows us to 
approximate $\lone{\hat{x}_t} = \sum_{k\in[K]} 
Z_{\tau_k}([\bx]_{S_k}, [\bv]_{S_k})$ with $Z$ as defined 
in~\eqref{eq:exact-partition-function}. To implement the dense 
step~\eqref{eq:l1smd-dense}, we 
simply increment $\tau_k\gets \tau_k+1$ for every $k$. 
Let $j$ be the nonzero coordinate of $s_t$ in the sparse 
step~\eqref{eq:l1smd-sparse}, and let $k\in[K]$ be such that 
$j\in S_k$. To implement~\eqref{eq:l1smd-sparse},  we delete 
 coordinate $j$ from $\ScM_k$, and create a singleton instance $\ScM_0$ 
maintaining $S_0=\{j\}$ with initial data $[\bx]_{S_0} =e^{s_t} \hat{x}_j$, 
$[\bv]_{S_0}=v_j/(1-\kappa) - \log( e^{s_t} \hat{x}_j )$ and $\tau_0=0$. 
Going from $k=1$ to $k=K$, we merge $\ScM_{k-1}$ into $\ScM_{k}$ 
until the invariant~\eqref{eq:bheap-invariant-2} holds again. For example, 
if before the sparse step we have $|S_1|=1$, $|S_2|=3$ and $|S_3|=2$, we 
will perform 3 consecutive merges, so that afterwards we have 
$|S_1|=|S_2|=0$ and $|S_3|=7$. 

To merge two $\ScM_{k-1}$ into $\ScM_{k}$, we let $S'_k=S_{k-1}\cup S_k$ 
and initialize a new $\ScM$ instance with 
$[\bx]_{S'_k}=[\hat{x}]_{S'_k}$,\footnote{More precisely, for every $j\in S'_k$ 
we set $\bx_j= \hat{x}_j + \varepsilon \max_{i\in S_k'}\hat{x}_i$, where 
$\varepsilon$ is a small padding constant that ensures the bounded multiplicative 
range necessary for correct operation of $\ScM$.}
$[\bv]_{S'_k} = [v]_{S'_k}/(1-\kappa) - \log [\hat{x}]_{S'_k}$ and 
$\tau_k=0$;  
this takes $\Otil{|S_k'|}=\Otil{2^k}$ time due to the 
invariant~\eqref{eq:bheap-invariant-2}. Noting that a merge at level $k$ 
can only happen once in every $\Omega(2^k)$ updates, we conclude that 
the amortized cost of merges at each level is $\Otil{1}$, and (since 
$K=\Otil{1}$), so is the cost of the sparse update.

\paragraph{Back to distribution design (sampling from the sum).}
Our data structure enables us to compute the 
iteration~\eqref{eq:l1-reg-smd} and query coordinates of the iterates $x_t$ 
and $y_t$ in  $\Otil{1}$ amortized time. However, we cannot compute 
$\td\x$ using the distribution~\eqref{eq:l1-sampling-difference} because 
we do not have an efficient way of sampling from $| y_t - y_0 |$; Taylor 
approximation techniques are not effective for approximating the absolute 
value because it is not smooth. To overcome this final barrier, we introduce 
a new design which we call ``sampling from the sum,''
\begin{equation}\label{eq:l1-sampling-sum}
p_{ij}(x,y) = \left(\frac{1}{3} y_i + \frac{2}{3} [y_0]_i \right)\cdot 
\frac{A_{ij}^2}{\ltwo{\ai}^2}.
\end{equation}
Sampling from the modified distribution is simple, as our data structure  
allows us to sample from $y_t$. Moreover, we show that the 
distribution~\eqref{eq:l1-sampling-sum} satisfies a relaxed version 
of~\eqref{eq:l1-centered-basic} where the LHS is replaced by a local norm 
as before, and the RHS is replaced by $L^2 (V_{x_0}(x_t) + V_{y_0}(y_t))$, 
where $V_x(x')$ is the KL divergence between $x$ and $x'$. In 
Table~\ref{table:dist-vr} we list the sampling distributions we design for 
variance reduction in the different domain geometries.

\arxiv{
\begin{table}
	\centering
	\renewcommand{\arraystretch}{2}
	\bgroup
			\everymath{\displaystyle}
	\begin{tabular}{ccc }
		\toprule
		{Setting} & { $p_{ij}$  } & { $q_{ij}$ } \\
		\midrule
		{\bf $\ell_1$-$\ell_1$} & $ 
		y_i \cdot \frac{A_{ij}^2}{\norm{\ai}_2^2}$ & $ 
		x_j\cdot\frac{A_{ij}^2}{\norm{\aj}_2^2}$ 
		\\[12pt]
		{\bf $\ell_2$-$\ell_1$} & $  
		y_i \cdot \frac{|A_{ij}|}{\lones{\ai}}$ & $  
	   \frac{A_{ij}^2}{\norm{A}_{\mathrm{F}}^2}$ 
		\\[12pt]
		{\bf $\ell_2$-$\ell_1$} & $  
		y_i \cdot \frac{|A_{ij}|}{\lones{\ai}}$ & $  
		\propto x_j^2 \cdot \1_{A_{ij}\neq 0}$
		\\[12pt]
		{\bf $\ell_2$-$\ell_1$} & $  
		y_i \cdot \frac{|A_{ij}|}{\lones{\ai}}$ & $  
		\frac{A_{ij} \cdot x_j^2}{\sum_{k \in [n]} \norm{A_{:k}}_1\cdot  x_k^2}$
		\\[12pt]
		{\bf $\ell_2$-$\ell_2$} & $  
		\frac{\lones{\ai}^2}{\sum_{k\in[m]}\lones{\ai[k]}^2}\cdot\frac{|A_{ij}|}{\lones{\ai}}$
		& $  
		\frac{\lones{\aj}^2}{\sum_{k\in[n]}\lones{\aj[k]}^2}\cdot\frac{|A_{ij}|}{\lones{\aj}}$
		\\[12pt]
		{\bf $\ell_2$-$\ell_2$} & $  
		\frac{{ y_i}^2}{\ltwo{y}^2}\cdot\frac{|A_{ij}|}{\lones{\ai}}$ & $  
		\frac{{x_j}^2}{\ltwo{x}^2}\cdot\frac{|A_{ij}|}{\lones{\aj}}$  \\
		\bottomrule
	\end{tabular}
\egroup
	\caption{\small\textbf{The distributions $p,q$  used in our coordinate gradient estimator.} Comments: The estimator is of the form  
		$ {\tg(x,y) = \big(\frac{1}{p_{ij}} {y_{i} A_{i j}}\cdot e_{j}, - 
			\frac{1}{q_{lk}}{A_{lk}x_{k}}\cdot e_{l}\big)}$ where $i,j\sim 
		p$ and 
		$l,k\sim q$. }
	\label{table:dist-sublinear}

\vspace{16pt}
	\centering
	\renewcommand{\arraystretch}{2}
	\bgroup
	\everymath{\displaystyle}
	\begin{tabular}{ccc}
		\toprule
		{Setting} & { $p_{ij}$  } & { $q_{ij}$ } \\
		\midrule
		{\bf $\ell_1$-$\ell_1$} & $  
		\frac{y_i+2[y_0]_i}{3}\cdot\frac{A_{ij}^2}{\ltwo{\ai}^2}$
		& $  \frac{x_j+2[x_0]_j}{3} 
		\cdot\frac{A_{ij}^2}{\ltwo{\aj}^2}$ \\[12pt]
		{\bf $\ell_2$-$\ell_1$} & $  
		\frac{y_i+2[y_0]_i}{3}\cdot\frac{|A_{ij}|}{\lones{\ai}}$ & 
		$  \frac{A_{ij}^2}{\norm{A}_\mathrm{F}^2}$
		\\[12pt]
		{\bf $\ell_2$-$\ell_1$} & $  
		\frac{y_i+2[y_0]_i}{3}\cdot\frac{|A_{ij}|}{\lones{\ai}}$ & 
		$  \propto [x - x_0]_j^2 \cdot \1_{A_{ij} \neq 0}$
		\\[12pt]
		{\bf $\ell_2$-$\ell_1$} & $  
		\frac{y_i+2[y_0]_i}{3}\cdot\frac{|A_{ij}|}{\lones{\ai}}$ & 
		$ \frac{|A_{ij}| \cdot [x - x_0]_j^2}{\sum_{k \in [n]} \norm{A_{:k}}_1 \cdot [x - x_0]_k^2}$
		\\[12pt]
		{\bf $\ell_2$-$\ell_2$} & $ 
		\frac{\lones{\ai}^2}{\sum_{k\in[m]}\lones{\ai[k]}^2}\cdot\frac{|A_{ij}|}{\lones{\ai}}$
		& $
		\frac{\lones{\aj}^2}{\sum_{k\in[n]}\lones{\aj[k]}^2}\cdot\frac{|A_{ij}|}{\lones{\aj}}$
		\\[12pt]
		{\bf $\ell_2$-$\ell_2$} & $ 
		\frac{[y - y_0]_i^2}{\norm{y - y_0}_2^2} \cdot\frac{|A_{ij}|}{\lones{\ai}}$
		& $
		\frac{[x - x_0]_j^2}{\norm{x - x_0}_2^2}\cdot\frac{|A_{ij}|}{\lones{\aj}}$
		\\[12pt]
		\bottomrule
	\end{tabular}
	\caption{\small\textbf{The distributions $p$, $q$ used for  our reduced variance 
		coordinate 
		gradient 
		estimator.} Comments: The estimator is of the form 
		$ \tg(x,y) = \big(A^\top y +\frac{1}{p_{ij}} {(y_{i}-y_{0,i}) A_{i 
					j}} \cdot
			e_{j}$, $- Ax - 
			\frac{1}{q_{lk}}{A_{lk}(x_{k}-x_{0,k})} \cdot e_{l}\big)$ 
		where 
		$i,j\sim p$ 
		and $l,k\sim q$ and $x_0, y_0$ is a reference point.}
	\label{table:dist-vr}
	\egroup
\end{table} }

\subsection{Related work}

\paragraph{Coordinate methods.}
Updating a single coordinate at a time---or more broadly computing only a 
single coordinate of the gradient at every iteration---is a well-studied and 
successful technique in optimization~\cite{Wright15}. Selecting coordinates 
at random is key to obtaining strong performance guarantees: 
\citet{StrohmerV09} show this for linear regression, \citet{ShalevT11} show 
this for $\ell_1$ regularized linear models, and \citet{Nesterov12} shows 
this for general smooth minimization. 
Later works~\citep{LeeS13,ZhuQRY16,NesterovS17} propose accelerated 
coordinate methods. These works share two common themes: selecting the 
gradient coordinate from a non-uniform distribution (see 
also~\cite{RichtarikT16}), and augmenting the 1-sparse stochastic gradient 
with a dense momentum term. These techniques play important roles in our 
development as well.

To reap the full benefits of coordinate methods, iterations must be very cheap, ideally taking near-constant time. However, most coordinate methods require super-constant time, typically in the form of a 
vector-vector computation. Even works that consider coordinate methods 
in a primal-dual context~\cite{ShalevZ13,ZhuLY16,ZhangX17,NamkoongD16,ShalevW16} perform the coordinate updates only on the dual variable and require a 
vector-vector product (or more generally a component gradient 
computation) at every iteration. 

A notable exception is the work of \citet{wang2017randomized,wang2017primal} which develops a primal-dual stochastic coordinate method for solving Markov decision processes, essentially viewing them as $\ell_\infty$-$\ell_1$ bilinear saddle-point problems. Using a tree-based $\ell_1$ sampler data structure similar to the $\ell_1$ sampler we use for simplex domains for the sublinear case, the method allows for $\Otil{1}$ iterations and a potentially sublinear runtime scaling as $\epsilon^{-2}$%
. \citet{TanZML18}  also consider bilinear saddle-point problems and 
variance reduction. Unlike our work, they assume a separable domain, use  
uniform sampling, and do not accelerate their variance reduction scheme 
with extra-gradient steps. The separable domain makes attaining constant 
iteration cost time much simpler, since there is no longer a normalization 
factor to track, but it also rules out applications to the simplex domain.
While \citet{TanZML18} report promising 
empirical results, their theoretical guarantees do not improve upon prior 
work. 

Our work develops coordinate methods with $\Otil{1}$ iteration cost for new types of problems. Furthermore, it maintains the iteration efficiency even in the presence of dense components arising from the update, thus allowing for acceleration via an extra-gradient scheme.

\paragraph{Data structures for optimization.}
Performing iterations in time that is asymptotically smaller than the number 
of variables updated at every iteration forces us to carry out the updates 
implicitly using data structures; several prior works 
employ data structures for exactly the same reason. One of the most
similar examples comes from~\citet{LeeS13}, who design a data structure for 
an accelerated coordinate method in Euclidean geometry. In our 
terminology, their data structure allows performing each iteration in time 
$O(\rcs)$ while implicitly updating variables of size $O(n)$. 
\citet{DuchiSSC08} design a data structure based on 
balanced search trees that supports efficient Euclidean projection to the 
$\ell_1$ ball of vector of the form $u+s$ where $u$ is in the $\ell_1$ ball 
and $s$ is sparse. They apply it in a stochastic gradient method for 
learning $\ell_1$ regularized linear classifier with sparse features. Among the
many applications of this data structure, \citet{NamkoongD16} adapt it to 
efficiently compute Euclidean projections into the intersection of the 
simplex and a $\chi^2$ ball for 1-sparse updates. \citet{ShalevW16} and \citet{wang2017randomized,wang2017primal}, among others, 
 use binary tree data structures to perform multiplicative weights projection to the simplex and sampling from the iterates.

A recent work of~\citet{SidfordT18} develops a data structure which is somewhat similar to our $\AEM$ data structure, for updates arising from a primal-dual method to efficiently solve $\ell_\infty$ regression. Their data structure was also designed to handle updates to a simplex variable which summed a structured dense component and a sparse component. However, the data structure design of that work specifically exploited the structure of the maximum flow problem in a number of ways, such as bounding the sizes of the update components and relating these bounds to how often the entire data structure should be restarted. Our data structure can handle a broader range of structured updates to simplex variables and has a much more flexible interface, which is crucial to the development of our variance-reduced methods as well as our applications.

Another notable use of data structures in optimization appears in second order methods, where a long line of work uses them to efficiently solve 
sequences of linear systems and approximately compute iterates~\cite{Karmarkar84,AndersonRTTW96,LeeS15, CohenLS18, LeeSZ19,Brand19,BrandKSS20}. Finally, several works on low rank 
	optimization make use of sketches to efficiently represent their iterates 
	and solutions~\cite{ClarksonW13,YurtseverUTC17}.

\paragraph{Numerical sparsity.}
Measures of numerical sparsity, such as the $\ell_2$/$\ell_\infty$ or 
$\ell_1$/$\ell_2$ ratios, are continuous and dimensionless 
relaxations of the $\ell_0$ norm. The \emph{stable rank} of a matrix $A$ 
measures the numerical sparsity of its singular values (specifically, their 
squared $\ell_2$/$\ell_\infty$ ratio)~\cite{CohenNW16}. 

For linear 
regression, stochastic 
methods generally outperform exact gradient methods only when $A$ is 
has low stable rank, cf.\ discussion in~\cite[Section 
4.3]{CarmonJST19}, i.e., numerically sparse singular values. In recent work, \citet{GuptaS18} develop algorithms for linear regression and eigenvector problems for matrices with numerically sparse entries (as opposed to singular values). Our paper further broadens the scope of matrix problems 
for which we can benefit from numerical sparsity. Moreover, our results 
have implications for regression as well, improving on~\cite{GuptaS18} in 
certain numerically sparse regimes. 

In recent work by \citet{BabichevOB19}, the authors develop primal-dual 
sublinear methods for $\ell_1$-regularized linear multi-class 
classification (bilinear games in $\ell_1$-$\ell_\infty$ geometry), 
and obtain 
complexity improvements depending on the numerical sparsity of the 
problem. Similarly to our work, careful design of the sampling distribution 
plays a central role in~\cite{BabichevOB19}. They also develop a data 
structure that allows iteration cost independent of the number of classes. 
However, unlike our work, \citet{BabichevOB19} rely on 
sampling entire rows and columns, have iteration costs linear in 
$n+m$, and do not utilize variance reduction. We believe that our 
techniques can yield improvements in their setting.

\arxiv{

\subsection{Paper organization}

In Section~\ref{sec:prelims}, we set up our terminology, notation, the 
interfaces of our data structures, and the different matrix access models we 
consider.  In Section~\ref{sec:framework} we develop our algorithmic framework: we present coordinate stochastic gradient methods in Section~\ref{sec:sublinear} and their reduced variance counterparts in Section~\ref{sec:vr}. In Section~\ref{sec:example} we apply both methods to solving $\ell_1$-$\ell_1$ matrix games; we show how to implement the method using our data structures and analyze the runtime. In Section~\ref{sec:ds}, we discuss in detail the implementation 
and analysis of our data structures. Finally, in Section~\ref{sec:app} we 
specialize our results to obtain algorithms for minimum enclosing ball and maximum inscribed ball problems as 
well as linear regression. Many proof details as well as our algorithms for other domain setups, i.e.\ $\ell_2$-$\ell_1$ and $\ell_2$-$\ell_2$ are deferred to the appendix.
}

\notarxiv
{
	\bibliographystyle{IEEEtran}

	\clearpage
}

\newpage

\section{Preliminaries}
\label{sec:prelims}

In Section~\ref{ssec:local}, we abstract the properties of the different 
domains we handle into a general notion of a ``local norm'' setup
under which we develop our results. In Section~\ref{ssec:problemdef}, we 
give the definition and optimality criterion of the bilinear saddle-point 
problem we study.  In Section~\ref{ssec:accessmodel}, we give 
the matrix access models used  in the algorithms we design. In 
Section~\ref{ssec:interfaces}, we summarize the interfaces and complexity 
of the data structures we design, deferring their detailed implementations to 
Section~\ref{sec:ds}. 

\subsection{Local norm setups}
\label{ssec:local}

The analyses of our algorithms cater to the geometric of each specific domain. To express our results generically, for each  pair of domains $\zset = \xset \times \yset$, we define an 
associated ``local norm setup'', which contains various data tailored for our 
analyses. While initially this notation may appear complicated or cumbersome, later it helps avoid redundancy in the paper. Further, it clarifies 
the structure necessary to generalize our methods to additional domain 
geometries.
\begin{definition}
\label{def:setup}
A \emph{local norm setup} is the quintuplet \setup, where
\begin{enumerate}
	\item $\zset$ is a compact and convex subset of $\zset^* \defeq \R^{n}\times 
	\R^{m}$.
	\item $\norm{\cdot}_{\cdot}$ is a local norm: for every 
	$z\in\zset$, the function $\norm{\cdot}_z:\zset^*\to\R_{\geq 0}$ is a 
	norm on $\zset^*$.
	\item $r:\zset\to\R$ is a convex distance generating function: 
	 its 
	induced Bregman divergence
	\begin{equation*}
	V_{z}(z') \defeq r(z') - r(z) - \inner{\grad r(z)}{z'-z}.
	\end{equation*}
	satisfies
	\begin{equation}\label{eq:strong-convexity}
	\inner{\gamma}{z-z'} - V_{z}(z') \le \half \norm{\gamma}_*^2
	 \defeq 
	\half \max_{s\in\zset}\norm{\gamma}_{s}^2 ~~\text{for 	all }z,z' \in 
	\zset~\text{and}~\gamma\in\zset^*.
	\end{equation}
	\item $\Theta=\max_{z,z'\in\zset}\{r(z)-r(z')\}$ is the range of $r$. For 
	$z^*\in\argmin_{z\in\zset}r(z)$ we have $\Theta$ is an upper bound on 
	the range of $V_{z^*}(z)\le \Theta$ for all $z\in\zset$.
	\item $\clip:\zset^*\to\zset^*$ is a mapping that enforces a 
	local version of~\eqref{eq:strong-convexity}:
	\begin{equation}\label{eq:strong-convexity-local}
	|\inner{\clip(\gamma)}{z-z'}| - V_{z}(z') \le  
	\norm{\gamma}_z^2~~\text{for all 
	}z,z'\in\zset~\text{and}~\gamma\in\zset^*,
	\end{equation}
	and satisfies the distortion guarantee
	 \begin{equation}\label{eq:clipping-distortion}
	 |\inner{\gamma-\clip(\gamma)}{z}| \le \norm{\gamma}_z^2~~\text{for 
	 all 
	 }z\in\zset~\text{and}~\gamma\in\zset^*.
	 \end{equation}
\end{enumerate}
\end{definition}

\begin{table}
	\centering
	\renewcommand{\arraystretch}{1.5} %
	\begin{tabular}{cccc}
		\toprule
		& {\bf $\ell_1$-$\ell_1$} & {\bf $\ell_2$-$\ell_1$} & {\bf 
		$\ell_2$-$\ell_2$}\\
		\midrule
		$\xset$ & $\Delta^n$ &  $\ball^n$& $\ball^n$ \\ 
		$\yset$ & $\Delta^m$ & $\Delta^m$ & $\ball^m$ \\
		$\norm{\delta}_z$ 
		& 
		$\sqrt{\sum_{k\in [n+m]} [z]_k [\delta]_k^2 }$
		& 
		$\sqrt{\norm{\delta\x}_2^2+ \sum_{i \in [m]} [z\y]_i [\delta\y]_i^2}$ 
		& 
		$\norm{\delta}_2$ \\
		$r$ & 
		$\sum_{k\in [n+m]} [z]_k \log [z]_k$ &
		 $\half \norm{z\x}_2^2 + \sum_{i \in [m]} [z\y]_i \log 
		[z\y]_i$ & 
		$\half \norm{z}_2^2$ \\
		$\Theta$ & $\log(mn)$ & $\half + \log(m)$ & $1$\\
		$\clip(\delta)$ & $\sign(\delta)\circ\min\{1,|\delta|\}$ & 
		$\left(\delta\x,\sign(\delta\y)\circ\min\{1,|\delta\y|\}\right)$ & $\delta$ \\
		\bottomrule
	\end{tabular}
	\caption{\textbf{Local norm setups.}  Comments: In each case, $\zset = \xset \times \yset$, 
	$\Delta^n$ is the probability simplex $\{x \mid x \in \R_{\geq 0}^n, 
	\1_n^\top x = 1\}$, $\ball^n$ is the Euclidean ball 
		$\{x \mid x \in \R^n, \norm{x}_2 \leq 1\}$, the operations $\sign$, $\min$, and $|\cdot|$ are performed entrywise on a vector, and $\circ$ stands for the entrywise product between vectors. 
}
	\label{tab:local-norm}
\end{table}

Table~\ref{tab:local-norm} summarizes the three local norm setups we 
consider. Throughout the paper,
\begin{equation*}
\mbox{for a vector $z\in\xset\times \yset$, we denote its $\xset$ and 
$\yset$ blocks by $z\x$ and $z\y$.}
\end{equation*}
In addition, we write coordinate $i$ of any vector $v$ as $[v]_i$.

\begin{proposition}
\label{prop:local-norm}
The quintuplets \setup in Table~\ref{tab:local-norm} satisfy the local norm 
setup requirements in Definition~\ref{def:setup}.
\end{proposition}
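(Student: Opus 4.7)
The plan is to verify each of the five conditions of Definition~\ref{def:setup} for each row of Table~\ref{tab:local-norm}, exploiting that each setup decomposes as a direct sum over the $\xset$ and $\yset$ blocks. Conditions 1 and 2 (compactness/convexity and the local norm property) are immediate by inspection: the simplex and Euclidean ball are compact and convex, and each candidate $\norm{\cdot}_z$ is a weighted $\ell_2$-type form, a norm on $\zset^*$ for $z$ in the relative interior (which is where the algorithm operates). Condition 4 follows by direct computation of the regularizer's range, using that $\min_{y\in\Delta^n}\sum_i y_i\log y_i=-\log n$ at the uniform distribution, $\max_{y\in\Delta^n}\sum_i y_i\log y_i=0$ at a vertex, and the Euclidean pieces contribute $\half$ each. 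The bound $V_{z^*}(z)\le\Theta$ then follows since at the minimizer $z^*$ the first-order condition $\inner{\grad r(z^*)}{z-z^*}\ge 0$ on $\zset$ gives $V_{z^*}(z)\le r(z)-r(z^*)\le \Theta$ (and on entropic blocks $\grad r(z^*)$ is a constant vector whose inner product with $z-z^*$ vanishes by the simplex constraint).

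Condition 3 is the standard Fenchel-Young inequality for $1$-strongly convex distance generating functions. Each block's regularizer is $1$-strongly convex with respect to its natural norm---$\half\ltwo{\cdot}^2$ trivially in $\ell_2$, and negative entropy in $\ell_1$ by Pinsker's inequality---so block-wise Fenchel-Young yields $\inner{\gamma^x}{z^x-z'^x}-V_{z^x}(z'^x)\le\half\ltwo{\gamma^x}^2$ in the Euclidean case and $\le\half\linf{\gamma^y}^2$ in the entropic case. Summing and matching to the closed form $\norm{\gamma}_*^2=\max_{s\in\zset}\norm{\gamma}_s^2$ gives the stated bound; for a simplex block the identity $\max_{s\in\Delta^n}\sum_i s_i\gamma_i^2=\linf{\gamma}^2$ is what ensures the two sides coincide, and for a Euclidean block $\max_s\ltwo{\gamma}^2=\ltwo{\gamma}^2$ is trivial.

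Condition 5 is the most delicate. For the $\elltwo$ setup $\clip$ is the identity, the distortion bound is trivial, and the local inequality reduces to Cauchy-Schwarz combined with AM-GM, i.e.\ $|\inner{\gamma}{z-z'}|\le\half\ltwo{\gamma}^2+\half\ltwo{z-z'}^2=V_z(z')+\half\norm{\gamma}_z^2$. For any entropic block, the key ingredient I plan to use is the local-norm Fenchel-Young lemma (in the spirit of~\cite{Shalev-Shwartz12}, Section~2.8): if $\linf{\delta}\le 1$ then $\inner{\delta}{y-y'}-V_y(y')\le\norm{\delta}_y^2$. I will derive this by writing the mirror step $y^+\propto y\circ e^{-\delta}$, using the first-order optimality at $y^+$ to reduce to bounding $\inner{\delta}{y-y^+}-V_y(y^+)=\log\sum_k y_k e^{-\delta_k}$, and finally applying $e^{-x}\le 1-x+x^2$ for $|x|\le 1$ followed by $\log(1+u)\le u$. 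Since $\clip$ forces $\linf{\clip(\gamma)}\le 1$ and $\clip(-\gamma)=-\clip(\gamma)$, applying the lemma to both $\pm\clip(\gamma)$ produces the absolute-value form of~\eqref{eq:strong-convexity-local} with $\norm{\clip(\gamma)}_y^2\le\norm{\gamma}_y^2$. The distortion bound~\eqref{eq:clipping-distortion} on an entropic block is an entry-wise check: $|\gamma_k-\clip(\gamma)_k|=(|\gamma_k|-1)_+\le\gamma_k^2$, trivial when $|\gamma_k|\le 1$ and otherwise because $\gamma_k^2-|\gamma_k|+1\ge 0$ since its discriminant is negative; weighting by $z_k\ge 0$ and summing completes it. The main obstacle will simply be careful bookkeeping in the $\elltwoone$ setup, where the $x$- and $y$-block inequalities must be combined additively while preserving the absolute values on the left-hand side of~\eqref{eq:strong-convexity-local}.
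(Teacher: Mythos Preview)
Your proposal is correct and mirrors the paper's proof almost exactly: the paper also dispatches conditions 1, 2, and 4 by inspection, handles condition 3 via block-wise strong convexity plus Fenchel--Young (Cauchy--Schwarz/H\"older + Young), and for condition 5 invokes the entropic local-norms inequality (which the paper cites as \cite[Lemma 13]{CarmonJST19} rather than deriving) together with the coordinate-wise distortion bound $|\gamma_i-[\clip(\gamma)]_i|\le\gamma_i^2$. One small slip in your sketch: the quantity $\inner{\delta}{y-y^+}-V_y(y^+)$ actually equals $\inner{\delta}{y}+\log\sum_k y_k e^{-\delta_k}$, not just the log term, but this extra $\inner{\delta}{y}$ is precisely cancelled by the $-\inner{\delta}{y}$ you get from $\log(1-\inner{\delta}{y}+\norm{\delta}_y^2)\le -\inner{\delta}{y}+\norm{\delta}_y^2$, so the conclusion is unaffected.
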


\noindent
While Proposition~\ref{prop:local-norm} is not new, for completeness and compatibility with our 
notation we prove it in Appendix~\ref{app:prelims-proofs}.

In each local norm setup, we slightly overload notation and use $\norm{\cdot}$ (without a subscript) to denote the dual norm of $\norm{\cdot}_*$, i.e., $\norm{\eta} \defeq \max_{\delta : \norm{\delta}_* \leq 1} \delta^\top \eta$. In each domain geometry $\norm{\cdot}$ and $\norm{\cdot}_*$ are as follows:
\begin{equation}
\begin{aligned}
\norm{\eta} &= \sqrt{\norm{\eta\x}_1^2 + \norm{\eta\y}_1^2} 
 &\quad \norm{\delta}_*    &= \sqrt{\norm{\delta\x}_\infty^2 + \norm{\delta\y}_\infty^2} 
&& \text{ for } \ellone \\
\norm{\eta} &= \sqrt{\norm{\eta\x}_2^2 + \norm{\eta\y}_1^2} 
 &\quad \norm{\delta}_* &= \sqrt{\norm{\delta\x}_2^2 + \norm{\delta\y}_\infty^2} 
&& \text{ for } \elltwoone \\
\norm{\eta} &= \norm{\eta}_2 
&\quad  \norm{\delta}_* &= \norm{\delta}_2 
&& \text{ for } \elltwo ~.
\end{aligned}
\label{eq:norms_for_setups}
\end{equation}

\subsection{The problem and optimality criterion}
\label{ssec:problemdef}

Throughout, we consider the bilinear saddle point problem
\begin{equation}
\min_{x \in \xset} \max_{y \in \yset} f(x, y), \text{ where } f(x, y) \defeq 
y^\top A x +b^\top x-c^\top y, \text{ for } A \in 
\R^{m \times n}, b \in \R^n \text{ and } c \in \R^m.
\label{eq:sublinear-problem}
\end{equation}
We will always assume that every row and column of $A$ has at least one nonzero entry (else removing said row or column does not affect the problem value), so that the number of nonzeros $\nnz$ is at least $m + n - 1$. To simplify the exposition of the $\ellone$ and $\elltwoone$ setups we 
will assume $b = \mathbf{0}_n$ and $c = \mathbf{0}_m$ as is standard in 
the literature. Adding linear terms to these setups is fairly straightforward and does not affect the complexity (up to logarithmic factors) of our designed algorithms using data structures designed in this paper (specifically $\AEM$ in Section~\ref{ssec:interfaces}); see Section~\ref{sec:app} for an example. The gradient 
mapping associated with \eqref{eq:sublinear-problem} for $z = (z\x, z\y) 
\in \zset=\xset\times \yset$ is
\begin{align}
\label{eq:gdef}
g(z) \defeq (\nabla_x f(z), -\nabla_y f(z)) = (A^\top z\y+b, -Az\x+c).
\end{align}
The mapping $g$ is continuous and monotone, where we call $g$ monotone if and only if \[\inner{g(z') - g(z)}{z' - z} \geq 0,\ \forall z,z'\in\zset .\] This holds due to the convexity-concavity (indeed, bilinearity) of function $f$.
Our goal is to design randomized 
	algorithms for finding an (expected) $\epsilon$-accurate saddle point 
$z\in\zset$ such that, in expectation,
\begin{equation}\label{eq:gap-def}
\E\gap(z) \defeq \E\left[\max_{y' \in \yset} f(z\x,y') - \min_{x' \in \xset} f(x', z\y) \right]
\le \epsilon.
\end{equation}

 In order to do so, we aim to find a sequence $z_1,z_2,\ldots,z_K$ with 
(expected) low  
\emph{average regret}, i.e., such that  $ 
\E\max_{u\in\zset}\left\{\frac{1}{K}\sum_{k=1}^K\inner{g(z_k)}{z_k-u}\right\}
 \le \eps$. Due to bilinearity of $f$ we have
\begin{align*}
\E\gap\left(\frac{1}{K}\sum_{k=1}^K z_k\right) = \E
\max_{u\in\zset}\left\{\frac{1}{K}\sum_{k=1}^K\inner{g(z_k)}{z_k-u}\right\}
 \le \eps.
\end{align*}

Finally, we make the explicit assumption that whenever we are discussing an algorithm in this paper with a simplex domain (e.g.\ in $\ell_1$-$\ell_1$ or $\ell_2$-$\ell_1$ case), the quantity $\Lco/\eps$ is bounded by $(m + n)^3$, as otherwise we are in the high-accuracy regime where the runtimes of interior point methods or cutting-plane methods~\citep{lee2015faster,jiang2020improved} are favorable. Specifically for $\ell_1$-$\ell_1$ matrix games in this regime, interior-point methods~\cite{LeeS15,CohenLS18,BrandKSS20} are always faster, see footnote in Section~\ref{sec:our_results}.
We make this assumption for notational convenience when discussing logarithmic factors depending on multiple quantities, such as $m$, $n$, $\Lco,$ and $\eps^{-1}$.

\subsection{Matrix access models}
\label{ssec:accessmodel}

We design randomized algorithms which require accessing and sampling from the matrix $A\in\R^{n\times m}$ in a variety of ways. Here, we list these operations, where we assume each takes constant time. Specific algorithms only require access to a subset of this list; we make a note of each algorithm's requirements when presenting it.

\begin{enumerate}[label=A\arabic*.]
	\item For $i,j \in [m]\times[n]$, return $A_{ij}$.
	\item For $i \in [m]$ and $p\in \{1, 2\}$, draw $j \in [n]$ with 
	probability $|A_{ij}|^p / \norm{\ai}_p^p$.
	\item For $j \in [n]$ and $p\in \{0, 1, 2\}$, draw $i \in [m]$ with 
	probability $|A_{ij}|^p / \norm{\aj}_p^p$.
	\item For $i \in [m]$  ($j \in [n]$) and $p \in \{1, 2\}$, return 
	$\norm{\ai}_p$  ($\norm{\aj}_p$).
	\item For $p \in \{1, 2\}$, return $\max_{i \in [m]} \norm{\ai}_p$, 
	$\max_{j \in [n]} \norm{\aj}_p$, $\nnz$, $\rcs$, and 
	$\norm{A}_{\textup{F}}$.
\end{enumerate}

Given any representation of the matrix as a list of nonzero entries and their indices, we can always implement the access modes above (in the assumed constant time) with $O(\nnz)$ time preprocessing; see e.g.\ \citet{Vose91} for an implementation of the sampling (in a unit cost RAM model). Our variance-reduced algorithms have an additive $O(\nnz)$ term appearing in their runtimes due to the need to compute at least one matrix-vector product to implement gradient estimators. Thus, their stated runtime bounds hold independently of matrix access assumptions.

\subsection{Data structure interfaces}
\label{ssec:interfaces}

We rely on data structures to maintain and sample from the iterates of our 
algorithms. Below, we give a summary of the operations supported by our 
data structures and their runtime guarantees. We show how to implement these data structures in Section~\ref{sec:ds}.

\subsubsection{$\IM_p$}
\label{ssec:interface-norm}

Given $p \in \left\{1, 2\right\}$, we design a data structure $\IM_p$ which 
maintains an implicit representation of the current iterate $x \in \R^n$ and 
a running sum $s$ of all iterates. At initialization, this data structure takes as input the initial iterate $x_0$ to be maintained and for  $p=2$, the data structure also takes 
as input a fixed vector $v$. It then supports the 
following operations.

\renewcommand{\arraystretch}{1.2}

\begin{table}[h]
	\centering
	\begin{tabular}{c||l|l}
		{\bf Category}                   & {\bf Function} & {\bf Runtime} \\ \hline
		\multicolumn{1}{c||}{\multirow{1}{*}{initialize}}
		& $\Initialize(x_0, v)$: $x \gets x_0$, $s \gets 0$ 
		& $ O(n)$ 
		\\ 	\hline
		\multicolumn{1}{c||}{\multirow{5}{*}{update}}
		& $\Scale(c)$: $x\gets c x$ & $O(1)$  
		\\ \cline{2-3} 
		\multicolumn{1}{c||}{} & $\AddSparse(j,c)$: $[x]_j\gets[x]_j+c$ (if $p = 
		1$, we require $c \geq -[x]_j$)  &  $O(\log n)$$^\dagger$\\ 
		\cline{2-3} 
		\multicolumn{1}{c||}{} & $\AddDense(c)$: $x\gets x+c v$ (supported 
		if 
		$p=2$)
		& $O(1)$\\ 
		\cline{2-3} 
		\multicolumn{1}{c||}{} & $\SumUp()$: $s\gets s+x$ & $O(1)$ \\ \hline
		\multicolumn{1}{c||}{\multirow{3}{*}{query}}    
		& $\Get(j)$: Return $[x]_j$ & $O(1)$\\ \cline{2-3} 
		\multicolumn{1}{c||}{} & $\GetSum(j)$: Return $[s]_j$  & $O(1)$\\ 
		\cline{2-3} 
		\multicolumn{1}{c||}{} & $\GetNorm()$: Return $\norm{x}_p$ & $O(1)$ 
		\\ 
		\hline
		sample$^\dagger$
		& $\Sample()$: Return $j$ with probability $[x]_j^p / \norm{x}_p^p$ 
		& 
		$O(\log n)$  \\ 
	\end{tabular}
\vspace{6pt}

{\small{
$^\dagger$ An alternative implementation does not support $\Sample$, but performs $\AddSparse$ in time $O(1)$.}}
\end{table}

The implementation of $\IM_p$ is given in Section~\ref{sec:iterate_maintain_proof}. In Sections~\ref{ssec:l2l1sub} and~\ref{ssec:vr-l2l1} we use variants of this data structure $\WIM_2$ and $\CIM_2$, and defer the detailed discussions of their implementations to Appendix~\ref{app:ds-proofs}.

\subsubsection{ $\AEM$ }
\label{ssec:interface-simplex}

To maintain multiplicative weights updates with a fixed dense component, we design a data structure $\AEM$ 
initialized with an arbitrary point $x_0\in\Delta^n$, a direction $v\in\R^n$, a decay constant $\kappa\in[0,1]$ and an approximation error parameter $\varepsilon$. In order to specify the implementation of our data structure, we require the following definition.

\begin{definition}[$\beta$-padding]
\label{def:betastable}
For $x, x' \in \Delta^n$, we say $x'$ is a $\beta$-padding of $x$ if $x' = \tx/\norm{\tx}_1$, for a point $\tx \in \R^n_{\ge 0}$ with 
	$\tx \ge x$ entrywise
	and
	$\norm{\tx - x}_1 \le \beta$.
\end{definition}
\noindent
Notions similar to $\beta$-padding appear in previous literature~\cite[e.g.,][]{KoutisMP10}. A key technical property of $\beta$-paddings is that they do not increase entropy significantly (see Lemma~\ref{lem:stable}). 

$\AEM$ has maintains two vectors $x,\hat{x}\in\Delta^n$ that, for an error tolerance parameter $\varepsilon$, satisfy the  invariant
\begin{equation}\label{eq:padding-invariant}
\hat{x}\text{ is a }\veps\text{-padding of }x.
\end{equation}
an error tolerance parameter $\varepsilon$
We now specify the interface, where $\circ$ denotes elementwise product, $[x^\kappa]_j = [x]_j^\kappa$ denotes elementwise power, $\Pi_{\Delta}(z) = z/\lone{z}$ normalizes $z\in \R_{\geq 0}^n$ to lie in the simplex, and $\norm{s}_0$ denotes the number of nonzeroes in vector $s$.
To state our runtimes, we define 
\[\omega \defeq \max\left(\frac{1}{1 - \kappa},\;\frac{n}{\lambda\veps}\right).\]
For most of our applications of $\AEM$, $\omega$ is a polynomial in $m$ and $n$ (our iterate dimensions), so $\log(\omega)=O(\log(mn))$ (with the exception of our maximum inscribed ball application, where our runtimes additionally depend polylogarithmically on the size of the hyperplane shifts $b$; see Remark~\ref{rem:polylogbc}). We defer a more fine-grained runtime discussion to Section~\ref{sec:scm}.

\newcommand{\theAEM}{
\begin{table}[h]
	\centering
	\renewcommand{\arraystretch}{1.25}
	\begin{tabular}{c||l|l}
		{\bf Category}                   & {\bf Function} & {\bf 
			Runtime} \\ \hline
		\multicolumn{1}{c||}{\multirow{1}{*}{initialize}}
		& $\Initialize(x_0, v, \kappa, \varepsilon, \lambda):$ $\kappa \in [0, 1)$, $\varepsilon > 0$, $\min_j [x_0]_j \ge \lambda$ & $O(n\log n \log^2 \omega)$\\ 
		\hline
		\multicolumn{1}{c||}{\multirow{3}{*}{update}}
		& $\MultSparse(g)$: $x \gets \varepsilon \text{-padding of }\Pi_{\Delta}(x \circ \exp(g))$ & 
		$O(\norm{g}_0\log^2 n \log^2 \omega)$ \\ \cline{2-3} 
		\multicolumn{1}{c||}{} & $\DenseStep()$: 
		$x \gets \Pi_{\Delta}(x^{\kappa} 
		\circ \exp(v))$
		& $O(\log n)$ \\ \cline{2-3} 
		\multicolumn{1}{c||}{} & $\SumUp()$: $s\gets s+\hat{x}$ (recall invariant~\eqref{eq:padding-invariant}) & $O(\log n \log \omega)$ \\ 
		\hline
		\multicolumn{1}{c||}{\multirow{2}{*}{query}}    
		& $\Get(j)$: Return 
		$[\hat{x}]_j$ & $O(\log n \log \omega)$  \\ 
		\cline{2-3} 
		\multicolumn{1}{c||}{} & $\GetSum(j)$: Return $[s]_j$& 
		$O(\log^2 \omega)$\\ 
		\hline
		sample                    
		& $\Sample()$: 
		$ \text{Return } j \text{ with probability } 
		[\hat{x}]_j$  & 
		$O(\log n \log \omega)$  \\ 
	\end{tabular}
\end{table}
}
\theAEM

The role of $\AEM$ is in to efficiently implement the regularized 
and reduced-variance stochastic mirror descent steps of the 
form~\eqref{eq:l1-reg-smd}. To do this, we initialize the data structure 
with 
$v=(1-\kappa)\log x_0 - \eta\kappa g_0\x$. Then, the 
iteration~\eqref{eq:l1-reg-smd} consists of calling $\DenseStep()$ 
followed by 
$\MultSparse(-\eta \kappa \tilde{\delta}\x )$.

\section{Framework}\label{sec:framework}

In this section, we develop our algorithmic frameworks. The resulting algorithms have either sublinear or variance-reduced complexities. We develop our sublinear coordinate method framework in Section~\ref{sec:sublinear}, and its variance-reduced counterpart in Section~\ref{sec:vr}.

\subsection{Sublinear coordinate methods}
\label{sec:sublinear}

In Section~\ref{ssec:sublin-analysis} we introduce the concept of a \emph{local gradient estimator}, which allow stronger guarantees for  stochastic mirror descent with clipping (Algorithm~\ref{alg:sublinear}) via local norms analysis. Then, in Section~\ref{ssec:coordest} we state the form of the specific local gradient  estimators we use in our coordinate methods, and motivate the values of $\Lco$ in Table~\ref{table:L}. %
\subsubsection{Convergence analysis}\label{ssec:sublin-analysis}

\begin{definition}
\label{def:sublinear}
For local norm setup \setup, we call a stochastic gradient estimator 
$\tilde{g}:\zset\rightarrow\zset^*$ an 
\emph{$L$-local estimator} if it satisfies the following 
properties for all $z\in\zset$:
\begin{enumerate}
\item Unbiasedness: $\E[\tilde{g}(z)]=g(z)$.
\item Second moment bound: for all $w \in \zset$, 
$\E[\norm{\tilde{g}(z)}_{w}^2] 
\le  L^2$. 
\end{enumerate}
\end{definition}

The following lemma shows that $L$-local estimators are unbiased for $L$-bounded operators.
\begin{lemma}\label{lem:sublinear-implication}
A gradient mapping that admits an $L$-local estimator satisfies 
$\norm{g(z)}_* \le L$ for all $z\in\zset$.
\end{lemma}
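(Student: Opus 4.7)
The plan is to apply Jensen's inequality twice, leveraging the unbiasedness and second moment properties of the local estimator, and then take a supremum using the definition of $\norm{\cdot}_*$.

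First, I would fix an arbitrary $z \in \zset$ and an arbitrary $w \in \zset$. Using the unbiasedness property $g(z) = \E[\tilde{g}(z)]$ together with the convexity of the norm $\norm{\cdot}_w$ (valid since every norm is convex by the triangle inequality and positive homogeneity), Jensen's inequality yields
\begin{equation*}
\norm{g(z)}_w = \norm{\E[\tilde{g}(z)]}_w \le \E\big[\norm{\tilde{g}(z)}_w\big].
\end{equation*}
Next, applying Jensen's inequality to the convex function $t \mapsto t^2$ (or equivalently Cauchy--Schwarz), and using the second moment bound from Definition~\ref{def:sublinear},
\begin{equation*}
\norm{g(z)}_w^2 \le \big(\E\norm{\tilde{g}(z)}_w\big)^2 \le \E\big[\norm{\tilde{g}(z)}_w^2\big] \le L^2.
\end{equation*}

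Since this bound holds for every $w \in \zset$, I can take the supremum over $w$. By the definition of the dual norm from item 3 of Definition~\ref{def:setup}, $\norm{g(z)}_*^2 = \max_{w \in \zset} \norm{g(z)}_w^2 \le L^2$, which gives the desired conclusion $\norm{g(z)}_* \le L$.

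There is no significant obstacle here: the lemma is essentially an immediate two-line consequence of Jensen's inequality combined with the particular way $\norm{\cdot}_*$ was defined as a pointwise supremum of the local norms. The only subtle point worth noting in the writeup is to be explicit about using the \emph{pointwise} second moment bound (for each fixed $w$) rather than attempting to interchange the expectation with the supremum in the definition of $\norm{\cdot}_*$, which would go in the wrong direction.
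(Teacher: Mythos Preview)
Your proof is correct and takes essentially the same approach as the paper: apply Jensen's inequality (the paper applies it once to the convex function $\norm{\cdot}_w^2$, you apply it in two steps to $\norm{\cdot}_w$ and then to $t\mapsto t^2$, which is equivalent), then take the supremum over the local-norm point. Your version is in fact slightly cleaner in that you explicitly separate the evaluation point $z$ from the local-norm point $w$, whereas the paper overloads $z$ for both roles.
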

\begin{proof}
For every $z\in\zset$, the function  $\norm{\cdot}_z^2$ is convex. Thus by Jensen's inequality,
\[
\norm{g(z)}_z^2 = \norm{\E\tilde{g}(z)}_z^2 \le \E\norm{\tilde{g}(z)}_z^2\le L^2.
\]
Taking supremum over $z\in\zset$ gives $\norm{g(z)}_*^2\le L^2$.
\end{proof}
\noindent
We note that the same result \emph{does not} hold for $\tilde{g}$ because maximum and expectation do not commute. That is, $\E \norm{\tilde{g}}_*^2$ is not bounded by $L^2$. This fact motivates our use of local norms analysis.

Below, we state Algorithm~\ref{alg:sublinear}, stochastic 
mirror descent with clipping, and a guarantee on its rate of convergence using local gradient estimators. 
We defer the proof to Appendix~\ref{sec:framework-proofs} and note here 
that it uses the ``ghost iterates'' technique due to~\citet{NemirovskiJLS09} 
in order to rigorously bound the expected regret with respect to the best 
response to our iterates, rather than a pre-specified point. This technique is purely analytical and does not affect the algorithm. We 
also note that the second inequality in Proposition~\ref{prop:sublinear} holds with any convex-concave function $f$, similarly to~\cite[Corollary 1]{CarmonJST19}; the first uses bilinearity of our problem structure.

\SetKwComment{Comment}{$\triangleright$\ }{}
\SetCommentSty{color{black}}
\begin{algorithm}[h]
	\DontPrintSemicolon
	\KwInput{Matrix $A\in\R^{m\times n}$, %
	$L$-local gradient estimator $\tilde{g}$, clipping function $\clip(\cdot)$}
	\KwOutput{A point with  $O(\frac{\Theta}{\eta T} + \eta L^2)$ expected 
	duality gap} %
	\Parameter{Step-size $\eta$, number of iterations $T$}
	
	$z_0 \gets \argmin_{z\in\zset} r(z)$\;
	
	\vspace{3pt}
		\For{$t=1,\ldots,T$}
		{
			$z_t\gets\arg\min_{z\in\zset}\left\{\langle 
			\clip(\eta\tilde{g}(z_{t-1})),z\rangle+V_{z_{t-1}}(z)\right\}$\;
		}
		\Return $\frac{1}{T + 1}\sum_{t=0}^T z_t$\;

\caption{Stochastic mirror descent}
	\label{alg:sublinear}
\end{algorithm}

\begin{restatable}{proposition}{restatesublinear}
\label{prop:sublinear}
Let \setup  be a local norm 
setup, let $L,\epsilon>0$, and let $\tilde{g}$ be an $L$-local 
estimator. 
Then, for $\eta\le\frac{\eps}{9L^2}$ and $T\ge\tfrac{6\Theta}{\eta\eps}\ge 
\frac{54L^2\Theta}{\eps^2}$, 
Algorithm~\ref{alg:sublinear} 
outputs a point 
$\bar{z}$ such that
\begin{equation*}
\E \gap(\bar{z}) \le \E\left[\sup_{u\in\zset}\frac{1}{T + 1}\sum_{t = 0}^T\inner{g(z_t)}{z_t-u}\right]  \le \epsilon.
\end{equation*}
\end{restatable}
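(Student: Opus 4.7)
I would fix any $u \in \zset$ and apply the optimality of $z_{t+1}$ in Algorithm~\ref{alg:sublinear}. The standard three-point identity for the Bregman proximal step gives
\[
\langle \clip(\eta \tg(z_{t-1})), z_{t-1} - u\rangle \le V_{z_{t-1}}(u) - V_{z_t}(u) + \langle \clip(\eta \tg(z_{t-1})), z_{t-1} - z_t\rangle - V_{z_{t-1}}(z_t),
\]
and the last two summands are controlled by $\eta^2 \|\tg(z_{t-1})\|_{z_{t-1}}^2$ using the local clipping property~\eqref{eq:strong-convexity-local}.

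\textbf{Step 2 (Reduction to the true gradient $g$).} I would decompose
\[
\eta\langle g(z_{t-1}), z_{t-1} - u\rangle = \langle \clip(\eta \tg(z_{t-1})), z_{t-1} - u\rangle + \langle \eta\tg(z_{t-1}) - \clip(\eta \tg(z_{t-1})), z_{t-1} - u\rangle + \eta\langle g(z_{t-1}) - \tg(z_{t-1}), z_{t-1} - u\rangle,
\]
and apply the distortion bound~\eqref{eq:clipping-distortion} \emph{separately} at $z_{t-1}$ and $u$, yielding $|\langle \eta\tg(z_{t-1}) - \clip(\eta \tg(z_{t-1})), z_{t-1} - u\rangle| \le \eta^2\|\tg(z_{t-1})\|_{z_{t-1}}^2 + \eta^2\|\tg(z_{t-1})\|_u^2$. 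Combining with Step 1 produces the per-step inequality
\[
\eta\langle g(z_{t-1}), z_{t-1} - u\rangle \le V_{z_{t-1}}(u) - V_{z_t}(u) + 2\eta^2\|\tg(z_{t-1})\|_{z_{t-1}}^2 + \eta^2\|\tg(z_{t-1})\|_u^2 + \eta\langle g(z_{t-1}) - \tg(z_{t-1}), z_{t-1} - u\rangle.
\]

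\textbf{Step 3 (Summation, supremum, and ghost iterates).} I would sum the previous display over $t = 1, \ldots, T+1$ and telescope the Bregman terms into $V_{z_0}(u) \le \Theta$ (using $z_0 = \argmin_z r(z)$). Taking $\sup_u$ and then expectation, the $\|\tg(z_{t-1})\|_{z_{t-1}}^2$ terms are bounded by $L^2$ using the $L$-local property (since $z_{t-1}$ is predictable), and the martingale $\sum_t \langle g(z_{t-1}) - \tg(z_{t-1}), z_{t-1}\rangle$ has zero expectation. The remaining problematic quantity is
\[
\E \sup_{u \in \zset}\Bigl(\eta^2 \sum_t \|\tg(z_{t-1})\|_u^2 + \eta \sum_t \langle \tg(z_{t-1}) - g(z_{t-1}), u\rangle \Bigr),
\]
where $u$ may correlate with the sample path. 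To handle it, I would use the ghost-iterate technique of \citet{NemirovskiJLS09}, adapted to local norms and clipping: introduce an auxiliary predictable sequence $u_t$ obtained by running clipped mirror descent on the noise direction starting from $u_0 = z_0$, and apply the exact Step~1–2 analysis to the pair $(u_{t-1}, u)$ in place of $(z_{t-1}, u)$. Predictability of $u_t$ makes the expected linear-in-noise term vanish, and the new quadratic errors are evaluated at $u_{t-1}$ (not at $u$), so the $L$-local bound applies and gives another $O(\Theta + \eta^2 L^2 T)$ contribution.

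\textbf{Step 4 (Tuning and conclusion).} Collecting the above, after dividing by $\eta(T+1)$ I obtain an inequality of the form
\[
\E \sup_{u \in \zset} \frac{1}{T+1}\sum_{t=0}^T \langle g(z_t), z_t - u\rangle \le \frac{c_1 \Theta}{\eta(T+1)} + c_2 \eta L^2
\]
for absolute constants $c_1, c_2$. With $\eta \le \eps/(9L^2)$ and $T \ge 6\Theta/(\eta \eps)$, each summand is at most a fixed fraction of $\eps$, yielding the second inequality in the proposition. The first inequality, $\E \gap(\bar z) \le \E \sup_u \frac{1}{T+1}\sum_t \langle g(z_t), z_t - u\rangle$, is the standard regret-to-gap conversion: convexity-concavity of $f$ gives $\langle g(z_t), z_t - u\rangle \ge f(x_t, u\y) - f(u\x, y_t)$ pointwise, and Jensen's inequality applied to $\bar z$ completes the comparison.

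\textbf{Main obstacle.} The delicate step is Step~3: the quadratic residual $\eta^2 \|\tg(z_{t-1})\|_u^2$ coming from~\eqref{eq:clipping-distortion} at $u$ cannot be bounded pathwise by $\|\tg(z_{t-1})\|_*^2$ without losing the sparsity-adapted $\Lco$ dependence, because the $L$-local property only controls local second moments. The role of clipping in the ghost-iterate recursion is exactly to keep the analysis closed under local norms, so that the noise term and the quadratic residual are simultaneously absorbed into an $O(\Theta + \eta^2 L^2 T)$ budget rather than an $O(\eta^2 \E\|\tg\|_*^2 T)$ one; correctly orchestrating this dual use of clipping (primal and ghost) is the crux of the argument.
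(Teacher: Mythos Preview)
Your overall architecture---mirror descent regret, clipping, and ghost iterates---matches the paper, and you correctly identify that the crux is controlling a quantity like $\eta^2\|\tg(z_{t-1})\|_u^2$ after $\sup_u$. But the way Step~2 is organized \emph{creates} this term unnecessarily, and Step~3 as written does not remove it. Running ghost iterates ``on the noise direction'' $\tg-g$ controls only the linear term $\langle \tg-g, u\rangle$; the quadratic residual $\|\tg\|_u^2$ is still there, and re-applying your Step~1--2 analysis to the ghost pair $(u_{t-1},u)$ would again invoke~\eqref{eq:clipping-distortion} at $u$, making the argument circular.

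The paper's resolution is simply never to apply the distortion bound at $u$. Instead of separating the clipping error from the noise in Step~2, it keeps them bundled as $\tilde{\Delta}_t \defeq g(z_t) - \tfrac{1}{\eta}\clip(\eta\tg(z_t))$ and writes
\[
\eta\langle g(z_t), z_t - u\rangle
= \langle\clip(\eta\tg(z_t)), z_t-u\rangle
+ \eta\langle\tilde{\Delta}_t, s_t-u\rangle
+ \eta\langle\tilde{\Delta}_t, z_t-s_t\rangle,
\]
with ghost iterates $s_t$ driven by $\tfrac{1}{2}\eta\tilde{\Delta}_t$ (not just the noise). The first two sums are bounded by the mirror descent lemma combined with \eqref{eq:strong-convexity} and \eqref{eq:strong-convexity-local}; the local norms that appear are $\|\tg(z_t)\|_{z_t}^2$, $\|\tg(z_t)\|_{s_t}^2$, and $\|g(z_t)\|_*^2$, all evaluated at predictable points, so the $L$-local bound (and Lemma~\ref{lem:sublinear-implication} for $\|g\|_*$) applies directly. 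For the third sum, unbiasedness kills $\E\langle g-\tg, z_t-s_t\rangle$, leaving $\langle\eta\tg-\clip(\eta\tg), z_t-s_t\rangle$, to which~\eqref{eq:clipping-distortion} is applied at $z_t$ and $s_t$---never at $u$. The only $u$-dependence that survives the $\sup$ is $3V_{z_0}(u)\le 3\Theta$, and the bound closes to $\tfrac{3\Theta}{\eta(T+1)}+\tfrac{9}{2}\eta L^2 \le \eps$. In short: the fix is not a more elaborate ghost argument but a different decomposition that postpones~\eqref{eq:clipping-distortion} until the comparison point is predictable.
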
 %
\subsubsection{Coordinate gradient estimators}
\label{ssec:coordest}

We now state the general form which our local gradient estimators 
$\tilde{g}$  take. At a point $z \in \zset$, for specified sampling 
distributions $p(z), q(z)$, sample $i\x,j\x\sim p(z)$ and $i\y,j\y\sim q(z)$. 
Then, define
\begin{equation}
\label{eq:estimate-l1}
\begin{aligned}
\tilde{g}(z)\defeq \left(\frac{A_{i\x j\x}[z\y]_{i\x} }{p_{i\x 
j\x}(z)}e_{j\x},\frac{-A_{i\y j\y}[z\x]_{j\y} }{q_{i\y j\y}(z)}e_{i\y}\right) + g(0)
~~\mbox{where}~~g(0) = (b, c).
\end{aligned}
\end{equation}
It is clear that regardless of the distributions $p(z), q(z)$, for the gradient 
operator in~\eqref{eq:gdef}, $\tilde{g}(z)$ is an unbiased gradient estimator 
($\E[\tilde{g}(z)]=g(z)$) and $\tilde{g}(z) - g(0)$ is 2-sparse.

\paragraph{Optimal values of $\Lco$.} In the remainder of this section we assume for simplicity the $g(0)=0$ (i.e.\ the objective $f$ in~\eqref{eq:sublinear-problem} has not linear terms). Here we compute the optimal values of $L$ for local gradient estimators (see Definition~\ref{def:sublinear}) of the form \eqref{eq:estimate-l1} for each of the local norm setups we consider. This motivates the values of $\Lco$ we derive in the following sections. 
First, in the $\ellone$ case, the second moment of ${\norm{\tilde{g}\x(z)}_{w\x}^2}$ (the local norm of the $\xset$ block of $\tilde{g}(z)$ at point $w$) is 
\[
\E\left[[w\x]_{j\x} \left(\frac{A_{i\x j\x}[z\y]_{i\x} }{p_{i\x 
j\x}(z)}\right)^2\right] = \sum_{i \in [m], j \in [n]} \frac{A_{ij}^2 [z\y]_i^2 
[w\x]_j}{p_{ij}(z)} \geq \left(\sum_{i \in [m], j \in [n]} |A_{ij}| [z\y]_i 
\sqrt{[w\x]_j}\right)^2.
\]
Since $z\y\in\Delta^m$ and $\sqrt{w\x}\in\ball^n$ with $\norm{\sqrt{w\x}}_2 = 1$, the above lower 
bound is in the worst case $\max_i \ltwo{\ai}^2$. 
Similarly, the best possible bound on the $\yset$ is $\max_j \ltwo{\aj}^2$.
Therefore, in the $\ellone$ setup, no local estimator has parameter $L$ smaller than $\Lco$ in Table~\ref{table:L}.
Next, in the $\ell_2$-$\ell_2$ case, the ($\ell_2$) second moment of the 
$\xset$ block is 
\begin{equation*}
\E\left[\left(\frac{A_{i\x j\x}[z\y]_{i\x} }{p_{i\x j\x}(z)}\right)^2\right] = \sum_{i \in [m], j \in [n]} \frac{A_{ij}^2 [z\y]_i^2 }{p_{ij}(z)} \geq \left(\sum_{i \in [m], j \in [n]} |A_{ij}| [z\y]_i\right)^2 = \left(\sum_{i \in [m]} \norm{\ai}_1 [z\y]_i\right)^2.
\end{equation*}
In the worst case, this is at least $(\sum_{i \in [m]} \norm{\ai}_1)^2$; 
similarly, the best second moment bound for the $\yset$ block is 
$(\sum_{j \in [n]} 
\norm{\aj}_1)^2$, which means that $\Lco$ is similarly unimprovable in the $\elltwo$ setup. %

Finally, in the $\ell_2$-$\ell_1$ case, where $\xset = \ball^n$ and $\yset = \Delta^m$, we again have that the $\ell_2$ second moment of the $\xset$ (ball) block is at least
\begin{equation*}
\E\left[\left(\frac{A_{i\x j\x}[z\y]_{i\x} }{p_{i\x j\x}(z)}\right)^2\right]\ge\left(\sum_{i \in [m], j \in [n]} |A_{ij}| [z\y]_i\right)^2.
\end{equation*}
Here, since $z\y\in\Delta^m$, the worst-case lower bound of the variance 
is $\max_i \lone{\ai}^2$. Further, the local norm (at $w$) second 
moment 
of the $\yset$ (simplex) block is at least
$$\E\left[[w\y]_{i\y}\left(\frac{A_{i\y j\y}[z\x]_{j\y} }{q_{i\y j\y}(z)}\right)^2\right]\ge\left(\sum_{i \in [m], j \in [n]} |A_{ij}| [z\x]_j \sqrt{[w\y]_i}\right)^2.$$
Since $z\x \in \ball^n$ and $\sqrt{w\y}\in\ball^m$, in the worst case this 
second moment can be as high as $\opnorm{|A|}$, where we use $|A|$ to 
denote the elementwise absolute value of $A$. This is better than 
the $\Lco$ in Table~\ref{table:L},
suggesting there is room for 
improvement here. However, the sampling probabilities inducing this 
optimal variance bound are of the form
$$q_{ij}(z;w) \propto |A_{ij}| \sqrt{[w\y]_i} \cdot [z\x]_j,$$
and it unclear how to efficiently sample from this distribution. Improving 
our $\elltwoone$ gradient estimator (or proving that no improvement is 
possible) remains an open problem.
\subsection{Variance-reduced coordinate methods}
\label{sec:vr}

In this section, we develop the algorithmic framework we use in our variance-reduced methods. We first define a type of ``centered-local'' gradient estimator, modifying the local gradient estimators of the previous section. We then give the general form of a variance-reduced method and analyze it in the context of our gradient estimators and the error incurred by our data structure maintenance.

\subsubsection{General convergence result}

\begin{definition}
\label{def:vr}
For local norm setup ($\zset$, $\norm{\cdot}_{\cdot}$, 
$r$, $\Theta$, $\clip$), and given a reference point $w_0=(w_0\x,w_0\y)$, we call a stochastic gradient estimator $\tilde{g}_{w_0}:\zset\rightarrow\zset^*$ an \emph{$L$-centered-local estimator} if it satisfies the following properties:
\begin{enumerate}
\item Unbiasedness: $\E[\tilde{g}_{w_0}(z)]=g(z)$.
\item Relative variance bound: for all $w \in \zset$, 
$\E[\norm{\tilde{g}_{w_0}(z)-g(w_0)}_{w}^2] \le  L^2 V_{w_0}(z)$.
\end{enumerate}
\end{definition}

\begin{remark}\label{rem:vr-jensen}
Similarly to Lemma~\ref{lem:sublinear-implication}, a gradient mapping that admits an $L$-centered-local estimator also satisfies $\norm{g(z)-g(w_0)}^2_*\le L^2 V_{w_0}(z)$, by Jensen's inequality.
\end{remark}

Algorithm~\ref{alg:outerloop}  below is an approximate variant of the variance reduction algorithm in our earlier work~\cite{CarmonJST19} which closely builds upon the ``conceptual prox-method'' of~\citet{Nemirovski04}. The algorithm repeatedly calls a stochastic oracle $\mathcal{O}:\zset\to\zset$ to produce intermediate iterates, and then performs an extragradient (linearized) proximal step using the intermediate iterate.
The main modification compared to~\cite{CarmonJST19} is Line~\ref{line:outerloop_approx}, which accommodates slight perturbations to the extra-gradient step results. These perturbations arise due to input requirements of our data structures: we slightly pad coordinates in simplex blocks to ensure they are bounded away from zero.

\begin{algorithm}
	\label{alg:outerloop}
	\DontPrintSemicolon
	\KwInput{Target approximation quality $\vepsout$, $(\alpha,\vepsi)$-relaxed proximal oracle $\mathcal{O}(z)$ for gradient mapping $g$ and some $\vepsi < \vepsout$, distance-generating $r$}
	\Parameter{Number of iterations $K$.}
	\KwOutput{Point $\bar{z}_K$ with $\E\,\gap(\bar{z}) \le 
	\frac{\alpha\Theta}{K} + 
	\vepsout$}
	$z_0 \gets \argmin_{z\in\zset} r(z)$ \;
	\For{$k = 1, \ldots, K$}
	{
		$z_{k-1/2} \leftarrow \mathcal{O}(z_{k - 1})$ \Comment*[f]{We 
		implement $\mathcal{O}(z_{k-1})$ by calling $\InnerLoop(z_{k-1}, 
		\tilde{g}_{z_{k - 1}}, \alpha)$} \; 
		$z_k^\star \defeq \prox{z_{k-1}}{g(z_{k-1/2})}
		= \argmin_{z \in \zset}\left\{ 
		\inner{g\left(z_{k-1/2}\right)}{z} + 
		\alpha V_{z_{k-1}}(z)\right\}$\;
		$z_k\gets$ any point satisfying $V_{z_k}(u) - V_{z_k^\star}(u) \le \frac{\vepsout - \vepsi}{\alpha}$, for all $u \in \zset$ \label{line:outerloop_approx}}
	\Return $\bar{z}_K = \frac{1}{K}\sum_{k=1}^K z_{k-1/2}$
	\caption{$\OuterLoop(\mathcal{O})$ (conceptual 
	prox-method~\citep{Nemirovski04})} 
\end{algorithm}

The following definition summarizes the key property of the oracle $\mathcal{O}$.

\begin{definition}[{\citep[Definition 1]{CarmonJST19}}]
\label{def:alphaprox}
Let operator $g$ be monotone and $\alpha,\vepsi>0$. 
An \emph{($\alpha,\vepsi$)-relaxed proximal oracle} for $g$ is a 
(possibly randomized) map $\mc{O}:\zset\to\zset$ such that 
$z'=\mc{O}(z)$ satisfies
\begin{equation*}
\E\left[ \max_{u \in \zset}\big\{\inner{g(z')}{z' - u} - \alpha 
V_z(u)\big\}\right] \leq \vepsi.
\end{equation*}
\end{definition}

The following proposition, a variant of {\citep[Proposition 1]{CarmonJST19}}, shows that despite the error permitted tolerated in Line~\ref{line:outerloop_approx}, the algorithm still converges with rate $1/K$. We defer its proof to Appendix~\ref{sec:framework-proofs}.

\begin{restatable}{proposition}{restateouterloopproof}
\label{prop:outerloopproof}
Let $\mathcal{O}$ be an ($\alpha$, $\vepsi$)-relaxed proximal oracle with 
respect to gradient mapping $g$, distance-generating function $r$ with 
range at most $\Theta$ and some $\vepsi\le\vepsout$. Let $z_{1/2}, z_{3/2}, \ldots, z_{K-1/2}$ be iterates of 
Algorithm~\ref{alg:outerloop} and let $\bar{z}_K$ be its output. Then 
\begin{equation*}
\E\,\gap(\bar{z}_K)\le 
\E \max_{u\in \zset} \frac{1}{K}\sum_{k=1}^K 
\inner{g(z_{k-1/2})}{z_{k-1/2}-u}
\le 
\frac{\alpha\Theta}{K} + \vepsout.
\end{equation*}
\end{restatable}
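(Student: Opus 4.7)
The plan is to follow the analysis of the conceptual prox-method in~\cite{CarmonJST19} (see also~\cite{Nemirovski04}), modifying the telescoping argument to absorb the perturbation permitted by Line~\ref{line:outerloop_approx}. The first inequality $\E\,\gap(\bar{z}_K) \le \E\max_u \frac{1}{K}\sum_k \inner{g(z_{k-1/2})}{z_{k-1/2}-u}$ follows immediately from the bilinearity of $f$ (which gives $\inner{g(z)}{z-u} = f(z\x,z\y) - f(u\x,z\y) + f(z\x,z\y) - f(z\x,u\y)$ plus a cross term that integrates to zero) exactly as in~\cite{CarmonJST19}; so the work is in establishing the second inequality.

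For each $k$, I would start from the first-order optimality condition for $z_k^\star = \arg\min_{z\in\zset}\{\inner{g(z_{k-1/2})}{z} + \alpha V_{z_{k-1}}(z)\}$ and the standard three-point identity for Bregman divergences, which together yield
\[
\inner{g(z_{k-1/2})}{z_k^\star - u} \;\le\; \alpha\bigl[V_{z_{k-1}}(u) - V_{z_k^\star}(u) - V_{z_{k-1}}(z_k^\star)\bigr] \quad \text{for every } u\in\zset.
\]
Adding $\inner{g(z_{k-1/2})}{z_{k-1/2} - z_k^\star}$ to both sides and defining the random quantity
\[
\delta_k \defeq \max_{v\in\zset}\bigl\{\inner{g(z_{k-1/2})}{z_{k-1/2} - v} - \alpha V_{z_{k-1}}(v)\bigr\},
\]
which satisfies $\E\delta_k \le \vepsi$ by the definition of the $(\alpha,\vepsi)$-relaxed proximal oracle (Definition~\ref{def:alphaprox}), and evaluating the inner max at $v = z_k^\star$, I obtain
\[
\inner{g(z_{k-1/2})}{z_{k-1/2} - u} \;\le\; \delta_k + \alpha V_{z_{k-1}}(u) - \alpha V_{z_k^\star}(u).
\]

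The novelty compared to~\cite{CarmonJST19} lies in how I telescope. The approximation guarantee in Line~\ref{line:outerloop_approx} gives $V_{z_k^\star}(u) \ge V_{z_k}(u) - (\vepsout - \vepsi)/\alpha$ for all $u$, so
\[
-\alpha V_{z_k^\star}(u) \le -\alpha V_{z_k}(u) + (\vepsout - \vepsi),
\]
and summing over $k=1,\ldots,K$ the divergence terms telescope as $\alpha V_{z_0}(u) - \alpha V_{z_K}(u) \le \alpha\Theta$ (using nonnegativity of $V_{z_K}$ and the range bound on $r$ applied at the initializer $z_0 = \arg\min r$). The residual perturbation contributes an additive $K(\vepsout - \vepsi)$. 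Taking the maximum over $u$ on both sides (noting that the $\delta_k$ terms do not depend on $u$) and then expectation yields
\[
\E\max_u \sum_{k=1}^K \inner{g(z_{k-1/2})}{z_{k-1/2}-u} \;\le\; \sum_{k=1}^K \E\delta_k + \alpha\Theta + K(\vepsout - \vepsi) \;\le\; K\vepsi + \alpha\Theta + K(\vepsout - \vepsi),
\]
and dividing by $K$ gives the claimed bound $\alpha\Theta/K + \vepsout$. The main subtlety I expect to navigate carefully is keeping the max-over-$u$ and expectation in the right order so that the oracle bound $\E\delta_k \le \vepsi$ can be applied without paying an extra factor for a union bound; this works because $\delta_k$ is defined via its own internal maximization and is independent of the outer $u$.
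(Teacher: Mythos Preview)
Your proposal is correct and follows essentially the same argument as the paper: both use the optimality condition plus three-point identity for $z_k^\star$, absorb the Line~\ref{line:outerloop_approx} perturbation into the telescoping sum as an additive $K(\vepsout-\vepsi)$, and then apply the oracle bound termwise in expectation. Your packaging of the oracle term into the $u$-independent quantity $\delta_k$ is a slight cosmetic difference that makes the max/expectation order a bit more transparent, but the substance is identical to the paper's proof.
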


\newcommand{\Amax}{\norm{A}_{\max}}

Algorithm~\ref{alg:innerloop-approx} is a variant of the variance-reduced inner loop 
of~\cite{CarmonJST19}, adapted for local norms and inexact iterates (again, due to approximations made by the data structure). It tolerates error in three places:
\begin{enumerate}[leftmargin=*]
	\item Instead of estimating the gradient at the previous iterate $w_{t-1}$, we estimate it at a point $\hat{w}_{t-1}$ such that
	$w_{t-1}-\hat{w}_{t-1}$ has small norm and similar divergence from the reference point $w_0$  (Line~\ref{line:inner-error-hat}).
	\item Instead of letting the next iterate be the exact mirror descent step $w_t^\star$, we let be a point $w_T$ that is close to $w_t^\star$ in norm and has similar divergences to from $w_0$ and to any any point in $\zset$ (Line~\ref{line:inner-error-star}). 
	\item The output $\tilde{w}$ can be an approximation of the average of the iterates, as long as its difference to the true average has bounded norm (Line~\ref{line:inner-average}).
\end{enumerate}

We quantify the effect of these approximations in Proposition~\ref{prop:innerloopproof}, which gives a runtime guarantee for Algorithm~\ref{alg:innerloop-approx} (where we recall the definition of $\norm{\cdot}$ as the dual norm of $\norm{\cdot}_*$, see \eqref{eq:norms_for_setups}). The proof is deferred to Appendix~\ref{sec:framework-proofs}.

\begin{algorithm}[h]
	\label{alg:innerloop-approx}
	\setstretch{1.1}
	\DontPrintSemicolon
	\KwInput{Initial $w_0\in\zset$, $L$-centered-local gradient estimator $\tilde{g}_{w_0}$,
	oracle quality $\alpha>0$}
	\Parameter{Step size $\eta$, number of iterations $T$, approximation tolerance $\epsaprx$}
	\KwOutput{Point $\tilde{w}$ satisfying Definition~\ref{def:alphaprox}}
	\For{$t = 1, \ldots, T$}
	{
	$\hat{w}_{t - 1}\approx w_{t - 1}$ satisfying (a) 
	$V_{w_0}(\hat{w}_{t-1})-V_{w_0}(w_{t-1})\le\tfrac{\epsaprx}{\alpha}$  and (b) 
	$\norm{\hat{w}_{t - 1}-w_{t - 1}}\le\tfrac{\epsaprx}{LD}$
	\;\label{line:inner-error-hat}
	
	$w_t^\star \leftarrow 
				\argmin_{w\in\zset}\left\{\inner{w}{\clip(\eta \tilde{g}_{w_0}(\hat{w}_{t - 
				1}) - \eta g(w_0)) + \eta g(w_0)}+ \frac{\alpha\eta}{2}V_{w_0}(w) + V_{w_{t - 1}}(w) 
		\right\}$\; 
		
	$w_t\approx w_t^\star$ satisfying 
	\begin{enumerate}[label=(\alph*),noitemsep,partopsep=0pt,topsep=0pt,parsep=0pt]
		\item $\max_u \left[V_{w_t}(u)-V_{w_t^\star}(u)\right]\le\eta\epsaprx$,
		\item $V_{w_0}(w_t)-V_{w_0}(w_t^\star)\le\tfrac{\epsaprx}{\alpha}$, and 
		\item $\norm{w_t - w_t^\star}\le \tfrac{\epsaprx}{2LD}$
	\end{enumerate}\label{line:inner-error-star}%
	}%
	\Return 
	$\tilde{w}\approx\frac{1}{T}\sum_{t=1}^T w_t$ satisfying $\norm{\tilde{w} - \frac{1}{T}\sum_{t=1}^T w_t} \le \tfrac{\epsaprx}{LD}$.\label{line:inner-average}
	\caption{$\InnerLoopApprox(w_0, \tilde{g}_{w_0}, \epsaprx)$}
\end{algorithm}

\begin{restatable}{proposition}{restateinnerloop}
\label{prop:innerloopproof}
	Let ($\zset$, $\norm{\cdot}_{\cdot}$, 
	$r$, $\Theta$, $\clip$) be any local norm setup.
	Let $w_0 \in \zset$, $\alpha \ge \vepsi>0$, and 
	$\tilde{g}_{w_0}$ be an $L$-centered-local estimator for some $L \ge \alpha$. Assume the domain is bounded by $\max_{z\in\zset}\|z\|\le D$, that $g$ is $L$-Lipschitz, i.e.\ $\norm{g(z)-g(z')}_*\le L\norm{z-z'}$, that $g$ is $LD$-bounded, i.e.\ $\max_{z\in\zset}\norm{g(z)}_* \le LD$, and that $\hat{w}_0 = w_0$. Then, for $\eta = \frac{\alpha}{10L^2}$, 
	$T \geq \frac{6}{\eta\alpha} \ge\frac{60L^2}{\alpha^2}$, and $\epsaprx = \frac{\vepsi}{6}$,
	Algorithm~\ref{alg:innerloop-approx} outputs a point $\hat{w} \in 
	\zset$ such that
	\begin{equation}\label{eq:innerloop-guarantee}
	\Ex{}\max\limits_{u\in\zset}
	\left[\inner{g(\tilde{w})}{\tilde{w} - u} - \alpha V_{w_0}(u)\right]
	\le \vepsi,
	\end{equation}
	i.e.\ Algorithm~\ref{alg:innerloop-approx} is an 
	$(\alpha,\vepsi)$-relaxed proximal oracle.
\end{restatable}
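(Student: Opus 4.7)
The plan is to adapt the variance-reduced inner loop analysis of~\cite{CarmonJST19} to our setting, which requires two main modifications: handling local norms (via the clipping operator, analogous to the treatment in Proposition~\ref{prop:sublinear}) and absorbing the three sources of approximation error present in Algorithm~\ref{alg:innerloop-approx}. I will prove the stronger bound~\eqref{eq:innerloop-guarantee}, which (by convexity of $V_{w_0}$) implies that $\tilde{w}$ is an $(\alpha,\vepsi)$-relaxed proximal oracle output in the sense of Definition~\ref{def:alphaprox}.

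The first step is the standard mirror descent per-iteration inequality. For each $t$ and each $u\in\zset$, the optimality of $w_t^\star$ gives
\[
\inner{\clip(\eta \tilde{g}_{w_0}(\hat{w}_{t-1}) - \eta g(w_0)) + \eta g(w_0)}{w_t^\star - u} + \tfrac{\alpha\eta}{2}\left[V_{w_0}(w_t^\star)-V_{w_0}(u)\right] + V_{w_{t-1}}(w_t^\star) - V_{w_{t-1}}(u) + V_{w_t^\star}(u) \le 0.
\]
Rearranging and splitting the inner product via $w_t^\star - u = (w_t^\star - \hat{w}_{t-1}) + (\hat{w}_{t-1}-u)$ yields, after using the clipping guarantee~\eqref{eq:strong-convexity-local} at $w_{t-1}$ to bound the $(w_t^\star - \hat{w}_{t-1})$ term by $\eta^2\|\tilde{g}_{w_0}(\hat{w}_{t-1})-g(w_0)\|_{w_{t-1}}^2 + \|g(w_0)\|\cdot\|\cdot\|$ terms, and the distortion bound~\eqref{eq:clipping-distortion} to replace the clipped quantity with the unclipped one (paying a similar squared-norm term), the key inequality
\[
\eta\inner{\tilde{g}_{w_0}(\hat{w}_{t-1})}{\hat{w}_{t-1}-u} + \tfrac{\alpha\eta}{2}\left[V_{w_0}(w_t^\star)-V_{w_0}(u)\right] + V_{w_t^\star}(u) - V_{w_{t-1}}(u) \le C\eta^2\,\|\tilde{g}_{w_0}(\hat{w}_{t-1})-g(w_0)\|_{w_{t-1}}^2
\]
for an absolute constant $C$, where I have also used $g(w_0)=\E\tilde{g}_{w_0}$ and $L$-boundedness to absorb lower-order items. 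Using the ``ghost iterate'' construction from~\cite{NemirovskiJLS09,CarmonJST19} lets me replace the supremum over $u\in\zset$ with a single deterministic sequence while preserving expectations, which is essential since $u$ may depend on the randomness.

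The second step is to absorb all three approximation errors. For the inner product $\inner{\tilde{g}_{w_0}(\hat{w}_{t-1})}{\hat{w}_{t-1}-u}$, I replace $\hat{w}_{t-1}$ by $w_{t-1}$ at a cost of $\|\tilde{g}_{w_0}(\hat{w}_{t-1})\|_*\cdot\|\hat{w}_{t-1}-w_{t-1}\| \le LD\cdot\epsaprx/(LD)=\epsaprx$ per step (using $L$-boundedness of the estimator implied by Remark~\ref{rem:vr-jensen}). The $V_{w_0}(w_t^\star)$ and $V_{w_t^\star}(u)$ terms are replaced by $V_{w_0}(w_t)$ and $V_{w_t}(u)$ at total additive cost $O(\epsaprx)$ per iteration, using items (a)--(b) of Line~\ref{line:inner-error-star}. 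Summing from $t=1$ to $T$, dividing by $T\eta$, and telescoping $V_{w_{t-1}}(u)-V_{w_t}(u)$ to $V_{w_0}(u)-V_{w_T}(u) \le V_{w_0}(u)$, yields
\[
\tfrac{1}{T}\sum_{t=1}^T \inner{g(w_{t-1})}{w_{t-1}-u} + \tfrac{\alpha}{2T}\sum_{t=1}^T V_{w_0}(w_t) - \tfrac{\alpha}{2}V_{w_0}(u) \le \tfrac{\Theta}{T\eta} + C\eta\cdot\tfrac{1}{T}\sum_{t=1}^T \E\|\tilde{g}_{w_0}-g(w_0)\|_{w_{t-1}}^2 + O(\epsaprx/\eta),
\]
after taking expectations and using unbiasedness $\E\tilde{g}_{w_0}(w_{t-1})=g(w_{t-1})$.

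Now I apply the $L$-centered-local bound $\E\|\tilde{g}_{w_0}(w_{t-1})-g(w_0)\|_{w_{t-1}}^2 \le L^2 V_{w_0}(w_{t-1})$. With $\eta=\alpha/(10L^2)$, the variance term $C\eta L^2 V_{w_0}(w_{t-1})$ is bounded by $\tfrac{\alpha}{4}V_{w_0}(w_{t-1})$, which is absorbed by the $\tfrac{\alpha}{2T}\sum V_{w_0}(w_t)$ term on the left (after a small re-indexing). What remains on the left is the desired quantity $\tfrac{1}{T}\sum_{t=1}^T\inner{g(w_{t-1})}{w_{t-1}-u}-\alpha V_{w_0}(u)$ up to factors of $2$; convexity-bilinearity and the averaging error bound from Line~\ref{line:inner-average} (which contributes $\le\epsaprx$ via $L$-Lipschitzness of $g$ and boundedness) then let me replace the average with $g(\tilde{w})$ and $\tilde{w}$. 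With $T\ge 6/(\eta\alpha)$, the term $\Theta/(T\eta)\le\alpha\Theta/6$ is controlled; setting $\epsaprx=\vepsi/6$ and collecting all $O(\epsaprx)$ error contributions gives total error at most $\vepsi$, as desired.

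The main obstacle will be the careful bookkeeping of the approximation errors: each of the three items in Algorithm~\ref{alg:innerloop-approx} contributes in a slightly different way (via bounded gradient, via bounded divergence, via bounded norm of the averaged iterate), and it must be verified that the scaling $\epsaprx=\vepsi/6$ suffices given the chosen $\eta$ and $T$. The local-norms clipping argument itself is a direct adaptation of the proof of Proposition~\ref{prop:sublinear} combined with the relative variance bound of Definition~\ref{def:vr}, so the novelty here is essentially in the error-tolerance modifications, which mirror how Line~\ref{line:outerloop_approx} of Algorithm~\ref{alg:outerloop} is accommodated in the proof of Proposition~\ref{prop:outerloopproof}.
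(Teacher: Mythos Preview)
Your overall architecture is right—mirror descent regret, ghost iterates, absorb the relative-variance term $C\eta L^2 V_{w_0}(\cdot)$ into the $\tfrac{\alpha}{2}V_{w_0}(\cdot)$ on the left—but there is a concrete gap in how you pass from $\hat{w}_{t-1}$ to $w_{t-1}$.

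You write that replacing $\hat{w}_{t-1}$ by $w_{t-1}$ in $\inner{\tilde{g}_{w_0}(\hat{w}_{t-1})}{\hat{w}_{t-1}-u}$ costs at most $\norm{\tilde{g}_{w_0}(\hat{w}_{t-1})}_*\cdot\norm{\hat{w}_{t-1}-w_{t-1}}\le LD\cdot\epsaprx/(LD)$, citing Remark~\ref{rem:vr-jensen}. But Remark~\ref{rem:vr-jensen} bounds the \emph{true} gradient $\norm{g(z)-g(w_0)}_*$, not the stochastic estimator; $\norm{\tilde{g}_{w_0}}_*$ can be arbitrarily large (this is precisely why clipping is needed). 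So this step does not hold pointwise. You could try to salvage it by taking conditional expectation first, turning $\tilde{g}_{w_0}(\hat{w}_{t-1})$ into $g(\hat{w}_{t-1})$, but then the substitution has to happen \emph{after} the ghost-iterate decoupling of the supremum over $u$, and your write-up places it before.

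The paper avoids this problem by a different decomposition: it writes
\[
\eta g(w_t)=\bigl[\clip(\eta\tilde\Delta_t)+\eta g(w_0)\bigr]+\bigl[\eta\Delta_t-\clip(\eta\tilde\Delta_t)\bigr]+\bigl[\eta g(w_t)-\eta g(\hat w_t)\bigr],
\]
with $\tilde\Delta_t=\tilde g_{w_0}(\hat w_t)-g(w_0)$ and $\Delta_t=g(\hat w_t)-g(w_0)$. The first bracket is exactly what drives $w_t^\star$; the third uses only Lipschitzness of $g$; and in the second, the estimator appears only inside $\clip(\cdot)$ or in $\eta\tilde\Delta_t-\eta\Delta_t$, which has conditional mean zero—so ghost iterates plus the clipping distortion bound~\eqref{eq:clipping-distortion} handle it without ever needing $\norm{\tilde g}_*$.

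Two smaller points. First, your ``$\Theta/(T\eta)\le \alpha\Theta/6$ is controlled'' is off: there is no $\Theta$ in this proof. The telescoped term is $V_{w_0}(u)/(\eta T)$ (plus a $2V_{w_0}(u)/(\eta T)$ from the ghost sequence), and with $T\ge 6/(\eta\alpha)$ it combines with $\tfrac{\alpha}{2}V_{w_0}(u)$ to give the full $\alpha V_{w_0}(u)$ subtracted on the left. Second, your displayed error $O(\epsaprx/\eta)$ would be $O(L^2\vepsi/\alpha^2)\gg\vepsi$; the approximation tolerances in Lines~\ref{line:inner-error-hat}--\ref{line:inner-average} are calibrated so that each contributes $O(\epsaprx)$ \emph{after} dividing by $\eta T$ (e.g.\ item~(a) is $\eta\epsaprx$, not $\epsaprx$), and the total comes to $6\epsaprx=\vepsi$.
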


\begin{remark}[Assumption of boundedness on $g$]\label{rem:polylogbc} The assumption that $g$ is $LD$-bounded in the dual norm is immediate from other assumptions used in Proposition~\ref{prop:innerloopproof} in the case of the applications in Section~\ref{sec:example}, where we develop methods for solving $\ell_1$-$\ell_1$ matrix games and assume that $g(0) = 0$. In applications in Section~\ref{sec:app}, due to the existence of extra linear terms $b$, $c\neq0$, all complexity bounds will have an additional dependence on~$\log(\norm{[b;c]}_*)$ which we pay in the implementation of data structure $\AEM$ (i.e.\ the parameter $L$ in the bound on $g$ is larger if $\norm{[b;c]}_*$ is large). We hide this extra polylogarithmic factor in the $\widetilde{O}$ notation. 
\end{remark}

We also remark that (up to constants) the bounds on the range of $\vepsi\le\alpha\le L$ in the statement of Proposition~\ref{prop:innerloopproof} correspond to the cases where the inner and outer loop consist of a single iteration.

\subsubsection{Variance-reduced coordinate gradient estimators}
\label{ssec:vr-estimator}
We now state the general form which our centered-local estimators $\tilde{g}_{w_0}$ take, given a reference point $w_0\in\zset$. At a point $z$, for sampling distributions $p(z; w_0), q(z; w_0)$ to be specified, sample $i\x, j\x \sim p(z; w_0)$ and $i\y, j\y \sim q(z; w_0)$. Then, define 
\begin{equation}
\label{eq:estimate-vr}
\begin{aligned}
\tilde{g}_{w_0}(z)= \left(\frac{A_{i\x j\x}[z\y-w_0\y]_{i\x} }{p_{i\x j\x}(z;w_0)}e_{j\x},\frac{-A_{i\y j\y}[z\x-w_0\x]_{j\y} }{q_{i\y j\y}(z;w_0)}e_{i\y}\right)+g(w_0).
\end{aligned}
\end{equation}

It is clear that regardless of the distributions $p(z; w_0), q(z; w_0)$, this is an unbiased gradient estimator ($\E[\tilde{g}_{w_0}(z)] = g(z)$). Furthermore, $\tilde{g}_{w_0}(z) - g(w_0)$ is always 2-sparse.

\section{Matrix games}
\label{sec:example}

In this section we instantiate the algorithmic framework of  Section~\ref{sec:framework} in $\ellone$ setup without linear terms, i.e.\ $b=c=0$ in the objective~\eqref{eq:sublinear-problem}. This is the fundamental ``matrix game'' problem
\[\min_{x\in\Delta^m}\max_{y\in\Delta^n} y^\top Ax.\] 

We give two algorithms for approximately solving matrix games. In Section~\ref{ssec:l1l1sub} we develop a stochastic coordinate method based on Algorithm~\ref{alg:sublinear} with potentially sublinear runtime $\Otil{(\looco/\epsilon)^2}$. In Section~\ref{app:sublinear-proofs} we develop a coordinate variance-reduction based on Algorithm~\ref{alg:outerloop} with runtime $\Otil{\nnz + \sqrt{\nnz}\cdot \looco/\epsilon}$ that improves on the former runtime whenever it is $\Omega(\nnz)$. In both cases we have
\begin{equation}\label{eq:l11def}
\looco \defeq \max\left\{ \max_i \norm{\ai}_2, \max_j \norm{\aj}_2\right\}
\end{equation}
as in Table~\ref{table:L}. 

Instantiations for the $\elltwoone$ and $\elltwo$ setups follow similarly. We carry them out in Appendices~\ref{app:sublinear-proofs} (for stochastic coordinate methods) and~\ref{app:vr-proofs} (for variance reduction methods).

\begin{remark}\label{rem:linear-terms}
	For simplicity in this section (and the remaining implementations in Appendices~\ref{app:sublinear-proofs},~\ref{app:vr-proofs}), we will set $g(0) = 0$ whenever the setup is not $\ell_2$-$\ell_2$, as is standard in the literature. We defer a discussion of how to incorporate arbitrary linear terms in simplex domains to Section~\ref{sec:app}; up to additional logarithmic terms in the runtime, this extension is supported by $\AEM$.
\end{remark}

\paragraph{Assumptions.}
Throughout (for both Sections~\ref{ssec:l1l1sub} and~\ref{ssec:vr-l1}), we assume access to entry queries, $\ell_2$ norms of rows and columns, 
and $\ell_2$ sampling distributions for all rows and columns. We use the 
$\ell_1$-$\ell_1$ local norm setup (Table~\ref{tab:local-norm}). We also define $L_{\max}\defeq\norm{A}_{\max} = \max_{i \in [m], j \in [n]} |A_{ij}|$.

\subsection{$\ellone$ sublinear coordinate method}
\label{ssec:l1l1sub}

\subsubsection{Gradient estimator}
For $z\in\Delta^n\times\Delta^m$ and desired accuracy $\eps>0$, we specify the sampling distributions $p(z), q(z)$:
\begin{equation}\label{eq:l1-prob-def}
p_{ij}(z)\defeq[z\y]_i\frac{A_{ij}^2}{\norm{\ai}_2^2}~~\mbox{and}~~\ 
q_{ij}(z)\defeq[z\x]_j\frac{A_{ij}^2}{\norm{\aj}_2^2}.
\end{equation}

We first state and prove the local properties of this estimator.

\begin{restatable}{lemma}{restateestproplone}
\label{lem:est-prop-l1}
In the $\ellone$ setup, estimator 
(\ref{eq:estimate-l1}) using the sampling distribution in~(\ref{eq:l1-prob-def}) is a\\
 $\sqrt{2}\looco$-local estimator. 
 \end{restatable}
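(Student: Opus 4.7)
The plan is to verify the two defining properties of a local gradient estimator (Definition~\ref{def:sublinear}) for the specified sampling distributions. Unbiasedness has already been observed to hold for the general form~\eqref{eq:estimate-l1} regardless of the distributions $p(z), q(z)$, so the only substantive work is the second moment bound. Using the $\ell_1$-$\ell_1$ local norm $\norm{\delta}_z^2 = \sum_k [z]_k [\delta]_k^2$ (see Table~\ref{tab:local-norm}), the estimator's second moment at $w \in \zset$ decomposes as $\E \norm{\tilde{g}(z)}_w^2 = \E \norm{\tilde{g}\x(z)}_{w\x}^2 + \E \norm{\tilde{g}\y(z)}_{w\y}^2$, and I plan to bound each block separately by $\looco^2$, yielding $L^2 = 2\looco^2$.

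For the $\xset$ block, since $\tilde{g}\x(z) = \tfrac{A_{i\x j\x}[z\y]_{i\x}}{p_{i\x j\x}(z)}e_{j\x}$ is 1-sparse, the key computation is
\begin{equation*}
\E \norm{\tilde{g}\x(z)}_{w\x}^2
= \sum_{i,j} p_{ij}(z)\cdot [w\x]_j \cdot \frac{A_{ij}^2 [z\y]_i^2}{p_{ij}(z)^2}
= \sum_{i,j} \frac{A_{ij}^2 [z\y]_i^2 [w\x]_j}{p_{ij}(z)}.
\end{equation*}
Substituting $p_{ij}(z) = [z\y]_i A_{ij}^2 / \norm{\ai}_2^2$ from~\eqref{eq:l1-prob-def} causes the $A_{ij}^2$ and one factor of $[z\y]_i$ to cancel, leaving $\sum_{i,j} [z\y]_i [w\x]_j \norm{\ai}_2^2$. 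Since $z\y \in \Delta^m$ and $w\x \in \Delta^n$ are probability vectors, this equals $\sum_i [z\y]_i \norm{\ai}_2^2 \le \max_i \norm{\ai}_2^2 \le \looco^2$ by the definition~\eqref{eq:l11def} of $\looco$. An entirely analogous calculation using $q_{ij}(z) = [z\x]_j A_{ij}^2 / \norm{\aj}_2^2$ gives $\E \norm{\tilde{g}\y(z)}_{w\y}^2 \le \max_j \norm{\aj}_2^2 \le \looco^2$.

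Summing the two blocks gives $\E \norm{\tilde{g}(z)}_w^2 \le 2\looco^2$, which is the required bound with $L = \sqrt{2}\looco$. Before invoking this, I would briefly verify that the proposed $p(z)$ is a valid probability distribution on $[m]\times[n]$: $\sum_{i,j} p_{ij}(z) = \sum_i [z\y]_i \sum_j A_{ij}^2/\norm{\ai}_2^2 = \sum_i [z\y]_i = 1$ (and similarly for $q$). There is no real obstacle here; the computation is essentially a one-line cancellation enabled by the choice of $p$ and $q$ to match the structure of the estimator, and the bound uses only that $z$ and $w$ are probability vectors.
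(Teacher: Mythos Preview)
Your proposal is correct and follows essentially the same approach as the paper: decompose the local norm into its $\xset$ and $\yset$ blocks, plug in the sampling distributions~\eqref{eq:l1-prob-def} so that the $A_{ij}^2$ and one factor of $[z\y]_i$ (resp.\ $[z\x]_j$) cancel, and then use that $z\y,w\x\in\Delta$ to bound each block by $\looco^2$. The only (minor) addition you make is the explicit check that $p$ and $q$ are valid probability distributions, which the paper leaves implicit.
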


\begin{proof}
	Unbiasedness holds by definition. For arbitrary $w\x$, we have the variance bound:
	\begin{align*}
	\E\left[\norm{\tilde{g}\x(z)}_{w\x}^2\right]
	& \le  
	\sum_{i\in[m],j\in[n]} p_{ij}(z)\cdot\left( [w\x]_j\cdot \left(\frac{A_{ij}[z\y]_i}{p_{ij}(z)}\right)^2 \right) = \sum\limits_{i\in[m],j\in[n]}[w\x]_j \frac{A_{ij}^2[z\y]_i^2}{p_{ij}(z)}\\
	& \le \sum\limits_{i\in[m],j\in[n]}[w\x]_j[z\y]_i\norm{\ai}_2^2
	\le \max_{i\in[m]}\norm{\ai}_2^2 \leq(L_{1,1}^{\textup{co}})^2.
	\end{align*}
	
	Similarly, we have
	\begin{equation*}
	\E\left[\norm{\tilde{g}\y(z)}_{w\y}^2\right] \le (L_{1,1}^{\textup{co}})^2.
	\end{equation*}
	The definition $\norm{\tilde{g}(z)}^2_w = \norm{\tilde{g}\x(z)}^2_{w\x}+\norm{\tilde{g}\y(z)}^2_{w\y}$ yields the claimed variance bound.
\end{proof}

\subsubsection{Implementation details}
\label{ssec:implementsublone}
In this section, we discuss the details of how to leverage the $\IM_1$ data structure to implement the iterations of our algorithm. The algorithm we analyze is Algorithm~\ref{alg:sublinear}, using the local estimator defined in \eqref{eq:estimate-l1}, and the distribution \eqref{eq:l1-prob-def}. We choose
\[\eta = \frac{\epsilon}{18\loocop^2} \text{ and } T= \left\lceil\frac{6\Theta}{\eta\eps}\right\rceil\ge\frac{108\loocop^2 \log(mn)}{\epsilon^2}.\]
Lemma~\ref{lem:est-prop-l1} implies that our estimator satisfies the remaining requirements for Proposition~\ref{prop:sublinear}, giving the duality gap guarantee in $T$ iterations. In order to give a runtime bound, we claim that each
iteration can be implemented in $\log(mn)$ time, with $O(m + n)$ 
additional runtime. 

\paragraph{Data structure initializations and invariants.}
At the start of the algorithm, we spend $O(m + n)$ time 
initializing data structures via $\IMS_1\x.\Initialize(\frac{1}{n}\1_n, 
\mathbf{0}_n)$ and $\IMS_1\y.\Initialize(\frac{1}{m}\1_m, \mathbf{0}_m)$, where $\IMS_1\x, \IMS_1\y$ are appropriate instantiations of $\IM_1$ data structures. Throughout, we preserve the invariant that the points maintained by $\IMS_1\x, \IMS_1\y$ correspond to the $x$ and $y$ blocks of the current iterate $z_t$ at iteration $t$ of the algorithm.

\paragraph{Iterations.}
For simplicity, we only discuss the runtime of updating the $x$ block as 
the $y$ block follows symmetrically. We divide each iteration into the 
following substeps, each of which we show runs in time $O(\log mn)$. We 
refer to the current iterate by $z = (z\x, z\y)$, and the next iterate by $w = 
(w\x, w\y)$. \\

\noindent
\emph{Sampling.}
Recall that
$$p_{ij}(z) \defeq [z\y]_i\frac{A_{ij}^2}{\norm{\ai}_2^2}.$$
We first sample coordinate $i$ via $\IMS_1\y.\Sample()$ in $O(\log m)$.
Next, we sample $j \in [n]$ with probability proportional to $A_{ij}^2$ using the data structure corresponding to $\ai$ in $O(1)$ by assumption of the matrix access model.\\

\noindent
\emph{Computing the gradient estimator.} To compute $c \defeq \clip (A_{ij}[z\y]_i/p_{ij})$, it suffices to compute $A_{ij}$, $[z\y]_i$, and $p_{ij}$. Using an entry oracle for $A$ we obtain $A_{ij}$, and we get  $[z\y]_i$ by calling $\IMS_1\y.\Get(i)$. Computing $p_{ij}$ using the precomputed $\ltwo{\ai}$ and the values of $A_{ij}, [z\y]_i$ therefore takes $O(1)$ time.\\

\noindent
\emph{Performing the update.} For the update corresponding to a proximal step, we have
$$w\x\leftarrow \Pi_{\xset}\left(z\x\circ\exp(-\eta \tilde{g}\x(z))\right)=\frac{z\x\circ\exp(-\eta \tilde{g}\x(z))}{\lones{z\x\circ\exp(-\eta \tilde{g}\x(z))}}. $$
We have computed $\tilde{g}\x(z)$, so to perform this update, we 
call
\begin{align*}
& \xi \gets\IMS_1\x.\Get(j);\\
& \IMS_1\x.\AddSparse(j, (\exp(-\eta c) - 1)\xi);\\
& \IMS_1\x.\Scale(\IM\x.\GetNorm()^{-1});\\
& \IMS_1\x.\SumUp().
\end{align*}
By assumption, each operation takes time $O(\log n)$, giving the desired iteration 
complexity. It is clear that at the end of performing these operations, the invariant that $\IMS_1\x$ maintains the $x$ block of the iterate is preserved.

\paragraph{Averaging.}
After $T$ iterations, we compute the average point $\bar{z}\x$:
$$
[\bar{z}\x]_j\gets\frac{1}{T}\cdot\IMS_1\x.\GetSum(j),\forall j\in[n].
$$ 
By assumption, this takes $O(n)$ time. 

\subsubsection{Algorithm guarantee}

\begin{theorem}
\label{thm:l1l1-sublinear}
In the $\ellone$ setup, the implementation in Section~\ref{ssec:implementsublone} has runtime
\begin{equation*}
O\left(\frac{\loocop^2\log^2(mn)}{\epsilon^2} + m + n \right),
\end{equation*}
and outputs a point $\bar{z} \in \zset$ such that
$\E \gap(\bar{z})
\le \epsilon.
$.
\end{theorem}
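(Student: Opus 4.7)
The plan is to deduce Theorem~\ref{thm:l1l1-sublinear} by combining three ingredients already established above: (i) the convergence guarantee of stochastic mirror descent with clipping from Proposition~\ref{prop:sublinear}, (ii) the local gradient estimator property established for our sampling scheme in Lemma~\ref{lem:est-prop-l1}, and (iii) the per-iteration cost analysis given in Section~\ref{ssec:implementsublone}. There is no truly hard step here; the work is in checking that the chosen parameters fit the hypotheses of Proposition~\ref{prop:sublinear} and that the runtime is additive in the initialization/averaging overhead versus the iteration count.

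First I would instantiate Proposition~\ref{prop:sublinear} in the $\ell_1$-$\ell_1$ local norm setup from Table~\ref{tab:local-norm}, for which $\Theta = \log(mn)$. By Lemma~\ref{lem:est-prop-l1}, the coordinate gradient estimator~\eqref{eq:estimate-l1} with sampling distribution~\eqref{eq:l1-prob-def} is an $L$-local estimator with $L = \sqrt{2}\,\looco$. The parameter choices $\eta = \epsilon/(18\loocop^2)$ and $T = \lceil 6\Theta/(\eta\epsilon) \rceil$ therefore satisfy $\eta \le \epsilon/(9L^2)$ and $T \ge 6\Theta/(\eta\epsilon) \ge 54 L^2 \Theta/\epsilon^2$, which are exactly the hypotheses of Proposition~\ref{prop:sublinear}. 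Applying that proposition immediately yields $\E\,\gap(\bar{z}) \le \epsilon$ for the output $\bar{z}$ of Algorithm~\ref{alg:sublinear}.

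Next I would account for the runtime. Initialization of $\IMS_1\x$ and $\IMS_1\y$ with the uniform distributions on their respective simplices takes $O(m+n)$ time, as does the final averaging step that reads out $\bar{z}$ via $\GetSum$ calls. Every iteration of Algorithm~\ref{alg:sublinear} decomposes into the four substeps analyzed in Section~\ref{ssec:implementsublone} (sampling $(i,j)$, computing the clipped estimator, applying $\AddSparse$/$\Scale$/$\SumUp$, and the symmetric updates for $y$), each of which runs in $O(\log(mn))$ time using the $\IM_1$ interface and the assumed matrix-access oracles. Multiplying by the iteration count
\[
T = \left\lceil \tfrac{6\log(mn)}{\eta\epsilon}\right\rceil = O\!\left(\tfrac{\loocop^2 \log(mn)}{\epsilon^2}\right)
\]
gives total iteration cost $O(\loocop^2 \log^2(mn)/\epsilon^2)$, and adding the $O(m+n)$ preprocessing and averaging overhead yields the stated bound.

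If any step is delicate it is the bookkeeping ensuring that $\IMS_1\x$ and $\IMS_1\y$ faithfully represent $z_t\x$ and $z_t\y$ throughout, since the data structures store unnormalized vectors and multiply by the inverse of the $\ell_1$-norm via $\Scale$; but this is exactly the invariant maintained by the update sequence in Section~\ref{ssec:implementsublone}, so I would only briefly remark on its preservation rather than re-derive it.
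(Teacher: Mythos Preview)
Your proposal is correct and follows essentially the same approach as the paper: the paper's proof simply invokes Proposition~\ref{prop:sublinear} for correctness and defers to the per-iteration cost discussion in Section~\ref{ssec:implementsublone} for the runtime, which is precisely what you do (with the added benefit of explicitly verifying that the parameter choices $\eta$ and $T$ meet the hypotheses of Proposition~\ref{prop:sublinear} for $L=\sqrt{2}\,\looco$).
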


\begin{proof}
The runtime follows from the discussion in Section~\ref{ssec:implementsublone}. The correctness follows from Proposition~\ref{prop:sublinear}.
\end{proof}

\begin{remark}
Using our $\IM_1$ data structure, the $\ellone$ algorithm of~\citet{GrigoriadisK95} runs in time  $O(\rcs 
\norm{A}_{\max}^2\log^2(mn)/\eps^2)$, where $\rcs$ is the maximum 
number of nonzeros in any row or column. Our runtime universally 
improves upon it since $(L_{1,1}^{\textup{co}})^2\le\rcs\|A\|_{\max}^2$.	
\end{remark}

\subsection{$\ell_1$-$\ell_1$ variance-reduced coordinate method}
\label{ssec:vr-l1}

\subsubsection{Gradient estimator}
\label{ssec:l1-est}

Given reference point $w_0 \in \Delta^n \times \Delta^m$, for $z\in\Delta^n \times \Delta^m$ and a parameter $\alpha>0$, we specify the sampling distributions $p(z; w_0), q(z; w_0)$:

\begin{equation}\label{eq:vr-l1-prob-def}
\begin{aligned}
p_{ij}(z;w_0) & \defeq \frac{[z\y]_i+2[w_0\y]_i}{3}\cdot\frac{A_{ij}^2}{\ltwo{\ai}^2}
~~\mbox{and}~~
q_{ij}(z;w_0) & \defeq \frac{[z\x]_j+2[w_0\x]_j}{3} \cdot\frac{A_{ij}^2}{\ltwo{\aj}^2}.
\end{aligned}
\end{equation}
We remark that this choice of sampling distribution, which we term ``sampling from the sum'' (of the current iterate and reference point), may be viewed as a computationally-efficient alternative to the distribution specified in \cite{CarmonJST19}, which was based on ``sampling from the difference''. In particular, sampling from the difference is an operation which to the best of our knowledge is difficult to implement in sublinear time, so we believe that demonstrating that this alternative distribution suffices may be of independent interest. In order to show its correctness, we need the following claim, whose proof we defer to Appendix~\ref{ssec:helpervr}.

\begin{restatable}{lemma}{restatelocalnorms}\label{lem:local-norms}
	For $y,y'\in\Delta^m$, divergence $V_{y}(y')$ generated by $r(y) = \sum_{i \in [m]} [y]_i \log [y]_i - [y]_i$
	satisfies 
	\begin{equation*}
	V_{y}(y') \ge \half \norm{y'-y}^2_{\frac{3}{2y+y'}} =
	\half \sum_{i\in[m]} \frac{([y]_i - [y']_i)^2}{\frac{2}{3}[y]_i + 
		\frac{1}{3}[y']_i}.
	\end{equation*}
\end{restatable}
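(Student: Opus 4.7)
The plan is to reduce the vector inequality to a one‑dimensional calculus problem that splits coordinatewise. With $r(y) = \sum_i [y]_i \log [y]_i - [y]_i$ the associated Bregman divergence admits the pointwise form
\[
V_y(y') = \sum_{i \in [m]} \left( [y']_i \log \tfrac{[y']_i}{[y]_i} - [y']_i + [y]_i \right).
\]
(For $y,y' \in \Delta^m$ the correction terms $-[y']_i + [y]_i$ sum to zero, so $V_y(y')$ coincides with the usual KL divergence; retaining them coordinatewise is convenient because it makes the summands individually nonnegative.) Since the right-hand side of the lemma is also a sum over $i$, it suffices to establish the pointwise inequality
\[
b \log(b/a) - b + a \;\ge\; \frac{3(a-b)^2}{2(2a+b)} \quad \text{for all } a,b > 0,
\]
which is exactly the $i$-th summand comparison after substituting $a = [y]_i$, $b = [y']_i$ and noting $\tfrac{2}{3}a + \tfrac{1}{3}b = \tfrac{2a+b}{3}$.

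Both sides of this scalar inequality are homogeneous of degree one in $(a,b)$, so after setting $a = 1$ and writing $u = b/a$ it suffices to prove $\psi(u) \ge 0$ on $u > 0$, where
\[
\psi(u) \defeq u \log u - u + 1 - \frac{3(1-u)^2}{2(2+u)}.
\]
Direct differentiation yields $\psi(1) = 0$ and $\psi'(1) = 0$. Computing $\psi''$ and simplifying (the numerator of the rational piece collapses to a constant, since $(2+u)^2 + (1-u)(5+u) = 9$) gives
\[
\psi''(u) \;=\; \frac{1}{u} - \frac{27}{(2+u)^3}.
\]
The convexity $\psi''(u) \ge 0$ is then equivalent to $(2+u)^3 \ge 27 u$, i.e.\ $2 + u \ge 3 u^{1/3}$, which is AM-GM applied to $\{1,1,u\}$. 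Hence $\psi$ is convex on $(0,\infty)$ with its unique critical point at $u = 1$, so $\psi(u) \ge \psi(1) = 0$; summing the resulting pointwise bound over $i$ yields the lemma.

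The main technical step I anticipate is the simplification of $\psi''$: carrying out the two differentiations naively produces a cumbersome rational expression, and the argument hinges on spotting the cancellation that leaves the clean form $1/u - 27/(2+u)^3$. Once this is done, the AM-GM inequality does essentially all the remaining work, so the proof is short.
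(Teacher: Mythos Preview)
Your proof is correct, and the computations all check out (including the clean identity $(2+u)^2+(1-u)(5+u)=9$ that collapses the second derivative and the AM--GM step $(1+1+u)/3\ge u^{1/3}$). It is, however, a genuinely different argument from the paper's.

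The paper does not work coordinatewise. Instead it starts from the Cauchy--Schwarz/Young bound
\[
2\inner{\gamma}{y-y'}\le \sum_i\bigl((1-\tau)y_i+\tau y'_i\bigr)\gamma_i^2+\sum_i\frac{(y_i-y'_i)^2}{(1-\tau)y_i+\tau y'_i},
\]
applies the double integral $\int_0^1\!\int_0^t d\tau\,dt$, and identifies the second integrated term with the KL divergence via $V_y(y')=\int_0^1\!\int_0^t\sum_i\frac{(y_i-y'_i)^2}{(1-\tau)y_i+\tau y'_i}\,d\tau\,dt$. The coefficients $\tfrac{1}{3}y_i+\tfrac{1}{6}y'_i$ then emerge from $\int_0^1\!\int_0^t(1-\tau)\,d\tau\,dt=\tfrac{1}{3}$ and $\int_0^1\!\int_0^t\tau\,d\tau\,dt=\tfrac{1}{6}$; finally $\gamma_i=\frac{y_i-y'_i}{\tfrac{2}{3}y_i+\tfrac{1}{3}y'_i}$ is chosen to saturate the inequality.

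Your route is more elementary and self-contained (just calculus plus AM--GM on one real variable), and arguably cleaner if one only cares about this specific inequality. The paper's route, on the other hand, explains where the weights $\tfrac{2}{3},\tfrac{1}{3}$ come from and is an instance of a more flexible template: the same integral-representation trick would yield analogous local-norm lower bounds for other Bregman divergences, whereas your scalar-convexity argument is tailored to this particular pair of functions.
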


We now show the local properties of this estimator.

\begin{lemma}
	\label{lem:est-prop-vr-l1}
	In the $\ellone$ setup, estimator \eqref{eq:estimate-vr} using the sampling distribution in~\eqref{eq:vr-l1-prob-def} is a \\
	$\sqrt{2}\looco$-centered-local estimator.
\end{lemma}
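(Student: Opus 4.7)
The plan is to verify the two defining properties of a centered-local estimator from Definition~\ref{def:vr} directly, reducing the second (variance) property to the divergence lower bound in Lemma~\ref{lem:local-norms}. Unbiasedness is immediate: summing $\sum_{i,j} p_{ij}(z;w_0)\cdot \frac{A_{ij}[z\y-w_0\y]_i}{p_{ij}(z;w_0)} e_j$ gives $A^\top(z\y-w_0\y)$, and adding back $g\x(w_0)=A^\top w_0\y$ recovers $A^\top z\y=\nabla_x f(z)$; the $\y$-block is symmetric. So the heart of the argument is the relative variance bound.

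For the $\x$-block, I will compute
\[
\E\bigl[\norm{\tilde g\x(z)-g\x(w_0)}_{w\x}^2\bigr]
= \sum_{i,j} [w\x]_j \frac{A_{ij}^2[z\y-w_0\y]_i^2}{p_{ij}(z;w_0)}
= 3\sum_j [w\x]_j \sum_i \norm{\ai}_2^2 \cdot \frac{[z\y-w_0\y]_i^2}{[z\y]_i+2[w_0\y]_i},
\]
where the $A_{ij}^2$ factors cancel against the denominator of $p_{ij}(z;w_0)$ as defined in~\eqref{eq:vr-l1-prob-def}. Using $w\x\in\Delta^n$ to bound $\sum_j[w\x]_j=1$ and pulling out $\max_i\norm{\ai}_2^2\le(\looco)^2$, this reduces to bounding the sum $\sum_i \frac{[z\y-w_0\y]_i^2}{[z\y]_i+2[w_0\y]_i}$ by a Bregman divergence.

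This is precisely the role of Lemma~\ref{lem:local-norms}: applied with $y=w_0\y$ and $y'=z\y$ it gives
\[
V_{w_0\y}(z\y)\ge \tfrac12 \sum_i \frac{([z\y]_i-[w_0\y]_i)^2}{\tfrac{2}{3}[w_0\y]_i+\tfrac{1}{3}[z\y]_i}
= \tfrac{3}{2}\sum_i \frac{([z\y]_i-[w_0\y]_i)^2}{[z\y]_i+2[w_0\y]_i},
\]
so the residual sum is at most $\tfrac{2}{3} V_{w_0\y}(z\y)$. Chaining the inequalities yields $\E[\norm{\tilde g\x(z)-g\x(w_0)}_{w\x}^2]\le 2(\looco)^2\, V_{w_0\y}(z\y)$. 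An entirely symmetric computation for the $\y$-block (swapping the roles of the blocks and using $q_{ij}$) gives the bound $2(\looco)^2\, V_{w_0\x}(z\x)$.

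Finally, since $\norm{\tilde g-g(w_0)}_w^2=\norm{\tilde g\x-g\x(w_0)}_{w\x}^2+\norm{\tilde g\y-g\y(w_0)}_{w\y}^2$ in the $\ellone$ local norm (Table~\ref{tab:local-norm}), and the KL divergence over $\Delta^n\times\Delta^m$ decomposes as $V_{w_0}(z)=V_{w_0\x}(z\x)+V_{w_0\y}(z\y)$, adding the two block bounds produces $\E[\norm{\tilde g_{w_0}(z)-g(w_0)}_w^2]\le 2(\looco)^2\, V_{w_0}(z)$, matching Definition~\ref{def:vr} with $L=\sqrt{2}\,\looco$. The only nontrivial obstacle is the link between the variance sum and the KL divergence, which Lemma~\ref{lem:local-norms} handles cleanly; this is exactly why the ``sampling from the sum'' distribution with weights $\tfrac{z+2w_0}{3}$ was chosen, as it matches the denominator appearing in the divergence lower bound.
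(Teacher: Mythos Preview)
Your proposal is correct and follows essentially the same approach as the paper: compute the local-norm second moment, cancel $A_{ij}^2$ against the denominator of $p_{ij}$, pull out $\max_i\ltwo{\ai}^2\le(\looco)^2$ (using $\sum_j[w\x]_j=1$), apply Lemma~\ref{lem:local-norms} with $y=w_0\y$, $y'=z\y$ to bound the remaining sum by $2V_{w_0\y}(z\y)$, then combine the two blocks. The paper's writeup is slightly terser but the logic is identical.
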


\begin{proof}
	Unbiasedness holds by definition. For arbitrary $w\x$, we have the variance bound:
	\begin{align*}
	\E\left[\norm{\tilde{g}\x_{w_0}(z)-g\x(w_0)}_{w\x}^2\right]
	& = 
	\sum_{i\in[m],j\in[n]} p_{ij}(z;w_0)\cdot\left( [w\x]_j\cdot \left(\frac{A_{ij}\left([z\y]_i-[w_0\y]_i\right)}{p_{ij}(z;w_0)}\right)^2 \right)\\
	& = \sum\limits_{i\in[m],j\in[n]}[w\x]_j \frac{A_{ij}^2\left([z\y]_i-[w_0\y]_i\right)^2}{p_{ij}(z;w_0)}\\
	& \le \sum\limits_{i\in[m],j\in[n]}[w\x]_j\frac{([z\y]_i-[w_0\y]_i)^2}{\frac{1}{3}[z\y]_i+\frac{2}{3}[w_0\y]_i}\ltwo{\ai}^2\\
	& \le 2\left(\max_i\ltwo{\ai}^2\right)V_{w_0\y}(z\y),
	\end{align*}
	where in the last inequality we used Lemma~\ref{lem:local-norms}. Similarly, we have for arbitrary $w\y$,
	\begin{equation*}
	\E\left[\norm{\tilde{g}\y_{w_0}(z)-g\y(w_0)}_{w\y}^2\right] \le 2\left(\max_j\ltwo{\aj}^2\right)V_{w_0\x}(z\x).\end{equation*}
	Combining these and using
	\begin{align*}\norm{\tilde{g}_{w_0}(z)-g(w_0)}^2_w &\defeq\norm{\tilde{g}\x_{w_0}(z)-g\x(w_0)}^2_{w\x}+\norm{\tilde{g}\y_{w_0}(z)-g\y(w_0)}^2_{w\y}\end{align*}
	yields the desired variance bound. 
\end{proof}

\subsubsection{Implementation details}
\label{ssec:implementvrlone}
In this section, we discuss the details of how to leverage the $\AEM$ data structure to implement the iterations of our algorithm. We first state one technical lemma on the effect of $\beta$-padding (Definition~\ref{def:betastable}) on increasing entropy, used in conjunction with the requirements of Proposition~\ref{prop:innerloopproof} to bound the error tolerance required by our $\AEM$ data structure. The proof is deferred to Appendix~\ref{ssec:helpervr}.
\begin{restatable}{lemma}{restatestableentropy}\label{lem:stable}
	Let $x' \in \Delta^n$ be a $\beta$-padding of $x \in \Delta^n$. Then,
	\[\sum_{j \in [n]} x'_j \log x'_j - \sum_{j \in [n]} x_j \log x_j \le \frac{\beta n}{e}+\beta(1+\beta).\]
\end{restatable}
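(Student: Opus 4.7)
\textbf{Proof plan for Lemma~\ref{lem:stable}.}

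The plan is to decompose the quantity to be bounded through the unnormalized intermediate vector $\tilde{x}$ from the definition of $\beta$-padding, writing
\[
\sum_{j} x'_j \log x'_j - \sum_j x_j \log x_j = \underbrace{\left(\sum_j x'_j \log x'_j - \sum_j \tilde{x}_j \log \tilde{x}_j\right)}_{\text{(I)}} + \underbrace{\left(\sum_j \tilde{x}_j \log \tilde{x}_j - \sum_j x_j \log x_j\right)}_{\text{(II)}},
\]
and to handle each term separately. Here $\tilde{x} \in \R^n_{\ge 0}$ satisfies $\tilde{x} \ge x$ entrywise with $\lone{\tilde{x} - x} \le \beta$, so letting $Z \defeq \lone{\tilde{x}}$ we have $Z \in [1, 1+\beta]$ and $x' = \tilde{x}/Z$.

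For term (II), I would use that $\frac{d}{dt}(t\log t) = 1 + \log t$. For every $j$, since $\tilde x_j \le x_j + \beta \le 1+\beta$, any $t \in [x_j, \tilde x_j]$ satisfies $1 + \log t \le 1 + \log(1+\beta) \le 1+\beta$. Integrating gives $\tilde{x}_j \log \tilde{x}_j - x_j \log x_j \le (1+\beta)(\tilde{x}_j - x_j)$, and summing yields $\text{(II)} \le \beta(1+\beta)$, which matches the second term on the right-hand side of the lemma.

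For term (I), direct computation using $x'_j = \tilde x_j/Z$ and $\sum_j \tilde x_j/Z = 1$ gives
\[
\text{(I)} = \left(\tfrac{1}{Z} - 1\right)\sum_j \tilde{x}_j \log \tilde{x}_j - \log Z.
\]
Since $-\log Z \le 0$ and $1/Z - 1 \le 0$, the sign of the first summand is governed by how negative $\sum_j \tilde{x}_j \log \tilde{x}_j$ can be. Using the elementary fact that $t \log t \ge -1/e$ for all $t \ge 0$, the sum is at least $-n/e$, hence
\[
\left(\tfrac{1}{Z}-1\right)\sum_j \tilde{x}_j \log \tilde{x}_j \le \left(1 - \tfrac{1}{Z}\right)\cdot\tfrac{n}{e} = \tfrac{Z-1}{Z}\cdot\tfrac{n}{e} \le \beta\cdot \tfrac{n}{e},
\]
where the last step uses $Z - 1 \le \beta$ and $Z \ge 1$. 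Combining the two bounds gives the lemma.

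The argument is mostly straightforward once the decomposition is fixed. The only mildly delicate step is recognizing that the effect of normalization can be controlled using the uniform lower bound $t\log t \ge -1/e$: this is where the $\beta n/e$ term comes from, and it is the reason the bound must scale linearly with $n$ rather than being dimension-free. No additional machinery (convexity of KL, etc.) appears necessary.
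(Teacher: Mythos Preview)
Your proof is correct and follows essentially the same approach as the paper: the same two-term decomposition through $\tilde{x}$, the same $t\log t \ge -1/e$ fact for term (I) (the paper phrases it per-coordinate as $x'_j\log(1/x'_j)\le 1/e$), and the same $1+\log t \le 1+\beta$ bound for term (II). The only difference is cosmetic---you compute (I) via the global identity $(1/Z-1)\sum_j \tilde{x}_j\log\tilde{x}_j - \log Z$, whereas the paper expands each coordinate directly---but the arguments are otherwise identical.
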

This leads to the following divergence bounds which will be used in this section.
\begin{lemma}\label{lem:stablediv}
	Let $x' \in \Delta^n$ be a $\beta$-padding of $x \in \Delta^n$. Then 
	\[
	V_{x'}(u)-V_{x}(u)\le \beta,\;\forall u\in\zset
	\]
	and if $\linf{\log(x_0)}\le M$, then
	\[
	V_{x_0}(x')-V_{x_0}(x)\le \beta \left(2M+\frac{n}{e}+1+\beta\right) 
	\]
\end{lemma}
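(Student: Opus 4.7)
The plan is to work directly from the closed form of the KL divergence. Since $r(x)=\sum_j x_j\log x_j$ and both $x,x',u$ lie in the simplex, a standard computation gives $V_x(u)=\sum_j u_j\log(u_j/x_j)$, and similarly for the other divergences. I will use this throughout.

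For the first bound, I will compute
\[
V_{x'}(u)-V_x(u)=\sum_{j\in[n]} u_j\log\frac{x_j}{x'_j}.
\]
Write $x'=\tilde x/\lone{\tilde x}$ with $\tilde x\ge x$ entrywise and $\lone{\tilde x-x}\le\beta$, so $T\defeq\lone{\tilde x}\in[1,1+\beta]$. Then $x_j/x'_j = Tx_j/\tilde x_j \le T\le 1+\beta$, so $\log(x_j/x'_j)\le\log(1+\beta)\le\beta$, and summing against the probability vector $u$ gives the claimed bound of $\beta$.

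For the second bound, I will split as
\[
V_{x_0}(x')-V_{x_0}(x)=\Big[\sum_j x'_j\log x'_j-\sum_j x_j\log x_j\Big]+\sum_j (x_j-x'_j)\log[x_0]_j.
\]
The bracketed entropy difference is at most $\beta n/e+\beta(1+\beta)$ by Lemma~\ref{lem:stable}. For the second term, H\"older's inequality bounds it by $\lone{x-x'}\cdot\linf{\log x_0}\le M\lone{x-x'}$.

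The main step will be to bound $\lone{x-x'}$. Using $\tilde x=Tx'$, I get $\lone{\tilde x-x'}=(T-1)\lone{x'}=T-1\le\beta$, and then the triangle inequality yields
\[
\lone{x'-x}\le\lone{x'-\tilde x}+\lone{\tilde x-x}\le 2\beta.
\]
Combining gives $V_{x_0}(x')-V_{x_0}(x)\le \beta n/e+\beta(1+\beta)+2M\beta=\beta(2M+n/e+1+\beta)$, as required. The only mildly delicate point in the plan is this $\lone{x-x'}\le 2\beta$ bound, since the definition of padding controls $\lone{\tilde x-x}$ rather than $\lone{x'-x}$ directly; everything else is bookkeeping on top of Lemma~\ref{lem:stable}.
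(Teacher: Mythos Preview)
Your proof is correct and essentially identical to the paper's own proof: the same $x_j/x'_j\le \lone{\tilde x}\le 1+\beta$ argument for the first bound, the same entropy-plus-inner-product decomposition for the second, the same $\lone{x-x'}\le 2\beta$ triangle-inequality step, and the same appeal to Lemma~\ref{lem:stable}.
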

\begin{proof}
	Throughout this proof, let $\tx$ be the point in Definition~\ref{def:betastable} such that $\norm{\tx - x}_1 \le \beta$ and $x' = \tx/\norm{\tx}_1$. 
	
	The first claim follows from expanding
	\[V_{x'}(u) - V_x(u) = \sum_{j \in [n]} u_j\log \frac{x_j}{x'_j} = \sum_{j \in [n]} u_j\log\left( \frac{x_j}{\tx_j} \cdot \norm{\tx}_1\right) \le \log(\norm{\tx}_1) \le \beta.\]
	The first inequality used $u \in \Delta^n$ and $\tx \ge x$ entrywise, and the last inequality used $\log(1 + \beta) \le \beta$.
	
	For the second claim, we have by the triangle inequality
	\[\norm{x - x'}_1 \le \norm{x - \tx}_1 + \norm{\tx - x'}_1 \le \beta + \left(\norm{\tx}_1 - 1\right)\norm{x'}_1 \le 2\beta.\]
	The claim then follows from expanding
	\[V_{x_0}(x') - V_{x_0}(x) = \sum_{j \in [n]} x'_j \log x'_j - \sum_{j \in [n]} x_j \log x_j + \inner{\log x_0}{x - x'},\]
	and applying Lemma~\ref{lem:stable}. 
\end{proof}
The algorithm we analyze is Algorithm~\ref{alg:outerloop} with $K = 3\alpha\Theta/\eps$, $\vepsout=2\eps/3$, $\vepsi = \eps/3$ using Algorithm~\ref{alg:innerloop-approx} as an $(\alpha, \vepsi)$-relaxed proximal oracle. The specific modification we perform to define the iterates $\{z_k\}$ of Algorithm~\ref{alg:outerloop} as modifications of the ideal iterates $\{z_k^\star\}$ uses the following definition.

\begin{definition}
	For a simplex variable $x' \in \Delta^n$, we define $\truncate(x', \delta)$ to be the point $x \in \Delta^n$ with $x_j \propto \max(x'_j, \delta)$ for all $j \in [n]$. For a variable $z$ on two blocks, we overload notation and define $\truncate(z, \delta)$ to be the result of applying $\truncate(\cdot, \delta)$ to each simplex block of $z$.
\end{definition}
For our implementation, in each step of Algorithm~\ref{alg:outerloop}, we will compute the point $z_k^\star$ exactly, and apply the operation $z_k\gets\mathsf{truncate}(z_k^{\star},\delta)$, for $\delta=\tfrac{\vepsout-\vepsi}{\alpha(m+n)}$.
We now quantify the effect of truncation in terms of Bregman divergence to an arbitrary point.

\begin{lemma}[Effect of truncation]
	\label{lem:projecteffect}
	Let $x'\in\Delta^n$, and let $x = \truncate(x', \delta)$. Then, for any $u\in\Delta^n$, and where divergences are with respect to entropy,
	\begin{equation*}
	V_{x}(u) - V_{x'}(u) \le \delta n.
	\end{equation*}
\end{lemma}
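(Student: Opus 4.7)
The plan is to expand the divergence difference directly using the formula for KL divergence. Since $r(x) = \sum_j x_j\log x_j$ and both $x, x' \in \Delta^n$ (so the linear terms cancel), we have
\[ V_x(u) - V_{x'}(u) = \sum_{j \in [n]} u_j \log\frac{x'_j}{x_j}. \]
Thus it suffices to bound $\log(x'_j/x_j)$ coordinatewise, since $u$ is a probability distribution.

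To control this ratio, I would unpack the definition of truncation. Write $\tilde{x}_j = \max(x'_j, \delta)$ and $Z = \sum_j \tilde{x}_j$, so that $x_j = \tilde{x}_j / Z$. Two observations drive the proof: first, $\tilde{x}_j \ge x'_j$ entrywise by construction; second, the normalizer satisfies
\[ Z = \sum_{j \in [n]} \max(x'_j, \delta) \le \sum_{j \in [n]} (x'_j + \delta) = 1 + \delta n. \]
Combining these yields $x_j \ge x'_j / (1 + \delta n)$, hence $x'_j / x_j \le 1 + \delta n$.

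Plugging into the divergence difference and using $\log(1+t) \le t$,
\[ V_x(u) - V_{x'}(u) = \sum_{j \in [n]} u_j \log\frac{x'_j}{x_j} \le \log(1 + \delta n) \sum_{j \in [n]} u_j \le \delta n, \]
which is the claimed bound. There is no substantive obstacle here: the argument is a short direct calculation leveraging the simple entrywise inequalities $\max(x'_j, \delta) \ge x'_j$ and $\max(x'_j, \delta) \le x'_j + \delta$ that the truncation operation satisfies by definition, so the only care needed is to correctly use that $x$ and $x'$ both lie in the simplex (which lets the $-u_j + x_j$ type terms in the Bregman divergence cancel between $V_x(u)$ and $V_{x'}(u)$).
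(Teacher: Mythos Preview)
Your proof is correct and essentially the same as the paper's. The paper proceeds by observing that $x=\truncate(x',\delta)$ is a $\delta n$-padding of $x'$ (in the sense of Definition~\ref{def:betastable}), and then invokes Lemma~\ref{lem:stablediv}; the proof of that lemma's first claim is exactly your computation $V_x(u)-V_{x'}(u)=\sum_j u_j\log(x'_j/x_j)\le\log\|\tilde x\|_1\le\delta n$. You have simply inlined that argument rather than routing through the padding abstraction.
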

\begin{proof} Note that $x$ is a $\delta n$-padding of $x'$, as it is the result of adding at most $\delta$ to each coordinate and renormalizing to lie in the simplex. Consequently, the result follows from Lemma~\ref{lem:stablediv}.
\end{proof}

Lemma~\ref{lem:projecteffect} thus implies our iterates satisfy the requirements of Algorithm~\ref{alg:outerloop}. Our implementation of Algorithm~\ref{alg:innerloop-approx} will use approximation tolerance $\epsaprx=\vepsi/6 = \eps/18$, where we always set \begin{equation}\label{eq:alphaboundsoo}\looco \ge \alpha \ge \vepsi.\end{equation} This matches the requirements of Proposition~\ref{prop:innerloopproof}. 	In the implementation of Algorithm~\ref{alg:innerloop-approx}, we use the centered-local gradient estimator defined in \eqref{eq:estimate-vr}, using the sampling distribution \eqref{eq:vr-l1-prob-def}. For each use of Algorithm~\ref{alg:innerloop-approx}, we choose 
\begin{equation*}\label{eq:etatchoices}
\begin{aligned}\eta &=
 \frac{\alpha}{20\loocop^2 } \text{ and } T =
\left\lceil\frac{6}{\eta\alpha}\right\rceil \ge \frac{120\loocop^2}{\alpha^2}.
\end{aligned}\end{equation*}

Our discussion will follow in four steps: first, we discuss the complexity of all executions in Algorithm~\ref{alg:outerloop} other than calls to the oracles. Next, we discuss the complexity of all initializations of $\AEM$ data structures. Then, we discuss the complexity of all other iterations of Algorithm~\ref{alg:innerloop-approx}. For simplicity, when discussing Algorithm~\ref{alg:innerloop-approx}, we will only discuss implementation of the $x$-block, and the $y$-block will follow symmetrically, while most runtimes are given considering both blocks. Lastly, we discuss complexity of computing the average iterate in the end of the inner loop. Altogether, the guarantees of Proposition~\ref{prop:outerloopproof} and Proposition~\ref{prop:innerloopproof} imply that if the guarantees required by the algorithm hold, the expected gap of the output is bounded by $\eps$.

\paragraph{Outer loop extragradient steps.}
Overall, we execute $K = 3\alpha\Theta/\eps$ iterations of Algorithm~\ref{alg:outerloop}, with $\vepsout = 2\eps/3$, $\vepsi=\eps/3$ to obtain the desired gap, where $\Theta = \log(mn)$ in the $\ellone$ setup. We spend $O(\nnz)$ time executing each extragradient step in Algorithm~\ref{alg:outerloop} exactly to compute iterates $z_k^\star$, where the dominant term in the runtime is computing each $g(z_{k - 1/2})$, for $k \in [K]$. We can maintain the average point $\bar{z}$ throughout the duration of the algorithm, in $O(m + n)$ time per iteration. Finally, we spend an additional $O(m + n)$ time per iteration applying $\truncate$ to each iterate $z_k^{\star}$.

\paragraph{Data structure initializations and invariants.} 

We consider the initialization of data structures for implementing an $(\alpha,\vepsi = \eps/3)$-relaxed proximal oracle with error tolerance $\epsaprx=\eps/18$. First, note that the point $w_0\x$ used in the initialization of every inner loop, by the guarantees of $\truncate$ operation, has no two coordinates with multiplicative ratio larger than $\delta = \eps/(3\alpha(m + n)) \ge (m + n)^{-4}$, by our choice $\alpha \le \looco$ \eqref{eq:alphaboundsoo} and our assumptions on $\looco/\eps$ (cf.\ Section~\ref{ssec:problemdef}). Since clearly a simplex variable in $\Delta^n$ has a coordinate at least $1/n$, the entries of $w_0$ are lower bounded by $\lambda = (m + n)^{-5}$.

Next, we discuss the initial parameters given to $\AEMS\x$, an instance of $\AEM$ which will support necessary operations for maintaining the $x$ variable (we will similarly initialize an instance $\AEMS\y$). Specifically, the invariant that we maintain throughout the inner loop is that in iteration $t$, the``exact vector'' $x$ maintained by $\AEMS\x$ corresponds to the $x$ block of the current iterate $w_t$, and the ``approximate vector'' $\hat{x}$ maintained corresponds to the $x$ block of the approximate iterate $\hat{w}_t$, as defined in Algorithm~\ref{alg:innerloop-approx}.

We will now choose $\tveps$ so that if $\AEMS\x$ is initialized with error tolerance $\tveps$, all requirements of Proposition~\ref{prop:innerloopproof} (e.g.\ the bounds stipulated in Algorithm~\ref{alg:innerloop-approx}) are met. We first handle all divergence requirements. In a given iteration, denote the $x$ blocks of $w_t^\star $, $w_t$ and $\hat{w}_t$ by $x_t^\star $, $x_t$ and $\hat{x}_t$ respectively, and recall $\AEMS\x$ guarantees $x_t$ is a $\tveps$-padding of $x_t^\star$, and $\hat{x}_t$ is a $\tveps$-padding of $x_t^\star $. The former of these guarantees is true by the specification of $\MultSparse$ (which will be used in the implementation of the step, see ``Performing the update'' below), and the latter is true by the invariant on the points supported by $\AEMS\x$. Lines~\ref{line:inner-error-hat} and~\ref{line:inner-error-star} of Algorithm~\ref{alg:innerloop-approx} stipulate the divergence requirements, where $x_0 \defeq w_0\x$,
\begin{align}
	\max\left\{V_{x_0}(x_t) - V_{x_0}\left(x_t^\star \right),V_{x_0}\left(\hat{x}_t\right) - V_{x_0}\left(x_t\right)\right\}&  \le \frac{\epsaprx}{2\alpha} = \frac{\eps}{36\alpha}\label{cond:one-side}\\
	~~\text{and}~~~& \nonumber\\
	\max_u \left[V_{x_t}(u)-V_{x_t^\star }(u)\right] & \le
	\frac{\eta\epsaprx}{2}=\frac{\eta\eps}{36}.\label{cond:other-side}
\end{align}
Clearly, combining this guarantee with a similar guarantee on the $y$ blocks yields the desired bound. Since we derived $\norm{\log w_0}_\infty \le 5\log(mn)$, we claim that choosing
\[\tveps \le \frac{\eps}{36\alpha(m + n)}\]
suffices for the guarantees in~\eqref{cond:one-side}. By the first part of Lemma~\ref{lem:stablediv}, for all sufficiently large $m+n$,
\begin{align*}\max\left\{V_{x_0}(x_t) - V_{x_0}\left(x_t^\star \right),V_{x_0}\left(\hat{x}_t\right) - V_{x_0}\left(x_t\right)\right\}\le \tveps\left(10\log(mn) + \frac{n}{e} + 1 + \tveps\right) \le \frac{\eps}{36\alpha}.
\end{align*}
Similarly for guarantees in~\eqref{cond:other-side}, by the second part of Lemma~\ref{lem:stablediv} we know it suffices to choose 
\[\tveps\le \frac{\eps^2}{720\loocop^2} \le \frac{\eps\alpha}{720\loocop^2 } =\frac{\eta\eps}{36}.\] 
Here, we used the restriction $\vepsi \le \alpha \le \looco$. Next, the norm requirements of Algorithm~\ref{alg:innerloop-approx} (the guarantees in Lines~\ref{line:inner-error-hat},~\ref{line:inner-error-star}, and~\ref{line:inner-average}) imply we require
\[\tveps \le \frac{\eps}{18\sqrt{2}\looco},\] where we used that $g$ is $\norm{A}_{\max}\le \looco$-Lipschitz and the diameter of $\zset$ is bounded by $\sqrt{2}$. Using our assumptions on the size of parameters in Section~\ref{ssec:problemdef}, it suffices to set the error tolerance 
\[\tveps = (m + n)^{-8}.\] 
To give the remainder of specified parameters, $\AEMS\x$ is initialized via $\Initialize(w_0\x,v,\kappa,\tilde{\veps})$ for
\begin{align*}\kappa \defeq \frac{1}{1 + \eta\alpha/2},\; v \defeq (1 - \kappa) \log w_0\x - \eta\kappa g\x(w_0). \end{align*}
To motivate this form of updates, note that each iteration of Algorithm~\ref{alg:innerloop-approx} requires us to compute
\[\argmin\left\{\inner{c_t e_j + g\x(w_0)}{x} + \frac{\alpha}{2}V_{w_0\x}(x) + \frac{1}{\eta} V_{w\x_{t}}(x) \right\}.\]
We can see that the solution to this update is given by
\begin{equation}\label{eq:updateform}\left[w_{t + 1}^\star \right]\x \gets \Pi_{\Delta}\left([w_t\x]^\kappa \circ 
\exp\left((1 - \kappa)\log w_0\x - \eta\kappa g\x(w_0)\right) \circ 
\exp(-\eta\kappa c_t e_j)\right).\end{equation}
This form of update is precisely supported by our choice of $\kappa$ and $v$, as well as the $\DenseStep$ and $\MultSparse$ operations. By the choice of parameters, we note that $1 - \kappa \ge (m + n)^{-8}$.

Finally, in order to support our sampling distributions and gradient computations, we compute and store the vectors $w_0$ and $g(w_0)$ in full using $O(\nnz(A))$ time at the beginning of the inner loop. In $O(m + n)$ time, we also build two data structures which allow us to sample from entries of the given fixed vectors $w_0\x$, and $w_0\y$, in constant time respectively. 

Following Section~\ref{ssec:interface-simplex}, we defined the parameter
\[\omega \defeq \max\left(\frac{1}{1 - \kappa},\;\frac{n}{\lambda\tilde{\eps}}\right) \le (m+n)^{13},\text{ so that }\log(\omega)=O(\log(mn)).\]
Altogether, these initializations take time $O(\nnz + (m + n)\log^3(mn))$, following Section~\ref{ssec:interface-simplex}.

\paragraph{Inner loop iterations.} We discuss how to make appropriate modifications to the $x$-block. For simplicity we denote our current iterate as $z$, and the next iterate as $w$. Also, we denote $\hat{z}$ as the concatenation of implicit iterates that the two $\AEM$ copies maintain (see Section~\ref{ssec:interface-simplex} for more details), which is $\tilde{\veps}$ close in $\ell_1$ distance to $z$, the prior iterate, so that we can query or sample entries from $\hat{z}$ using $\AEMS\x$ and $\AEMS\y$. Each inner loop iteration  consists of using a gradient estimator at $\hat{z}$ satisfying $\norm{\hat{z}-z}_1\le\tilde{\veps}$, sampling indices for the computation of $\tilde{g}_{w_0}(\hat{z})$, computing the sparse part of $\tilde{g}_{w_0}(\hat{z})$, and performing the approximate update to the iterate. We show that we can run each substep using data structure $\AEMS\x$ in time $O(\log^4(mn))$, within the error tolerance of Proposition~\ref{prop:innerloopproof} due to the definition of $\tilde{\veps}$. Combining with our discussion of the complexity of initialization, this implies that the total complexity of the inner loop, other than outputting the average iterate, is
\[O(T\log^4(mn) + \nnz + (m + n)\log^3(mn)) = O\left(\frac{\loocop^2\cdot\log^4(mn)}{\alpha^2}+ \nnz + (m+n)\log^3(mn)\right).\]

\noindent
\emph{Sampling.} Recall that the distribution we sample from is given by 
\begin{equation*}
\begin{aligned}
	p_{ij}(\hat{z};w_0)  \defeq \frac{[\hat{z}\y]_i+2[w_0\y]_i}{3}\cdot\frac{A_{ij}^2}{\ltwo{\ai}^2}.
\end{aligned}
\end{equation*}
First, with probability $2/3$, we sample a coordinate $i$ from the precomputed data structure for sampling from $w_0\y$ in constant time; otherwise, we sample $i$ via $\AEMS\y.\Sample()$. Then, we sample an entry of $\ai$ proportional to its square via the precomputed data structure (cf.\ Section~\ref{ssec:accessmodel}) in constant time. This takes in total $O(\log mn)$ time. \\

\noindent
\emph{Computing the gradient estimator.} Proposition~\ref{prop:innerloopproof} requires us to compute the sparse component of the gradient estimator \eqref{eq:estimate-vr} at point $\hat{z}$. To do this for the $x$ block, we first query $[w_0\x]_j$ and $[\hat{z}\y]_i\leftarrow\AEMS\y.\Get(i)$, and then access the precomputed norm $\norm{\ai}_2$ and entry $A_{ij}$. We then compute 
\[c=\clip\left(A_{ij}\left[\hat{z}\y - w_0\y\right]_i \cdot \frac{3}{[\hat{z}\y]_i+2[w_0\y]_i}\cdot\frac{\ltwo{\ai}^2}{A_{ij}^2}\right).\]
By the guarantees of $\AEM$, this takes total time bounded by $O(\log(mn))$.\\
 
\noindent
\emph{Performing the update.} To perform the update, by observing the form of steps in Algorithm~\ref{alg:innerloop-approx} with our choice of entropy regularizer, the update form given by the regularized mirror-descent step is (as derived in the discussion of the initialization of $\AEMS\x$, see \eqref{eq:updateform})
\begin{equation*}\label{eq:xdense-update}
{[w^\star}\x \gets \Pi_{\Delta}((w\x)^\kappa \circ 
\exp((1-\kappa)\log w_0\x - \eta\kappa g\x(w_0)-\eta\kappa c e_j)).
\end{equation*}
To implement this, recalling our choice of the vector $v$ in the initialization of $\AEMS\x$, it suffices to call
\begin{align*}
& \AEMS\x.\DenseStep(); \\
& \AEMS\x.\MultSparse(- \eta \kappa c e_j);\\
& \AEMS\x.\SumUp().
\end{align*}
By assumption, each operation takes time bounded by $O(\log^4(mn))$, where we note the vector used in the $\MultSparse$ operation is $1$-sparse. The implementation of this update is correct up to a $\tveps$-padding, whose error we handled previously. By the discussion in the data structure initialization section, this preserves the invariant that the $x$ block of the current iterate is maintained by $\AEMS\x$.

\paragraph{Average iterate computation.}
At the end of each run of Algorithm~\ref{alg:innerloop-approx}, we compute and return the average iterate via calls $\AEMS\x.\GetSum(j)$ for each $j \in [n]$, and scaling by $1/T$, and similarly query $\AEMS\y$. The overall complexity of this step is $O((m + n)\log^2(mn))$. The correctness guarantee, i.e.\ that the output approximates the average in $\ell_1$ norm up to $\epsaprx/LD$, is given by the choice of $\tilde{\veps}$ and the guarantees of $\AEM$, where in this case the domain size $D$ is bounded by $\sqrt{2}$. This is never the dominant factor in the runtime, as it is dominated by the cost of initializations.

\subsubsection{Algorithm guarantee}
\label{sssec:algol1l1}

\begin{theorem}
	\label{thm:l1l1}
	In the $\ellone$ setup,  let $\tnnz \defeq \nnz + (m+n)\log^3(mn)$. The implementation in Section~\ref{ssec:implementvrlone} with the optimal choice of $\alpha = \max\left(\epsilon/3, \looco\log^{2}\left(mn\right)/\sqrt{
	\tnnz}\right)$ has runtime
	\begin{align*}O\left(\left(\tnnz + \frac{\loocop^2\log^4(mn)}{\alpha^2}\right)\frac{\alpha\log(mn)}{\eps}\right) = O\left(\tnnz + \frac{\sqrt{\tnnz} \looco \log^{3}(mn)}{\epsilon} \right)\end{align*}
	and outputs a point $\bar{z} \in \zset$ such that
	\[\E\gap(\bar{z}) \le \epsilon.\]
\end{theorem}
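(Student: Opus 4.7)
The plan is to derive Theorem~\ref{thm:l1l1} by combining the correctness guarantees of Propositions~\ref{prop:outerloopproof} and~\ref{prop:innerloopproof} with the per-iteration accounting carried out in Section~\ref{ssec:implementvrlone}, and then optimizing over $\alpha$. First I would verify the hypotheses of Proposition~\ref{prop:innerloopproof}: Lemma~\ref{lem:est-prop-vr-l1} shows our estimator is a $\sqrt{2}\looco$-centered-local estimator; the $\ellone$ domain has diameter $D = \sqrt{2}$ with respect to $\norm{\cdot}$ in~\eqref{eq:norms_for_setups}; and since $g(z) = (A^\top z\y, -Az\x)$ one has $\norm{g(z)-g(z')}_* \le \norm{A}_{\max}\norm{z-z'} \le \sqrt{2}\looco\norm{z-z'}$ and $\norm{g(z)}_* \le \sqrt{2}\looco$. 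The approximation-tolerance conditions in Lines~\ref{line:inner-error-hat}--\ref{line:inner-average} of Algorithm~\ref{alg:innerloop-approx} are met by the choice $\tveps = (m+n)^{-8}$, applied via both parts of Lemma~\ref{lem:stablediv} to the paddings produced by $\AEM$; and Lemma~\ref{lem:projecteffect} with truncation parameter $\delta = (\vepsout-\vepsi)/(\alpha(m+n))$ gives the iterate perturbation allowed by Line~\ref{line:outerloop_approx} of Algorithm~\ref{alg:outerloop}. Proposition~\ref{prop:innerloopproof} thus yields an $(\alpha,\vepsi)$-relaxed oracle with $\vepsi = \eps/3$, and Proposition~\ref{prop:outerloopproof} applied with $K = 3\alpha\Theta/\eps$ and $\Theta = \log(mn)$ gives $\E\gap(\bar{z})\le \eps$.

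Next I would tally the runtime. Per outer iteration, computing $g(z_{k-1/2})$ exactly costs $O(\nnz)$, maintaining the running average and truncating take $O(m+n)$. Each call to Algorithm~\ref{alg:innerloop-approx} requires an $O(\tnnz)$ setup (dominated by computing $g(w_0)$ and initializing the two $\AEM$ instances, each in time $O((m+n)\log(mn)\log^2\omega) = O((m+n)\log^3(mn))$), followed by $T = O(\loocop^2/\alpha^2)$ inner steps each costing $O(\log^4(mn))$ as argued in Section~\ref{ssec:implementvrlone}, plus $O((m+n)\log^2(mn))$ to extract the average iterate at the end. Multiplying by $K = O(\alpha\log(mn)/\eps)$ outer iterations gives total runtime
\begin{equation*}
\Otilb{\frac{\alpha\log(mn)}{\eps}\left(\tnnz + \frac{\loocop^2 \log^4(mn)}{\alpha^2}\right)}
= \Otilb{\frac{\alpha\tnnz\log(mn)}{\eps} + \frac{\loocop^2 \log^5(mn)}{\alpha\eps}}.
\end{equation*}

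Balancing the two summands by choosing $\alpha = \looco\log^2(mn)/\sqrt{\tnnz}$ makes each equal to $\sqrt{\tnnz}\cdot \looco\log^3(mn)/\eps$, which is the stated bound. If this prescription would drive $\alpha$ below $\vepsi = \eps/3$ required by Proposition~\ref{prop:innerloopproof}, we instead use $\alpha = \eps/3$, in which case $\sqrt{\tnnz}\looco/\eps \le \tnnz$ and the runtime simplifies to $O(\tnnz)$, again matching the theorem; the $\max$ in the stated choice of $\alpha$ captures exactly these two regimes. The main obstacle I expect is the detailed bookkeeping around the layered tolerances $\tveps,\epsaprx,\vepsi,\vepsout$ and their dependence on $\alpha$: verifying that $\tveps = (m+n)^{-8}$ simultaneously discharges the two divergence requirements~\eqref{cond:one-side} and~\eqref{cond:other-side}, the norm condition in Line~\ref{line:inner-error-hat}--\ref{line:inner-error-star}, and the averaging condition in Line~\ref{line:inner-average}, requires the bound $\norm{\log w_0}_\infty = O(\log(mn))$ coming from truncation together with the standing assumption $\looco/\eps \le (m+n)^3$ from Section~\ref{ssec:problemdef}.
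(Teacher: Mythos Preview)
Your proposal is correct and follows essentially the same approach as the paper's proof: correctness via Propositions~\ref{prop:outerloopproof} and~\ref{prop:innerloopproof} using the estimator bound of Lemma~\ref{lem:est-prop-vr-l1} and the padding/truncation bounds of Lemmas~\ref{lem:stablediv} and~\ref{lem:projecteffect}, and runtime by summing the per-iteration costs from Section~\ref{ssec:implementvrlone} and balancing over $\alpha$. The paper's proof is essentially a terse pointer to exactly these ingredients, so your expanded version is faithful. One small slip: in the regime $\alpha=\eps/3$ you claim the runtime simplifies to $O(\tnnz)$, but since $K=\Theta(\log(mn))$ outer iterations still run, the cost is $O(\tnnz\log(mn))$; this is absorbed by the second summand of the stated bound (since $\looco\log^2(mn)/\sqrt{\tnnz}\le \eps/3$ implies $\sqrt{\tnnz}\,\looco\log^3(mn)/\eps=\Theta(\tnnz\log(mn))$ up to constants in the borderline case), and the paper glosses over this detail as well.
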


\begin{proof}
The correctness of the algorithm is given by the discussion in Section~\ref{ssec:implementvrlone} and the guarantees of Proposition~\ref{prop:outerloopproof} with $K=3\alpha\Theta/\eps$, $\vepsout=2\eps/3$, $\vepsi=\eps/3$, Proposition~\ref{prop:innerloopproof} with $\epsaprx=\eps/15$, and the data structure $\AEM$ with our choice of \[\tilde{\veps}\defeq (m + n)^{-8},\] to meet the approximation conditions in Line~\ref{line:inner-error-hat},~\ref{line:inner-error-star} and~\ref{line:inner-average} of Algorithm~\ref{alg:innerloop-approx}. The runtime bound is given by the discussion in Section~\ref{ssec:implementvrlone}, and the optimal choice of $\alpha$ is clear.
\end{proof}

\section{Data structure implementation}
\label{sec:ds}

In this section, we give implementations of our data structures, 
fulfilling the interface and runtime guarantees of  
Section~\ref{ssec:interfaces}. In Section~\ref{sec:iterate_maintain_proof} we provide the implementation of $\IM_p$ for $p\in\{1,2\}$ used for sublinear coordinate methods. In Section~\ref{ssec:ds-globalm}, we provide an implementation of $\AEM$ used in variance-reduced coordinate methods for simplex domains, provided we have an implementation of a simpler data structure, $\ScM$, which we then provide in Section~\ref{sec:scm}.%

\subsection{$\IM_p$}
\label{sec:iterate_maintain_proof}

The $\IM_p$, $p \in \{1, 2\}$ data structure is
described in Section~\ref{ssec:interface-norm} and used for tracking the 
iterates in our fully stochastic methods and the Euclidean part of our the 
iterates in our variance-reduced methods. The data structure maintains an 
internal representation of $x$, the current iterate, and $s$, a running sum 
of all iterates.
The main idea behind the efficient implementation of the data structure is 
to maintain $x$ and $s$ as a linear combination of sparsely-updated 
vectors. In particular, the data structure has the following state: 
scalars $\xi_u$, $\xi_v$, $\sigma_u$, $\sigma_v$, $\iota$, $\nu$; vectors 
$u, u', v$, and the scalar  $\norm{v}_2^2$; the vector $v$ is only relevant 
for variance reduction and is therefore set 0 for the non-Euclidean case 
$p=1$. 

We maintain the following invariants on the data structure state at the end 
of every operation:
\begin{itemize}
	\item $x = \xi_u u + \xi_v v$, the internal representation of $x$
	\item $s = u' + \sigma_u u + \sigma_v v$, the internal representation of 
	running sum $s$
	\item $\iota = \inner{x}{v}$, the inner product of the iterate with fixed vector $v$
	\item $\nu = \norm{x}_p$, the appropriate norm of the iterate
\end{itemize}

In addition, to support sampling, our data structure also maintains a binary tree $\dist_x$ of depth $O(\log n)$. Each leaf node is associated with a coordinate $j \in [n]$, and each internal node is associated with a subset of coordinates corresponding to leaves in its subtree. For the node corresponding to $S \subseteq [n]$ (where $S$ may be a singleton), we maintain 
the sums $\sum_{j \in S} [u]_j^p$, $\sum_{j \in S} [u]_j [v]_j$, and $\sum_{j 
\in S} [v]_j^p$.

We now give the implementation of each operation supported by $\IM_d$, 
followed by proofs of correctness and of the runtime bounds when 
applicable.

\subsubsection{Initialization} 

\begin{itemize}\item$\Initialize(x_0, v)$. Runs in time $O(n)$.
	
	If $p=1$  set $v \gets \mathbf{0}_n$; otherwise we compute and store 
	$\norm{v}_2^2$. Initialize the remaining data structure state as 
	follows: $(\xi_u, \xi_v, u) \leftarrow (1, 0, x_0)$, $(\sigma_u, \sigma_v, 
	u') \leftarrow (0, 0, \mathbf{0}_n)$, $(\iota, \nu) \leftarrow 
	(\inner{x_0}{v}, \norm{x_0}_p)$. Initialize $\dist_x$, storing the relevant 
	sums in each internal node. 
\end{itemize}

It is clear that $x = \xi_u u + \xi_v v$, $s = u' + \sigma_u u + \sigma_v 
v$, and that the invariants of $\iota, \nu$ hold. Each step takes $O(n)$ 
time; for the first 4 steps this is immediate, and the final recursing 
upwards from the leaves spends constant time for each internal node, 
where there are $O(n)$ nodes.

\subsubsection{Updates}

\begin{itemize}
	\item $\Scale(c)$: $x \leftarrow cx$. Runs in time $O(1)$.
	
	Multiply each of $\xi_u, \xi_v, \nu, \iota$ by $c$.
	\item $\AddSparse(j, c)$: $[x]_j \leftarrow [x]_j + c$, with the guarantee $c \geq -[x]_j$ if $p = 1$. Runs in time $O(\log n)$.
	\begin{enumerate}
		\item $u \leftarrow u + \frac{c}{\xi_u} e_j$.
		\item $u' \leftarrow u' - \frac{c\sigma_u}{\xi_u} e_j$.
		\item If $p = 1$, $\nu \leftarrow \nu + c$. 
			 If $p = 2$, $\nu \leftarrow \sqrt{\nu^2 + 2c[\xi_u u + \xi_v v]_j + c^2}$.
		\item $\iota \leftarrow \iota + c [v]_j$.
		\item For internal nodes of $\dist_x$ on the path from leaf $j$ to the root, update $\sum_{j \in S} [u]_j^p$, $\sum_{j \in S} [u]_j [v]_j$ appropriately.
	\end{enumerate}
	\item $\AddDense(c)$: $x \leftarrow x + cv$. Runs in time $O(1)$. 
	(Supported only for $p=2$).
	
	Set $\xi_v \leftarrow \xi_v + c$, $\nu \leftarrow \sqrt{\nu^2 + 2c\iota 
	+ c^2\norm{v}_2^2}$, and $\iota \leftarrow \iota + c\norm{v}_2^2$.
	\item $\SumUp()$: $s \leftarrow s + x$. Runs in time $O(1)$.
	
	Set $\sigma_u \leftarrow \sigma_u + \xi_u$ and $\sigma_v \leftarrow 
	\sigma_v + \xi_v$.
\end{itemize}

Each of the runtime bounds clearly hold; we now demonstrate that the necessary invariants are preserved. Correctness of $\Scale$ and $\SumUp$ are clear. Regarding correctness of $\AddSparse$, note that (ignoring the $v$ terms when $p = 1$)
\begin{align*}
\xi_u\left(u + \frac{c}{\xi_u}e_j \right) + \xi_v v &= \xi_u u + \xi_v v+ ce_j,\\
\left(u' - \frac{c\sigma_u}{\xi_u} e_j\right) + \sigma_u\left(u + \frac{c}{\xi_u}e_j \right) + \sigma_v v &= u' + \sigma_u u + \sigma_v v.
\end{align*}
When $p = 1$, the update to $\nu$ is clearly correct. When $p = 2$, because only $[x]_j$ changes,
\begin{align*}
[\xi_u u + \xi_v v + ce_j]_j^2 &= [\xi_u u + \xi_v v]_j^2 + 2c[\xi_u u + \xi_v v]_j + c^2,\\
\left([\xi_u u + \xi_v v + ce_j]_j\right) \cdot [v]_j &= \left([\xi_u u + \xi_v v]_j\right) \cdot [v]_j + c[v]_j.
\end{align*}
Thus, the updates to the norm and inner product are correct. Regarding 
correctness of $\AddDense$ when $p = 2$, we have
\begin{align*}
\xi_u u + (\xi_v + c) v&= \xi_u u + \xi_v v + cv, \\
\norm{x + cv}_2^2 &= \nu^2 + 2c\iota + c^2 \norm{v}_2^2, \\
\inner{x + cv}{v} &= \iota + c\norm{v}_2^2.
\end{align*}
Here, we use the invariants that $\nu = \norm{x}_2$ and $\iota = \inner{x}{v}$.%

\subsubsection{Queries}

\begin{itemize}
	\item $\Get(j)$: Return $[x]_j$. Runs in time $O(1)$.
	
	Return $\xi_u [u]_j + \xi_v [v]_j$.
	\item $\GetSum(j)$: Return $[s]_j$. Runs in time $O(1)$.
	
	Return $[u']_j + \sigma_u [u]_j + \sigma_v [v]_j$.
	\item $\Norm()$: Return $\norm{x}_p$. Runs in time $O(1)$.
	
	Return $\nu$.
\end{itemize}

By our invariants, each of these operations is correct.

\subsubsection{Sampling}

The method $\Sample$ returns a coordinate $j$ with probability 
proportional to $[x]_j^p$ in time $O(\log n)$. To implement it, we 
recursively perform the following procedure, where the recursion depth is at 
most $O(\log n)$, starting at the root node and setting $S = [n]$:
\begin{enumerate}
	\item Let $S_1, S_2$ be the subsets of coordinates corresponding to the children of the current node.
	\item Using scalars $\xi_u, \xi_v$, and the maintained $\sum_{j \in S_i} [u]_j^p$, $\sum_{j \in S_i} [u]_j [v]_j$, $\sum_{j \in S_i} [v]_j^p$ when appropriate, compute $\sum_{j \in S_i} [x]_j^p = \sum_{j \in S_i} [\xi_u u + \xi_v v]_j^p$ for $i \in \{1, 2\}$.
	\item Sample a child $i \in \{1, 2\}$ of the current node proportional to $\sum_{j \in S_i} [x]_j^p$ by flipping an appropriately biased coin. Set $S \leftarrow S_i$.
\end{enumerate}

It is clear that this procedure samples according to the correct probabilities. Furthermore, step 2 can be implemented in $O(1)$ time using precomputed values, so the overall complexity is $O(\log n)$.

\subsection{$\GlobM$}\label{ssec:ds-globalm}

In this section, we give the implementation of $\AEM$ which supports 
dense update to simplex mirror descent iterates. For convenience, we 
restate its interface, where we recall the notation $\norm{g}_0$ for the 
number of nonzero entries in $g$, Definition~\ref{def:betastable} of an  
$\varepsilon$-padding, the invariant 
\begin{equation}\label{eq:aemhatx}\hat{x} \text{ is a } \varepsilon 
\text{-padding of } x,
\end{equation}
and the notation
\[\omega \defeq \max\left(\frac{1}{1 - \kappa},\;\frac{n}{\lambda\eps}\right).\]

\theAEM

\newcommand{\sigmin}{\sigma_{\min}}

We build $\AEM$ out of a
simpler data structure called $\ScM$, which maintains the simplex 
projection of fixed vectors raised elementwise to arbitrary powers; this 
suffices to support consecutive $\DenseStep$ calls without $\MultSparse$ 
calls between them. To add support for $\MultSparse$, we combine $O(\log 
n)$ instances of $\ScM$ in a formation resembling a binomial heap: for 
every entry updated by $\MultSparse$ we delete it from the $\ScM$ instance 
currently holding it, put it in a new singleton $\ScM$ instance (after 
appropriate scaling due to $\MultSparse$), and merge this singleton into 
existing instances. We now give a brief description of the $\ScM$ interface, 
and based on it, describe the implementation of $\AEM$. 
 We will provide the implementation of $\ScM$ in Section~\ref{sec:scm}. 

$\ScM$ 
is 
initialized with vectors $\bx\in\R^{n'}_{\ge 0}$ and $\bg\in\R^{n'}$ (with 
$n'\le n$) and 
supports efficient approximate queries on vectors of the form
\begin{equation*}
x[\sigma] \defeq \bx \circ \exp\left(\sigma\bg\right),
\end{equation*}
for any scalar $\sigma\in [\sigmin, 1]$. More specifically, the data 
structure allows efficient computation of  $\lone{x[\sigma]}$ (to within 
small multiplicative error $\epsscm$), as well as entry queries, sampling 
and running sum accumulation from a vector $\hat{x}[\sigma]$ satisfying
\begin{equation}
\label{eq:scmhatx}
 \hat{x}[\sigma] \text{ is a } \epsscm 
\text{-padding of } \Pi_{\Delta}\left(\bx \circ 
\exp\left(\sigma\bg\right)\right) = \frac{x[\sigma]}{\lone{x[\sigma]}}.
\end{equation}

We make the following assumptions on the input to the data structure:
\begin{equation*}
\lamscm \le [\bx]_i \le 1~\mbox{for all}~i\in[n']~~\mbox{and}~~
\sigma \in (\sigmin, 1).
\end{equation*}
The upper bounds on $\bx$ and $\sigma$ are arbitrary, and we may 
choose  $\lamscm$  and $\sigmin$ to be very small since the data 
structure runtime depends on them only logarithmically. To summarize this 
dependence, we define
\begin{equation*}
\omscm \defeq \max\left\{
\frac{1}{\sigmin} , \frac{n}{\lamscm\epsscm}
\right\}. 
\end{equation*}
With these assumptions and notation, we define the formal interface of 
$\ScM$.

\newcommand{\theSCM}{
\begin{table}[h]
	\centering
	\renewcommand{\arraystretch}{1.25}
	\begin{tabular}{c||l|l}
		{\bf Category}                   & {\bf Function} & {\bf 
			Runtime} \\ \hline
		\multicolumn{1}{c||}{\multirow{1}{*}{initialize}}
		& $\Initialize(\bx, \bg, \sigmin, \epsscm, \lamscm)$
		& $O(n'\log n \log^2\omscm)$\\ 
		\hline
		\multicolumn{1}{c||}{\multirow{2}{*}{update}}
		& $\Del(j)$: Remove coordinate $j$ from $\bx$, $\bg$ & $O(1)$ \\ \cline{2-3} 
		\multicolumn{1}{c||}{} & $\SumUp(\gamma, \sigma)$: $s \gets s+ 
		\gamma \hat{x}[\sigma]$, with $\hat{x}[\sigma]$ defined 
		in~\eqref{eq:scmhatx} & $O(\log\omscm)$ \\ 
		\hline
		\multicolumn{1}{c||}{\multirow{3}{*}{query}}
		& $\Get(j)$: Return 
		$[\hat{x}[\sigma]]_j$ & $O(\log\omscm)$ \\ \cline{2-3} 
		\multicolumn{1}{c||}{} & $\GetSum(j)$: Return $[s]_j$. & 
		$O(\log^2\omscm)$\\ \cline{2-3} 
		\multicolumn{1}{c||}{} & $\GetNorm(\sigma)$: Return $1 \pm \veps$ 
		approx. of $\norm{\bx \circ \exp(\sigma\bg)}_1$ & $O(\log\omscm)$ 
		\\ 
		\hline
		sample                    
		& $\Sample(\sigma)$: 
		$ \text{Return } j \text{ with probability } 
		[\hat{x}[\sigma]]_j$  & 
		$O(\log n \log \omscm)$  \\ 
	\end{tabular}
\end{table}
}
\theSCM

\subsubsection{$\AEM$ state}

Throughout this section, we denote $K \defeq \lceil \log n\rceil$. 
$\GlobM$ maintains a partition of $[n]$ into $K$ sets $S_1 \ldots, 
S_K$ (some of them possibly empty) that satisfy the invariant
\begin{equation}\label{eq:aem-rank-k-invariant}
|S_k| \le 2^k~\mbox{for all}~k\in[K].
\end{equation}
We refer to the index $k$ as ``rank'' and associate with each rank 
$k\in[K]$ the 
following data
\begin{enumerate}
	\item Scalar $\gamma_k\ge0$ and nonnegative integer $\tau_k$.
	\item Vectors $\bx_k, \bg_k \in \R^{|S_k|}$ such that $\lamscm \le 
	[\bx_k]_i \le 1$ for all $i \in [|S_k|]$, where $\lamscm = \min(\veps/n, 
	\lambda)$.
	\item A $\ScM$ instance, denoted $\ScM_k$, initialized with $\bx_k, 
	\bg_k$ and $\lamscm$ defined above, $\sigmin = 1-\kappa$ and 
	$\epsscm = \veps / 10$, so 
	that $\log \omscm = O(\log \omega)$. 
\end{enumerate}
$\AEM$ also maintains a vector $u \in \R^n$ for auxiliary running 
sum storage.

Define the vector $\delta\in \R^n$ by
\begin{equation}\label{eq:aem-delta-def}
\left[\delta\right]_{S_k} = 
\log\left(\gamma_k \left[\bx_k \circ \exp\left((1 - \kappa^{\tau_k}) 
\bg_k\right)\right]\right),~k\in\{1,\ldots,K\},
\end{equation}
where $\left[\delta\right]_{S_k}$ denotes the coordinates of $\delta$ in 
$S_k$. 
Recall that $x$ denotes the point in $\Delta^n$ maintained throughout the 
operations of $\AEM$; we maintain the key invariant that the point $x$ is 
proportional to $\exp(\delta)$, i.e.,
\begin{equation}\label{eq:scminvar}
x = \frac{\exp(\delta)}{\norm{\exp(\delta)}_1}.
\end{equation}
Specifically, we show in Section~\ref{sssec:aem_updates} that our 
implementation of $\DenseStep$ modifies $\bx$, $\bg$, $\{\tau_k\}_{k = 
0}^K$, $\{\gamma_k\}_{k = 0}^K$ so that the resulting effect on 
$\delta$, per definition \eqref{eq:scminvar}, is
\begin{equation}\label{eq:deltadensestep}\delta \gets \kappa \delta + v.\end{equation}
Similarly, our implementation of $\MultSparse$ modifies the state so that the resulting effect on $\delta$ is 
\begin{equation}\label{eq:deltamultsparse}\frac{\exp(\delta)}{\norm{\exp(\delta)}_1} \gets  \veps\text{-padding of } \frac{\exp(\delta + v)}{\norm{\exp(\delta + v)}_1}.\end{equation}
We remark that the role of $\gamma_k$ is to scale $\bx_k$ so that it lies 
coordinatewise in the range $[\lamscm, 1]$, conforming to the $\ScM$ 
input requirement. This is also the reason we require the 
$\veps$-padding operation in the definition of $\MultSparse$. 

\subsubsection{$\veps$-padding point $\hat{x}$}\label{sssec:aem-padding}
We now concretely define the point $\hat{x}$, 
which is the $\veps$-padding of $x$ that $\AEM$ maintains. Let
\begin{equation}\label{eq:aem-Gamma}
\Gamma \defeq \sum_{k = 1}^{K} \gamma_k \ScM_k.\GetNorm(1 - 
\kappa^{\tau_k}),
\end{equation}
be the approximation of $\exp(\delta)$ derived from the $\ScM$ instances. 
For any $j\in[n]$, let $k_j$ be such that $j\in S_{k_j}$, and let $i_{j}$ be the 
index of $j$ in $S_{k_j}$. The $j$th coordinate of $\hat{x}$ is
\begin{equation}\label{eq:hatxdef}
[\hat{x}]_j \defeq \frac{\gamma_{k_j} \ScM_{k_j}.\GetNorm(1 - 
	\kappa^{\tau_{k_j}})}{\Gamma} \cdot \ScM_{k_j}.\Get(i_j, 1 - 
\kappa^{\tau_{k_j}}).
\end{equation}
Since for each $k$, $\sum_{j \in S_k} \ScM_k.\Get(j, 1 - \kappa^{\tau_k}) = 
\ScM_k.\GetNorm(1 - \kappa^{\tau_k})$ we have that $\hat{x} \in 
\Delta^n$. We now prove that $\hat{x}$ is a $\veps$-padding of 
$x$. To do so, we prove the following lemma.

\begin{lemma}
	\label{lem:stabcombine}
	Let $\epsscm\le \frac{1}{10}$ and $\{S_k\}_{k = 1}^K$ be a partition of 
	$[n]$. Suppose for each $k \in [K]$, $\hat{x}_k \in \Delta^{|S_k|}$ is an $\epsscm$-padding of 
	$x_k \in \Delta^{|S_k|}$. Further, suppose we have positive scalars 
	$\{\nu_k\}_{k 
		= 1}^K$, $\{\hat{\nu}_k\}_{k = 1}^K$ satisfying
	\[(1 - \epsscm) \nu_k \le \hat{\nu}_k \le (1 + \epsscm)\nu_k, \text{ for 
	all 
	} 1 \le k \le K.\]
	Then, for $N = \sum_{k = 1}^K \nu_k$ and $\hat{N} = \sum_{k = 1}^K 
	\hat{\nu}_k$, we have that $\hat{x} \defeq \sum_{k = 1}^K 
	\frac{\hat{\nu}_k}{\hat{N}} \hat{x}_k$ is a $10\epsscm$-padding 
	of 
	$x \defeq \sum_{k = 1}^K \frac{\nu_k}{N} x_k$.
\end{lemma}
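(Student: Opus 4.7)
My plan is to directly construct the padding vector required by Definition~\ref{def:betastable}. Since the goal is to produce $\tilde{x} \ge x$ entrywise with $\hat{x} = \tilde{x}/\|\tilde{x}\|_1$ and $\|\tilde{x}-x\|_1 \le 10\epsscm$, the simplest construction is to take $\tilde{x} = c\hat{x}$ for a single scalar $c \ge 1$, provided we can choose $c$ to guarantee both the entrywise inequality and the $\ell_1$ bound. This automatically gives $\hat{x} = \tilde{x}/\|\tilde{x}\|_1$ since $\hat{x}\in\Delta^n$.

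The key computation is a coordinate-wise lower bound on $[\hat{x}]_j/[x]_j$. First I would use the partition structure $[x]_{S_k} = (\nu_k/N)\, x_k$ and $[\hat{x}]_{S_k} = (\hat{\nu}_k/\hat{N})\, \hat{x}_k$ to split this ratio, for $j\in S_k$ at position $i$ in $S_k$, as
\[
\frac{[\hat{x}]_j}{[x]_j} \;=\; \frac{\hat{\nu}_k/\hat{N}}{\nu_k/N}\cdot\frac{[\hat{x}_k]_i}{[x_k]_i}.
\]
The block-weight ratio $(\hat{\nu}_k/\hat{N})/(\nu_k/N)$ is at least $(1-\epsscm)/(1+\epsscm)$, by combining the hypothesis $\hat{\nu}_k \ge (1-\epsscm)\nu_k$ with the summed upper bound $\hat{N}\le (1+\epsscm)N$. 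The within-block ratio is at least $1/(1+\epsscm)$, since by the $\epsscm$-padding hypothesis (Definition~\ref{def:betastable}) there exists $\tilde{x}_k\ge x_k$ with $\|\tilde{x}_k\|_1\le 1+\epsscm$ and $\hat{x}_k=\tilde{x}_k/\|\tilde{x}_k\|_1$, giving $[\hat{x}_k]_i\ge [x_k]_i/(1+\epsscm)$. Multiplying yields $[\hat{x}]_j/[x]_j \ge (1-\epsscm)/(1+\epsscm)^2$.

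Next I would set $c \defeq (1+\epsscm)^2/(1-\epsscm)$, so that $\tilde{x} \defeq c\hat{x}$ satisfies $\tilde{x} \ge x$ entrywise. Since $\hat{x}\in\Delta^n$ we have $\|\tilde{x}\|_1 = c$, and because $\tilde{x}-x\ge 0$ and $x\in\Delta^n$,
\[
\|\tilde{x}-x\|_1 \;=\; c - 1 \;=\; \frac{3\epsscm + \epsscm^2}{1-\epsscm} \;\le\; 10\epsscm,
\]
where the last inequality uses $\epsscm \le \tfrac{1}{10}$ (in fact the constant $10$ is quite loose; any $\epsscm \le 7/11$ suffices). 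This verifies that $\hat{x}$ is a $10\epsscm$-padding of $x$. I do not anticipate a major obstacle: the argument is essentially bookkeeping of three multiplicative distortions (the within-block padding, the numerator scaling $\hat{\nu}_k/\nu_k$, and the denominator scaling $\hat{N}/N$), and the crucial observation is that a single global scaling $c\hat{x}$ is enough to produce a valid padding despite the heterogeneous block-wise structure.
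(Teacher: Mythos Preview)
Your proof is correct and follows essentially the same approach as the paper: both construct the padding witness as a scalar multiple $\tilde{x}=c\,\hat{x}$ and verify the two conditions of Definition~\ref{def:betastable}. The paper reaches this scalar multiple somewhat indirectly, defining $\tilde{x}\defeq\sum_k(\tilde{\nu}_k/\hat{N})\tilde{x}_k$ with $\tilde{\nu}_k=\hat{\nu}_k\cdot\frac{\max_{k'}\lone{\tilde{x}_{k'}}}{\lone{\tilde{x}_k}}\cdot\frac{1+\epsscm}{1-\epsscm}$, which unwinds to $\tilde{x}=\bigl(\max_{k'}\lone{\tilde{x}_{k'}}\bigr)\frac{1+\epsscm}{1-\epsscm}\,\hat{x}$; your choice $c=(1+\epsscm)^2/(1-\epsscm)$ is simply the worst case of this factor. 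Your route via the coordinatewise ratio bound $[\hat{x}]_j/[x]_j\ge(1-\epsscm)/(1+\epsscm)^2$ is cleaner and yields the tighter estimate $\lone{\tilde{x}-x}\le\frac{3\epsscm+\epsscm^2}{1-\epsscm}$, versus the paper's $\bigl(\frac{(1+\epsscm)^3}{(1-\epsscm)^2}-1\bigr)(1+\epsscm)+\epsscm$, though both are comfortably below $10\epsscm$ for $\epsscm\le\tfrac{1}{10}$.
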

\begin{proof}
	For every $k\in[K]$, let $\tilde{x}_k$ to be such that $\tilde{x}_k \ge 
	x_k$ 
	elementwise, $\hat{x}_k = \tilde{x}_k/\norm{\tilde{x}_k}_1$, and 
	$\norm{\tilde{x}_k - x_k}_1 \le \epsscm$. Consider the point
	\begin{equation*}
	\tilde{x}\defeq \sum_{k = 1}^K \frac{\tilde{\nu}_k}{\hat{N}} \tilde{x}_k,
	~~\mbox{where}~~\tilde{\nu}_k \defeq \hat{\nu}_k\frac{\max_{k\in[K]} 
		\lone{\tilde{x}_k}}{\lone{\tilde{x}_k}} 
	\cdot \frac{1+\epsscm}{1-\epsscm},
	\end{equation*}
	so that $\hat{x} = \tilde{x}/\lone{\tilde{x}}$. Since $\tilde{x}_k \ge x_k$ 
	elementwise, 
	$\hat{\nu}_k \ge (1-\epsscm){\nu}_k$ and $\hat{N} \le (1+\epsscm)N$, 
	we 
	have that $\tilde{x} \ge {x}$ elementwise. Furthermore, we have 
	$\hat{\nu}_k \le (1+\epsscm){\nu}_k$ and $\hat{N} \ge (1-\epsscm)N$, 
	and the properties 
	$\tilde{x}_k \ge x_k$ 
	and $\lone{\tilde{x}_k - x_k}$ imply $1\le 
	\lone{\tilde{x}_k}\le1+\epsscm$ 
	as well as $\frac{\max_{k\in[K]} \lone{\tilde{x}_k}}{\lone{\tilde{x}_k}} \le 
	1+\epsscm$. Therefore
	\begin{equation*}
	\lone{\tilde{x}-x} \le \sum_{k=1}^K \frac{\nu_k}{N}\lone{ 
		\frac{(1+\epsscm)^3}{(1-\epsscm)^2}  \tilde{x}_k - x_k}
	\le\left(\frac{(1+\epsscm)^3}{(1-\epsscm)^2}- 1\right)(1+\epsscm) + 
	\epsscm \le 10\epsscm,
	\end{equation*}
	where the final bound is verified numerically for $\epsscm \le 1/10$.
\end{proof}

The $\ScM$ interface guarantees that calls to $\ScM_k.\GetNorm$ return 
$\norm{\bx_k \circ \exp((1 - \kappa^{\tau_k})\bg_k)}_1$ to within a $1 
\pm \epsscm$ multiplicative factor, and moreover that $\Get$ 
returns entries from an $\epsscm$-padding of $\Pi_\Delta(\bx_k 
\circ \exp((1 - \kappa^{\tau_k})\bg_k))$. Thus, applying 
Lemma~\ref{lem:stabcombine} with our definition of $\hat{x}$ in 
\eqref{eq:hatxdef} yields that $\hat{x}$ is a $10\epsscm = \veps$-padding of $x$.

\subsubsection{$\AEM$ initialization and updates}
\label{sssec:aem_updates}

We give the implementation and prove runtimes of $\Initialize$, $\MultSparse$, $\DenseStep$, and $\UpdateSum$.

\paragraph{$\Initialize$.} Upon initialization of $\GlobM$, we set 
$\gamma_K = \max_{j \in [n]} [x_0]_j$ and $\tau_k=0$ for all $k$. We let 
$S_K = [n]$ (so that $S_k = 
\emptyset$ for all $k<K$) and instantiate a 
single instance of $\ScM$ of rank $K$ with parameters 
\[\bx_K = \frac{x_0}{\gamma_K},\; \bg_K = \frac{v}{1 - \kappa} - \log 
x_0,\; \epsscm = \frac{\veps}{10},\; \lamscm = \min\left(\frac{\veps}{n},\; 
\lambda\right).\]
It is clear that the invariant \eqref{eq:scminvar} holds at initialization, and 
that the coordinates of $\bx_K$ lie in the appropriate range, since we 
assume that $x_0\in[\lambda,1]^n$. We will use 
the same choices of 
$\epsscm$, $\lamscm$ for every $\ScM$ instance. 
The overall complexity of this operation is $O(n \log n\log^2\omega)$.

\paragraph{$\MultSparse$.} We state the implementation of $\MultSparse$, 
prove that the resulting update is \eqref{eq:deltamultsparse}, and finally 
give its runtime analysis. We perform $\MultSparse(g)$ in sequence for 
each nonzero coordinate of $g$. Let $j$ denote such nonzero coordinate 
and let $k_j$ be such that $j\in S_{k_j}$; the operation consists of the 
following 
steps.
\begin{enumerate}
	\item Remove $j$ from $S_{k_j}$ and delete the corresponding 
	coordinate from $\ScM_{k_j}$ (via a call to $\Del$).
	\item Let $S_0={j}$ and initialize $\ScM_0$ with initial data $\bx$ and 
	$\bg$ described below.  
	\item For $k$ going from $1$ to $K$, set $S_k\gets S_k \cup S_{k-1}$ 
	and 
	$S_{k-1} =\emptyset$, 
	merging $\ScM_{k}$ and $\ScM_{k-1}$ as described below. If the new set 
	$S_k$ satisfies $|S_k| \le 2^k$, break the loop; else, proceed to 
	the next $k$. 
\end{enumerate}

We now state the initial data given to each $\ScM$ upon initialization in the 
steps above. Whenever a $\ScM_k$ is created supporting $S_k \subseteq 
[n]$, we first compute $\delta_i$ for each $i \in S_k$ according to 
\eqref{eq:scminvar}.
When creating the singleton instance $\ScM_0$ we perform the update 
\begin{equation}\label{eq:aem-multsparse-exact}
\delta_j \gets \delta_j + g_j; 
\end{equation}
this implements multiplication of the $j$th 
coordinate by $\exp(g_j)$. To instantiate $\ScM_k$, we set $\tau_k=0$,  $\gamma_k 
\gets \max_{i \in S_k} \exp([\delta]_{i})$ and modify $\delta$ according to
\begin{equation}\label{eq:aem-multsparse-pad}
[\delta]_{S_k} \gets \max\left\{[\delta]_{S_k} ,\log\left(\lamscm \cdot 
\gamma_k\right)\right\}.
\end{equation}
In other words, we raise very small entries of $[\delta]_{S_k}$ to ensure that 
the ratio between any two entries of $\left[\exp(\delta)\right]_{S_k}$ is in 
the range $[\lamscm^{-1}, \lamscm]$.  We then give $\ScM_k$ the initial 
data
\begin{equation}\label{eq:aem-scm-init}
\bx_k = \frac{1}{\gamma_k}\left[\exp\left(\delta\right)\right]_{S_k},\; 
\bg_k = \left[\frac{v}{1 - \kappa} - \delta\right]_{S_k}.
\end{equation}
It is clear that entries of $\bx_k$ are in the range $[\lamscm, 1]$, and 
invariant \eqref{eq:scminvar} holds at initialization, as $\tau_k = 0$. 
Therefore, the operation~\eqref{eq:aem-multsparse-exact} implements 
$x\gets \Pi_\Delta(x \circ \exp(g))$ exactly; it remains to show that the 
operation~\eqref{eq:aem-multsparse-pad} amounts to an $\veps$-padding of $\exp(\delta)/\lone{\exp(\delta)}$. We do so by invoking the 
following  
lemma, substituting the values of $\delta$ 
before~\eqref{eq:aem-multsparse-exact} for $\delta^{-}$ and $\delta^{+}$ 
respectively, and $\lamscm \le \veps/n$ for $\rho$.

\begin{lemma}
\label{lem:raisedeltastable}
Let $\delta^{-},\delta^{+}\in \R^n$ satisfy
\begin{equation*}
[\delta^{+}]_i = \max\left\{ [\delta^{-}]_i, \max_{j}[\delta^{-}]_j + \log \rho 
\right\}
~\mbox{for all}~j\in[n]~\mbox{and}~\rho \le 1.
\end{equation*}
Then, $\exp(\delta^{+})/\norm{\exp(\delta^{+})}_1$ is a $\rho n$-padding of $\exp(\delta^{-})/\norm{\exp(\delta^{-})}_1$. 
\end{lemma}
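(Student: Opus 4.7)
The plan is to exhibit a point $\tilde{x}\in\R^n_{\ge 0}$ that witnesses the $\rho n$-padding relation directly from the definition. The natural choice is
\[
\tilde{x} \defeq \frac{\exp(\delta^{+})}{\norm{\exp(\delta^{-})}_1},
\]
which by construction satisfies $\tilde{x}/\norm{\tilde{x}}_1 = \exp(\delta^{+})/\norm{\exp(\delta^{+})}_1$, the candidate padding. So I only need to verify the two conditions of Definition~\ref{def:betastable}: entrywise domination $\tilde{x} \ge x \defeq \exp(\delta^{-})/\norm{\exp(\delta^{-})}_1$, and the $\ell_1$ bound $\norm{\tilde{x}-x}_1 \le \rho n$.

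Entrywise domination is immediate from the definition of $\delta^{+}$: each coordinate is either equal to $[\delta^{-}]_i$ or equal to $\max_j[\delta^{-}]_j + \log\rho$, and in either case (using $\rho\le 1$ in the first sub-case and the maximum property in the second) one gets $[\delta^{+}]_i \ge [\delta^{-}]_i$, hence $\exp([\delta^{+}]_i) \ge \exp([\delta^{-}]_i)$, which after dividing by the common normalizer $\norm{\exp(\delta^{-})}_1$ gives $\tilde{x}_i \ge x_i$.

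For the $\ell_1$ bound, I would do a coordinatewise case split. Writing $M \defeq \max_j [\delta^{-}]_j$: in the case $[\delta^{-}]_i \ge M + \log\rho$ the two exponentials coincide; in the opposite case $[\delta^{+}]_i = M + \log\rho$, so
\[
\exp([\delta^{+}]_i) - \exp([\delta^{-}]_i) \le \exp(M + \log\rho) = \rho \exp(M) \le \rho \norm{\exp(\delta^{-})}_1,
\]
where the final inequality uses that $\norm{\exp(\delta^{-})}_1$ dominates any single term $\exp([\delta^{-}]_j)$, in particular the maximum. Summing the $n$ per-coordinate bounds and dividing by $\norm{\exp(\delta^{-})}_1$ yields
\[
\norm{\tilde{x}-x}_1 = \frac{\norm{\exp(\delta^{+})-\exp(\delta^{-})}_1}{\norm{\exp(\delta^{-})}_1} \le \rho n,
\]
completing the verification.

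There is no real obstacle here; the only subtlety is picking the right $\tilde{x}$ (one is tempted to instead write $\tilde x = \exp(\delta^+)/\norm{\exp(\delta^+)}_1$, but then $\tilde x \ge x$ can fail because the normalizers differ), and remembering that the bound $\exp(M)\le \norm{\exp(\delta^{-})}_1$ is what converts the coordinatewise additive gap of $\log\rho$ into the desired $\rho$-multiplicative statement after normalization.
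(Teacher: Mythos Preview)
Your proof is correct and essentially identical to the paper's: both choose $\tilde{x}=\exp(\delta^{+})/\norm{\exp(\delta^{-})}_1$, check entrywise domination from $[\delta^{+}]_i\ge[\delta^{-}]_i$, and bound each nontrivial coordinate difference by $\rho\exp(M)\le\rho\norm{\exp(\delta^{-})}_1$ before summing over $n$. The paper just compresses your case split into the single expression $\rho M\cdot|\{i:[\delta^{+}]_i\ne[\delta^{-}]_i\}|\le \rho M n$.
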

\begin{proof}
Let $x=\exp(\delta^{-})/\norm{\exp(\delta^{-})}_1$,  
$x'=\exp(\delta^{+})/\norm{\exp(\delta^{+})}_1$, and $\tilde{x} = 
\exp(\delta^{+})/\norm{\exp(\delta^{-})}_1$. Clearly $x' = \tilde{x} 
/\lone{\tilde{x}}$ and $\tilde{x} \ge x$ element-wise. Moreover, letting 
$M=\max_{j}\exp([\delta^{-}]_j)$, we have
\begin{equation*}
\lone{\tilde{x}-x} = \frac{\lone{\exp(\delta^{+})-\exp(\delta^{-})}}{
	\lone{\exp(\delta^{-})}} \le 
\frac{\rho M |\{i\mid [\delta^{+}]_i \ne [\delta^{-}]_i\}| }{
	\lone{\exp(\delta^{-})}} \le \frac{\rho M \cdot 
	n}{\lone{\exp(\delta^{-})}} \le \rho n,
\end{equation*}
establishing the $\rho n$-padding property.
\end{proof}

Finally, we discuss runtime. Recall that the cost of initializing a $\ScM$ 
with $|S|$ elements is $O(|S|\log n\log^2\omega)$. So, step 1 of our 
implementation of $\MultSparse$, i.e., calling $\Del$ once and initializing 
a rank-1 $\ScM$ per nonzero element, costs $O(\norm{g}_0\log 
n\log^2\omega)$. We now discuss costs of merging in step 2. We show 
these merges cost an amoritized $O(\log^2 n\log^2\omega)$ per nonzero 
coordinate of $g$, leading to the claimed bound. Specifically, we show the 
cost of $T$ deletions and initializations due to nonzero entries of 
$\MultSparse$ arguments is $O(T\log^2 n\log^2\omega)$. Consider the 
number of times a rank-$k$ set can be created through merges: we claim it 
is upper bounded by $O\left(T/2^k\right)$. It follows that the overall 
complexity of step 2 is
\begin{equation*}
O\left(\sum_{k = 0}^{K} 2^k \frac{T}{2^k}\log n\log^2\omega\right) = O(T\log^2 n \log^2\omega).
\end{equation*}
The claimed bound on the number of rank-$k$ merges holds because at 
least $2^{k - 1}$ deletions (and hence that many $\MultSparse$ calls) must 
occur between consecutive rank-$k$ merges. To see this, for each $k$, 
maintain a 
potential $\Phi_k$ for the sum of cardinalities of all rank $\ell$ sets for 
$\ell < k$. Each deletion increases $\Phi_k$ by at most 1. For an insertion 
merge to create a rank-$k$ set, $\Phi_k$ must have been at least $2^{k - 
1} + 2$; after the merge, it is 0, as in its creation, all rank-$\ell$ sets for 
$\ell < k$ must have been merged. So, there must have been at least $2^{k 
- 1}$ deletions in between merges.

\paragraph{$\DenseStep$.} To implement $\DenseStep$ we simply 
increment $\tau_k \gets \tau_k + 1$ for all $k$; clearly, this takes time 
$O(\log n)$. We now show that \eqref{eq:scminvar} is maintained, i.e., that 
the resulting update to the variable $\delta$ under $\DenseStep$ is 
\eqref{eq:deltadensestep}. Recall that $\ScM_k$ is initialized 
according~\eqref{eq:aem-scm-init}. 
Clearly, \eqref{eq:scminvar} holds at initialization for the set $S_k$, as $1 - 
\kappa^0 = 0$. We now show that it continues to hold after any number of 
$\DenseStep$ calls. Let 
$\delta_0$ be the value 
of $[\delta]_{S_k}$ when $\ScM_k$ is initialized, and let $\delta_\tau$ be 
the value of $[\delta]_{S_k}$ after $\tau$ calls to $\DenseStep$, each 
performing the update $x\gets \Pi_\Delta(x^{\kappa}\circ \exp(v))$. This is 
consistent with the update $\delta_{\tau + 1} = \kappa 
\delta_\tau + [v]_{S_k}$, which requires
\begin{equation*}
\delta_{\tau} = \kappa^{\tau} \delta_0 + \sum_{\tau'=0}^{\tau-1} 
\kappa^{\tau'} [v]_{S_k} = \kappa^{\tau} \delta_0 + 
\frac{1-\kappa^{\tau}}{1-\kappa}[v]_{S_k}
= \log(\gamma_k \bx_k) + (1-\kappa^\tau) \bg_k,
\end{equation*}
where in the final transition we substituted $\delta_0=\log(\gamma_k 
\bx_k)$ and $[v]_{S_k}=(1-\kappa)[\bg_k + \delta_0]$ according 
to~~\eqref{eq:aem-scm-init}. We see that the required of form of 
$\delta_\tau$ is identical to its definition~\eqref{eq:aem-delta-def} and 
consequently that~\eqref{eq:scminvar} holds.

\paragraph{$\UpdateSum$.} We maintain the 
running sum $s$ via the invariant
\begin{equation}\label{eq:globmsdef} [s]_{S_k} = [u]_{S_k} +  
\ScM_k.\GetSum(),\; \forall k \in [K],\end{equation}
which we preserve in two separate procedures. First, whenever $\SumUp()$ 
is called, we compute the quantity $\Gamma$ defined 
in~\eqref{eq:aem-Gamma}, and for each $k\in[K]$ call
\[\ScM_k.\SumUp\left(\frac{\gamma_k \ScM_k.\GetNorm(1 - \kappa^{\tau_k})}{\Gamma}\right).\]
It is straightforward to see that this indeed preserves the invariant \eqref{eq:globmsdef} for our definition of $\hat{x}$ in \eqref{eq:hatxdef}, and takes time $O(\log n\log\omega)$. Next, whenever a coordinate is deleted from a $\ScM_k$ instance, or an entire $\ScM_k$ instance is deleted due to a merge operation, we update
\[u_j \gets u_j + \ScM_k.\GetSum(j)\]
for every deleted coordinate $j$, or $j$ involved in the merge, respectively. 
We charge the cost of this operations to that of new $\ScM$ instance, 
which we accounted for in the analysis of $\MultSparse$.

\subsubsection{Queries}

\paragraph{$\Get(j)$.} Recalling our definition of $\hat{x}$ \eqref{eq:hatxdef}, we compute $\Gamma$ in time $O(\log n\log\omega)$ by obtaining $\GetNorm(1 - \kappa^{\tau_k})$ for each $k$, and then call $\Get(j, 1 - \kappa^{\tau_k})$ in time $O(\log \omega)$ for the relevant $k$. 

\paragraph{$\GetSum(j)$.} Recalling our definition of $s$ 
\eqref{eq:globmsdef}, we implement $\GetSum(j)$ in $O(\log \omega)$ 
time via a single call to $\GetSum$ on the relevant $\ScM$ instance, and 
querying a coordinate of $u$.

\paragraph{$\Sample$.} Recalling \eqref{eq:hatxdef}, we first compute 
$\Gamma$, as well as all $\gamma_k \ScM_k.\GetNorm(1 - \kappa^{\tau_k})$, in $O(\log
n\log\omega)$ time. We then sample an instance $\ScM_k$, for $0 \le k \le K$, proportional to the value $\gamma_k \ScM_k.\GetNorm(1 - \kappa^{\tau_k})$, in $O(\log n)$ time. Finally, for the sampled instance, we call $\ScM_k.\Sample(1 - \kappa^{\tau_k})$ to output a coordinate in $O(\log n \log\omega)$ time. By 
the definition of $\hat{x}$ used by $\GlobM$, as well as the definitions 
used by each $\ScM_k$ instance, it is clear this preserves the correct 
sampling probabilities. %

\subsection{$\ScM$}
\label{sec:scm}

Finally, we provide a self-contained treatment of $\ScM$, the main building 
block in the implementation of $\AEM$ described above.

\subsubsection{Interface}
\label{ssec:scm_if}

For ease of reference we restate the interface of $\ScM$, where for the sake 
of brevity we drop the subscript $\text{scm}$ from $\veps$ and 
$\lambda$, and use $n$ rather than $n'$ to denote the input dimension.
Recall that for the vectors $\bx$ and $\bg$ given at initialization, the data 
structure keeps track of vectors of the form
\begin{equation}\label{eq:hxsigma}
\hat{x}[\sigma] \defeq \text{a } \veps \text{-padding of } 
\Pi_{\Delta}\left(\bx \circ \exp\left(\sigma\bg\right)\right),
\end{equation}
where $\sigma$ is any scalar in the range $\{0\}\cup[\sigmin, 1]$. 

The implementation of the data structure relies on three internal 
parameters: polynomial approximation order $p\in\N$, truncation 
threshold $R\ge0$, and $\sigma$ discretization level $K\in\N$. To satisfy 
the accuracy requirements we set these as
\begin{equation*}
R = \Theta(1) \log\frac{1}{\veps\lambda},~~p=\Theta(1) 
\log\frac{1}{\veps\lambda},~~\mbox{and}~~K=\ceil{\log\frac{1}{\sigmin}};
\end{equation*}
we give the runtime analysis in terms of these parameters.
\begin{table}[h]
	\centering
	\renewcommand{\arraystretch}{1.25}
	\begin{tabular}{c||l|l}
		{\bf Category}                   & {\bf Function} & {\bf 
			Runtime} \\ \hline
		\multicolumn{1}{c||}{\multirow{1}{*}{initialize}}
		& $\Initialize(\bx, \bg, \sigmin, \veps, \lambda)$: require $\bx \in 
		[\lambda, 1]^{n}$ & $O(n pK \log n )$\\ 
		\hline
		\multicolumn{1}{c||}{\multirow{2}{*}{update}}
		& $\Del(j)$: Remove coordinate $j$ from $\bx$, $\bg$ & 
		$O(1)$ \\ \cline{2-3} 
		\multicolumn{1}{c||}{} & $\SumUp(\gamma, \sigma)$: $s \gets s+ 
		\gamma \hat{x}[\sigma]$ & $O(p)$ \\ 
		\hline
		\multicolumn{1}{c||}{\multirow{3}{*}{query}}
		& $\Get(j, \sigma)$: Return 
		$[\hat{x}[\sigma]]_j$ & $O(p)$ \\ \cline{2-3} 
		\multicolumn{1}{c||}{} & $\GetSum(j)$: Return $[s]_j$. & 
		$O(pK)$ \\ 
		\cline{2-3} 
		\multicolumn{1}{c||}{} & $\GetNorm(\sigma)$: Return $1 \pm \veps$ 
		approx. of $\norm{\bx \circ \exp(\sigma\bg)}_1$ & $O(p)$ \\ 
		\hline
		sample                    
		& $\Sample(\sigma)$: 
		$ \text{Return } j \text{ with probability } 
		[\hat{x}[\sigma]]_j$  & 
		$O(p \log n)$  \\ 
	\end{tabular}
\end{table}

\subsubsection{Overview}

We now outline our design of $\ScM$, where the main challenge is supporting 
efficient $\GetNorm$ operations under no assumptions on the numerical 
range of the input $\bg$. 

\paragraph{Exponential approximation via Taylor expansion.} 
Our main strategy is to replace the exponential in the 
definition of $\hat{x}[\sigma]$ with its Taylor expansion of order 
$p=O(\log\frac{n}{\veps\lambda})$, giving the following approximation to 
the $\GetNorm(\sigma)$
\begin{equation*}
\norm{\bx \circ \exp\left(\sigma\bg\right)}_1 \approx 
\inner{\bx}{\sum_{q=0}^p \frac{1}{q!} (\sigma \bg)^q}
= \sum_{q=0}^p \frac{\sigma^q}{q!} \inner{\bx}{\bg^q},
\end{equation*}
where $q$th powers are applied to $\bg$ elementwise. By pre-computing 
all the 
inner products $\{\inner{\bx}{\bg^q}\}_{q=0}^p$ at initialization, we may 
evaluate this Taylor approximation of $\GetNorm$ in time $O(p)$. The 
validity of the approximation relies on the following well-known fact.

\begin{fact}[{Theorem 4.1 in \cite{SachdevaV14}}]
	\label{fact:approxexp}
	Let $\veps', R\ge0$. A Taylor series $f_p(t) = \sum_{q=0}^p 
	\frac{t^q}{q!}$ of degree $p=O(R+\log\frac{1}{\veps'})$ satisfies 
	\[|\exp(t) - f_p(t)| \le \exp(t)\veps'
	~~\mbox{for all}~t\in[-R,0].\]
\end{fact}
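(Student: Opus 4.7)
The plan is to establish this fact using the classical tail bound for the Taylor series of the exponential, combined with Stirling's approximation. Since $f_p$ is exactly the first $p+1$ terms of the Taylor series of $\exp$, the remainder is
\[\exp(t) - f_p(t) = \sum_{q=p+1}^{\infty} \frac{t^q}{q!}.\]
For $t \in [-R, 0]$, the absolute value of the remainder is upper bounded termwise, and so I would bound it by $\sum_{q=p+1}^{\infty} \frac{R^q}{q!}$.

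Next, I would use Stirling's inequality in the form $q! \ge (q/e)^q$ to rewrite $\frac{R^q}{q!} \le \left(\frac{eR}{q}\right)^q$. The key observation is that once $q \ge 2eR$, we have $eR/q \le 1/2$, so each remainder term is bounded by $2^{-q}$, and the tail is a geometric series summing to at most $2^{-p}$. Choosing $p = \max\{2eR,\, \lceil \log_2(1/\veps') + R/\ln 2 \rceil \} = O(R + \log(1/\veps'))$ ensures simultaneously that the geometric tail regime applies and that $2^{-p} \le e^{-R}\veps'$.

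Finally, I would conclude by observing that on $[-R, 0]$ we have $\exp(t) \ge \exp(-R)$, so the additive bound $2^{-p} \le \exp(-R)\veps' \le \exp(t)\veps'$ yields the claimed multiplicative guarantee uniformly for $t \in [-R, 0]$.

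I do not anticipate a genuine obstacle here: the entire argument is a standard Taylor remainder estimate, and the only mild care required is to track constants so that the threshold for the geometric decay regime and the threshold for $2^{-p} \le \exp(-R)\veps'$ are both absorbed into the $O(\cdot)$ bound on $p$. Alternatively, one could appeal directly to \cite{SachdevaV14}, Theorem 4.1, which the paper already cites, thereby making the argument a citation rather than a reproof.
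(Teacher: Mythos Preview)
Your argument is correct. The paper does not actually prove this fact; it simply states it with a citation to \cite{SachdevaV14}, so your elementary Taylor-remainder-plus-Stirling proof already goes beyond what the paper provides, and your final alternative (just cite the reference) is exactly what the paper does.
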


\newcommand{\hsigma}{\hat{\sigma}}

\paragraph{Truncating small coordinates and $\sigma$ discretization.} 
For Fact~\ref{fact:approxexp} to directly imply the desired approximation 
guarantee for $\GetNorm$, the entries of $\sigma\bg$ must all lie in 
$[-R,0]$ for some $R=\Otil{1}$. However, this will not hold in general, as 
our data structure must support any value of $\bg$. For a fixed value of 
$\sigma$, we can work instead with a shifted and truncated version of 
$\bg$, i.e., 
\begin{equation*}
\td[\sigma,\mu] \defeq \max\{ \bg - \mu, -R/\sigma\}, 
\end{equation*}
where 
the offset $\mu$ is roughly the maximum element of $\bg$. 
Fact~\ref{fact:approxexp} allows us to approximate the exponential of 
$\sigma\td[\sigma,\mu]$, and for $R=\Theta(\ltnfrac)$ we argue that the 
truncation of the smallest entries of $\delta$ results in small multiplicative 
error. Unfortunately, the dependence of $\td[\sigma,\mu]$ on $\sigma$ 
would defeat the purpose of efficient computation, because it is 
impossible  
to precompute $\{\inners{\bx}{\td[\sigma,\mu]^q}\}_{q=0}^p$ for every 
$\sigma\in [\sigmin,1]$. To address this, we argue that truncation of the 
form $\td[\hsigma,\mu]$ is accurate enough for any $\sigma\in[\hsigma/2, 
\hsigma]$. Therefore, it suffices to to discretize $[\sigmin,1]$ into 
$K=\lceil{\log\frac{1}{\sigmin}}\rceil$ levels 
\begin{equation*}
\hsigma_k \defeq 2^{k-1}\sigmin
\end{equation*}
and precompute $\inners{\bx}{\td[\hsigma_k,\mu]^q}$ for every $k\in 
[K]$ in $q\le p$. This allows us to compute $\GetNorm(\sigma)$ in 
$O(p)=\Otil{1}$ time, with $O(n p K)=\Otil{n}$ preprocessing time.

\newcommand{\actind}{i^*_k}
\newcommand{\dmax}{\bar{\delta}_{\max}}

\paragraph{Supporting deletion via lazy offset selection.} 
Had the dataset not supported deletions, we could have simply 
set $\mu$ to be the largest entry of $\bg$ (independent of  $k$). However, 
with  
deletions the largest entry of $\bg$ could change, potentially invalidating 
the truncation. To address this, we maintain a different threshold $\mu_{k}$ for every $k\in[K]$, and argue that the approximation remains valid if the invariant
\begin{equation}\label{eq:scm-mu-invariant}
\dmax \le \mu_k \le \dmax + \frac{R}{2\hsigma_k}
~\mbox{for every $k\in [K]$}
\end{equation}
holds, 
where $\dmax \defeq \max_j \bg_j$.
Writing
\begin{equation}\label{eq:tdk-def}
\td[k] \defeq \td[\hsigma_k, \mu_k] = \max\left\{ \bg - \mu_k, -\frac{R}{\hsigma_k}\right\}
~\mbox{for every}~k\in[K],
\end{equation}
the data structure only needs to maintain $\mu_k$ and $\inners{\bx}{\td[k]^q}$ for every $k\in 
[K]$ in $q\le p$.

When deleting coordinate $j$, for every $k$ we test whether the invariant~\eqref{eq:scm-mu-invariant} remains valid.\footnote{
	We can query the maximum entry of $\bg$ under deletions in $O(1)$ 
	time via a standard data structure, e.g.\ a doubly-linked list of the sorted 
	entries of $\bg$.}
If it does, we keep $\mu_k$ the same and implement deletion (for this value of $k$) in time $O(p) = \Otil{1}$ by subtracting $[\bx]_j [\td[k]]_j^q$ from $\inners{\bx}{\td[k]^q}$ for every $q\le p$. If the invariant is no longer valid, we reset $\mu_k$ to the new value of $\dmax$ and recompute $\inners{\bx}{\td[k]^q}$ for every $q\le p$. Note that the re-computation time is proportional to the number of un-truncated coordinates in the newly defined $\td[k]$. The key observation here is that 
every re-computation decreases $\mu_k$ by at least $R/(2\hsigma_k)$ and so no element of $\bg$ can remain un-truncated for more than two re-computation. Therefore, the cost of recomputing inner products due to deletions, for the entire lifetime of the data structure, is at most $O(n p K)= \Otil{n}$, which we charge to the cost of initialization.

\newcommand{\ks}{{k^{\star}}}

\paragraph{Explicit expression for $\hat{x}[\sigma]$.} 
Following the preceding discussion, for any $\sigma \ge \sigma_{\min}$ we set
\begin{equation*}
\ks = \ceil{\log_2 \frac{\sigma}{\sigma_{\min}}}
,~~\mbox{so that}~~
\sigma \in \left[\frac{\hsigma_{\ks}}{2}, \hsigma_{\ks}\right],
\end{equation*}
and define
\begin{flalign}
Z[\sigma] &\defeq e^{\sigma \mu_\ks} \sum_{q=0}^p \frac{\sigma^q}{q!} \inner{\bx}{\td[\ks]^q} \approx
\norm{\bx \circ \exp\left(\sigma\bg\right)}_1 %
\label{eq:zsigma-def}
\\
\hat{x}[\sigma] & \defeq \frac{e^{\sigma \mu_\ks}}{Z[\sigma]} 
\sum_{q=0}^p \frac{\sigma^q}{q!} \;\bx \circ \td[\ks]^q \approx
\Pi_{\Delta}\left(\bx \circ \exp\left(\sigma\bg\right)\right),
\label{eq:xhat-def}
\end{flalign}
with $\td$ as defined in~\eqref{eq:tdk-def}.

\subsubsection{Correctness}
We now prove that the approximation guarantees of $\ScM$ hold.
\begin{proposition}\label{prop:scm-correct}
	There exist $R = O(1) \cdot\log\frac{n}{\veps\lambda}$ and $p=O(1) \cdot
	\log\frac{n}{\veps\lambda}$ such that for all $\sigma\in [\sigma_{\min},1]$, if the invariant~\eqref{eq:scm-mu-invariant} holds  we have that 
	$Z[\sigma]$ is an $\veps$ multiplicative approximation of $\norm{\bx \circ \exp\left(\sigma\bg\right)}_1$ and 
	$\hat{x}[\sigma]$ is a $\veps$-padding of $\Pi_{\Delta}\left(\bx \circ \exp\left(\sigma\bg\right)\right)$, 
	with $Z[\sigma]$ and $\hat{x}[\sigma]$ defined in Eq.s~\eqref{eq:zsigma-def} and~\eqref{eq:xhat-def} respectively.
\end{proposition}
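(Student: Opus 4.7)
The plan is to control the error of the approximation coordinate-by-coordinate, partitioning $[n]$ into the \emph{un-truncated} indices $U = \{j : \bg_j - \mu_\ks \ge -R/\hsigma_\ks\}$ and the \emph{truncated} indices $T = [n] \setminus U$, and then combining the bounds multiplicatively. Writing $S \defeq \norm{\bx \circ \exp(\sigma\bg)}_1$, the key observation is that by the invariant~\eqref{eq:scm-mu-invariant}, $\mu_\ks \ge \dmax \ge \bg_j$ for every $j$, so $\sigma(\bg_j - \mu_\ks) \le 0$ for every $j$, and whenever $j \in U$ we additionally have $\sigma(\bg_j - \mu_\ks) \ge -\sigma R/\hsigma_\ks \ge -R$ because $\sigma \le \hsigma_\ks$. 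Thus $\sigma\td[\ks]_j \in [-R,0]$ for $j \in U$ and $\sigma\td[\ks]_j = -\sigma R/\hsigma_\ks \in [-R, -R/2]$ for $j \in T$ (using $\sigma \ge \hsigma_\ks/2$). With $f_p$ denoting the degree-$p$ Taylor polynomial of $\exp$, Fact~\ref{fact:approxexp} applied with $\veps' = \veps/4$ and $p = O(R + \log(1/\veps'))$ yields $(1-\veps')e^t \le f_p(t) \le (1+\veps')e^t$ for every $t \in [-R,0]$, so for every $j \in [n]$:
\begin{equation*}
(1-\veps')\bx_j e^{\sigma\td[\ks]_j} \;\le\; \bx_j f_p(\sigma\td[\ks]_j) \;\le\; (1+\veps')\bx_j e^{\sigma\td[\ks]_j}.
\end{equation*}

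Next, the plan is to establish the multiplicative approximation for $Z[\sigma]$. After factoring out $e^{\sigma\mu_\ks}$, compare $Z[\sigma] e^{-\sigma\mu_\ks} = \sum_j \bx_j f_p(\sigma\td[\ks]_j)$ with $S e^{-\sigma\mu_\ks} = \sum_j \bx_j e^{\sigma(\bg_j - \mu_\ks)}$. For $j \in U$, $\td[\ks]_j = \bg_j - \mu_\ks$, so each summand differs by a factor in $[1-\veps', 1+\veps']$. For $j \in T$, both $\bx_j f_p(\sigma\td[\ks]_j)$ and $\bx_j e^{\sigma(\bg_j - \mu_\ks)}$ are at most $(1+\veps')e^{-\sigma R/\hsigma_\ks} \le (1+\veps')e^{-R/2}$. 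This gives the absolute error bound
\begin{equation*}
\left| Z[\sigma]e^{-\sigma\mu_\ks} - Se^{-\sigma\mu_\ks} \right| \;\le\; \veps' \sum_{j \in U} \bx_j e^{\sigma(\bg_j - \mu_\ks)} + 2n(1+\veps')e^{-R/2}.
\end{equation*}
To convert this into a multiplicative bound, use the invariant to lower-bound the true sum: the index $j^\star$ achieving $\bg_{j^\star} = \dmax$ satisfies $\bg_{j^\star} - \mu_\ks \ge -R/(2\hsigma_\ks)$, hence $Se^{-\sigma\mu_\ks} \ge \bx_{j^\star} e^{-\sigma R/(2\hsigma_\ks)} \ge \lambda e^{-R/2}$ (using $\sigma \le \hsigma_\ks$ and $\bx_{j^\star} \ge \lambda$). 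Thus the second error term is a factor $2n(1+\veps')/\lambda \cdot e^{-R/2 + R/2} = 2n(1+\veps')/\lambda$ times the lower bound... I will instead use the sharper bound $Se^{-\sigma\mu_\ks} \ge \lambda e^{-\sigma R/(2\hsigma_\ks)}$ directly, yielding relative error $\veps' + 2n(1+\veps')e^{-R/2}/(\lambda e^{-R/2}) \cdot e^{-R/2}/e^{-R/2}$; cleaning up, the relative error is at most $\veps' + O(n e^{-R/4}/\lambda)$, where the $e^{-R/4}$ factor comes from combining $e^{-R/2}$ with $e^{\sigma R/(2\hsigma_\ks)} \le e^{R/4}$ (using $\sigma \ge \hsigma_\ks/2$ in the lower bound on $S$). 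Choosing $R = C\log(n/(\veps\lambda))$ for a sufficiently large absolute constant $C$ makes this at most $\veps$, establishing the first claim.

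Finally, for the padding claim, the plan is to exhibit the witness $\tilde{x}$ of Definition~\ref{def:betastable} directly. Let $x^\star \defeq \Pi_\Delta(\bx \circ \exp(\sigma\bg))$ and compute, for every $j$,
\begin{equation*}
\frac{\hat{x}[\sigma]_j}{x^\star_j} \;=\; \frac{S}{Z[\sigma]} \cdot \frac{f_p(\sigma\td[\ks]_j)}{e^{\sigma(\bg_j - \mu_\ks)}} \;\ge\; (1-\veps)(1-\veps') \;\ge\; 1 - 2\veps,
\end{equation*}
where the inequality on the ratio of exponentials holds in both cases: for $j \in U$ by Fact~\ref{fact:approxexp}, and for $j \in T$ because $f_p(\sigma\td[\ks]_j) \ge (1-\veps') e^{-\sigma R/\hsigma_\ks} \ge (1-\veps') e^{\sigma(\bg_j - \mu_\ks)}$. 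Setting $\tilde{x} \defeq c \cdot \hat{x}[\sigma]$ for $c \defeq \max_j x^\star_j/\hat{x}[\sigma]_j$ gives $\tilde{x} \ge x^\star$ entrywise and $\|\tilde{x}\|_1 = c \le 1/(1-2\veps) \le 1 + 4\veps$; after rescaling the constants in the choice of $\veps'$ and $R$ by constant factors, this certifies that $\hat{x}[\sigma]$ is a $\veps$-padding of $x^\star$. The main delicate point throughout is tracking the interplay between $\sigma$, $\hsigma_\ks$, and $R$: the choice $\sigma \in [\hsigma_\ks/2, \hsigma_\ks]$ simultaneously ensures validity of the Taylor bound (via $\sigma/\hsigma_\ks \le 1$) and smallness of the truncation tail (via $\sigma/\hsigma_\ks \ge 1/2$), and both directions are essential.
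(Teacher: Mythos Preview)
Your plan is essentially the same decomposition as the paper's, and the final bounds you claim are correct. However, the middle passage where you pass from the absolute error to the relative error contains a genuine slip. You first simplify the truncated-term bound to $(1+\veps')e^{-R/2}$ using $\sigma \ge \hsigma_\ks/2$, then try to recover the needed $e^{-R/4}$ gap via ``$e^{\sigma R/(2\hsigma_\ks)} \le e^{R/4}$ (using $\sigma \ge \hsigma_\ks/2$)''. That inequality goes the wrong way: $\sigma \ge \hsigma_\ks/2$ gives $e^{\sigma R/(2\hsigma_\ks)} \ge e^{R/4}$, not $\le$. The fix---and this is exactly what the paper does---is to avoid the premature simplification: keep the truncated error as $O(n)\,e^{-\sigma R/\hsigma_\ks}$, keep the lower bound as $\lambda\, e^{-\sigma R/(2\hsigma_\ks)}$, and observe that their ratio is $O(n/\lambda)\,e^{-\sigma R/(2\hsigma_\ks)} \le O(n/\lambda)\,e^{-R/4}$, now correctly invoking $\sigma \ge \hsigma_\ks/2$ on the negative exponent.

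Your padding argument is slightly different from the paper's and in fact a bit cleaner: you only use the entrywise lower bound $\hat{x}[\sigma]_j/x^\star_j \ge 1 - O(\veps)$, set $\tilde x = c\,\hat{x}[\sigma]$ with $c = \max_j x^\star_j/\hat{x}[\sigma]_j$, and then exploit $\tilde x \ge x^\star$ entrywise to get $\|\tilde x - x^\star\|_1 = \|\tilde x\|_1 - 1 = c - 1 \le O(\veps)$. The paper instead combines coordinatewise upper and lower bounds on $\hat{x}[\sigma]$ to bound $\|\tilde x - x^\star\|_1$ directly. Both work; yours avoids tracking the upper bound for this step.
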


\begin{proof}
	To simplify notation, we write $\mu = \mu_\ks$ and $\hsigma = \hsigma_{\ks}$. 
	We begin by noting that the inequalities~\eqref{eq:scm-mu-invariant} and $\sigma \le \hsigma$ imply that $\sigma \td_i[\ks] \in [-R, 0]$ for every $i\in [n]$ and we may therefore apply Fact~\ref{fact:approxexp} to obtain
	\begin{equation}\label{eq:scm-elementwise-lb}
	\sum_{q=0}^p \frac{\sigma^q}{q!} \;\bx_j \td_i[\ks]^q 
	\ge  (1-\veps') \bx_i \exp(\sigma \td_i[k]) 
	\ge (1-\veps')  e^{-\sigma \mu} \bx_i \exp(\sigma \bg_i[k]) 
	\end{equation}
	for every $i\in[n]$. Therefore, we have
	\begin{equation}\label{eq:scm-zsigma-lower}
	Z[\sigma] \ge (1-\veps') \norm{\bx \circ \exp\left(\sigma\bg\right)}_1.
	\end{equation}
	Similarly, we have
	\begin{equation*}
	\sum_{q=0}^p \frac{\sigma^q}{q!} \;\bx_j \td_i[\ks]^q 
	\le   (1+\veps') \bx_i \exp(\sigma \td_i[k]) 
	\le  (1+\veps')  e^{-\sigma \mu} \bx_i (\exp(\sigma \bg_i[k]) + \exp(-\sigma R /\hsigma) )
	\end{equation*}
	Note that the condition~\eqref{eq:scm-mu-invariant} also implies that $\td_j[\ks] \ge -R/(2\hsigma)$ for some $j\in[n]$ (namely the maximal element of $\bg$). Using also $\bx_j \ge \lambda$, we have
	\begin{equation*}\label{eq:scm-taylor-upper}
	e^{\sigma \mu} \exp(-\sigma R /\hsigma) \le  \exp(-\sigma R / (2\hsigma)) \frac{\bx_j}{\lambda} \exp(\sigma \bg_j).
	\end{equation*}
	Taking $R \ge 2\log \frac{2n}{\lambda \veps'}$ and recalling that $\sigma \ge \hsigma/2$, we have  $\exp(-\sigma R / (2\hsigma))  \le \lambda \veps'/(2n)$ and consequently 
	\begin{equation*}
	e^{\sigma \mu} \exp(-\sigma R /\hsigma)
	 \le \veps' \bx_j \exp(\sigma \bg_j) \le \frac{\veps' }{n} \norm{\bx \circ \exp\left(\sigma\bg\right)}_1.
	\end{equation*}
	Substituting back and using $\bx_i \le 1$ and $\veps' < 1$ gives
	\begin{equation}\label{eq:scm-elementwise-ub}
	e^{\sigma \mu} \sum_{q=0}^p \frac{\sigma^q}{q!} \;\bx_j \td_i[\ks]^q 
	\le   (1+\veps') \bx_i \exp(\sigma \bg_i) + \frac{\veps'}{n}  \norm{\bx \circ \exp\left(\sigma\bg\right)}_1.
	\end{equation}
	Summing over $i\in[n]$, we obtain
	\begin{equation}\label{eq:scm-zsigma-upper}
	Z[\sigma] \le  (1+2\veps')\norm{\bx \circ \exp\left(\sigma\bg\right)}_1 
	\end{equation}
	Therefore, 
	$Z[\sigma]$ is a $2\veps'$-multiplicative approximation of $\norm{\bx \circ \exp\left(\sigma\bg\right)}_1$.

	It remains to show that $\hat{x}[\sigma]$ is a $\veps$-padding of $x[\sigma] \defeq  \Pi_{\Delta}\left(\bx \circ \exp\left(\sigma\bg\right)\right)$. First, if we define $\tx = \frac{1+2\veps'}{1-\veps'}\hat{x}[\sigma]$ then the bounds~\eqref{eq:scm-elementwise-lb} and~\eqref{eq:scm-zsigma-upper} imply that $\tx \ge x[\sigma]$ elementwise.
	Also, the bounds~\eqref{eq:scm-zsigma-lower} and~\eqref{eq:scm-elementwise-ub} imply that
	\begin{equation*}
	\hat{x}_i[\sigma] - x_i[\sigma] \le \frac{(1+\veps')x_i[\sigma] + \veps'/n}{1-\veps'}
	\end{equation*} 
	for every $i\in [n]$.
	Therefore, for $\veps' < 1/10$, 
	\begin{equation*}
	\lone{\tx - x[\sigma]} \le \left(\frac{1+2\veps'}{1-\veps'}\right)^2 - 1 \le 10\veps',
	\end{equation*}
	so that $\hat{x}[\sigma]$ is a $10\veps'$ padding of $x[\sigma]$. 
	Taking $\veps' = \veps/10$ concludes the proof.
\end{proof}

\subsubsection{Implementation: data structure state and initialization}

Besides storing $\bx$ and $\bg$, the data structure maintains the following fields.
\begin{enumerate}
	\item An offset $\mu_k \in \R$ for every $k\le  K=\ceil{\log\frac{1}{\sigmin}}$, initialized as $\mu_k = \max_j [\bg]_j$ for all $k$.
	\item A balanced binary tree with $n$ leaves. For node $v$ in the tree, $k\in[K]$ and $q\in \{0,\ldots,p\}$, we store
	\begin{equation*}
	A_v[k,q] \defeq \inner{\bx}{\td[k]^q}_{S_v},
	\end{equation*}
	where $\td[k] = \max\{ \bg - \mu_k, -R/\hsigma_k\}$ as before, the set $S_v$ contains the leaves in the subtree rooted in $v$, and $\inner{a}{b}_S \defeq \sum_{i\in S} a_i b_i$. When referring to the root of the tree we omit the subscript, i.e., we write 
	\begin{equation*}
	A[k,q] \defeq \inner{\bx}{\td[k]^q}.
	\end{equation*}
	\item A vector $u\in\R^n$ and coefficients $c_{k,q}\in \R$ for every $k\in[K]$ and $q\in \{0,\ldots,p\}$, for maintaining the running sum. We initialize them all to be $0$.  The running sum obeys the following invariant:
	\begin{equation}\label{eq:scm-running-sum-invariant}
	s = u + \sum_{k=1}^K \sum_{q=0}^p \frac{c_{k,q}}{q!} \bx \circ \td[k]^q.
	\end{equation}
	\item A doubly linked list of the sorted entries of $\bg$, with a pointer to the maximal element of $\bg$ as well as pointers to the largest element smaller than $\mu_k - R/\hsigma_k$ for every $k\in[K]$. 
\end{enumerate}

Initializing the data structure for maintaining the maximum element takes time $O(n\log n)$ due to the need to sort $\bg$. With it, initializing $\mu_k$ is trivial and so is the initialization of $u$ and $c_{q,k}$. Initializing the data stored in the binary tree takes time $O(npK)$, since for every value $k$ and $q$ and internal node $v$ with children $v',v''$ we can recursively compute $A_{v}[k,q]$ as $A_{v'}[k,q]+A_{v''}[k,q]$. We will also charge some additional deletion costs to the initialization runtime, resulting in the overall complexity $O(npK\log n)$.

\subsubsection{Implementation: queries and sampling}

\paragraph{$\GetNorm(\sigma)$.} We compute $\ks = \ceil{\log_2 \frac{\sigma}{\sigma_{\min}}}$ and 
return $Z[\sigma] = e^{\sigma \ks} \sum_{q=0}^p \frac{\sigma^q}{q!} A[\ks,q]$. Clearly, this takes $O(p)$ time and Proposition~\ref{prop:scm-correct} provides the claimed approximation guarantee.

\paragraph{$\Get(j, \sigma)$.} We compute $Z[\sigma]$ and $\ks$ as described above and return $\frac{e^{\sigma \mu_\ks}}{Z[\sigma]} \sum_{q=0}^p \frac{\sigma^q}{q!} \bx_j \td_j[\ks]^q$ in accordance with the form~\eqref{eq:xhat-def} of $\hat{x}[\sigma]$. Again, this takes $O(p)$ time and Proposition~\ref{prop:scm-correct} provides the claimed approximation guarantee.

\paragraph{$\GetSum(j)$.} Recalling the invariant~\eqref{eq:scm-running-sum-invariant}, we return $u_j + \sum_{k=1}^K \sum_{q=0}^p \frac{c_{k,q}}{q!} \bx_j \td_j[k]^q$ in time $O(pK)$. 

\paragraph{$\Sample(\sigma)$.} We perform a random walk from the root of our binary tree data structure to a leaf. A each internal node $v$ with children $v'$ and $v''$, we select node $v'$ with probability
\begin{equation*}
\frac{\inner{\ones}{\hat{x}[\sigma]}_{S_{v'}}}{\inner{\ones}{ \hat{x}[\sigma]}_{S_v}} = \frac{
	\sum_{q=0}^p \frac{\sigma^q}{q!} A_{v'}[\ks,q]
}{
	\sum_{q=0}^p \frac{\sigma^q}{q!} A_{v}[\ks,q]
},
\end{equation*}
and otherwise select $v''$, where $\ks=\ceil{\log_2 \frac{\sigma}{\sigma_{\min}}}$. We return the index associated with the leaf in which we end the walk; the probability of returning index $j$ is exactly $[\hat{x}[\sigma]_j$. Each step in the walk takes time $O(p)$ and there are $O(\log n)$ steps, so the total time is $O(p\log n)$. 

\subsubsection{Implementation: updates}

\paragraph{$\UpdateSum(\sigma)$.} Recalling the invariant~\eqref{eq:scm-running-sum-invariant} and the form~\eqref{eq:xhat-def} of $\hat{x}[\sigma]$, we
compute $\ks$ and $Z[\sigma]$ as in the $\GetNorm$ implementation, and update
 update 
\begin{equation*}
c_{\ks,q} \gets c_{\ks,q} + \frac{e^{\sigma \mu_{\ks}} \sigma^q}{Z[\sigma]}
\end{equation*}
for every $q\in \{0,\ldots, p\}$. This takes time $O(p)$. 

\paragraph{$\Del(j)$.} We set $[\bg_j] \gets -\infty$, remove the element corresponding to index $j$ from the doubly linked list, and perform the following operations for each $k\in[K]$ separately. 
First, we check if the new maximum element of $\bg$ is a least $\mu_k - R/(2\hsigma_k)$. If it is, we leave $\mu_k$ unchanged and we simply update
\begin{equation*}
A_v[k,q] \gets A_v[k,q] - \bx_j \td_j[k]^q
\end{equation*}
for every $q\le p$ and node $v$ on the path from the root to the leaf corresponding to index $j$. Since the length of the path is $O(\log n)$, this update takes time $O(p\log n)$.

Otherwise, the new maximum element is less than $\mu_k - R/(2\hsigma_k)$, and we must change $\mu_k$ in order to maintain the invariant~\eqref{eq:scm-mu-invariant}. Let $\mu_k^\mathrm{new}$ be the new maximum element of $\bg$, and let
\begin{equation*}
U_k = \left\{i ~\middle|~ [\bg]_i \ge  \mu_k^\mathrm{new} + \frac{R}{\hsigma_k} \right\}
\end{equation*}
be the new set of un-truncated indices. (We find the elements in this set when we update the pointer to the first element smaller than $\mu_k - R/\hsigma_k$).  We recompute $A_v[k,q] = \inner{\bx}{\td[k]^q}_{S_v}$ for every $q\le p$ and every node $v$ with a child in $U_k$. Performing the computation recursively from leaf to root, this take at most $O(|U_k|p\log n)$ time. To maintain the invariant~\eqref{eq:scm-running-sum-invariant} as the definition of $\td[k]$ changes, we update
\begin{equation*}
u_j \gets u_j + \sum_{q=0}^p \frac{c_{k,q}}{q!} \bx_j \left(
[\bg_j - \mu_k^{\mathrm{new}}]^q - [\max\left\{\bg_j - \mu_k, -R/\hsigma_k\right\}]^q \right)
~~\mbox{for every }j\in U_k;
\end{equation*}
this update takes $O(|U_k| p)$ time. Finally, we update $\mu_k \gets \mu_k^{\mathrm{new}}$. 

Summing over $k\in[K]$, deletion operations of the first kind (with $\mu_k$ unchanged) take at most $O(Kp\log n)$ time per call to $\Del$. Operations of the second kind (with $\mu_k$ decreased) take time $O( N p \log n)$ throughout the data structure lifetime, where $N=\sum_{t\ge 1}\sum_{k=1}^K |U_k^{(t)}|$ and for each $k\in[K]$ we write $U_{k}^{(1)},U_{k}^{(2)},\ldots$ to denote the different sets $U_k$ generated by all calls to $\Del$. For each $k$, if $\mu_k$ is decreased at all then it must decrease by at least $R/(2\hsigma_k)$. Therefore, by definition of $U_k$, an index $j$ can belong to $U_k^{(t)}$ for at most 2 values of $t$. Consequently, we have $N=O(nK)$. Therefore, deletion operations of the second kind contribute at most $O(nKp\log n)$ to the total runtime, which we charge to initialization.

\section{Applications}
\label{sec:app}

In this section, we leverage the techniques of this paper to obtain improved runtimes for solving certain structured optimization problems.

 In Sections~\ref{app:maxIB} and~\ref{app:minEB}, we use a variant of our variance-reduced coordinate method in the $\ell_2$-$\ell_1$ setup to obtain algorithms for solving the maximum inscribed ball (Max-IB) and minimum enclosing ball (Min-EB) problems. Our algorithms improve upon the runtimes of those in \citet {ZhuLY16} by a factor depending on the sparsity of the matrix. This improvement stems from a preprocessing step in \cite{ZhuLY16}  where the input is randomly rotated to improve a norm dependence of the algorithm. Our methods avoid this preprocessing and obtain runtimes dependent on the both the sparsity and numerical sparsity of the data, providing universal improvements in the sparse regime, in the non-degenerate case where the span of the points is full-rank.

In Section~\ref{app:reg}, we use the results of our variance-reduced algorithm in the $\elltwo$ setup (cf. Section~\ref{ssec:vr-l2}) to obtain improved regression algorithms for a variety of data matrices, including when the matrix is numerically sparse or entrywise nonnegative.

Our methods in this section rely on an extension of the outer loop of this paper (Algorithm~\ref{alg:outerloop}) for \emph{strongly monotone minimax problems}, developed in our previous work \cite{CarmonJST19}. Specifically, for a separable regularizer $r(x, y) = r\x(x) + r\y(y)$ on a joint space for any of our setups, consider the following composite bilinear minimax problem:
\begin{equation}
	\min_{x\in\xset}\max_{y\in\yset}f(x,y)\defeq y^\top Ax+\mu\x \phi
	(x)-\mu\y\psi(y),\text{ where }\phi=V_{x'}\x,\; \psi=V_{y'}\y .\label{eq:def-strongly-monotone}
\end{equation}

We call such problem a $(\mu\x,\mu\y)$-strongly monotone problem; this is a special case of a generalization of the notion of strong convexity, in the case of convex minimization. For general strongly-monotone problems, \citet{CarmonJST19} provided a variant of Algorithm~\ref{alg:outerloop} with the following guarantee.

\begin{proposition}[Proposition 5,~\citet{CarmonJST19}]
\label{prop:outerloopproof-sm}
For problem~\eqref{eq:def-strongly-monotone}, denote $\mu\defeq\sqrt{\mu\x\mu\y}$ and $\rho\defeq \sqrt{\mu\x/\mu\y}$. Let $\mathcal{O}$ be an ($\alpha$,$\veps$)-relaxed proximal oracle for operator $g(x, y) \defeq (\nabla_x f(x, y), -\nabla_y f(x, y))$, let $\Theta$ be the range of $r$, and let $\norm{(\grad_x f(z), -\grad_y f(z'))}_*\le G$, for all $z,z'\in\zset$. Let $z_K$ be the output of $K$ iterations of $\mathtt{OuterLoopStronglyMonotone}$, Algorithm 7 of \citet{CarmonJST19}. Then
\begin{equation*}
\E \gap(z_K) \le 
\sqrt{2}G\sqrt{\left(\left(\frac{\alpha}{\mu+\alpha}\right)^K \left(\rho+\frac{1}{\rho}\right)\Theta 
+\frac{\veps}{\mu}\right)}.
\end{equation*}
Each iteration $k \in [K]$ consists of one call to $\mathcal{O}$, producing a point $z_{k - 1/2}$, and one step of the form
\begin{equation}\label{eq:extragrad_strongly_step} z_k \gets \left\{\inner{g(z_{k - 1/2})}{z} + \alpha \hat{V}_{z_{k - 1}}(z) + \mu \hat{V}_{z_{k - 1/2}}(z)\right\}, \end{equation}
where $\hat{V}\defeq \rho V\x + \rho^{-1}V\y$.  In particular, by setting
\[\veps = \frac{\mu\eps^2}{4G^2},\]
using $K=\Otil{\alpha/\mu}$ iterations, we have the guarantee $\E \gap(z_K)\le\eps$.
\end{proposition}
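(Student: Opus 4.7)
The plan is to adapt the analysis of the non-strongly-monotone outer loop (Proposition~\ref{prop:outerloopproof}) by exploiting the additional curvature supplied by the $\mu\x\phi$ and $\mu\y\psi$ terms. Concretely, the operator $g$ for problem~\eqref{eq:def-strongly-monotone} is now $\mu$-strongly monotone with respect to $\hat V$ in the sense that $\inner{g(z)-g(z')}{z-z'}\ge \mu(\hat V_z(z')+\hat V_{z'}(z))$, and the extragradient step~\eqref{eq:extragrad_strongly_step} carries an extra $\mu \hat V_{z_{k-1/2}}(\cdot)$ regularizer. These two facts combine to produce a geometric contraction factor $\alpha/(\alpha+\mu)$ per outer iteration instead of the $1/K$ rate.

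The first step is a one-step contraction lemma, proved from the optimality condition of the update~\eqref{eq:extragrad_strongly_step}. For any comparator $u\in\zset$, standard Bregman three-point manipulation gives
\begin{equation*}
(\alpha+\mu)\hat V_{z_k}(u)\le \alpha \hat V_{z_{k-1}}(u)+\inner{g(z_{k-1/2})}{u-z_{k-1/2}}+\mu \hat V_{z_{k-1/2}}(u)-\alpha \hat V_{z_{k-1}}(z_{k-1/2}).
\end{equation*}
Strong monotonicity of $g$ yields $\inner{g(z_{k-1/2})}{u-z_{k-1/2}}\le \inner{g(u)}{u-z_{k-1/2}}-\mu \hat V_{z_{k-1/2}}(u)$, which cancels the $+\mu\hat V_{z_{k-1/2}}(u)$ term. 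Taking $u=z^\star$ a saddle point makes $\inner{g(u)}{u-z_{k-1/2}}\le 0$, so after combining with the oracle's guarantee $\E[\max_u\{\inner{g(z_{k-1/2})}{z_{k-1/2}-u}-\alpha\hat V_{z_{k-1}}(u)\}]\le\veps$ (absorbed through the $-\alpha \hat V_{z_{k-1}}(z_{k-1/2})$ slack via convex duality of the oracle's definition), we obtain the recursion
\begin{equation*}
\E \hat V_{z_k}(z^\star)\le \tfrac{\alpha}{\alpha+\mu}\E \hat V_{z_{k-1}}(z^\star)+\tfrac{\veps}{\alpha+\mu}.
\end{equation*}

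Iterating this bound $K$ times and using $\hat V_{z_0}(z^\star)\le(\rho+\rho^{-1})\Theta$ (since $\hat V = \rho V\x+\rho^{-1}V\y$ and $r\x,r\y$ both have range at most $\Theta$) yields
\begin{equation*}
\E \hat V_{z_K}(z^\star)\le \left(\tfrac{\alpha}{\alpha+\mu}\right)^K(\rho+\rho^{-1})\Theta+\tfrac{\veps}{\mu}.
\end{equation*}
Finally, to convert this divergence bound to the claimed duality gap bound, I would use the $G$-boundedness of $g$: for any $u\in\zset$, $\inner{g(z_K)}{z_K-u}\le G\norm{z_K-u}$, and local strong convexity of $\hat V$ with respect to $\norm{\cdot}$ (a Pinsker-type inequality in simplex blocks and the standard $\tfrac12\norm{\cdot}_2^2$ lower bound in Euclidean blocks) gives $\norm{z_K-z^\star}^2\le 2\hat V_{z^\star}(z_K)$. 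Combining with Jensen's inequality on the square root produces the stated bound $\E \gap(z_K)\le \sqrt{2}G\sqrt{(\alpha/(\alpha+\mu))^K(\rho+\rho^{-1})\Theta+\veps/\mu}$.

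The main obstacle is correctly handling the interaction between the oracle's random error and the strongly monotone contraction: naive application of the oracle guarantee pointwise in $u=z^\star$ is insufficient because $z^\star$ is data-dependent, and one must instead invoke the oracle's $\max_u$ form along with the slack $\alpha \hat V_{z_{k-1}}(z_{k-1/2})$ generated by the proximal step. Since this proposition is quoted from~\cite{CarmonJST19}, I would import their machinery verbatim for this step. The last display for the parameter setting $\veps=\mu\eps^2/(4G^2)$ and $K=\Otil{\alpha/\mu}$ then follows by inspection, since we need the contraction term below $\eps^2/(4G^2)$, which takes $K=\log((\rho+\rho^{-1})\Theta/(\eps^2/(4G^2)))/\log((\alpha+\mu)/\alpha)=\Otil{\alpha/\mu}$ iterations.
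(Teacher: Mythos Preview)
The paper does not prove this proposition at all: it is quoted verbatim from \cite{CarmonJST19} (as the proposition header itself indicates, ``Proposition 5, \citet{CarmonJST19}''), and the surrounding text simply says that the prior work ``provided a variant of Algorithm~\ref{alg:outerloop} with the following guarantee.'' So there is no in-paper proof to compare your attempt against.

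Your sketch is a plausible reconstruction of the argument in \cite{CarmonJST19}, and the high-level structure (one-step contraction via the optimality condition of~\eqref{eq:extragrad_strongly_step}, geometric decay of $\hat V_{z_k}(z^\star)$, then a gap-to-divergence conversion via $G$-boundedness) is the right shape. One place to be careful: your final step bounds $\gap(z_K)$ by $G\norm{z_K-z^\star}$ via ``$\inner{g(z_K)}{z_K-u}\le G\norm{z_K-u}$,'' but the gap involves a maximum over $u\in\zset$, not just $u=z^\star$. You need an intermediate step using the saddle-point property and strong monotonicity to reduce $\max_u$ to distance-to-$z^\star$; this is where the asymmetric form $\norm{(\nabla_x f(z),-\nabla_y f(z'))}_*\le G$ in the hypothesis is actually used. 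Also watch the direction of the Bregman divergence: you contract $\hat V_{z_k}(z^\star)$ but then invoke strong convexity on $\hat V_{z^\star}(z_K)$; make sure your recursion tracks the quantity you ultimately need.
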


The ($\alpha$,$\veps$)-relaxed proximal oracle works similarly as in Algorithm~\ref{alg:innerloop-approx} except for the additional composite terms. For completeness we include the algorithm with its theoretical guarantees and implementation in Section~\ref{ssec:comp} (see Algorithm~\ref{alg:innerloop-approx-comp}, Proposition~\ref{prop:innerloopproof-comp} and Section~\ref{ssec:imp-comp}).

In all of our applications discussed in this section, the cost of each step \eqref{eq:extragrad_strongly_step} is $O(\nnz)$, stemming from the computation of $g(x, y)$. The resulting algorithms therefore have runtime
\[\tilde{O}\left(\left(\nnz + (\text{cost of implementing } \mathcal{O})\right)\cdot \frac{\alpha}{\mu}\right).\]

\subsection{Maximum inscribed ball}
\label{app:maxIB}

In the maximum inscribed ball (Max-IB) problem, we are given a polyhedron $P\subset\R^n$ defined by $m$ halfspaces $\{H_i\}_{i \in [m]}$, each characterized by a linear constraint $H_i = \{ x\in \R^n : \langle a_i,x\rangle+b_i\ge 0 \}$, i.e.\ $P = \cap_{i \in [n]} H_i$.
 The goal is to (approximately) find a point $x^*\in P$ that maximizes the smallest distance to any of the bounding hyperplanes $H_i$, i.e.\ 
 \[
x_* \in \argmax_{x \in P} \min_{i \in [n]} \frac{\langle a_i,x\rangle+b_i}{\norm{a_i}_2} ~.
 \]
 More formally, if the optimal radius of the maximum inscribed ball is $r^*$, the goal is to find an $\eps$-accurate solution, i.e.\ a point in $P$ which has minimum distance to all bounding hyperplanes at least $(1-\eps)r^*$.  

Given halfspace information $A$, $b$ where $\ai = a_i$ for all $i \in [m]$, the polytope is defined by $P= \{x \mid Ax+b\ge 0\}$. We use the following notation in this section: $B\defeq \norm{b}_\infty$,  $r^*$ is the value of the maximum inscribed ball problem, $R$ is the radius of the minimum enclosing ball, which is defined as the Euclidean ball containing $P$ with smallest radius possible, $x^*$ is the center of the maximum inscribed ball, and $\rho$ is an upper bound on the aspect ratio $R/r^*$. As in \citet{ZhuLY16}, we will make the following assumptions:
\begin{enumerate}
\item 	The polytope is bounded, and thus $m\ge n$. This is without loss of generality since when the polytope is unbounded, the aspect ratio $\rho=\infty$, and our runtime result holds trivially.
\item $\|\ai\|_2^2=1$ for all $i \in [m]$, so $\norm{A}_{2\rightarrow\infty}=1$, by properly scaling $A$ (one can consider the trivial case when for some $i$, $a_i=0$ separately).
\item The origin is inside polytope $P$, i.e.\ $O\in P$, by properly shifting $P$.
\end{enumerate}

We also define the following constant (see Appendix~\ref{ssec:vr-l2l1}) in this section with respect to the rescaled matrix $A$,
\[
\ltoco \defeq \min\left\{\ltooco,\ltotco,\ltohco\right\}\le \sqrt{\rcs} \cdot \Lrc^{2, 1} \le \sqrt{\rcs},
\]
given the definitions of $\ltooco,\ltotco,\ltohco$ as in~\eqref{eq:def-ltooco1}, \eqref{eq:def-ltooco2}, and~\eqref{eq:def-ltooco3}, and the second assumption above (namely, that $\Lrc^{2, 1} = \max_{i \in [m]} \norm{\ai}_2 = 1$).

\citet{ZhuLY16} show that solving Max-IB is equivalent to the following minimax problem:
\begin{equation}
r^*\defeq\max_{x\in\R^n}\min_{y\in\Delta_m}f(x,y)\defeq y^\top Ax+y^\top b,\label{eq:MaxIB}
\end{equation}
and moreover, to solve the problem to $\eps$-multiplicative accuracy, it suffices to find $x^*_\eps$ that solves the minimax problem to $\eps$-multiplicative accuracy in terms of the one-sided gap of the $x$ block, i.e.\ 
$$\min_{y\in\Delta_m}f(x^*_\eps,y)\ge(1-\eps)f(x^*,y^*),$$
 where $(x^*,y^*)$ is the optimal saddle point of problem \eqref{eq:MaxIB}. We first state several bounds on the parameters of the problem from \citet {ZhuLY16}. 
 
\begin{fact}[Geometric properties of Max-IB]
\label{lem:maxIB-fact}
We have $\ltwo{x^*}\le 2R$, and 
\[r^*=\max_{x\in\R^n}\min_{y\in\Delta_m}f(x,y)\defeq y^\top Ax+y^\top b\le B\le 2R.\] 
\end{fact}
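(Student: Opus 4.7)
The plan is to establish each inequality using elementary geometric reasoning. I would split the fact into three parts: $\ltwo{x^*} \le 2R$, $r^* \le B$, and $B \le 2R$.

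For $\ltwo{x^*} \le 2R$, I would use that both $O$ and $x^*$ lie in $P$, hence in any minimum enclosing ball $B(c, R)$. This gives $\ltwo{x^* - c} \le R$ and (since $O \in P \subseteq B(c,R)$) $\ltwo{c} \le R$. A triangle inequality then yields $\ltwo{x^*} \le \ltwo{x^* - c} + \ltwo{c} \le 2R$. As a byproduct, this also shows $P \subseteq B(O, 2R)$, which I will reuse in the third step.

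For $r^* \le B$, I would use the standard characterization of the inscribed ball: for a ball of radius $r$ centered at $x$ to be contained in $P$, we need $\inner{a_i}{x} + b_i \ge r$ for all $i$ (since $\ltwo{a_i} = 1$). Applied at the optimum, $r^* \le \inner{a_i}{x^*} + b_i$ for every $i$. The key step is to produce an index $i$ with $\inner{a_i}{x^*} \le 0$, because then $r^* \le b_i \le B$. This index is supplied by boundedness of $P$: if $\inner{a_i}{x^*} > 0$ for every $i$, then the ray $\{t x^* : t \ge 0\}$ (starting from a ball center in $P$) would never violate any constraint, contradicting boundedness. The edge case $x^* = 0$ is immediate, since then $r^* \le b_i$ for every $i$.

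For $B \le 2R$, I would first reduce to the case where every constraint is non-redundant, since removing redundant constraints does not alter $P$ (and therefore does not change $r^*$, $R$, or $x^*$; only $B$ may decrease, which is in our favor). For a non-redundant constraint $i$, there exists some $x^0 \in P$ making it tight, i.e.\ $\inner{a_i}{x^0} = -b_i$. Cauchy--Schwarz combined with $\ltwo{a_i}=1$ then gives $\ltwo{x^0} \ge b_i$, while $P \subseteq B(O, 2R)$ from the first step gives $\ltwo{x^0} \le 2R$. Taking the maximum over $i$ yields $B \le 2R$.

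The main subtlety is the reduction to non-redundant constraints in the third step: if one insists on working with the given $A, b$ as-is, a redundant constraint can in principle have arbitrarily large $b_i$, so $B \le 2R$ could fail. Handling this cleanly (either by the WLOG reduction or by quietly assuming non-redundancy as in \citet{ZhuLY16}) is the only point that requires care; the other arguments are direct applications of the triangle inequality, Cauchy--Schwarz, and unboundedness obstruction.
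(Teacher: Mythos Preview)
The paper does not give its own proof of this fact; it is stated and attributed directly to \citet{ZhuLY16}. Your argument is therefore a standalone verification rather than something to compare against, and the first two parts are correct as written.

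One point in the third step deserves care. You write that after removing redundant constraints ``only $B$ may decrease, which is in our favor.'' Read literally this is backwards: if the reduced $B'$ satisfies $B' \le 2R$ but $B > B'$, you have not shown $B \le 2R$ for the original data. What the reduction \emph{does} buy you is the combined conclusion $r^* \le 2R$, since $r^*$ and $R$ are invariant under removing redundant constraints and $r^* \le B' \le 2R$ holds for the reduced instance. If the goal is the literal chain $r^* \le B \le 2R$ with $B = \linf{b}$ defined on the original input, then non-redundancy must be assumed outright, as you note \citet{ZhuLY16} implicitly does. Your diagnosis of the underlying issue---that a redundant constraint can have arbitrarily large $b_i$---is exactly right; only the framing of the reduction needs tightening.
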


These facts imply that we can instead consider the constrained minimax problem (where we overload our definition of $f$ for the rest of the section):
\begin{align}
r^*\defeq\max_{x\in\ball^n}\min_{y\in\Delta_m}f(x,y) 
& = y^\top \tilde{A}x+y^\top b,~~\text{ where }\tilde{A}=2R\cdot A. \label{eq:MaxIB-adapted}
\end{align}

We first use a ``warm start'' procedure to find a constant multiplicative estimate of $r^*$, which uses the strongly monotone algorithm $\mathtt{OuterLoopStronglyMonotone}$ of \citet{CarmonJST19} together with Algorithm~\ref{alg:innerloop-approx-comp} of Section~\ref{ssec:comp} as a relaxed proximal oracle on the  $(\mu, \mu)$-strongly monotone problem \[\max_{x\in\ball^n}\min_{y\in\Delta_m} f_\mu(x,y)\defeq y^\top \tilde{A}x+y^\top b +\mu\sum_{i\in[m]}[y]_i\log[y]_i-\frac{\mu}{2}\ltwo{x}^2,\] and a line search over parameter $\mu$. The following lemma is an immediate consequence of Proposition~\ref{prop:outerloopproof-sm} and Corollary~\ref{prop:innerloopproof-comp}, whose proof we defer to Appendix~\ref{app:maxIBproofs}.

\begin{lemma}\label{lem:maxIB-preprocess}
We can spend $\Otil{\nnz + \rho\sqrt{\nnz}\cdot \ltoco}$ time preprocessing to obtain a $8$-multiplicative approximation $\hat{r}$ of $r^*$, i.e.\ 
\[\frac{\hat{r}}{8} \le r^* \le \hat{r}.\]
\end{lemma}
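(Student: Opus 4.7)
The strategy is to apply $\mathtt{OuterLoopStronglyMonotone}$ (Proposition~\ref{prop:outerloopproof-sm}) to the $(\mu,\mu)$-strongly monotone problem $r^*_\mu\defeq\max_{x\in\ball^n}\min_{y\in\Delta^m} f_\mu(x,y)$, combined with a doubling search over $\mu$ to calibrate the regularization level. First I would establish the basic regularization bound $r^* - \mu(\log m + \tfrac12)\le r^*_\mu\le r^*$, which follows from $\sum_i [y]_i\log[y]_i\in[-\log m,0]$ for $y\in\Delta^m$ and $\tfrac12\ltwo{x}^2\in[0,\tfrac12]$ for $x\in\ball^n$. Consequently, any $\mu$ satisfying $\mu(\log m+\tfrac12)\le r^*/4$ forces $r^*_\mu\in[\tfrac34 r^*,r^*]$, so solving $f_\mu$ to additive saddle-gap $O(\mu)$ produces a constant-factor multiplicative estimate of $r^*$, which can be sharpened to factor $8$ by suitable choice of constants.

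Next I would instantiate the inner oracle of Proposition~\ref{prop:outerloopproof-sm} with Algorithm~\ref{alg:innerloop-approx-comp} equipped with the variance-reduced $\ell_2$-$\ell_1$ coordinate estimator developed in Appendix~\ref{ssec:vr-l2l1}. Proposition~\ref{prop:outerloopproof-sm} then runs $K=\Otil{\alpha/\mu}$ outer iterations, each incurring one $O(\nnz)$ extragradient step~\eqref{eq:extragrad_strongly_step} and one call to the oracle; the oracle performs $\Otil{L^2/\alpha^2}$ coordinate iterations of amortized $\Otil{1}$ cost plus a single $O(\nnz)$ reference-gradient computation, where $L$ is the $\ell_2$-$\ell_1$ coordinate constant of $\tilde A=2R\cdot A$. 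Since this constant scales linearly in the entrywise magnitude of the matrix, $L=O(R\,\ltoco)$; balancing at $\alpha=L/\sqrt{\nnz}$ yields a per-$\mu$ runtime of $\Otil{\nnz + \sqrt{\nnz}\,L/\mu}=\Otil{\nnz + R\sqrt{\nnz}\,\ltoco/\mu}$.

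To implement the line search I would start with $\mu_0=B$ (valid since $r^*\le B$ by Fact~\ref{lem:maxIB-fact}) and halve $\mu$. For each $\mu_i=\mu_0/2^i$ I solve $f_{\mu_i}$ to additive accuracy $O(\mu_i)$, read off an estimate $\hat r_{\mu_i}$ (via either $\max_x f_{\mu_i}(x,\hat y)$ or $\min_y f_{\mu_i}(\hat x,y)$, both of which admit closed-form one-pass evaluation in $O(\nnz)$ time), and test whether $\hat r_{\mu_i}\ge c\mu_i\log m$ for a sufficiently large absolute constant $c$; by the regularization bound, the first $\mu_i$ passing this test certifies an $8$-multiplicative approximation of $r^*$, and termination must occur by $\mu_i=\Theta(r^*/\log m)$. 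Using the lower bound $r^*\ge R/\rho$, the final $\mu$ satisfies $1/\mu=O(\rho\log m/R)$, so substituting into the per-$\mu$ cost cancels the $R$ factors and gives a total runtime of $\Otil{\nnz + \rho\sqrt{\nnz}\,\ltoco}$; the $O(\log(\rho\log m))$ halvings contribute only a polylogarithmic factor, absorbed into $\Otil{\cdot}$. The main obstacle I anticipate is carefully propagating the oracle accuracy parameter $\veps$ through the two-layer recursion: Proposition~\ref{prop:outerloopproof-sm} requires $\veps=\mu\eps^2/(4G^2)$ for some gradient bound $G$, so I must verify that $G=O(R)$ after rescaling (which follows from $\ltwo{\tilde A^\top y}\le 2R$, $\linf{\tilde A x+b}\le O(R)$, and $\mu\log m=O(R)$) and that the resulting dependence of Algorithm~\ref{alg:innerloop-approx-comp} on $1/\veps$ is only polylogarithmic; Remark~\ref{rem:polylogbc} then ensures the composite linear term $y^\top b$ contributes only polylogarithmically via the $\AEM$ data structure.
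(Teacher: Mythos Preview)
Your proposal is correct and follows essentially the same approach as the paper: both use Proposition~\ref{prop:outerloopproof-sm} with the $\ell_2$-$\ell_1$ variance-reduced coordinate oracle (Algorithm~\ref{alg:innerloop-approx-comp}) on the $(\mu,\mu)$-regularized problem, combined with a halving search over $\mu$ terminating at $\mu=\Theta(r^*/\log m)$, and then use $R/r^*\le\rho$ to cancel the $R$ factor in the final runtime. You have actually fleshed out considerably more detail than the paper's own proof, which is quite terse and defers the line-search argument to Lemma~C.3 of~\cite{AllenLO16}; your regularization bound, termination test, and $G=O(R)$ verification are all sound and fill in exactly what the paper elides.
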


Finally, we use our variance-reduced coordinate algorithm, namely Algorithm~\ref{alg:innerloop-approx-comp} as a relaxed proximal oracle in $\mathtt{OuterLoopStronglyMonotone}$ together with Proposition~\ref{prop:outerloopproof-sm} once more to solve \eqref{eq:MaxIB-adapted} to the desired accuracy. The implementation in Section~\ref{ssec:imp-comp} and complexity results in Section~\ref{ssec:vr-l2l1} yield the runtime. This implementation crucially uses our development of the $\AEM$ data structure in order to obtain a runtime depending directly on $\rcs$ rather than dimensions of the matrix, as well as independence on $B$. For completeness, a proof can be found in Appendix~\ref{app:maxIBproofs}.

\begin{theorem}
\label{thm:MaxIB}
The algorithm of Section~\ref{ssec:vr-l2l1} can be used to find an $\eps$-accurate solution $x_\eps^*$ to Max-IB satisfying $\min_{y\in\Delta_m}f(x^*_\eps,y)\ge(1-\eps)r^*$ with high probability in time~\footnote{Here $\widetilde{O}$ is hiding an additional factor of $\text{polylog}(\|b\|_{\infty})$ due to the additional cost in the runtime of $\AEM$, caused by the linear term $b$ (see Remark~\ref{rem:polylogbc}).} $$\widetilde{O}\left(\nnz + \frac{\rho\sqrt{\nnz} \cdot \ltoco
		}{\epsilon}\right)= \widetilde{O}\left(\nnz + \frac{\rho\sqrt{\nnz\cdot\rcs}
	}{\epsilon}\right).$$
\end{theorem}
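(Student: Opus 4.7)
The plan is to reduce Max-IB to a $(\mu,\mu)$-strongly monotone saddle-point problem of the form~\eqref{eq:def-strongly-monotone} and invoke the composite variance-reduced framework of Proposition~\ref{prop:outerloopproof-sm}. Starting from the bounded reformulation~\eqref{eq:MaxIB-adapted}, I would add the regularizers $\mu\cdot\tfrac12\|x\|_2^2$ on the Euclidean block and $\mu\cdot\sum_i[y]_i\log[y]_i$ on the simplex block, with $\mu=\Theta(\epsilon\hat r)$ chosen using the warm-start estimate $\hat r$ from Lemma~\ref{lem:maxIB-preprocess}. Because $\zset=\ball^n\times\Delta^m$ yields a regularizer of bounded range, this perturbation shifts the value of the game by $\tilde O(\mu)=\tilde O(\epsilon r^*)$, so an additive $\tilde O(\epsilon r^*)$-accurate saddle point of the regularized problem suffices to obtain the desired multiplicative guarantee $\min_y f(x_\epsilon^*,y)\ge(1-\epsilon)r^*$.

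Next, I would instantiate the relaxed proximal oracle given by Algorithm~\ref{alg:innerloop-approx-comp} with the $\ell_2$-$\ell_1$ centered-local gradient estimator developed in Section~\ref{ssec:vr-l2l1}. Because the algorithm operates on the rescaled matrix $\tilde A=2R\cdot A$, the relevant second-moment parameter is $\tilde\ltoco=2R\cdot\ltoco$. By Proposition~\ref{prop:innerloopproof-comp}, achieving oracle quality $(\alpha,\vepsi)$ requires $T=\Otil{(\tilde\ltoco)^2/\alpha^2}$ inner iterations of amortized $\Otil{1}$ cost, supported by $\AEM$ on the simplex block (with the fixed linear term $b$ folded into the dense direction $v$ passed to $\Initialize$) and by $\CIM_2$ on the Euclidean block. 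Each oracle call therefore costs $\Otil{\nnz+(\tilde\ltoco/\alpha)^2}$, and the outer loop runs for $K=\Otil{\alpha/\mu}$ iterations, each additionally paying $O(\nnz)$ for the exact extragradient step~\eqref{eq:extragrad_strongly_step}. The total runtime is
\[
\Otil{\tfrac{\alpha}{\mu}\Big(\nnz+\tfrac{(\tilde\ltoco)^2}{\alpha^2}\Big)}=\Otil{\tfrac{\alpha\cdot\nnz}{\mu}+\tfrac{(\tilde\ltoco)^2}{\alpha\mu}},
\]
which I would optimize by balancing the two terms at $\alpha=\tilde\ltoco/\sqrt{\nnz}$, yielding $\Otil{\tilde\ltoco\sqrt{\nnz}/\mu}$. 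Substituting $\tilde\ltoco=2R\ltoco$, $\mu=\Theta(\epsilon r^*)$, and using $R/r^*=\rho$, gives $\Otil{\rho\sqrt{\nnz}\cdot\ltoco/\epsilon}$; adding the $\Otil{\nnz+\rho\sqrt{\nnz}\cdot\ltoco}$ warm-start cost from Lemma~\ref{lem:maxIB-preprocess} produces the theorem's bound.

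The main technical obstacle I anticipate is rigorously verifying that $\AEM$ correctly handles the nonzero linear term $b$ in the simplex block. Since $b$ is fixed throughout the whole execution it can be absorbed into the dense direction supplied to $\Initialize$, but this introduces an extra multiplicative $\log\|b\|_\infty\le\log(2R)$ factor in all $\AEM$ operations, which is precisely the polylogarithmic overhead flagged in Remark~\ref{rem:polylogbc}; I would substantiate this using the fine-grained $\ScM$ runtime analysis of Section~\ref{sec:scm}, where the dependence on the multiplicative range of $\bx$ enters only through $\log\omega$. A secondary concern is boosting the in-expectation gap guarantee of Proposition~\ref{prop:outerloopproof-sm} to high probability, which I would handle by running $O(\log(1/\delta))$ independent copies of the method and returning the iterate whose one-sided gap, estimated in a single $O(\nnz)$ pass, is smallest. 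Finally, the inequality $\ltoco\le\sqrt{\rcs}\cdot\Lrc^{2,1}=\sqrt{\rcs}$ (using $\|\ai\|_2=1$) recorded at the start of Section~\ref{app:maxIB} yields the alternate form $\Otil{\nnz+\rho\sqrt{\nnz\cdot\rcs}/\epsilon}$ stated in the theorem.
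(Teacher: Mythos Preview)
Your proposal is correct and reaches the same runtime, but it takes a different route from the paper after the warm start. The paper, having obtained $\hat r$ from Lemma~\ref{lem:maxIB-preprocess}, does \emph{not} regularize again: it simply runs the standard (non--strongly-monotone) variance-reduced $\elltwoone$ method of Section~\ref{ssec:vr-l2l1} on the unregularized problem~\eqref{eq:MaxIB-adapted}, targeting additive duality gap $\eps\hat r/8\le\eps r^*$. This uses Algorithm~\ref{alg:outerloop} (not Proposition~\ref{prop:outerloopproof-sm}), with $K=\Otil{\alpha/(\eps\hat r)}$ outer iterations and the same $\Otil{\nnz+(\tilde\ltoco/\alpha)^2}$ oracle cost, and then converts the two-sided gap bound into the one-sided guarantee $\min_y f(x^*_\eps,y)\ge(1-\eps)r^*$ via a short saddle-point argument. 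Your approach instead regularizes with $\mu=\Theta(\eps\hat r)$ and invokes the strongly-monotone outer loop; since $K=\Otil{\alpha/\mu}=\Otil{\alpha/(\eps\hat r)}$, the iteration counts coincide and the runtimes match. In effect you are reusing the warm-start machinery at the final accuracy scale, which is perfectly valid but requires the extra (easy) step of bounding how the $\tilde O(\mu)$ perturbation affects the value; the paper's route avoids this by staying with the bilinear formulation. The handling of $b$ in $\AEM$, the high-probability boost, and the $\ltoco\le\sqrt{\rcs}$ simplification are the same in both.
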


\begin{remark}
	Because we assumed $m\ge n$, 
in the case $A$ is dense, up to logarithmic terms our runtime improves upon the runtime of $\Otil{\rho m\sqrt{n}/\eps}$ in \citet {ZhuLY16} by a factor of at least
\[\sqrt{\frac{mn}{\nnz}\cdot\frac{m}{\rcs}}\] generically. This is an improvement when $A$ is sparse or column-sparse, i.e.\ $\nnz\ll 
mn$, or $\rcs\ll m$. Such a saving is larger when $A$ has numerical sparsity so that e.g.\ 
$\ltocop^2 \le \max_{i\in[m]}\lones{\ai}^2 + \left(\max_{i \in [m]} \norm{\ai}_1\right)\left(\max_{j \in [n]} \norm{\aj}_1\right)< \rcs\cdot\max_{i\in[m]}\norm{A_i}_2^2$.
\end{remark}

\subsection{Minimum enclosing ball}
\label{app:minEB}

In the minimum enclosing ball (Min-EB) problem, we are given a set of data points $\{a_1,\ldots,a_m\}$ with $a_1=0,\max_{i \in [m]}\|a_i\|= 1$.\footnote{This can be assumed without loss of generality by shifting and rescaling as in \citet{ZhuLY16} and considering the trivial case when all $a_i, i\in[m]$ are equal.} The goal is to find the minimum radius $R^*$ such that there exists a point $x$ with distance at most $R^*$ to all points. Following the presentation of~\citet {ZhuLY16}, we consider Min-EB in an equivalent form. Define the vector $b$ to have $b_i=\half\norm{a_i}_2^2$ entrywise. Then, Min-EB is equivalent to the minimax problem
\begin{align}
	R^*\defeq\min_{x\in\R^n}\max_{y\in\Delta^{m}}\frac{1}{2}\sum\limits_{i}y_i\|x-a_i\|_2^2 = \min_{x\in\R^n}\max_{y\in\Delta^m} f(x,y), \text{ where } f(x, y) \defeq y^\top Ax+y^\top b+\frac{1}{2}\norm{x}_2^2.
	\label{eq:minEB}
\end{align}
By assumption, $\nA=1$. We let $(x^*,y^*)$ be the optimal solution to the saddle point problem.  We first state several bounds on the quantities of the problem. These bounds were derived in \citet {ZhuLY16} and obtained by examining the geometric properties of the problem. 

\begin{fact}
\label{lem:minEB-fact}
The following bounds hold: $\norm{x^*}_2\le 1$, and $R^*\ge1/8$.
\end{fact}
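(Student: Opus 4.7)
The plan is to exploit the classical reformulation of the Min-EB problem. Observe that for any fixed $x \in \R^n$, the inner maximization over the simplex is attained at a vertex, so
\[
\max_{y\in\Delta^m} \frac{1}{2}\sum_i y_i \|x-a_i\|_2^2 = \frac{1}{2}\max_{i\in[m]} \|x-a_i\|_2^2.
\]
Hence $R^\star = \tfrac{1}{2} r_\star^2$, where $r_\star \defeq \min_{x\in\R^n} \max_i \|x-a_i\|_2$ is the radius of the Euclidean minimum enclosing ball of $\{a_1,\dots,a_m\}$, and $x^\star$ is its center. Both claimed bounds then reduce to elementary geometric estimates on $r_\star$.

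For the upper bound $\|x^\star\|_2 \le 1$, I would first produce an upper bound on $R^\star$. Since $a_1=0$ and $\max_i \|a_i\|_2 = 1$, every data point lies in the unit Euclidean ball centered at the origin. Choosing $x=0$ as a feasible candidate then gives $R^\star \le \tfrac{1}{2}\max_i \|a_i\|_2^2 = \tfrac{1}{2}$, so $r_\star \le 1$. Next I would use the defining inclusion $\|x^\star - a_i\|_2 \le r_\star$ specialized to $i=1$, which (using $a_1=0$) yields $\|x^\star\|_2 \le r_\star \le 1$.

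For the lower bound $R^\star \ge \tfrac{1}{8}$, I would pick any index $j$ achieving $\|a_j\|_2 = 1$ (such $j$ exists by the assumption $\max_i \|a_i\|_2 = 1$). The two-point set $\{a_1, a_j\} = \{0, a_j\}$ is contained in the enclosing ball of every feasible $x$, and in particular any enclosing ball of these two points must have diameter at least $\|a_j - a_1\|_2 = 1$, so radius at least $\tfrac{1}{2}$. Therefore $r_\star \ge \tfrac{1}{2}$, and $R^\star = \tfrac{1}{2}r_\star^2 \ge \tfrac{1}{8}$.

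Neither step is an obstacle: the only subtlety is justifying the reduction to the classical MEB formulation (the first displayed equation), which is a standard linear-over-simplex argument. Both bounds follow directly from the normalization assumptions $a_1 = 0$ and $\max_i \|a_i\|_2 = 1$.
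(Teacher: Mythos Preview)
Your argument is correct. The paper does not actually prove this statement---it is recorded as a \emph{Fact} and attributed to \citet{ZhuLY16}, noting only that the bounds are ``obtained by examining the geometric properties of the problem.'' Your derivation is precisely the natural geometric argument one would expect: reducing the simplex maximization to $\tfrac12\max_i\|x-a_i\|^2$, and then using that $a_1=0$ and $\max_i\|a_i\|=1$ to bound both the center and the radius of the minimum enclosing ball.
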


To achieve a multiplicative approximation, since $R^*\ge1/8$ by Fact~\ref{lem:minEB-fact}, it suffices to obtain a pair $(x^*_\eps,y^*_\eps)$ achieving $\max_y f(x^*_\eps,y)-\min_x f(x,y^*_\eps)\le \eps/8$. In light of minimax optimality, Lemma~\ref{lem:minEB-reg} (proved in Section~\ref{sec:proofs_from_62}) shows that it suffices to consider, for $\eps'=\Theta(\eps/\log m)$, solving the following $(1,\eps')$-strongly monotone problem to sufficient accuracy:
\begin{align}
\min_{x\in\R^n}\max_{y\in\Delta^m} f_{\eps'}(x,y)\defeq y^\top Ax+y^\top b-\eps'\sum_{i \in [m]} [y]_i\log[y]_i+\frac{1}{2}\norm{x}_2^2.
\label{eq:minEB-adapted}
\end{align}

\begin{lemma}\label{lem:minEB-reg}
Setting $\eps'=\eps/(32\log m)$, an $\eps/16$-accurate solution or (\ref{eq:minEB-adapted}) is an $\eps/8$-accurate solution to the original problem (\ref{eq:minEB}).
\end{lemma}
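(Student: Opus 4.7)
The plan is to bound the additive difference between the duality gap of $f_{\eps'}$ and the duality gap of $f$ by $\eps'\log m$, then choose $\eps'$ to make this difference small enough.

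First I would observe that the regularizer $h(y) \defeq -\sum_{i\in[m]} [y]_i \log[y]_i$ is exactly the Shannon entropy on $\Delta^m$, which satisfies $0 \le h(y) \le \log m$ for every $y \in \Delta^m$. Consequently $f(x,y) \le f_{\eps'}(x,y) = f(x,y) + \eps' h(y) \le f(x,y) + \eps'\log m$ holds pointwise on $\R^n\times\Delta^m$.

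Next, given a point $(x^*_\eps, y^*_\eps)$ with duality gap at most $\eps/16$ for the regularized problem \eqref{eq:minEB-adapted}, I would relate the two gaps using the pointwise inequality above. Specifically, since $f\le f_{\eps'}$ we have $\max_y f(x^*_\eps,y) \le \max_y f_{\eps'}(x^*_\eps,y)$, and since $f_{\eps'}\le f+\eps'\log m$ we have $\min_x f(x,y^*_\eps) \ge \min_x f_{\eps'}(x,y^*_\eps) - \eps'\log m$. Subtracting gives
\begin{equation*}
\gap_f(x^*_\eps,y^*_\eps) \le \gap_{f_{\eps'}}(x^*_\eps,y^*_\eps) + \eps'\log m.
\end{equation*}

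Finally, substituting $\eps' = \eps/(32\log m)$ yields $\eps'\log m = \eps/32$, so the right-hand side is bounded by $\eps/16 + \eps/32 < \eps/8$, completing the argument. There is no real obstacle here; this is a standard entropy-regularization reduction, and the only thing to verify is the numerical constants, which work out cleanly.
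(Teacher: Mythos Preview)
Your proposal is correct and follows essentially the same approach as the paper: decompose the original gap as the regularized gap plus two correction terms, and bound those corrections using the entropy range $0\le h(y)\le\log m$. Your version is in fact slightly sharper, since you observe $\max_y f(x^*_\eps,y)\le\max_y f_{\eps'}(x^*_\eps,y)$ directly (the paper bounds this difference by $\eps/32$ rather than zero), but the structure and conclusion are identical.
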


As an immediate result of the above lemma, the runtime in Section~\ref{ssec:vr-l2l1} and the correctness proofs of Proposition~\ref{prop:outerloopproof-sm} and Corollary~\ref{prop:innerloopproof-comp}, we obtain the following guarantee.

\begin{theorem}
	\label{thm:MinEB}	
	The strongly monotone algorithm $\mathtt{OuterLoopStronglyMonotone}$ of \citet{CarmonJST19}, using Algorithm~\ref{alg:innerloop-approx-comp} of Section~\ref{ssec:comp} and the estimator of Section~\ref{ssec:vr-l2l1} as a relaxed proximal oracle, finds an $\eps$-accurate solution $x_\eps^*$ to Min-EB satisfying $R^*\le \max_y f(x^*_\eps,y)\le (1+\eps)R^*$ with high probability in time%
	 \[\Otil{\nnz+\frac{\sqrt{\nnz}\cdot \ltoco}{\sqrt{\eps}}} = \Otil{\nnz+\frac{\sqrt{\nnz \cdot \rcs}}{\sqrt{\eps}}}.\]
	
\end{theorem}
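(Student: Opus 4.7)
The plan is to combine Lemma~\ref{lem:minEB-reg} with Proposition~\ref{prop:outerloopproof-sm} and the $\elltwoone$ variance-reduced oracle developed in Section~\ref{ssec:vr-l2l1}. First, by Lemma~\ref{lem:minEB-reg} it suffices to solve the $(1,\eps')$-strongly monotone problem~\eqref{eq:minEB-adapted} to expected gap $\eps/16$, where $\eps'=\eps/(32\log m)$. In the notation of Proposition~\ref{prop:outerloopproof-sm}, we therefore have $\mu\x=1$, $\mu\y=\eps'$, $\mu=\sqrt{\eps'}$ and $\rho=1/\sqrt{\eps'}$; the bound $G=\tilde{O}(1)$ on the dual norm of the gradient mapping follows from Fact~\ref{lem:minEB-fact} together with $\nA=1$, $\norm{b}_\infty\le 1/2$, and the fact that a properly warm-started simplex iterate satisfies $|\log y_i|=O(\log m)$.

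Next, I would instantiate the relaxed proximal oracle using Algorithm~\ref{alg:innerloop-approx-comp} with the $\elltwoone$ centered-local gradient estimator of Section~\ref{ssec:vr-l2l1}; by Corollary~\ref{prop:innerloopproof-comp} this yields an $(\alpha,\veps)$-relaxed proximal oracle for any $\alpha\ge\veps$, with per-call runtime $\Otil{\nnz+L^2/\alpha^2}$, where $L=\ltoco$ is the variance parameter of the estimator. The $\nnz$ term accounts for computing $g(w_0)$ at the reference point and initializing the sampling structures and $\AEM$, while the $L^2/\alpha^2$ inner iterations each cost polylogarithmic time via the data structures of Section~\ref{ssec:interfaces}. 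Each outer iteration of $\mathtt{OuterLoopStronglyMonotone}$ additionally performs one extragradient step~\eqref{eq:extragrad_strongly_step}, costing $O(\nnz)$ to compute $g(z_{k-1/2})$.

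Setting $\veps=\Theta(\mu\eps^2/G^2)$ and running $K=\Otil{\alpha/\mu}$ outer iterations, Proposition~\ref{prop:outerloopproof-sm} guarantees expected gap at most $\eps/16$, and a standard boosting argument (repeat $O(\log(1/\delta))$ times and certify the best iterate by gap estimation) converts this to a high-probability guarantee. The total runtime is
\[
\Otil{\left(\nnz+\tfrac{L^2}{\alpha^2}\right)\cdot \tfrac{\alpha}{\mu}},
\]
minimized by $\alpha=\max\{\mu,\,L/\sqrt{\nnz}\}$, yielding
\[
\Otil{\nnz+\tfrac{L\sqrt{\nnz}}{\mu}}=\Otil{\nnz+\tfrac{\ltoco\sqrt{\nnz}}{\sqrt{\eps'}}}=\Otil{\nnz+\tfrac{\ltoco\sqrt{\nnz}}{\sqrt{\eps}}}.
\]
The second stated bound follows from $\ltoco\le\sqrt{\rcs}\cdot\Lrc^{2,1}=\sqrt{\rcs}$ (since our normalization ensures $\Lrc^{2,1}=\max_i\ltwo{\ai}=1$).

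The main technical obstacle will be verifying that the regularized Min-EB problem fits the preconditions of Corollary~\ref{prop:innerloopproof-comp}: in particular, that the additional composite terms ($\tfrac{1}{2}\ltwo{x}^2$ on the ball and the entropic regularizer on the simplex) and the linear term $y^\top b$ only enlarge the effective Lipschitz/boundedness constant of $g$ by polylogarithmic factors (these are absorbed into the $\Otil{\cdot}$, as noted in Remark~\ref{rem:polylogbc} for the dependence on $\norm{b}_\infty$), and that the approximation tolerances $\epsaprx$ required by Algorithm~\ref{alg:innerloop-approx-comp} can be enforced by setting the $\AEM$ padding parameter $\varepsilon$ polynomially small in $m,n$, exactly as in the $\ellone$ matrix-game implementation of Section~\ref{ssec:implementvrlone}. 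Once these routine verifications are carried out, the runtime bound follows from the balance computation above.
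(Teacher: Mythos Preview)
Your proposal is correct and follows essentially the same route as the paper, which simply states that the theorem is ``an immediate result'' of Lemma~\ref{lem:minEB-reg}, the runtime analysis of Section~\ref{ssec:vr-l2l1}, Proposition~\ref{prop:outerloopproof-sm}, and Corollary~\ref{prop:innerloopproof-comp}. Your identification of $\mu\x=1$, $\mu\y=\eps'$, $\mu=\sqrt{\eps'}$, the balancing of $\alpha$, and the final use of $\ltoco\le\sqrt{\rcs}$ (since $\nA=1$) are all exactly what the paper intends.
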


\begin{remark}
	When $m \geq n$,\footnote{When $m<n$, the runtime of the algorithm in \citet {ZhuLY16} still holds and is sometimes faster than ours. } up to logarithmic terms our runtime improves the $\Otil{m\sqrt{n}/\sqrt{\eps}}$ runtime of \citet {ZhuLY16} by a factor of 
\[\sqrt{\frac{mn}{\nnz}\cdot\frac{m}{\rcs}}\] generically. This is an improvement when $A$ is sparse or column-sparse, i.e.\ $\nnz\ll 
mn$, or $\rcs\ll m$. As in  Section~\ref{app:maxIB}, the improvement is larger when $A$ is numerically sparse, i.e.\ when $\ltocop^2 \le \max_{i\in[m]}\lones{\ai}^2 + \left(\max_{i \in [m]} \norm{\ai}_1\right)\left(\max_{j \in [n]} \norm{\aj}_1\right)< \rcs\cdot\max_{i\in[m]}\norm{A_i}_2^2$.
\end{remark}

\subsection{Regression}
\label{app:reg}

We consider the standard $\ell_2$ linear regression problem in a data matrix $A\in\R^{m\times n}$ and vector $b \in \R^m$, i.e.\ $\min_{x\in\R^n}\ltwo{Ax-b}$. In particular, we consider the equivalent primal-dual form,
\begin{equation}\label{def:minimax-reg}
	\min_{x\in\R^n}\max_{y\in\ball^m} f(x,y)\defeq y^\top(Ax-b).
\end{equation}
Throughout, we assume the smallest eigenvalue of $A^\top A$ is $\mu>0$ and denote an optimal solution to~\eqref{def:minimax-reg} by $z^* = (x^*,y^*)$ (where $x^*$ is the unique solution to the regression problem). Our strategy is to consider a sequence of modified problems, parameterized by $\beta>0$, $x'\in\R^n$:
\begin{equation}\label{def:minimax-reg-mod}
	\min_{x\in\R^n}\max_{y\in\ball^m}f^\beta_{x'}(x,y)\defeq y^\top(Ax - b)+\frac{\beta}{2}\ltwo{x-x'}^2-\frac{\beta}{2}\ltwo{y}^2.
\end{equation}
We denote the optimal solution to \eqref{def:minimax-reg-mod} by $z^*_{(\beta, x')} = (x_{(\beta,x')}^*,y_{(\beta,x')}^*)$; when clear from context, for simplicity we drop $\beta$ and write $z^*_{x'} = (x_{x'}^*,y_{x'}^*)$ (as $\beta = \sqrt{\mu}$ throughout our algorithm). Lemma~\ref{lem:reg-opt-rel} (proved in Section~\ref{ssec:app-reg}) states a known relation between the optimal solutions for~\eqref{def:minimax-reg} and~\eqref{def:minimax-reg-mod}.
 
\begin{lemma}
\label{lem:reg-opt-rel}
Letting $(x^*,y^*)$ be the optimal solution for~\eqref{def:minimax-reg} and $(x^*_{x'},y^*_{y'})$ be the optimal solution for~\eqref{def:minimax-reg-mod}, the following relation holds:
\[
\ltwo{x^*_{x'}-x^*}\le\frac{1}{1+\frac{\mu}{\beta^2}}\ltwo{x'-x^*}.
\]	
\end{lemma}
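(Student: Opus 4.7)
The plan is to derive explicit first-order optimality conditions for $x^*$ and $x^*_{x'}$, subtract them to get a linear relation between the two, and then bound the norm of the resulting linear map using $A^\top A \succeq \mu I$.

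For the original problem \eqref{def:minimax-reg}, maximizing over $y\in\ball^m$ yields $\min_x \norm{Ax-b}_2$, so $x^*$ satisfies the normal equations $A^\top A x^* = A^\top b$. Equivalently, from the saddle-point KKT conditions we have $A^\top y^* = 0$ together with $Ax^*-b \in \{\lambda y^* : \lambda \ge 0\}$, and left-multiplying by $A^\top$ gives the same identity $A^\top A x^* = A^\top b$ in either the interior or boundary case for $y^*$.

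For the modified problem \eqref{def:minimax-reg-mod}, the $x$-stationarity condition is $A^\top y^*_{x'} + \beta(x^*_{x'} - x') = 0$, and the $y$-optimality condition is $Ax^*_{x'} - b - \beta y^*_{x'} \in N_{\ball^m}(y^*_{x'})$. Provided $y^*_{x'}$ is interior (which the algorithmic setting ensures, e.g.\ whenever $\norm{Ax^*_{x'}-b}_2 \le \beta$, since the unconstrained maximizer is $y=(Ax-b)/\beta$), one has $y^*_{x'} = \tfrac{1}{\beta}(Ax^*_{x'} - b)$. Substituting into the $x$ equation yields
\[(A^\top A + \beta^2 I)\,x^*_{x'} = A^\top b + \beta^2 x'.\]
Subtracting $(A^\top A + \beta^2 I)x^* = A^\top b + \beta^2 x^*$ gives the key identity
\[(A^\top A + \beta^2 I)(x^*_{x'} - x^*) = \beta^2(x' - x^*),\]
so that $x^*_{x'} - x^* = \beta^2 (A^\top A + \beta^2 I)^{-1}(x' - x^*)$.

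The final step is a spectral bound: since $A^\top A \succeq \mu I$, we have $A^\top A + \beta^2 I \succeq (\mu + \beta^2) I$, hence $\opnorm{\beta^2(A^\top A + \beta^2 I)^{-1}} \le \tfrac{\beta^2}{\mu + \beta^2} = \tfrac{1}{1+\mu/\beta^2}$, which yields the claimed inequality. The only point requiring care is justifying the interior assumption on $y^*_{x'}$; this is handled cleanly because the unregularized dual maximizer in the modified problem is $(Ax^*_{x'}-b)/\beta$, which lies in $\ball^m$ in the relevant parameter regime, and in the boundary case one can either invoke the same normal-equation argument (reducing the problem to a purely $x$-side relation via left-multiplication by $A^\top$) or note that for the algorithm's choice $\beta=\sqrt{\mu}$ the bound $\norm{Ax-b}_2 \le \beta$ is met for all iterates in a neighborhood of $x^*$.
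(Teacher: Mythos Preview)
Your proof is correct and follows essentially the same approach as the paper: derive the first-order optimality conditions for the modified problem, substitute to obtain the linear relation $(A^\top A + \beta^2 I)(x^*_{x'}-x^*)=\beta^2(x'-x^*)$ (equivalently, the paper's $(I+\tfrac{1}{\beta^2}A^\top A)(x^*_{x'}-x^*)=x'-x^*$), and apply the spectral bound $A^\top A\succeq \mu I$. You are actually more careful than the paper in flagging the interior assumption on $y^*_{x'}$, which the paper simply asserts without comment.
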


We give a full implementation of the regression algorithm in Algorithm~\ref{alg:reg} (see Section~\ref{ssec:app-reg}), and state its correctness and runtime in Theorem~\ref{thm:reg}. The algorithm repeatedly solves problems of the form \eqref{def:minimax-reg-mod} in phases, each time using Lemma~\ref{lem:reg-opt-rel} to ensure progress towards $x^*$. Observing that each subproblem is $(\beta, \beta)$-strongly monotone, each phase is conducted via $\mathtt{OuterLoopStronglyMonotone}$, an algorithm of \citet{CarmonJST19}, using the $\elltwo$ algorithms of Section~\ref{ssec:vr-l2} as a proximal oracle. Due to the existence of composite terms, our inner loop steps are slightly different than in Section~\ref{ssec:vr-l2}; we give a more formal algorithm for the relaxed proximal oracle and its implementation in Algorithm~\ref{alg:innerloop-approx-comp} and Appendix~\ref{ssec:comp}. We remark that by a logarithmic number of restarts per phase, a standard argument boosts Theorem~\ref{thm:reg} to a high-probability claim.

\begin{restatable}{theorem}{restateregthm}
	\label{thm:reg}	
	
	Given data matrix $A\in\R^{m\times n}$, vector $b\in\R^m$, and desired accuracy $\eps\in(0,1)$, assuming $A^\top A\succeq\mu I$ for $\mu>0$, Algorithm~\ref{alg:reg} outputs an expected $\eps$-accurate solution $\tilde{x}$, i.e.\ 
	\[
	\E\left[ \ltwo{\tilde{x}-x^*} \right]\le \eps,
	\]
	and runs in time
	\[
	\Otilb{\nnz+\sqrt{\nnz}\cdot \frac{\max\left\{\sqrt{\sum_i\lones{\ai}^2}, \sqrt{\sum_j\lones{\aj}^2}\right\}}{\sqrt{\mu}}}.
	\]
\end{restatable}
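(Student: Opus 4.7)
The plan is to use Algorithm~\ref{alg:reg}, a proximal-point scheme in which each outer phase approximately solves the modified minimax problem~\eqref{def:minimax-reg-mod} with $\beta = \sqrt{\mu}$, so that Lemma~\ref{lem:reg-opt-rel} produces the constant contraction $\ltwo{x^*_{x'} - x^*} \le \half\, \ltwo{x' - x^*}$ towards the regression optimum. It follows that $T = O(\log(D_0/\eps))$ phases suffice to reach expected distance $\eps$ from $x^*$, where $D_0$ is a bound on the initial distance $\ltwo{x_0 - x^*}$. Since this logarithmic factor is absorbed into the $\widetilde{O}$ notation, it remains to show that each phase can be implemented in time $\Otilb{\nnz + \sqrt{\nnz}\cdot \lttco/\sqrt{\mu}}$, with $\lttco = \max\{\sqrt{\sum_i\lones{\ai}^2},\sqrt{\sum_j\lones{\aj}^2}\}$ as in Table~\ref{table:L}.

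Within each phase, the subproblem~\eqref{def:minimax-reg-mod} is $(\beta, \beta)$-strongly monotone with $\beta = \sqrt{\mu}$, and I would solve it using $\mathtt{OuterLoopStronglyMonotone}$ of~\citet{CarmonJST19} (Proposition~\ref{prop:outerloopproof-sm}) instantiated with the composite relaxed proximal oracle (Algorithm~\ref{alg:innerloop-approx-comp}) and the $\elltwo$ variance-reduced coordinate gradient estimator of Section~\ref{ssec:vr-l2}, whose centered-local parameter is $L = \lttco$. By Proposition~\ref{prop:outerloopproof-sm} applied with strong-monotonicity parameter $\beta$, reaching the required subproblem accuracy uses $\Otil{\alpha/\beta}$ outer extra-gradient iterations, each costing $O(\nnz)$ to form the exact gradient in~\eqref{eq:extragrad_strongly_step}. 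The composite analog of Proposition~\ref{prop:innerloopproof} gives inner-loop count $\Otil{L^2/\alpha^2}$, with each iteration costing amortized $\Otil{1}$ via the $\IM_2$ data structure of Section~\ref{sec:iterate_maintain_proof}, since both $\xset$ and $\yset$ are Euclidean balls (no simplex data structure is needed). Balancing the per-phase cost $\Otilb{(\nnz + L^2/\alpha^2)\cdot \alpha/\beta}$ at the optimal $\alpha = L/\sqrt{\nnz}$ yields $\Otilb{\nnz + \sqrt{\nnz}\cdot L/\sqrt{\mu}}$, matching the claim.

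The main obstacle will be propagating the inexactness consistently across phases: Lemma~\ref{lem:reg-opt-rel} contracts distance to $x^*$ only for the \emph{exact} subproblem minimizer $x^*_{x_t}$, while Proposition~\ref{prop:outerloopproof-sm} only guarantees an expected duality-gap bound on the approximate solution $\hat{x}_{t+1}$ returned by the inner algorithm. To bridge this, I would exploit the $\beta$-strong convexity of $f^\beta_{x_t}(\cdot,y)$ in $x$ to deduce $\E\,\ltwo{\hat{x}_{t+1} - x^*_{x_t}}^2 \le 2\,\E\,\gap_t/\beta$, and then apply the triangle inequality together with Lemma~\ref{lem:reg-opt-rel} to obtain the recursion $\E\,\ltwo{\hat{x}_{t+1} - x^*} \le \half\, \E\,\ltwo{\hat{x}_t - x^*} + \sqrt{2\,\E\,\gap_t/\beta}$. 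Choosing the per-phase target gap to decay geometrically down to $\Theta(\beta\eps^2)$ makes the accumulated error stay below $\eps$ in expectation; since the runtime of Proposition~\ref{prop:outerloopproof-sm} depends only logarithmically on the target gap, this scheduling contributes only logarithmic factors absorbed into $\widetilde{O}$. Finally, should a high-probability bound be desired, a standard median-of-means restart over $O(\log(1/\delta))$ independent runs converts the expectation guarantee into the stated form.
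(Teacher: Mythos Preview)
Your proposal is correct and follows essentially the same route as the paper: outer proximal-point phases with $\beta=\sqrt{\mu}$, Lemma~\ref{lem:reg-opt-rel} for the half-contraction toward $x^*$, each phase solved by $\mathtt{OuterLoopStronglyMonotone}$ with the $\elltwo$ variance-reduced coordinate oracle (Corollary~\ref{prop:innerloopproof-comp}), and the same balancing of $\alpha$ to obtain the stated runtime.

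The one place you differ is in how you propagate subproblem inexactness. You pass through the duality-gap bound of Proposition~\ref{prop:outerloopproof-sm} and then invoke $\beta$-strong convexity to get $\E\ltwo{\hat{x}_{t+1}-x^*_{x_t}}^2\le 2\,\E\gap_t/\beta$, arriving at a recursion in norms. The paper instead uses directly the \emph{divergence-contraction} guarantee underlying Proposition~\ref{prop:outerloopproof-sm}, namely $V_{z^{(h+1)}}(z^*_{x^{(h)}})\le c\cdot V_{z^{(h)}}(z^*_{x^{(h)}})$, and combines it with $\|a+b\|^2\le 2\|a\|^2+2\|b\|^2$ to get a recursion in squared distances. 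Both are valid, but the paper's variant is a bit cleaner here because in Algorithm~\ref{alg:reg} the $x$-domain is all of $\R^n$ (no projection on $x$), so the gradient bound $G$ and range $\Theta$ in the gap form of Proposition~\ref{prop:outerloopproof-sm} are not a~priori bounded uniformly across phases; the divergence route avoids these quantities, since $V_{z^{(h)}}(z^*_{x^{(h)}})$ is controlled by the previous-phase distance and the bounded $y$-block. Your approach still goes through, but you would need to argue boundedness of the relevant initial divergence at each phase (which is exactly what the paper bounds).
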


We give two settings where the runtime of Algorithm~\ref{alg:reg} improves upon the state of the art.

\paragraph{Entrywise nonnegative $A$.} In the particular setting when all entries of $A$ are nonnegative,\footnote{More generally, this holds for arbitrary $A\in\R^{m \times n}$ satisfying $\||A|\|_{\mathrm{op}}\le\|A\|_{\mathrm{op}}$.} by Proposition~\ref{prop:reg-improve} our complexity as stated in Theorem~\ref{thm:reg} improves by a factor of $\sqrt{\nnz/ (m+n)}$ the runtime of accelerated gradient descent~\citep{Nesterov83}, which is the previous state-of-the-art in certain regimes with runtime $\O{\nnz\cdot\|A\|_{\mathrm{op}}/\sqrt{\mu}}$. This speedup is most beneficial when $A$ is dense. 

\paragraph{Numerically sparse $A$.} For numerically sparse $A$ with $\lones{\ai}/\ltwo{\ai}= O(1)$, $\lones{\aj}/\ltwo{\aj}=O(1)$ for all $i\in[m]$, $j\in[n]$, we can choose $\alpha=\sqrt{\mu}$ in Algorithm~\ref{alg:reg} and obtain the runtime
\[
\O{\nnz+\frac{\max\bigl\{\sum_i\lones{\ai}^2, \sum_j\lones{\aj}^2\bigr\}}{\mu}}
=
\O{\nnz+\frac{\lfro{A}^2}{\mu}}
\] 
using the argument in Theorem~\ref{thm:reg}. Under a (similar, but weaker) numerically sparse condition $\lones{\ai}/\ltwo{\ai}=O(1)$, the prior state-of-the-art stochastic algorithm~\cite{Johnson13} obtains a runtime of $O(\nnz+\rcs\cdot\lfro{A}^2/\mu)$, and the recent state-of-the-art result in the numerically sparse regime~\cite{GuptaS18} improves this to $O(\nnz+(\lfro{A}^2/\mu)^{1.5})$ when $\rcs = \Omega(\lfro{A}/\sqrt{\mu})$. Improving universally over both, our method gives $O(\nnz+\lfro{A}^2/\mu)$ in this setting.

\section*{Acknowledgements}
This research was supported in part by Stanford Graduate Fellowships, NSF CAREER Award CCF-1844855, NSF Graduate Fellowship DGE-1656518 and a PayPal research gift. We thank the anonymous reviewers who helped improve the completeness and readability of this paper by providing many helpful comments.

\newpage
\arxiv
{
\bibliographystyle{abbrvnat} 
\setlength{\bibsep}{6pt}

}

\newpage

\notarxiv{\appendices}
\arxiv{
\addtocontents{toc}{\protect\setcounter{tocdepth}{1}}

\appendix
\part*{Appendix}
}
\section{Deferred proofs from Section~\ref{sec:prelims}}
\label{app:prelims-proofs}

\begin{proof}[Proof of Proposition~\ref{prop:local-norm}]
It is clear that our choices of $\xset, \yset$ are compact and convex, and 
that the local norms we defined are indeed norms (in all cases, they are 
quadratic norms). Validity of our choices of $\Theta$ follow from the 
well-known facts that for $x \in \Delta^n$, the entropy function $\sum_{j 
\in [n]} x_j \log x_j$ is convex with range $\log n$, and that for $x \in \ball^n$, 
$\half\norm{x}_2^2$ is convex with range $\half$. 

In the Euclidean case we have that $V_x(x')=\half\ltwo{x-x'}^2$ and  
\eqref{eq:strong-convexity} follows from the Cauchy-Schwarz and 
Young inequalities:
\begin{equation*}
 \inner{\gamma}{x' - x} \le \half\ltwo{\gamma}^2 + \half\ltwo{x-x'}^2.
\end{equation*}
Similarly, for the simplex we have that entropy is 1-strongly-convex with respect to
$\lone{\cdot}$ and therefore $V_y(y') \ge \half\lone{y-y'}^2$, and we obtain~\eqref{eq:strong-convexity} from the H{\"o}lder and Young inequalities, 
\begin{equation*}
\inner{\gamma}{y' - y} \le \half\linf{\gamma}^2 + \half\lone{y-y'}^2,
\end{equation*}
where we note that $\linf{\gamma}=\norm{\gamma}_*$ in this case.

Finally, $\clip(\cdot)$ is not the identity only when the corresponding domain is the simplex, in which case entropy 
satisfies 
the local norms bound~\cite[cf.][Lemma 13]{CarmonJST19},
\begin{equation*}
\inner{\gamma}{y-y'} - V_{y}(y') \le  
\sum_{i\in [m]} \gamma_i^2 y_i~~\text{for all 
}y,y'\in\Delta^m~\text{and $\gamma\in\R^m$ such that }\linf{\gamma}\le 
1.
\end{equation*}
Noting that $\linf{\clip(\gamma)}\le 1$ and that 
$\norm{\clip(\gamma)}_y\le \norm{\gamma}_y$ for all $y\in\Delta^m$, 
we have the desired local bound~\eqref{eq:strong-convexity-local}. Finally, 
for every coordinate $i\in[m]$ we have
\begin{equation*}
| \gamma_i - [\clip(\gamma)]_i | = | |\gamma_i| - 1 | \; \indic{|\gamma_i|>1} \le  |\gamma_i| 
\;\indic{|\gamma_i|>1} 
\le |\gamma_i|^2.
\end{equation*}
Consequently, $|\inner{\gamma-\clip(\gamma)}{z}|\le \sum_{i\in [m]} 
\gamma_i^2 z_i$, giving the distortion bound~\eqref{eq:clipping-distortion}.
\end{proof} %

\section{Deferred proofs from Section~\ref{sec:framework}}
\label{sec:framework-proofs}

\subsection{Proof of Proposition~\ref{prop:sublinear}}

In this section, we provide a convergence result for mirror descent under 
local norms. We require the following well-known regret bound for mirror descent.
\begin{lemma}[{\cite[][Lemma 12]{CarmonJST19}}]\label{lem:mirror-descent}
	Let $Q:\zset\to\R$ be convex, let $T\in\N$,
	$z_0\in\zset$ and 
	$\gamma_0,\gamma_1,\ldots,\gamma_T\in\zset^*$. The sequence 
	$z_1,\ldots,z_T$ defined 
	by
	\begin{equation*}
	z_{t} = \argmin_{z\in\zset}\left\{ 
	\inner{\gamma_{t-1}}{z}+Q(z)+V_{z_{t-1}}(z)
	\right\}
	\end{equation*}
	satisfies for all $u\in\zset$ (denoting $z_{T+1}\defeq u$),
	\begin{flalign}
	\sum_{t = 0}^T \inner{\gamma_{t} }{z_t - u} + \sum_{t=1}^T\inner{\grad Q(z_t)}{z_t -u} & \le 
	V_{z_0}(u) +
	\sum_{t=0}^{T }
	\{\inner{\gamma_t}{z_t-z_{t+1}} - V_{z_t}(z_{t+1})\}.
	\label{eq:mirror-descent}
	\end{flalign}
\end{lemma}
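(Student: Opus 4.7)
The plan is to prove the lemma by invoking the first-order optimality condition of the argmin defining $z_t$, converting it via the Bregman three-point identity into a per-step telescoping inequality, and then summing and reorganizing indices to match the stated bound with the convention $z_{T+1} = u$.

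First I would write the optimality condition for $z_t = \argmin_{z\in\zset} \{\inner{\gamma_{t-1}}{z} + Q(z) + V_{z_{t-1}}(z)\}$: for every $u \in \zset$,
\[
\inner{\gamma_{t-1} + \grad Q(z_t) + \grad r(z_t) - \grad r(z_{t-1})}{u - z_t} \ge 0,
\]
using $\grad V_{z_{t-1}}(z_t) = \grad r(z_t) - \grad r(z_{t-1})$. Next I would record the standard three-point identity, which follows from direct expansion of the divergence definition:
\[
V_{z_{t-1}}(u) - V_{z_t}(u) - V_{z_{t-1}}(z_t) = \inner{\grad r(z_t) - \grad r(z_{t-1})}{u - z_t}.
\]
Combining these gives the per-step regret inequality
\[
\inner{\gamma_{t-1} + \grad Q(z_t)}{z_t - u} \le V_{z_{t-1}}(u) - V_{z_t}(u) - V_{z_{t-1}}(z_t), \quad t = 1,\ldots,T.
\]

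Summing the per-step inequalities over $t=1,\dots,T$ telescopes the divergence-to-$u$ terms, yielding
\[
\sum_{t=1}^T \inner{\gamma_{t-1} + \grad Q(z_t)}{z_t - u} \le V_{z_0}(u) - V_{z_T}(u) - \sum_{t=0}^{T-1} V_{z_t}(z_{t+1}).
\]
It remains to reconcile this with the indexing of the statement. I would shift the first sum on the claimed LHS via $s=t+1$, so that
\[
\sum_{t=0}^{T} \inner{\gamma_t}{z_t - u} - \sum_{t=0}^{T} \inner{\gamma_t}{z_t - z_{t+1}} = \sum_{t=0}^{T} \inner{\gamma_t}{z_{t+1} - u} = \sum_{t=1}^{T} \inner{\gamma_{t-1}}{z_t - u},
\]
where the $t=T$ term vanishes because $z_{T+1}=u$. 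Adding the $\grad Q(z_t)$ sum and then using $z_{T+1}=u$ so that $V_{z_T}(z_{T+1}) = V_{z_T}(u)$ cancels the leftover $-V_{z_T}(u)$ term, which gives the advertised bound exactly.

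The only mildly subtle point is making sure the indexing manipulation matches the asymmetric placement of $\gamma_t$ (summed from $0$ to $T$) versus $\grad Q(z_t)$ (summed from $1$ to $T$); this is handled cleanly by the change of variable $s = t+1$ together with the convention $z_{T+1}=u$. No additional structure of $\zset$ or $r$ beyond convex differentiability and the Bregman framework already stipulated in Definition \ref{def:setup} is needed.
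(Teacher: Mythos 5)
Your proof is correct: the first-order optimality condition, the Bregman three-point identity, the telescoping sum, and the reindexing with the convention $z_{T+1}=u$ all check out and together yield exactly the stated inequality. The paper does not reprove this lemma (it cites Lemma 12 of \citet{CarmonJST19}), and your argument is the same standard one used there, so there is nothing further to reconcile.
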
 

The proposition follows from this regret bound, the properties of the local 
norm setup, and the ``ghost iterate'' argument due 
to~\cite{NemirovskiJLS09}.

\newcommand{\halfs}{\tfrac{1}{2}}

\restatesublinear*
\begin{proof}
	Defining
	\begin{equation*}
	\tilde{\Delta}_t \defeq g(z_t) - \frac{1}{\eta}\clip(\eta \tilde{g}(z_t))
	\end{equation*}
	and the ghost iterates
	\begin{equation*}
	s_t = \argmin_{s\in\zset}\left\{\inner{\half \eta \tilde{\Delta}_{t-1}}{s} 
	+ V_{s_{t - 1}}(s) \right\}
	~~\text{with}~~s_0 = w_0,
	\end{equation*}
	we  rearrange the regret as
	\begin{align}\label{eq:reg-decomp}
	\eta\sum_{t=0}^T\inner{g(z_t)}{z_t - u} & \le 
	\sum_{t=0}^T
	\inner{\clip(\eta\tilde{g}(z_t))}{z_t-u} +
	\sum_{t =0}^T\inner{\eta \tilde{\Delta}_t}{s_t - u} + 
	\sum_{t=0}^T\inner{\eta \tilde{\Delta}_t}{z_t - s_t},
	\end{align}
	and bound each term in turn.

	We first apply Lemma~\ref{lem:mirror-descent} with $Q=0$ and 
	$\gamma_t = \clip(\eta \tilde{g}(z_t))$, 
	using~\eqref{eq:strong-convexity-local} 
	to conclude that
	\begin{equation}\label{eq:mirror-descent-one-step}
	\sum\limits_{t=0}^T
	\inner{\clip(\eta\tilde{g}(z_t))}{z_t-u}
	\le V_{z_0}(u) + 
	\eta^2\sum\limits_{t=0}^T\norm{\tilde{g}(z_t)}_{z_t}^2,~\mbox{for
		all } u\in\zset.
	\end{equation}	
	Next, we apply 
	Lemma~\ref{lem:mirror-descent} again, this time with 
	$\gamma_t=\halfs\eta 
	\tilde{\Delta}_t$, to obtain  the regret bound
	\begin{flalign}\label{eq:ghost-regret-1}
	\sum_{t=0}^T  \inners{\eta \tilde{\Delta}_t}{s_t-u}
	&\le 
	2V_{z_0}(u)
	+\sum_{t=0}^T 	\left\{\inner{\eta \tilde{\Delta}_t}{s_t-s_{t+1}} - 
	2V_{s_t}(s_{t+1})\right\} \nonumber \\
	& \le  2V_{z_0}(u) +\eta^2\sum_{t=0}^T \left\{ 
	\norm{\tilde{g}(z_t)}_{s_t}^2 
	+ \half\norm{g(z_t)}_*^2\right\},
	\end{flalign} 
	for all $u\in\zset$, where we used 
	\begin{flalign*}
	\inner{\eta \tilde{\Delta}_t}{s_t-s_{t+1}} - 
	2V_{s_t}(s_{t+1})
	\le &\inner{\eta g(z_t)}{s_t-s_{t+1}} - 
	V_{s_t}(s_{t+1}) \\&+ |\inner{\clip(\eta 
		\tilde{g}(z_t))}{s_t-s_{t+1}}| - 
	V_{s_t}(s_{t+1}),
	\end{flalign*}
	and then appealed to the bounds~\eqref{eq:strong-convexity} 
	and~\eqref{eq:strong-convexity-local} in the definition of the local norm setup. Now, substituting~\eqref{eq:mirror-descent-one-step} and~\eqref{eq:ghost-regret-1} into~\eqref{eq:reg-decomp}, maximizing over $u$, and taking an expectation, we obtain 
	\begin{equation}\label{eq:tookexpect}
	\begin{aligned}
	\E \sup_{u\in\zset} \sum_{t =0}^T\inner{\eta g(z_t)}{z_t - u}
	&\le 3\Theta + \eta^2 \E \sum_{t=0}^T
	\left\{   \norm{\tilde{g}(z_t)}_{z_t}^2  + \norm{\tilde{g}(z_t)}_{s_t}^2 
	+ \halfs\norm{g(z_t)}_*^2 \right\} \\
	&+ \E \sum_{t = 0}^T \inner{\eta\tilde{\Delta}_t}{z_t - 
		s_t}.
	\end{aligned}
	\end{equation}
	To bound the last term we use the fact that 
	$g(z_t)= \E\left[\tilde{g}(z_t) \mid z_t, s_t\right]$ (which follows from 
	the first part of Definition~\ref{def:sublinear}). We then write
	\begin{equation}\label{eq:ghost-regret-2}
	\left|\E \inner{\eta \tilde{\Delta}_t}{z_t - s_t}\right|
	\le \E \left|\inner{\eta\tilde{g}(z_t)-\clip(\eta\tilde{g}(z_t))}{z_t - 
		s_t}\right|
	\le \eta^2\norm{\tilde{g}(z_t)}_{z_t}^2 + 
	\eta^2\norm{\tilde{g}(z_t)}_{s_t}^2,
	\end{equation}
	where the first inequality is by Jensen's inequality, and the last is due to the 
	property~\eqref{eq:clipping-distortion} of the local norm setup. Substituting \eqref{eq:ghost-regret-2} into \eqref{eq:tookexpect}, we obtain %
	\begin{equation*}
	\E \sup_{u\in\zset} \sum_{t =0}^T\inner{\eta g(z_t)}{z_t - u}
	\le 3\Theta + \eta^2 \E \sum_{t=0}^T
	\left\{  2 \norm{\tilde{g}(z_t)}_{z_t}^2  + 2\norm{\tilde{g}(z_t)}_{s_t}^2 
	+ \half\norm{g(z_t)}_*^2 \right\}.
	\end{equation*}
	Finally, using the second moment bound of local gradient estimator 
	(Definition~\ref{def:sublinear}) and its consequence 
	Lemma~\ref{lem:sublinear-implication}, we may bound each of the 
	expected squared norm terms by $L^2$. Dividing through by $\eta (T+1)$ 
	gives
	\begin{equation*}
	\E\,\sup_{u\in\zset}\left[\frac{1}{T+1}\sum_{t=0}^T
	\inner{g(z_t)}{z_t-u} \right]
	\le \frac{3\Theta}{\eta (T+1)}+\frac{9\eta L^2}{2}.
	\end{equation*}
	Our choices $\eta=\frac{\epsilon}{9L^2}$ and  
	$T\ge\tfrac{6\Theta}{\eta\eps}$ imply that the right hand side is at most 
	$\epsilon$, as required.
\end{proof}

\subsection{Proof of Proposition~\ref{prop:outerloopproof}}
	
\restateouterloopproof*
\begin{proof}
	For some iteration $k$, we have by the optimality conditions on $z_k^\star$ that
	\begin{equation*}
	\inner{g(z_{k-1/2})}{z_k^\star - u} \leq \alpha\left(V_{z_{k - 1}}(u) - V_{z_k^\star}(u) 
	- 
	V_{z_{k - 1}}\left(z_k^\star\right)\right)~~\forall u\in\zset.
	\end{equation*}
	Summing over $k$, writing \arxiv{$\inner{g(z_{k-1/2})}{z_{k}^\star - u} = 
		\inner{g(z_{k-1/2})}{z_{k-1/2} - u} 
		- \inner{g(z_{k-1/2})}{z_{k-1/2} - z_{k}^\star}$,}
	\notarxiv{
		\begin{equation*}
		\inner{g(z_{k-1/2})}{z_{k} - u} = 
		\inner{g(z_{k-1/2})}{z_{k-1/2} - u} 
		- \inner{g(z_{k-1/2})}{z_{k-1/2} - z_{k}}
		\end{equation*}
	} and rearranging yields
	\begin{equation}
	\label{eq:outerloop}
	\begin{aligned}
	\sum_{k=1}^K\inner{g(z_{k-1/2})}{z_{k-1/2} - u} \leq & \alpha V_{z_0}(u)  + \sum_{k=1}^K\alpha\left(V_{z_k}(u)-V_{z_k^\star}(u)\right) \\
	& + \sum_{k=1}^K \left( \inner{g(z_{k-1/2})}{z_{k-1/2} - z_k^\star}- \alpha 
	V_{z_{k - 1}}\left(z_k^\star\right)\right),
	\end{aligned}
	\end{equation}
	for all $u\in\zset$. Since $z_0$ minimizes $r$, the first term is bounded by $V_{z_0}(u) \le r(u)-r(z_0) \le \Theta$. The second term is bounded by the definition of $z_k$ in Algorithm~\ref{alg:outerloop}:
	\[\sum_{k = 1}^K \alpha\left(V_{z_k}(u) - V_{z_k^{\star}}(u)\right) \le K(\vepsout-\vepsi).\] 
	Thus, maximizing \eqref{eq:outerloop}
	over $u$ and then taking an expectation yields
	\begin{equation*}
	\begin{aligned}
	\E\max_{u\in\zset}\sum_{k=1}^K\inner{g(z_{k-1/2})}{z_{k-1/2} - u} &\leq 
	\alpha\Theta  + K(\vepsout-\vepsi) \\
	&+ 
	\sum_{k=1}^K \E \left[ \inner{g(z_{k-1/2})}{z_{k-1/2} - z_{k}^\star}- \alpha 
	V_{z_{k - 1}}\left(z_k^\star\right)\right].
	\end{aligned}
	\end{equation*}
	Finally, by Definition~\ref{def:alphaprox}, $\E \left[ 
	\inner{g(z_{k-1/2})}{z_{k-1/2} - z_{k}^\star}- \alpha 
	V_{z_{k - 1}}(z_k^\star)\right] \le \vepsi$ for every $k$, and the result follows 
	by dividing by $K$.
\end{proof}

\subsection{Proof of Proposition~\ref{prop:innerloopproof}}

We provide a convergence result for the variance-reduced stochastic mirror descent scheme in Algorithm~\ref{alg:innerloop-approx}. We first state the following helper bound which is an application of Lemma~\ref{lem:varshrink}. It is immediate from the variance bound of local-centered estimators (Property 2 of Definition~\ref{def:vr}) and the fact that all local norms (whether the domains are balls or simplices) are quadratic. 
\begin{lemma}\label{lem:centered-var}
	For any $w \in \zset$, $(L,\epsilon)$-centered-local estimator $\tilde{g}_{w_0}$ satisfies
	$$\E \norm{\tilde{g}_{w_0}(z) - g(z)}_{w}^2\le L^2 V_{w_0}(z).$$
\end{lemma}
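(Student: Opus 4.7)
The plan is to reduce the claim to the defining property of an $L$-centered-local estimator (Definition~\ref{def:vr}, item 2), which bounds $\E\norm{\tilde{g}_{w_0}(z)-g(w_0)}_w^2$ rather than $\E\norm{\tilde{g}_{w_0}(z)-g(z)}_w^2$. The bridge between these two quantities is the unbiasedness property $\E\tilde{g}_{w_0}(z)=g(z)$ (item 1 of Definition~\ref{def:vr}), together with a ``variance shrinkage'' identity for quadratic norms.

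Concretely, I would first observe that each local norm $\norm{\cdot}_w$ in the setups of Table~\ref{tab:local-norm} is induced by a positive semidefinite quadratic form (the Euclidean norm in Euclidean blocks, and the reweighted quadratic $\sum_i w_i \delta_i^2$ in simplex blocks). For any such seminorm and any random vector $X$ with finite second moment, the standard bias-variance decomposition gives
\begin{equation*}
\E\norm{X-\E X}_w^2 \;=\; \E\norm{X-c}_w^2 - \norm{\E X-c}_w^2 \;\le\; \E\norm{X-c}_w^2
\end{equation*}
for every deterministic $c$. This is precisely the content of the cited Lemma~\ref{lem:varshrink}, applied to the quadratic form defining $\norm{\cdot}_w$.

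Applying this shrinkage with $X=\tilde{g}_{w_0}(z)$, $c=g(w_0)$, and noting $\E X=g(z)$ by unbiasedness, we obtain
\begin{equation*}
\E\norm{\tilde{g}_{w_0}(z)-g(z)}_w^2 \;\le\; \E\norm{\tilde{g}_{w_0}(z)-g(w_0)}_w^2 \;\le\; L^2\, V_{w_0}(z),
\end{equation*}
where the last inequality is exactly item 2 of Definition~\ref{def:vr}. This yields the claim.

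The argument is almost entirely mechanical; the only step that requires attention is verifying that all three local norm setups in Table~\ref{tab:local-norm} indeed yield quadratic $\norm{\cdot}_w$, so that Lemma~\ref{lem:varshrink} applies. This is immediate from inspection of the table (each entry is a weighted $\ell_2$-type expression), so no obstacle is expected. The proof will therefore be at most a few lines long.
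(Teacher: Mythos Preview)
Your proposal is correct and essentially identical to the paper's own argument: the paper states the lemma as an immediate consequence of Lemma~\ref{lem:varshrink} (the variance-shrinkage inequality for quadratic norms) applied to $X=\tilde{g}_{w_0}(z)-g(w_0)$, together with unbiasedness and Property~2 of Definition~\ref{def:vr}. Your write-up makes the same deduction explicit, with the same ingredients in the same order.
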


\begin{lemma}
\label{lem:varshrink}
Let $\norm{\cdot}_D$ be a quadratic norm in a diagonal matrix, e.g. for some $D = \diag(d)$ and $d \geq 0$ entrywise, let $\norm{x}_D^2 = \sum d_i x_i^2$. Then, if $X$ is a random vector, we have
$$\E\norm{X - \E[X]}_D^2 \leq \E\norm{X}_D^2.$$
\end{lemma}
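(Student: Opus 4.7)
The plan is to prove the bound by expanding the quadratic norm coordinatewise and then applying the standard variance identity to each coordinate separately. Since $D = \diag(d)$ with $d \geq 0$ entrywise, the squared norm decomposes as a nonnegative linear combination of squared coordinates, which lets us reduce to the one-dimensional case.

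First I would write
\[
\E \norm{X - \E[X]}_D^2 = \sum_i d_i \, \E\left[(X_i - \E[X_i])^2\right] = \sum_i d_i \,\mathrm{Var}(X_i),
\]
where the first equality uses the definition of the quadratic norm in diagonal matrix $D$ (and linearity of expectation), and the second is the definition of variance of the scalar random variable $X_i$. Next I would apply the elementary identity $\mathrm{Var}(X_i) = \E[X_i^2] - (\E[X_i])^2 \le \E[X_i^2]$, valid because $(\E[X_i])^2 \ge 0$. Since each $d_i$ is nonnegative, summing this inequality weighted by $d_i$ preserves its direction, yielding
\[
\sum_i d_i \,\mathrm{Var}(X_i) \le \sum_i d_i \,\E[X_i^2] = \E \norm{X}_D^2,
\]
where the final equality again uses the definition of $\norm{\cdot}_D$. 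Chaining these gives the claim.

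There is no real obstacle here: the lemma is just a coordinate-wise restatement of the fact that variance is bounded above by second moment, which is why only a couple of lines are required. The only point worth being careful about is ensuring that the argument uses $d_i \ge 0$ (so that the pointwise inequality survives the sum); this is explicitly given in the hypothesis, so the proof is complete.
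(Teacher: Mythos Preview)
Your proof is correct and essentially the same as the paper's: both reduce to the variance identity $\E\norm{X - \E[X]}_D^2 = \E\norm{X}_D^2 - \norm{\E[X]}_D^2 \le \E\norm{X}_D^2$. The paper applies this directly to the norm $\norm{\cdot}_D$ in one line, whereas you expand coordinatewise first and then sum, but the content is identical.
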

\begin{proof}
This follows from the definition of variance:
$$ \E\norm{X - \E[X]}_D^2 = \E\norm{X}_D^2 - \norm{\E[X]}_D^2 \leq  \E\norm{X}_D^2.$$
\end{proof}

\restateinnerloop*

\begin{proof}
	
	For any $u\in\zset$, and defining $\tilde{\Delta}_t \defeq  \tilde{g}(\hat{w}_t)- g(w_0)$ and $\Delta_t \defeq g(\hat{w}_t)- g(w_0)$, we have
	\begin{equation}\label{eq:boundthreeterms}
	\begin{aligned}
	\sum_{t  \in [T]}\inner{\eta g(w_t)}{w_t-u} &=  \sum_{t \in [T]} \inner{\clip(\eta \tilde{\Delta}_t)+\eta g(w_0)}{w_t-u} + \sum_{t \in [T]} \inner{\eta \Delta_t-\clip(\eta\tilde{\Delta}_t)}{w_t-u} \\
	&+ \sum_{t \in [T]} \inner{\eta g(w_t) - \eta g(\hat{w}_t)}{w_t-u}.
	\end{aligned}
	\end{equation}

	We proceed to bound the three terms on the right hand side of \eqref{eq:boundthreeterms} in turn. For the first term, recall the guarantees for the ``ideal'' iterates of Algorithm~\ref{alg:innerloop-approx}, 
	\begin{align*}
	w_t^\star = 
	\argmin_{w\in\zset}\left\{\inner{\clip(\eta \tilde{\Delta}_t)+\eta g(w_0)}{w} + \frac{\alpha\eta}{2}V_{w_0}(w) + V_{w_{t - 1}}(w) 
	\right\}.
	\end{align*}
	By using the optimality conditions of these iterates, defining $Q(z) \defeq \inner{\eta g(w_0)}{z} + 
	\tfrac{\alpha\eta}{2}V_{w_0}(z)$, $\gamma_t \defeq \clip(\eta \tilde{\Delta}_t)$, and defining for notational convenience $w_{T + 1}^\star \defeq u$,
	\begin{equation}\label{eq:idealmirror}
	\begin{aligned}
	\sum_{t \in [T]}\inner{\gamma_{t-1} + \grad Q(w_t^\star )}{w_t^\star -u} & \le \sum_{t\in[T]}\inner{-\nabla V_{w_{t-1}}(w_t^\star)}{w_t^\star-u}\\
	& = \sum_{t\in[T]}\left(V_{w_{t-1}}(u)-V_{w_t^{\star}}(u)-V_{w_{t-1}}(w_t^\star)\right)\\
	& = V_{w_0}(u) + \sum_{t \in [T]}\left(V_{w_t}(u)-V_{w_t^\star }(u)\right) - \sum_{t = 0}^{T} V_{w_{t}}(w_{t+1}^\star ).
	\end{aligned}
	\end{equation}
	We thus have the chain of inequalities, recalling $\gamma_0 = 0$,
	\begin{equation}\label{eq:prop-vr-first-1}
	\begin{aligned}
	& \sum_{t \in [T]}\inner{\clip(\eta \tilde{\Delta}_t)+\eta g(w_0)}{w_t-u}+\frac{\alpha\eta}{2}\sum_{t\in [T]}\inner{\grad V_{w_0}(w_t^\star)}{w_t^\star -u}\\
	= & \sum_{t = 0}^{T}\inner{\gamma_t} {w_t-u}+\sum_{t\in [T]}\inner{\grad Q(w_t^\star)}{w_t^\star -u}+ \sum_{t\in[T]}\inner{\eta g(w_0)}{w_t-w_t^\star}\\
	\stackrel{(i)}{\le} & 
	V_{w_0}(u) + \sum_{t\in[T]}\left(V_{w_t}(u)-V_{w_t^\star }(u)\right) + \sum_{t=0}^{T} \left(\langle\gamma_t,w_t-w_{t+1}^\star \rangle - V_{w_{t}}(w_{t+1}^\star )\right) + \sum_{t\in[T]}\inner{\eta g(w_0)}{w_t-w_t^\star}\\
	\stackrel{(ii)}{\le} & V_{w_0}(u) + 2\eta\epsaprx T + \sum_{t=0}^{T} \left( \langle\clip(\eta\tilde{\Delta}_t),w_t-w_{t+1}^\star \rangle-V_{w_t}(w_{t+1}^\star)\right)
	\stackrel{(iii)}{\le}  V_{w_0}(u)+2\eta\epsaprx T+\sum_{t=0}^{T}\eta^2\norm{\tilde{\Delta}_t}_{w_t}^2.
	\end{aligned}
	\end{equation}
	Here, $(i)$ was by rearranging \eqref{eq:idealmirror} via the equality
	\[\sum_{t \in [T]} \inner{\gamma_{t - 1}}{w_t^\star - u} = \sum_{t = 0}^T \inner{\gamma_t}{w_t - u} - \sum_{t = 0}^T \inner{\gamma_t}{w_t - w_{t + 1}^\star},\] 
	$(ii)$ was by the conditions $\max_u \left[V_{w_t}(u)-V_{w_t^\star}(u)\right]\le\eta\epsaprx$ and $\norm{w_t-w_t^\star}\le \tfrac{\epsaprx}{LD}$ satisfied by the iterates, and $(iii)$ was by the property of clipping~\eqref{eq:strong-convexity}, as defined in the problem setup. Now by rearranging and using the three-point property of Bregman divergence (i.e.\ $\langle -\nabla V_{w'}(w),w-u\rangle = V_{w'}(u)-V_{w}(u)-V_{w'}(w)$), it holds that
	\begin{equation}\label{eq:prop-vr-first-2}
	\begin{aligned}
	\sum_{t \in [T]}\inner{\clip(\eta \tilde{\Delta}_t)+\eta g(w_0)}{w_t-u}\le & 
	V_{w_0}(u)+2\eta\epsaprx T+\eta^2\sum_{t=0}^{T }\norm{\tilde{\Delta}_t}_{w_t}^2  +\frac{\alpha\eta}{2}\sum_{t\in[T]}\left( V_{w_0}(u)-V_{w_0}(w_t^\star)\right)\\
	\le & V_{w_0}(u)+3\eta\epsaprx T+\eta^2\sum_{t=0}^T\norm{\tilde{\Delta}_t}_{w_t}^2  +\frac{\alpha\eta}{2}\sum_{t\in[T]}\left( V_{w_0}(u)-V_{w_0}(\hat{w}_t)\right),
	\end{aligned}
	\end{equation}
	where the second inequality follows from the condition $V_{w_0}(\hat{w}_t)-V_{w_0}(w_t^\star)\le\frac{2\epsaprx}{\alpha}$ satisfied by iterates of Algorithm~\ref{alg:innerloop-approx}. To bound the second term of \eqref{eq:boundthreeterms}, we define the ghost iterate sequnce $\{s_t\}$ by
	\begin{equation*}
	s_t = \argmin_{s\in\zset}\left\{\frac{1}{2}\inners{ \eta \Delta_{t-1}-\clip(\eta\tilde{\Delta}_{t-1})}{s} + 
	V_{s_{t - 1}}(s) 
	\right\}
	~~\text{with}~~s_0 = w_0.
	\end{equation*}
	Applying Lemma~\ref{lem:mirror-descent} with $Q=0$ and 
	$\gamma_t=\tfrac{1}{2} (\eta \Delta_t-\clip
	(\eta\tilde{\Delta}_t))$, and observing that again $\gamma_0 = 0$, 
	\begin{align*}
	& \sum_{t\in[T]}  \inners{\eta \Delta_t-\clip
		(\eta\tilde{\Delta}_t)}{s_t-u}\\
	\le & 
	2V_{w_0}(u) + \sum_{t=0}^T\left\{\langle \eta \Delta_t-\clip
	(\eta\tilde{\Delta}_t) , s_t-s_{t+1}\rangle-2V_{s_t}(s_{t+1})\right\}\\
	\le &  2V_{w_0}(u) + \eta^2\sum_{t=0}^T \left(\norm{\Delta_t}_{*}^2 + \norm{\tilde{\Delta}_t}_{s_t}^2\right).
	\end{align*}
Here, we used properties \eqref{eq:strong-convexity} and \eqref{eq:strong-convexity-local}. Consequently,
	\begin{equation}\label{eq:prop-vr-second-1}
	\begin{aligned}
	& \sum_{t\in[T]} \inner{\eta \Delta_t-\clip
		(\eta\tilde{\Delta}_t)}{w_t-u}\\
	= &  \sum_{t\in[T]}  \inner{\eta \Delta_t-\clip
		(\eta\tilde{\Delta}_t)}{w_t-s_t} + \sum_{t\in[T]}\inner{\eta \Delta_t-\clip
		(\eta\tilde{\Delta}_t)}{s_t-u}\\
	\le &\  2 V_{w_0}(u) + \eta^2\sum_{t=0}^T \left(\norm{\Delta_t}_{*}^2 + \norm{\tilde{\Delta}_t}_{s_t}^2\right) + \sum_{t\in[T]}  \inner{\eta \Delta_t-\clip
		(\eta\tilde{\Delta}_t)}{w_t-s_t}.
	\end{aligned}
	\end{equation}
To bound the third term of \eqref{eq:boundthreeterms}, we use the condition $\norm{w_t - \hat{w}_t} \le \tfrac{\epsaprx}{LD}$ which implies
	\begin{equation}\label{eq:prop-vr-third-1}
	\begin{aligned}
	\sum_{t\in[T]}\langle \eta g(w_t)-\eta g(\hat{w}_t),w_t-u\rangle & \le \sum_{t\in[T]}\norm{\eta g(w_t)-\eta g(\hat{w}_t)}_*\norm{w_t-u}\le 2\eta\epsaprx T.
	\end{aligned}
	\end{equation}
Combining our three bounds~\eqref{eq:prop-vr-first-2},~\eqref{eq:prop-vr-second-1}, and~\eqref{eq:prop-vr-third-1} in the context of \eqref{eq:boundthreeterms}, using $\hat{w}_0=w_0$ and $\tilde{g}(w_0)=g(w_0)$, and finally dividing through by $\eta T$, we obtain
	\begin{equation}\label{eq:prop-vr-allthree-1}
	\begin{aligned}
	& \frac{1}{T}\sum_{t\in[T]}\inner{ g(w_t)}{w_t-u} - \left(\frac{3}{\eta T}+\frac{\alpha}{2}\right)V_{w_0}(u) \\
	\le &\  5\epsaprx+\frac{1}{T}\sum_{t\in[T]}\left(\eta\norm{\tilde{\Delta}_t}_{w_t}^2 + \eta\norm{\tilde{\Delta}_t}_{s_t}^2 + \eta\norm{\Delta_t}_{*}^2 + \inner{\Delta_t-\frac{1}{\eta}\clip(\eta\tilde{\Delta}_t)}{w_t-s_t}-\frac{\alpha}{2}V_{w_0}(\hat{w}_t)\right).
	\end{aligned}
	\end{equation}	
Since $T \ge \tfrac{6}{\alpha\eta}$, taking a supremum over $u \in \zset$ in \eqref{eq:prop-vr-allthree-1} and then an expectation yields
\begin{equation}\label{eq:secondlinelezero}
\begin{aligned}
&\E \sup_{u \in \zset} \left[\frac{1}{T}\sum_{t\in[T]}\inner{ g(w_t)}{w_t-u} - \alpha V_{w_0}(u)\right] \le 5\epsaprx \\
+ &\frac{1}{T} \E\left[\sum_{t \in [T]}  \eta\norm{\tilde{\Delta}_t}_{w_t}^2 + \eta\norm{\tilde{\Delta}_t}_{s_t}^2 + \eta\norm{\Delta_t}_{*}^2 + \inner{\Delta_t-\frac{1}{\eta}\clip(\eta\tilde{\Delta}_t)}{w_t-s_t}-\frac{\alpha}{2}V_{w_0}(\hat{w}_t)\right].
\end{aligned}
\end{equation} 
We will show the second line of \eqref{eq:secondlinelezero} is nonpositive. To do so, observe for each $t \in [T]$, by the property~\eqref{eq:clipping-distortion} of $\clip(\cdot)$, since conditional on $w_t$, $s_t$, $\tilde{\Delta}_t$ is unbiased for deterministic $\Delta_t$,
\begin{equation}\label{eq:prop-vr-allthree-var-1}
\begin{aligned}
\left|\E\inner{\Delta_t-\frac{1}{\eta}\clip(\eta\tilde{\Delta}_t)}{w_t-s_t} \right| = \left|\E\inner{\tilde{\Delta}_t-\frac{1}{\eta}\clip(\eta\tilde{\Delta}_t)}{w_t-s_t} \right| \le \eta \norm{\tilde{\Delta}_t}_{w_t}^2+\eta \norm{\tilde{\Delta}_t}_{s_t}^2.
\end{aligned}
\end{equation}
Finally, by using property 2 of the centered-local estimator $\tilde{\Delta}_t$, as well as Remark~\ref{rem:vr-jensen}, we have for each $t \in [T]$,
	\begin{equation}\label{eq:prop-vr-allthree-var-2}
	\begin{aligned}
	\E\left[\eta\norm{\tilde{\Delta}_t}_{w_t}^2\right]\le \eta L^2 V_{w_0}(\hat{w}_t),\;\E\left[\eta\norm{\tilde{\Delta}_t}_{s_t}^2\right]\le \eta L^2 V_{w_0}(\hat{w}_t),\text{ and } \eta\norm{\Delta_t}_{*}^2 \le \eta L^2 V_{w_0}(\hat{w}_t).
	\end{aligned}
	\end{equation}
Using bounds \eqref{eq:prop-vr-allthree-var-1} and \eqref{eq:prop-vr-allthree-var-2} in \eqref{eq:secondlinelezero}, as well as $\eta\le \tfrac{\alpha}{10L^2}$,
	\begin{equation}\label{eq:prop-vr-allthree-2}
\E \sup_{u \in \zset} \left[\frac{1}{T}\sum_{t\in[T]}\inner{ g(w_t)}{w_t-u} - \alpha V_{w_0}(u)\right] \le 5\epsaprx.
	\end{equation}
	For the final claim, denote the true average iterate by $\bar{w}\defeq\frac{1}{T}\sum_{t\in[T]}w_t$. We have $\forall u\in\zset$,
	\begin{align*}
	\inner{g(\tilde{w})}{\tilde{w} - u} & \stackrel{(i)}{=} -\inner{g(\tilde{w})}{u} = \inner{g(\bar{w})-g(\tilde{w})}{u}+\inner{g(\bar{w})}{\bar{w}-u}\\
	& \stackrel{(ii)}{\le} \epsaprx+\inner{g(\bar{w})}{\bar{w}-u}\\
	&= \frac{1}{T}\sum_{t\in[T]}\inner{g(w_t)}{w_t-u}+\epsaprx.
	\end{align*}
	Here, $(i)$ used the fact that linearity of $g$ gives $\inner{g(z)}{z}=0$, $\forall z\in\zset$, and $(ii)$ used H\"older's inequality $\inner{g(\bar{w})-g(\tilde{w})}{u}\le \norm{{g(\bar{w})-g(\tilde{w})}}_*\norm{u}\le 2LD\norm{\tilde{w}-\bar{w}}\le \epsaprx$ following from the approximation guarantee $\norm{\tilde{w}-\bar{w}}\le\tfrac{\epsaprx}{2LD}$. Combining with \eqref{eq:prop-vr-allthree-2} yields the conclusion, as $6\epsaprx = \vepsi$.
	
\end{proof}
\section{Deferred proofs for sublinear methods}
\label{app:sublinear-proofs}

\subsection{$\elltwo$ sublinear coordinate method}
\label{ssec:l2l2sub}
\paragraph{Assumptions.} The algorithm in this section will assume access to entry queries, $\ell_1$ norms of rows and columns, and $\ell_1$ sampling distributions for rows and columns. Further, it assumes the ability to sample a row or column proportional to its squared $\ell_1$ norm; given access to all $\ell_1$ norms, the algorithm may spend $O(m + n)$ constructing these sampling oracles in $O(m + n)$ time, which does not affect its asymptotic runtime. We use the $\ell_2$-$\ell_2$ local norm setup (Table~\ref{tab:local-norm}). We define

\begin{equation}\label{eq:l22def}\lttco \defeq  \sqrt{\sum\limits_{i\in[m]}\lones{\ai}^2+\sum\limits_{j\in[n]}\lones{\aj}^2}.\end{equation}

\subsubsection{Gradient estimator}
For $z\in\ball^n\times\ball^m$, we specify two distinct choices of 
sampling distributions $p(z), q(z)$ which obtain the optimal Lipschitz 
constant. The first one is an oblivious distribution: %
\begin{equation}\label{eq:l2-improved-prob1-def}
p_{ij}(z)\defeq \frac{\lones{\ai}^2}{\sum_{k\in[m]}\lones{\ai[k]}^2}\cdot\frac{|A_{ij}|}{\lones{\ai}}
~~\mbox{and}~~\ 
q_{ij}(z)\defeq \frac{\lones{\aj}^2}{\sum_{k\in[n]}\lones{\aj[k]}^2}\cdot\frac{|A_{ij}|}{\lones{\aj}}.
\end{equation}

The second one is a dynamic distribution:
\begin{equation}\label{eq:l2-improved-prob2-def}
p_{ij}(z)\defeq \frac{{[z\y]_i}^2}{\ltwo{z\y}^2}\cdot\frac{|A_{ij}|}{\lones{\ai}}
~~\mbox{and}~~\ 
q_{ij}(z)\defeq \frac{{[z\x]_j}^2}{\ltwo{z\x}^2}\cdot\frac{|A_{ij}|}{\lones{\aj}}.
\end{equation}
We now state the local properties of each estimator.

\begin{restatable}{lemma}{restateestpropltwo}
\label{lem:est-prop-l2-improved}
In the $\elltwo$ setup, estimator
\eqref{eq:estimate-l1} using the sampling distribution in~\eqref{eq:l2-improved-prob1-def} or~\eqref{eq:l2-improved-prob2-def} is an $\lttco$-local estimator.
\end{restatable}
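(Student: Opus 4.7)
\textbf{Proof plan for Lemma~\ref{lem:est-prop-l2-improved}.} The plan is to check the two requirements of Definition~\ref{def:sublinear} separately. Unbiasedness of $\tilde{g}$ is immediate for any choice of $p,q$ with full support on the nonzero entries of $A$: a direct calculation (as noted right after \eqref{eq:estimate-l1}) shows $\E[\tilde{g}(z)] = g(z)$. In the $\elltwo$ setup Table~\ref{tab:local-norm} gives $\|\delta\|_w = \|\delta\|_2$ for all $w$, so the local-norm second moment reduces to the ordinary $\ell_2$ second moment and does not depend on $w$; in particular, we need only bound $\E\|\tilde{g}\x(z)\|_2^2 + \E\|\tilde{g}\y(z)\|_2^2$ by $\lttcop^2$ with $\lttco$ defined in \eqref{eq:l22def}.

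The core calculation is the variance of the $\xset$ block, which for any choice of $p(z)$ equals
\[
\E\|\tilde{g}\x(z)\|_2^2 \;=\; \sum_{i\in[m],\,j\in[n]} \frac{A_{ij}^2\,[z\y]_i^2}{p_{ij}(z)}.
\]
For the oblivious distribution \eqref{eq:l2-improved-prob1-def}, substituting $p_{ij}(z) = \tfrac{\|\ai\|_1^2}{\sum_k\|\ai[k]\|_1^2}\cdot\tfrac{|A_{ij}|}{\|\ai\|_1}$ and using $\sum_j |A_{ij}| = \|\ai\|_1$ yields
\[
\E\|\tilde{g}\x(z)\|_2^2 \;=\; \Big(\sum_{k\in[m]}\|\ai[k]\|_1^2\Big)\,\|z\y\|_2^2 \;\le\; \sum_{k\in[m]}\|\ai[k]\|_1^2,
\]
since $z\y\in\ball^m$. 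For the dynamic distribution \eqref{eq:l2-improved-prob2-def}, $p_{ij}(z)=\tfrac{[z\y]_i^2}{\|z\y\|_2^2}\cdot\tfrac{|A_{ij}|}{\|\ai\|_1}$, the $[z\y]_i^2$ factors cancel and the same manipulation gives
\[
\E\|\tilde{g}\x(z)\|_2^2 \;=\; \|z\y\|_2^2\sum_{i\in[m]}\|\ai\|_1^2 \;\le\; \sum_{i\in[m]}\|\ai\|_1^2.
\]
Thus both distributions satisfy the same $\xset$-block bound.

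By the symmetric calculation (swapping the roles of rows/columns and of $z\x\in\ball^n$), both choices of $q$ yield $\E\|\tilde{g}\y(z)\|_2^2 \le \sum_{j\in[n]}\|\aj\|_1^2$. Adding the two blocks and comparing with \eqref{eq:l22def} gives $\E\|\tilde{g}(z)\|_w^2 \le \lttcop^2$, which is precisely the required bound. There is no real obstacle here beyond bookkeeping; the only subtlety is recognizing that in the $\elltwo$ geometry the ``local'' norm coincides with the $\ell_2$ norm, so the worst case over $w\in\zset$ in Definition~\ref{def:sublinear} plays no role and no clipping-based correction is needed.
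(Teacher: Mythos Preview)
Your proposal is correct and follows essentially the same approach as the paper: note unbiasedness, observe that in the $\elltwo$ setup the local norm is just $\|\cdot\|_2$ (so the $w$ dependence vanishes), compute $\E\|\tilde{g}\x(z)\|_2^2 = \sum_{i,j} A_{ij}^2[z\y]_i^2/p_{ij}(z)$ for each distribution, simplify to $\|z\y\|_2^2\sum_i\|\ai\|_1^2 \le \sum_i\|\ai\|_1^2$, and invoke symmetry for the $\yset$ block. The paper's calculation is line-for-line the same.
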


\begin{proof}
	For convenience, we restate the distributions here: they are respectively 
	\begin{equation*}
	p_{ij}(z)\defeq \frac{\lones{\ai}^2}{\sum_{k\in[m]}\lones{\ai[k]}^2}\cdot\frac{|A_{ij}|}{\lones{\ai}}
	~~\mbox{and}~~\ 
	q_{ij}(z)\defeq \frac{\lones{\aj}^2}{\sum_{k\in[n]}\lones{\aj[k]}^2}\cdot\frac{|A_{ij}|}{\lones{\aj}}
	\end{equation*}
	and
	\begin{equation*}
	p_{ij}(z)\defeq \frac{{[z\y]_i}^2}{\ltwo{z\y}^2}\cdot\frac{|A_{ij}|}{\lones{\ai}}
	~~\mbox{and}~~\ 
	q_{ij}(z)\defeq \frac{{[z\x]_j}^2}{\ltwo{z\x}^2}\cdot\frac{|A_{ij}|}{\lones{\aj}}.
	\end{equation*}
	
	Unbiasedness holds by definition. We first show the variance bound on the $x$ block for distribution~\eqref{eq:l2-improved-prob1-def}: 
	\begin{align*}
	\E\left[\ltwo{\tilde{g}\x(z)}^2\right]
	& =
	\sum_{i\in[m],j\in[n]} p_{ij}(z)\cdot \left(\frac{A_{ij}[z\y]_i}{p_{ij}(z)}\right)^2 = \sum\limits_{i\in[m],j\in[n]}\frac{A_{ij}^2[z\y]_i^2}{p_{ij}(z)}\\
	& = \sum\limits_{i\in[m],j\in[n]}\frac{|A_{ij}|}{\lones{\ai}}[z\y]_i^2\cdot\left(\sum\limits_{i\in[m]}\lones{\ai}^2\right) =
	\sum\limits_{i\in[m]}\lones{\ai}^2.
	\end{align*}
	
	Similarly, we have
	\begin{equation*}
	\E\left[\ltwo{\tilde{g}\y(z)}^2\right] \le \sum\limits_{j\in[n]}\lones{\aj}^2.
	\end{equation*}
	
	Now, we show the variance bound on the $x$ block for distribution~\eqref{eq:l2-improved-prob2-def}:
	\begin{align*}
	\E\left[\ltwo{\tilde{g}\x(z)}^2\right]
	& = 
	\sum_{i\in[m],j\in[n]} p_{ij}(z)\cdot \left(\frac{A_{ij}[z\y]_i}{p_{ij}(z)}\right)^2 = \sum\limits_{i\in[m],j\in[n]}\frac{A_{ij}^2[z\y]_i^2}{p_{ij}(z)}\\
	& = \sum\limits_{i\in[m],j\in[n]}|A_{ij}|\lones{\ai}\ltwo{z\y}^2 \le \sum\limits_{i\in[m]}\lones{\ai}^2,
	\end{align*}
	and a similar bound holds on the $y$ block.
\end{proof}
We remark that using the oblivious distribution 
\eqref{eq:l2-improved-prob1-def} saves a logarithmic factor in the runtime 
compared to the dynamic distribution, so for the implementation of all of our $\ell_2$-$\ell_2$ 
algorithms we will use the oblivious distribution. 

\subsubsection{Implementation details}
\label{ssec:implementsubltwo}
In this section, we discuss the details of how to leverage the $\IM_2$ data structure to implement the iterations of our algorithm. The algorithm we analyze is Algorithm~\ref{alg:sublinear}, using the local estimator defined in~\eqref{eq:estimate-l1}, and the distribution~(\ref{eq:l2-improved-prob1-def}). We choose
\[\eta = \frac{\epsilon}{9\lttcop^2} \text{ and } T = \left\lceil\frac{6\Theta}{\eta\eps}\right\rceil \ge\frac{54\lttcop^2}{\epsilon^2}.\]
Lemma~\ref{lem:est-prop-l2-improved} implies that our estimator satisfies the remaining requirements for Proposition~\ref{prop:sublinear}, giving the duality gap guarantee in $T$ iterations. In order to give a runtime bound, we claim that each iteration can be implemented in constant time, with $O(m + n)$ additional runtime.

\paragraph{Data structure initializations and invariants.}
At the start of the algorithm, we spend $O(m + n)$ time initializing data structures via $\IMS_2\x.\Initialize(\mathbf{0}_n, b)$, $\IMS_2\y.\Initialize(\mathbf{0}_m, c)$, where $\IMS_2\x, \IMS_2\y$ are instantiations of $\IM_2$ data structures. Throughout, we preserve the invariant that the points maintained by $\IMS_2\x, \IMS_2\y$ correspond to the $x$ and $y$ blocks of the current iterate $z_t$ at iteration $t$ of the algorithm. We note that we instantiate data structures which do not support $\Sample()$.

\paragraph{Iterations.} For simplicity, we only discuss the runtime of updating the $x$ block as the $y$ block follows symmetrically. We divide each iteration into the following substeps, each of which we show run in constant time. We refer to the current iterate by $z = (z\x, z\y)$, and the next iterate by $w = (w\x, w\y)$.\\

\noindent
\emph{Sampling.} Because the distribution is oblivious, sampling both $i$ and $j\mid i$ using precomputed data structures takes constant time. \\

\noindent
\emph{Computing the gradient estimator.} To compute $c \defeq A_{ij}[z\y]_i/p_{ij}$, it suffices to compute $A_{ij}$, $[z\y]_i$, and $p_{ij}$. Using an entry oracle for $A$ obtains $A_{ij}$ in constant time, and calling $\IMS_2\y.\Get(i)$ takes constant time. Computing $p_{ij}$ using the precomputed row norms and the values of $A_{ij}, [z\y]_i$ takes constant time.\\

\noindent
\emph{Performing the update.} For the update corresponding to a proximal step, we have
$$w\x\leftarrow \Pi_{\xset}\left(z\x-\eta \tilde{g}\x(z)\right)=\frac{z\x-\eta \tilde{g}\x(z)}{\max\{\ltwo{z\x-\eta \tilde{g}\x(z)},1\}}. $$

We have computed $\tilde{g}\x(z)$, so to perform this update, we call
\begin{align*}
& \IMS_2\x.\AddSparse(j, -\eta c);\\
& \IMS_2\x.\AddDense(-\eta); \\
& \IMS_2\x.\Scale(\max\{\IMS\x.\GetNorm(), 1\}^{-1});\\
& \IMS_2\x.\SumUp().
\end{align*}
By assumption, each operation takes constant time because we do not support $\Sample$ in our instances of $\IMS_2$, giving the desired iteration complexity. It is clear that at the end of performing these operations, the invariant that $\IMS_2\x$ maintains the $x$ block of the iterate is preserved.

\paragraph{Averaging.} After $T$ iterations, we compute the average point $\bar{z}\x$:
$$
[\bar{z}\x]_j\gets\frac{1}{T}\cdot\IMS_2\x.\GetSum(j),\forall j\in[n].
$$ 
By assumption, this takes $O(n)$ time. 

\subsubsection{Algorithm guarantee}
\begin{theorem}
\label{thm:l2l2-sublinear}
In the $\elltwo$ setup, the implementation in Section~\ref{ssec:implementsubltwo} has runtime
\begin{equation*}
O\left(\frac{\lttcop^2}{\epsilon^2} + m + n\right)
\end{equation*}
and outputs a point 
$\bar{z}\in\zset$ such that
\begin{equation*}
\E\gap(\bar{z})
\le \epsilon.
\end{equation*}
\end{theorem}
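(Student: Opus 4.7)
The plan is to follow the same two-step structure used in the proof of Theorem~\ref{thm:l1l1-sublinear}: first establish that the gap guarantee follows from the abstract mirror descent analysis, then verify the runtime by summing the per-iteration and one-off costs documented in Section~\ref{ssec:implementsubltwo}.

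For correctness, I would apply Proposition~\ref{prop:sublinear} with the $\ell_2$-$\ell_2$ local norm setup from Table~\ref{tab:local-norm}, which has $\Theta = 1$. Lemma~\ref{lem:est-prop-l2-improved} shows that the estimator $\tilde{g}$ defined via~\eqref{eq:estimate-l1} using the oblivious sampling distribution~\eqref{eq:l2-improved-prob1-def} is an $\lttco$-local estimator. The chosen step size $\eta = \epsilon/(9\lttcop^2)$ and iteration count $T = \lceil 6\Theta/(\eta\eps)\rceil \ge 54\lttcop^2/\epsilon^2$ satisfy the hypotheses of Proposition~\ref{prop:sublinear} with $L = \lttco$, so the algorithm's output $\bar{z}$ satisfies $\E\gap(\bar{z}) \le \epsilon$. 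Note that in the $\elltwo$ setup the clipping map is the identity (see Table~\ref{tab:local-norm}), so Algorithm~\ref{alg:sublinear} reduces to plain stochastic mirror descent, and the estimator need only satisfy the second moment bound in the ambient norm, which is exactly what Lemma~\ref{lem:est-prop-l2-improved} provides.

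For the runtime, I would enumerate the costs exactly as in Section~\ref{ssec:implementsubltwo}: initialization of the two $\IM_2$ instances and of the precomputed row/column $\ell_1^2$-weighted samplers costs $O(m+n)$; each of the $T$ iterations costs $O(1)$, since (i) sampling $(i,j)$ from the oblivious distribution~\eqref{eq:l2-improved-prob1-def} reduces to two independent lookups in the precomputed samplers, (ii) forming the scalar coefficient $c$ requires one entry query and one $\Get$ call, and (iii) the mirror-descent update is implemented via one $\AddSparse$, one $\AddDense$, one $\Scale$, and one $\SumUp$ call, each taking $O(1)$ time on an $\IM_2$ instance that does not support $\Sample$; the final averaging step costs $O(m+n)$. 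Summing gives $O(T) + O(m+n) = O(\lttcop^2/\epsilon^2 + m + n)$.

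Essentially no step is an obstacle here: the proof is a direct assembly of Lemma~\ref{lem:est-prop-l2-improved}, Proposition~\ref{prop:sublinear}, and the implementation accounting in Section~\ref{ssec:implementsubltwo}. The only thing worth double-checking is that every data structure call used in the iteration really is $O(1)$ in the $\IM_2$ variant without $\Sample$ support, so that the overall per-iteration cost is truly constant rather than logarithmic; this is guaranteed by the alternative $\IM_p$ implementation mentioned in the interface table of Section~\ref{ssec:interface-norm}, which trades away $\Sample$ for constant-time $\AddSparse$.
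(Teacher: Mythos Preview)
Your proposal is correct and follows exactly the same approach as the paper's proof, which simply states that the runtime follows from the discussion in Section~\ref{ssec:implementsubltwo} and correctness from Proposition~\ref{prop:sublinear}. You have merely unpacked these two references with explicit detail (the local-norm setup parameters, the verification of the step-size and iteration-count conditions, and the per-operation cost accounting), all of which is accurate.
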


\begin{proof}
	
The runtime bound follows from the discussion in Section~\ref{ssec:implementsubltwo}. The correctness follows from Proposition~\ref{prop:sublinear}.
\end{proof}
\begin{remark}Using our $\IM_2$ data structure, the $\elltwo$ algorithm of~\citet{BalamuruganB16} runs in time $O\left(\rcs\|A\|_F^2/\eps^2\right)$. Our runtime universally 
	improves upon it since 
	$$\sum\limits_{i\in[m]}\lones{\ai}^2 + \sum\limits_{j\in[n]}\lones{\aj}^2\le 2\rcs\norm{A}_{\mathrm{F}}^2.$$
\end{remark}

\subsection{$\elltwoone$ sublinear coordinate method}
\label{ssec:l2l1sub}

\paragraph{Assumptions} The algorithm in this section will assume access to every oracle listed in Section~\ref{ssec:accessmodel}. However, for a specific matrix $A$, only one of three sampling distributions will be used in the algorithm; we describe the specific oracle requirements of each distribution following their definition. We use the $\ell_2$-$\ell_1$ local norm setup (Table~\ref{tab:local-norm}). Throughout this section, we will assume that the linear term in~\eqref{eq:estimate-l1} is $g(0) =0$ uniformly. 

Finally, in this section we assume access to a weighted variant of $\IM_2$, which takes a 
nonnegative weight vector $w$ as a static parameter. $\WIM_2$ supports two modified 
operations compared to the data structure $\IM_2$: its $\GetNorm()$ operation returns $\sqrt{\sum_j [w]_j 
	[x]_j^2}$, and its $\Sample()$ returns coordinate $j$ with probability 
proportional to $[w]_j [x]_j^2$ (cf. Section~\ref{ssec:interface-norm}). We give the implementation of this 
extension in Appendix~\ref{app:ds-proofs}.

\subsubsection{Gradient estimator}
For $z\in\ball^n\times\Delta^m$ and desired accuracy $\eps>0$, we specify three distinct choices of sampling distributions $p(z), q(z)$. Each of our distributions induces an estimator with different properties.

The first one is 
\begin{equation}\label{eq:l2l1-prob1-def}
p_{ij}(z)\defeq \frac{|A_{ij}|}{\lones{\ai}}\cdot[z\y]_i
~~\mbox{and}~~\ 
q_{ij}(z)\defeq \frac{A_{ij}^2}{\norm{A}_{\mathrm{F}}^2}.
\end{equation}

The second one is
\begin{equation}\label{eq:l2l1-prob2-def}
p_{ij}(z)\defeq \frac{|A_{ij}|}{\lones{\ai}}\cdot[z\y]_i
~~\mbox{and}~~\ 
q_{ij}(z)\defeq \frac{[z\x]_j^2\cdot \1_{\{A_{ij}\neq0\}}}{\sum_{l \in [n]} \cs_l \cdot [z\x]_l^2}.
\end{equation}

Here, we let $\cs_j \le \rcs$ denote the number of nonzero elements in column $A_{:j}$. The third one is 
\begin{equation}\label{eq:l2l1-prob3-def}
p_{ij}(z)\defeq \frac{|A_{ij}|}{\lones{\ai}}\cdot[z\y]_i
~~\mbox{and}~~\ 
q_{ij}(z)\defeq \frac{\left|A_{ij} \right| \cdot 
[z\x]_j^2 }{\sum_{l \in [n]} \norm{A_{:l}}_1 \cdot [z\x]_l^2 }.
\end{equation}

For $\ltooco, \ltotco,$ and $\ltohco$ to be defined, the estimators induced by these distributions are local estimators whose guarantees depend on these constants respectively. Furthermore, these Lipschitz constants are in general incomparable and depend on specific properties of the matrix. Therefore, we may choose our definition of $\ltoco$ to be the minimum of these constants, by choosing an appropriate estimator. We now state the local properties of each estimator.

\begin{restatable}{lemma}{estpropltwolone}
\label{lem:est-prop1-l2l1}
In the $\elltwoone$ setup, estimator \eqref{eq:estimate-l1} using the sampling distributions in~\eqref{eq:l2l1-prob1-def},~\eqref{eq:l2l1-prob2-def}, or~\eqref{eq:l2l1-prob3-def} is respectively a $L_{\mathsf{co}}^{2, 1, (k)}$-local estimator, for $k \in \{1, 2, 3\}$, and
\begin{align*}
\ltooco &\defeq\sqrt{ \max_{i\in[m]}\lones{\ai}^2 + \norm{A}_\mathrm{F}^2}, \\
\ltotco &\defeq\sqrt{2\rcs\max_{i\in[m]}\ltwo{\ai}^2}, \\
\ltohco &\defeq\sqrt{\max_{i\in[m]}\lones{\ai}^2 + \left(\max_{i \in [m]} \norm{\ai}_1\right)\left(\max_{j \in [n]} \norm{\aj}_1\right) }.
\end{align*}
\end{restatable}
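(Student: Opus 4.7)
The plan is to verify both defining properties of an $L$-local estimator (Definition~\ref{def:sublinear}) for each of the three distributions in turn. Unbiasedness is immediate from the form~\eqref{eq:estimate-l1} of $\tilde{g}(z)$ together with the identity $\E\bigl[A_{i\x j\x}[z\y]_{i\x}/p_{i\x j\x}(z)\cdot e_{j\x}\bigr]=\sum_{i,j} A_{ij}[z\y]_i e_j = A^\top z\y$, and analogously for the $\yset$-block; since we assume $g(0)=0$ in this section, no further terms enter. We will therefore focus on the local second moment.

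Observe that the three distributions share the same $p_{ij}(z) = |A_{ij}|\,[z\y]_i/\lones{\ai}$ for the $\xset$-block, so we handle it once. For any $w\x\in\ball^n$, using that the $\xset$-block of the local norm is the ordinary $\ell_2$ norm,
\begin{align*}
\E\bigl[\norm{\tilde{g}\x(z)}_{w\x}^{2}\bigr]
&= \sum_{i,j} \frac{A_{ij}^2[z\y]_i^2}{p_{ij}(z)}
= \sum_{i,j} |A_{ij}|\lones{\ai}[z\y]_i
= \sum_i [z\y]_i\lones{\ai}^2
\le \max_i \lones{\ai}^{2},
\end{align*}
which plugs directly into the three target bounds.

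The bulk of the work is the $\yset$-block computation; for each distribution I will compute $\E\bigl[\norm{\tilde{g}\y(z)}_{w\y}^2\bigr] = \sum_{i,j}[w\y]_i A_{ij}^2[z\x]_j^2/q_{ij}(z)$ for arbitrary $w\y\in\Delta^m$, $z\in\zset$. For distribution~\eqref{eq:l2l1-prob1-def} this simplifies to $\lfro{A}^{2}\lones{w\y}\ltwos{z\x}^{2}\le \lfro{A}^{2}$. For distribution~\eqref{eq:l2l1-prob2-def} the $q_{ij}$ cancels the factor $[z\x]_j^2$ and produces $\bigl(\sum_l \cs_l [z\x]_l^2\bigr)\sum_i[w\y]_i\ltwos{\ai}^{2}$; bounding $\sum_l \cs_l[z\x]_l^2\le \rcs$ and $\sum_i[w\y]_i\ltwos{\ai}^{2}\le \max_i\ltwos{\ai}^{2}$ yields $\rcs\cdot\max_i\ltwos{\ai}^{2}$, and combining with the $\xset$-block bound via $\max_i\lones{\ai}^2\le \rcs\max_i\ltwos{\ai}^2$ gives $(\ltotco)^{2}$. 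For distribution~\eqref{eq:l2l1-prob3-def} the computation reduces to $\bigl(\sum_l\lones{\aj[l]}[z\x]_l^2\bigr)\sum_i[w\y]_i\lones{\ai}\le \max_j\lones{\aj}\cdot\max_i\lones{\ai}$, matching $(\ltohco)^{2}$ after adding the $\xset$-block contribution.

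Combining blocks via $\norm{\tilde{g}(z)}_{w}^{2}=\norm{\tilde{g}\x(z)}_{w\x}^{2}+\norm{\tilde{g}\y(z)}_{w\y}^{2}$ completes each of the three local estimator bounds, since the resulting sums are exactly $(L_{\mathsf{co}}^{2,1,(k)})^{2}$ for $k=1,2,3$. I do not anticipate substantial obstacles: the only subtle step is the Cauchy--Schwarz-style bound $\max_i\lones{\ai}^{2}\le \rcs\max_i\ltwos{\ai}^{2}$ (valid since each row has at most $\rcs$ nonzeros) needed to absorb the $\xset$-block contribution into $(\ltotco)^{2}$ for the second distribution, and the repeated use of $z\x\in\ball^n$, $z\y,w\y\in\Delta^m$ to bound the residual sums.
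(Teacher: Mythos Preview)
Your proposal is correct and follows essentially the same approach as the paper: compute the $\xset$-block second moment once using the shared distribution $p$, then bound the $\yset$-block second moment separately for each $q$, combining via the additive structure of the local norm. The only minor organizational difference is that you handle the $\xset$-block a single time upfront (the paper repeats or references it per distribution), and your treatment of the Cauchy--Schwarz step $\max_i\lones{\ai}^{2}\le \rcs\max_i\ltwos{\ai}^{2}$ to fold the $\xset$-block into $(\ltotco)^{2}$ is identical to the paper's.
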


\begin{proof}
	First, we give the proof for the sampling distribution~(\ref{eq:l2l1-prob1-def}). Unbiasedness holds by definition. For the $x$ block, we have the variance bound: 
	\begin{align*}
	\E\left[\ltwo{\tilde{g}\x(z)}^2\right]
	&=
	\sum_{i\in[m],j\in[n]} p_{ij}(z)\cdot \left(\frac{A_{ij}[z\y]_i}{p_{ij}(z)}\right)^2 = \sum\limits_{i\in[m],j\in[n]} |A_{ij}|\lones{\ai}[z\y]_i\\
	& = \sum\limits_{i\in[m]}\lones{\ai}^2[z\y]_i \le \max_{i\in[m]}\lones{\ai}^2.
	\end{align*}
	
	For arbitrary $w\y$, we have the variance bound on the $y$ block:
	\begin{align*}
	\E\left[\norm{\tilde{g}\y(z)}_{w\y}^2\right]
	&=
	\sum_{i\in[m],j\in[n]} q_{ij}(z)\cdot\left( [w\y]_i\cdot \left(\frac{A_{ij}[z\x]_j}{q_{ij}(z)}\right)^2 \right) = \sum\limits_{i\in[m],j\in[n]}[w\y]_i \frac{A_{ij}^2[z\x]_j^2}{q_{ij}(z)}\\
	& = \sum\limits_{i\in[m],j\in[n]}[w\y]_i[z\x]_j^2\norm{A}_\mathrm{F}^2
	\le \norm{A}_\mathrm{F}^2.
	\end{align*}	
	Next, we give the proof for the sampling distribution~(\ref{eq:l2l1-prob2-def}). Unbiasedness holds by definition. By Cauchy-Schwarz and our earlier proof, we have the variance bound for the $x$ block:
	$$\E\left[\ltwo{\tilde{g}\x(z)}^2\right] \leq \max_{i\in[m]}\lones{\ai}^2 \leq \rcs \max_{i \in [m]} \ltwo{\ai}^2.$$
	For arbitrary $w\y$, we have the variance bound on the $y$ block, where $S_i \defeq \left\{j \mid \1_{A_{ij} \neq 0} = 1\right\}$:
	\begin{align*}
	\E\left[\norm{\tilde{g}\y(z)}_{w\y}^2\right]
	&=
	\sum_{i\in[m],j\in S_i} q_{ij}(z)\cdot\left( [w\y]_i\cdot \left(\frac{A_{ij}[z\x]_j}{q_{ij}(z)}\right)^2 \right)= \sum\limits_{i\in[m],j\in S_i}[w\y]_i \frac{A_{ij}^2[z\x]_j^2}{q_{ij}(z)}\\
	& \le \sum\limits_{i\in[m],j\in S_i}[w\y]_i A_{ij}^2\rcs \le \rcs\max_{k\in[m]}\ltwo{A_{k:}}^2.
	\end{align*}	
	Finally, we give the proof for the sampling distribution~(\ref{eq:l2l1-prob3-def}). Unbiasedness and the variance bound for the $x$ block again hold. For arbitrary $w\y$, we have the variance bound on the $y$ block:
	\begin{align*}
	\E\left[\norm{\tilde{g}\y(z)}_{w\y}^2\right]
	&=
	\sum_{i\in[m],j\in [n]} q_{ij}(z)\cdot\left( [w\y]_i\cdot \left(\frac{A_{ij}[z\x]_j}{q_{ij}(z)}\right)^2 \right)= \sum\limits_{i\in[m],j\in [n]}[w\y]_i \frac{A_{ij}^2[z\x]_j^2}{q_{ij}(z)}\\
	& \le \left(\sum\limits_{i\in[m],j\in [n]}[w\y]_i |A_{ij}|\right)\left(\sum_{l \in [n]} \norm{A_{:l}}_1 [z\x]_l^2\right)\\
	&\le \left(\max_{k \in [m]} \norm{A_{k:}}_1\right)\left(\max_{l \in [n]} \norm{A_{:l}}_1\right).
	\end{align*}
\end{proof}

By using the definitions of $\ltooco, \ltotco,$ and $\ltohco$, we define the constant
\begin{equation}\label{eq:l21def}\ltoco \defeq \sqrt{\max_{i\in[m]}\lones{\ai}^2 + \min\left(\norm{A}_{\textrm{F}}^2, \rcs \max_{i \in [m]} \norm{\ai}_2^2, \left(\max_{i \in [m]} \norm{\ai}_1\right)\left(\max_{j \in [n]} \norm{\aj}_1\right)\right)}.\end{equation}\

In particular, by choosing whichever of the distributions~(\ref{eq:l2l1-prob1-def}), (\ref{eq:l2l1-prob2-def}), or (\ref{eq:l2l1-prob3-def}) yields the minimial Lipschitz constant, we may always ensure we have a $\ltoco$-local estimator. We now discuss the specific precomputed quantities each estimator requires, among those listed in Section~\ref{ssec:accessmodel}. All distributions require access to entry queries, $\ell_1$ norms of rows, and $\ell_1$ sampling distributions for rows. 

\begin{itemize}
	\item Using the sampling distribution~\eqref{eq:l2l1-prob1-def} requires additional access to $\ell_2$ sampling distributions for rows and columns and the Frobenius norm of $A$.
	\item Using the sampling distribution~\eqref{eq:l2l1-prob2-def} requires additional access to uniform sampling nonzero entries of columns.
	\item Using the sampling distribution~\eqref{eq:l2l1-prob3-def} requires additional access to $\ell_1$ norms of columns and $\ell_1$ sampling distributions for columns.
\end{itemize}

\subsubsection{Implementation details}
\label{ssec:implementsubltwoone}

In this section, we discuss the details of how to leverage the appropriate $\IM_1$ and $\IM_2$ data structures to implement the iterations of our algorithm. The algorithm we analyze is Algorithm~\ref{alg:sublinear}, using the local estimator defined in \eqref{eq:estimate-l1}, and the best choice of distribution among (\ref{eq:l2l1-prob1-def}), (\ref{eq:l2l1-prob2-def}), (\ref{eq:l2l1-prob3-def}). We choose
$$\eta = \frac{\epsilon}{9\ltocop^2} \text{ and } T=\left\lceil\frac{6\Theta}{\eta\eps}\right\rceil\ge\frac{54\ltocop^2 \log(2m)}{\epsilon^2}.$$
Lemma~\ref{lem:est-prop1-l2l1} implies that our estimator satisfies the remaining requirements for Proposition~\ref{prop:sublinear}, giving the duality gap guarantee in $T$ iterations. In order to give a runtime bound, we claim that each iteration can be implemented in $O(\log mn)$ time, with $O(m + n)$ additional runtime. For simplicity, because most of the algorithm implementation details are exactly same as the discussion of Section~\ref{ssec:implementsublone} for the simplex block $y \in \yset$, and exactly the same as the discussion of Section~\ref{ssec:implementsubltwo} for the ball block $x \in \xset$, we discuss the differences here, namely the implementations of sampling and gradient computation. 

We assume that we have initialized $\IMS_1\y$, an instantiation of $\IM_1$, and $\IMS_2\x$, an instantiation of $\IM_2$. When the choice of distribution is \eqref{eq:l2l1-prob2-def}, we also assume access to $\WIMS_2\x$, an instantiation of $\WIM_2$ initialized with the weight vector of nonzero counts  of columns of the matrix; similarly, for distribution \eqref{eq:l2l1-prob3-def} we instantiate a $\WIM_2$ with the weight vector of $\ell_1$ norms of each column. \\

\noindent
\emph{Sampling.} Recall that
$$p_{ij}(z) \defeq \frac{|A_{ij}|}{\lones{\ai}}\cdot[z\y]_i.$$
We first sample coordinate $i$ via $\IMS_1\y.\Sample()$ in $O(\log m)$, and then sample $j$ using the data structure corresponding to $\ai$ in $O(1)$. Next, to sample from the distribution
$$q_{ij}(z)\defeq \frac{A_{ij}^2}{\norm{A}_{\mathrm{F}}^2}$$
required by \eqref{eq:l2l1-prob1-def}, we can sample a coordinate of the matrix proportional to its square in constant time using our matrix access. To sample from the distribution
$$q_{ij}(z)\defeq\frac{[z\x]_j^2\cdot \1_{\{A_{ij}\neq0\}}}{\sum_{l \in [n]} \cs_l \cdot [z\x]_l^2 }$$
required by \eqref{eq:l2l1-prob2-def}, we first sample coordinate $j$ via $\WIMS\x_2.\Sample()$ in $O(\log n)$, and then uniformly sample a coordinate $i$ amongst the entries of $\aj$ for which the indicator labels as nonzero. Finally, to sample from the distribution
$$q_{ij}(z)\defeq\frac{\left|A_{ij} \right| \cdot 
	[z\x]_j^2 }{\sum_{l \in [n]} \norm{A_{:l}}_1 \cdot [z\x]_l^2 }$$
required by \eqref{eq:l2l1-prob3-def}, we sample coordinate $j$ via $\WIMS_2\x.\Sample()$, and then sample a coordinate $i$ proportional to its absolute value using a column sampling oracle. \\

\noindent
\emph{Computing the gradient estimator.} By the proofs of Theorem~\ref{thm:l1l1-sublinear} and Theorem~\ref{thm:l2l2-sublinear}, it suffices to compute $p_{i\x j\x}, q_{i\y j\y}$ in constant time. Calling $\IMS_2\x.\Get(j)$, $\IMS_1\y.\Get(i)$, $\IMS_2\x.\GetNorm()$, and $\WIMS_2\x.\GetNorm()$ when appropriate, and using access to precomputation allows us to obtain all relevant quantities for the computations in $O(1)$.

\subsubsection{Algorithm guarantee}
\begin{theorem}
\label{thm:l2l1-sublinear}
In the $\elltwoone$ setup, the implementation in Section~\ref{ssec:implementsubltwoone} has runtime 
\[O\left(\frac{\ltocop^2\log m\log(mn)}{\epsilon^2} + m + n\right)\]
and outputs a point 
$\bar{z}\in\zset$ such that
\begin{equation*}
\E\gap(\bar{z})
\le \epsilon.
\end{equation*}
\end{theorem}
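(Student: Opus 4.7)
The plan is to mirror the two analogous theorems (Theorem~\ref{thm:l1l1-sublinear} and Theorem~\ref{thm:l2l2-sublinear}) earlier in the paper, since the implementation in Section~\ref{ssec:implementsubltwoone} was explicitly designed so that Proposition~\ref{prop:sublinear} applies out of the box. Correctness will follow from invoking Proposition~\ref{prop:sublinear} with a local gradient estimator whose parameter matches the definition of $\ltoco$ in~\eqref{eq:l21def}, and the runtime bound will follow from bookkeeping the per-iteration data structure costs recorded in Section~\ref{ssec:implementsubltwoone}.

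For correctness, I would first observe that the algorithm is free to pick, for any given matrix~$A$, whichever of the three distributions~\eqref{eq:l2l1-prob1-def},~\eqref{eq:l2l1-prob2-def}, or~\eqref{eq:l2l1-prob3-def} minimizes the corresponding constant $L_{\mathsf{co}}^{2,1,(k)}$ supplied by Lemma~\ref{lem:est-prop1-l2l1}. Because $\ltoco = \min_k L_{\mathsf{co}}^{2,1,(k)}$ by the definition in~\eqref{eq:l21def}, the resulting estimator~\eqref{eq:estimate-l1} is $\ltoco$-local in the sense of Definition~\ref{def:sublinear} under the $\elltwoone$ local norm setup of Table~\ref{tab:local-norm} (whose range satisfies $\Theta = \tfrac{1}{2} + \log m$). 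Plugging this estimator, together with the stated step size $\eta = \eps/(9\ltocop^2)$ and iteration count $T \ge 6\Theta/(\eta\eps) = 54\,\ltocop^2(\tfrac{1}{2}+\log m)/\eps^2$, into Proposition~\ref{prop:sublinear} immediately yields $\E\,\gap(\bar{z})\le \eps$ for the averaged iterate returned by Algorithm~\ref{alg:sublinear}.

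For the runtime, I would amortize the cost of each phase described in Section~\ref{ssec:implementsubltwoone}. Initialization of $\IMS_1\y$, $\IMS_2\x$, and (when distribution~\eqref{eq:l2l1-prob2-def} or~\eqref{eq:l2l1-prob3-def} is selected) an appropriately weighted $\WIMS_2\x$ costs $O(m+n)$, after which a single inner iteration performs: a simplex $\Sample$ in $O(\log m)$; a column or entrywise sample in $O(1)$ (using the precomputed sampling oracles from the matrix access assumptions); a constant number of $\Get$/$\GetNorm$ calls; an $\AddSparse$ on the simplex side costing $O(\log m)$; and $\AddDense$/$\AddSparse$/$\Scale$/$\SumUp$ on the Euclidean side, each $O(1)$. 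Hence each iteration is $O(\log(mn))$, and the final $\GetSum$-based averaging pass costs $O(m+n)$. Multiplying by $T$ and adding initialization and averaging yields the claimed total $O\!\left(\ltocop^2\log m\log(mn)/\eps^2 + m + n\right)$.

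I do not anticipate a substantive obstacle; the only mildly subtle point is verifying that each of the three candidate distributions is genuinely implementable in $O(\log(mn))$ using only the oracles listed in Section~\ref{ssec:accessmodel} and the $\WIM_2$ extension whose implementation is deferred to Appendix~\ref{app:ds-proofs}. Since Section~\ref{ssec:implementsubltwoone} walks through each of the three sampling routines individually, the proof of the theorem itself reduces to a one-line appeal to Proposition~\ref{prop:sublinear} plus a citation of the per-iteration costs already tabulated there.
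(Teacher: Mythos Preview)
Your proposal is correct and matches the paper's own proof essentially verbatim: the paper's proof of Theorem~\ref{thm:l2l1-sublinear} is a two-sentence appeal to the per-iteration cost discussion in Section~\ref{ssec:implementsubltwoone} for the runtime and to Proposition~\ref{prop:sublinear} for correctness, exactly as you outline. Your only minor imprecision is writing $\ltoco = \min_k L_{\mathsf{co}}^{2,1,(k)}$; strictly speaking the definition~\eqref{eq:l21def} is $\ltocop^2 = \max_i\lones{\ai}^2 + \min(\cdot,\cdot,\cdot)$, which can be slightly smaller than $\ltotcop^2$, but since the actual second-moment bound achieved by distribution~\eqref{eq:l2l1-prob2-def} in the proof of Lemma~\ref{lem:est-prop1-l2l1} is $\max_i\lones{\ai}^2 + \rcs\max_i\ltwo{\ai}^2$, the best-of-three choice really does yield a $\ltoco$-local estimator and your conclusion stands.
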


\begin{proof}
The runtime bound follows from the discussion in Section~\ref{ssec:implementsubltwoone}. The correctness follows from Proposition~\ref{prop:sublinear}.
\end{proof}

\begin{remark}
Using our $\IM_1$ and $\IM_2$ data structures, the $\elltwoone$ algorithm of Clarkson et al. \cite{ClarksonHW10} runs in time $O(\rcs\max_{i \in [m]}\|\ai\|_{2}^2\log^2(mn)/\eps^2)$. By noting the definition of $L_{\mathsf{co}}^{2, 1, (2)}$, our runtime universally improves upon it since $\ltocop^2\le2\rcs\max_{i \in [m]}\|\ai\|_{2}^2$.
\end{remark} %
\section{Deferred proofs for variance-reduced methods}
\label{app:vr-proofs}

\subsection{Helper proofs}\label{ssec:helpervr}

\restatelocalnorms*
\begin{proof}
	Let $\gamma\in\R^m$. 
	Note that for every $\tau \in [0,1]$ (with elementwise multiplication, division and square root),
	$\inner{\gamma}{y-y'}=\inner{\gamma\sqrt{(1-\tau)y + \tau 
			y'}}{\frac{y-y'}{\sqrt{(1-\tau)y + 
				\tau y'}}}$. 
	Therefore, 
	using $2\inner{u}{w} \le  \ltwo{u}^2 + \ltwo{w}^2$, we have for every  
	$\tau \in [0,1]$, 
	\begin{equation*}
	2\inner{\gamma}{y-y'} \le 
	\sum_{i\in[m]} \left((1-\tau)[y]_i + \tau [y']_i\right) [\gamma]_i^2 + 
	\sum_{i\in[m]} \frac{([y]_i-[y']_i)^2}{(1-\tau)[y]_i + \tau [y']_i}.
	\end{equation*}
	Applying the double integral $\int_{0}^1 dt \int_{0}^{t} d\tau$ to both 
	sides of the inequality, and using 
	$\int_{0}^1 dt \int_{0}^{t} 1\cdot d\tau = \frac{1}{2}$ and
	$\int_{0}^1 dt \int_{0}^{t} \tau\cdot d\tau = \frac{1}{6}$ 
	gives
	\begin{equation*}
	\inner{\gamma}{y-y'} \le 
	\sum_{i\in[m]} \left(\frac{1}{3}[y]_i + \frac{1}{6} [y']_i\right) [\gamma]_i^2 + 
	\int_{0}^1 dt \int_{0}^{t}  \sum_{i\in[m]} 
	\frac{([y]_i-[y']_i)^2}{(1-\tau)[y]_i + \tau [y']_i}d\tau.
	\end{equation*}
	Identifying the double integral with the 
	expression
	\begin{equation}\label{eq:kl-div-formula}
	V_{y}(y') = \sum_{i\in[m]} \left(y_i' \log\frac{y_i'}{y_i}  + y_i - y_i'
	\right)= 
	\int_{0}^1 dt \int_{0}^{t}   \sum_{i\in[m]} 
	\frac{(y_i-y'_i)^2}{(1-\tau)y_i + \tau y'_i}d\tau.
	\end{equation}
	for the	divergence induced by entropy, %
	the result follows by choosing $[\gamma]_i = 
	\frac{[y]_i - [y']_i}{\frac{2}{3}[y]_i + \frac{1}{3}[y']_i}$.
\end{proof}

\restatestableentropy*
\begin{proof}
	Letting $\tx$ be the point inducing $x'$ in Definition~\ref{def:betastable}, we have
	\begin{align*}
	\sum_{j \in [n]} x'_j \log x'_j - \sum_{j \in [n]} x_j \log x_j &= \left(\sum_{j \in [n]} x'_j \log x'_j - \sum_{j \in [n]} \tx_j \log \tx_j\right) \\
	&+ \left(\sum_{j\in[n]} \tilde{x}_j\log\tilde{x}_j-\sum_{j\in[n]}x_j\log x_j\right).
	\end{align*}
	
	We bound these two terms separately. For the first term, let $\norm{\tx}_1 = 1 + b$, for some $b \le \beta$; we see that entrywise, $(1 + b)x'_j = \tx_j$. For each $j \in [n]$,
	\begin{align*}x'_j \log x'_j - \tx_j \log \tx_j &= x'_j \log x'_j - (1 + b)x'_j \log\left((1 + b)x'_j\right) \\
	&= bx'_j \log \frac{1}{x'_j} - (1 + b)x'_j \log(1 + b) \\
	&\le bx'_j\log\frac{1}{x'_j} \le \frac{\beta}{e}.\end{align*}
	The first inequality was due to nonnegativity of $(1 + b)\log(1 + b)$ and $x'_j$, and the second was due to the maximum value of the scalar function $z\log \frac{1}{z}$ over the nonnegative reals being $1/e$. Summing over all coordinates yields that the first term is bounded by $\beta n/e$.
	
	For the second term, we have by integration that entrywise
	\begin{align*}
	\tx_j \log \tx_j - x_j \log x_j & = \int_{\alpha=0}^1 (1+\log(x_j+\alpha(\tx_j-x_j)))(\tx_j-x_j)d\alpha \\
	&\le \int_{\alpha=0}^1 (1+\log(\tx_j))(\tx_j-x_j)d\alpha\\
	& \le \int_{\alpha=0}^1 \tx_j(\tx_j-x_j)d\alpha \le(1+\beta)|\tx_j-x_j|.
	\end{align*}
	The first inequality is by $\tx_j\ge x_j$ for all $j\in[n]$ and $\log(x)$ is monotone in $x>0$; the second is by $\log(x)\le x-1$ for all $x>0$; the third again uses $\tx_j\ge x_j$ and that $\tilde{x}_j\le\lones{\tx}\le 1+\beta$, and the second condition in Definition~\ref{def:betastable}. Finally, combining yields the desired
	\[
	\sum_{j \in [n]} x'_j \log x'_j - \sum_{j \in [n]} x_j \log x_j \le \frac{\beta n}{e}+\beta(1+\beta).
	\]
\end{proof}

\subsection{$\elltwo$ variance-reduced coordinate method}
\label{ssec:vr-l2}

\paragraph{Assumptions.} 
As in Section~\ref{ssec:l2l2sub}, the algorithm in this section will assume access to entry queries, $\ell_1$ norms of rows and columns, and $\ell_1$ sampling distributions for rows and columns, and the ability to sample a row or column proportional to its squared $\ell_1$ norm. We use the $\ell_2$-$\ell_2$ local norm setup (cf. Table~\ref{tab:local-norm}). Again, we define
\begin{equation*}\lttco \defeq  \sqrt{\sum\limits_{i\in[m]}\lones{\ai}^2+\sum\limits_{j\in[n]}\lones{\aj}^2}.\end{equation*}

\subsubsection{Gradient estimator}\label{ssec:l2-est}
Given reference point $w_0 \in \ball^n \times \ball^m$, for $z\in\ball^n\times\ball^m$, we specify two distinct sampling distributions $p(z; w_0), q(z; w_0)$ which obtain the optimal Lipschitz constant. The first one is an oblivious distribution:
\begin{equation}\label{eq:vr-l2-prob-def}
p_{ij}(z; w_0)\defeq \frac{\lones{\ai}^2}{\sum_{k\in[m]}\lones{\ai[k]}^2}\cdot\frac{|A_{ij}|}{\lones{\ai}}
~~\mbox{and}~~\ 
q_{ij}(z; w_0)\defeq \frac{\lones{\aj}^2}{\sum_{k\in[n]}\lones{\aj[k]}^2}\cdot\frac{|A_{ij}|}{\lones{\aj}}.
\end{equation}
The second one is a dynamic distribution:
\begin{equation}\label{eq:vr-l2-prob-def-2}
p_{ij}(z; w_0)\defeq \frac{[w_0\y - z\y]_i^2}{\ltwo{w_0\y - z\y}^2}\cdot\frac{|A_{ij}|}{\lones{\ai}}
~~\mbox{and}~~\ 
q_{ij}(z; w_0)\defeq \frac{[w_0\x - z\x]_j^2}{\ltwo{w_0\x - z\x}^2}\cdot\frac{|A_{ij}|}{\lones{\aj}}.
\end{equation}
We now state the local properties of each estimator.

\begin{restatable}{lemma}{restateestpropvrltwo}
\label{lem:est-prop-vr-l2}
In the $\elltwo$ setup, estimator \eqref{eq:estimate-vr} using the sampling distribution in \eqref{eq:vr-l2-prob-def} or~\eqref{eq:vr-l2-prob-def-2} is a $\sqrt{2}\lttco$-centered-local estimator.
\end{restatable}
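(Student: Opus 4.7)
The plan is to verify the two defining properties of an $L$-centered-local estimator from Definition~\ref{def:vr}, closely following the template of Lemma~\ref{lem:est-prop-l2-improved} in Section~\ref{ssec:l2l2sub}. Unbiasedness is immediate: regardless of which of the two distributions we use, the inverse-probability weighting in \eqref{eq:estimate-vr} is a standard importance-sampling estimator for the sparse corrections $A^\top (z\y - w_0\y)$ and $-A(z\x - w_0\x)$, and adding $g(w_0)$ recovers $g(z)$ by linearity. Since the local norm in the $\elltwo$ setup is the Euclidean norm, no localization subtlety enters through the choice of $w\in\zset$.

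For the variance bound, I would treat the $\xset$ and $\yset$ blocks separately (they are sampled independently, so the squared Euclidean norms add). For the oblivious distribution \eqref{eq:vr-l2-prob-def}, substituting into the general formula gives
\[
\E\bigl\|\tilde{g}\x_{w_0}(z)-g\x(w_0)\bigr\|_2^2
=\sum_{i,j}\frac{A_{ij}^2[z\y-w_0\y]_i^2}{p_{ij}(z;w_0)}
=\Bigl(\sum_{k}\lones{\ai[k]}^2\Bigr)\ltwo{z\y-w_0\y}^2,
\]
where in the last step the factor $|A_{ij}|/\lones{\ai}$ cancels and summing $\sum_j |A_{ij}| = \lones{\ai}$ collapses one factor of $\lones{\ai}$. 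For the dynamic distribution \eqref{eq:vr-l2-prob-def-2}, the extra factor $[w_0\y-z\y]_i^2/\ltwo{w_0\y-z\y}^2$ in the denominator exactly cancels the $[z\y-w_0\y]_i^2$ in the numerator, and the same summation yields the identical bound. The $\yset$-block computation is symmetric and yields $\bigl(\sum_{k}\lones{\aj[k]}^2\bigr)\ltwo{z\x-w_0\x}^2$.

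Adding the two blocks and using the definition $(\lttco)^2 = \sum_i\lones{\ai}^2 + \sum_j\lones{\aj}^2$ from \eqref{eq:l22def}, together with the trivial bound $\max\{\ltwo{z\x-w_0\x}^2,\ltwo{z\y-w_0\y}^2\}\le \ltwo{z-w_0}^2 = 2V_{w_0}(z)$ from the Euclidean distance generator, gives
\[
\E\bigl\|\tilde{g}_{w_0}(z)-g(w_0)\bigr\|_2^2 \le (\lttco)^2 \ltwo{z-w_0}^2 = 2(\lttco)^2 V_{w_0}(z),
\]
which is precisely the $\sqrt{2}\lttco$-centered-local property. There is no real obstacle here beyond the bookkeeping of the two cases; the only thing worth highlighting is that the two distributions give \emph{identical} second-moment bounds in this setup, so the dynamic distribution offers no statistical advantage over the oblivious one and is only introduced for comparison with the other geometries.
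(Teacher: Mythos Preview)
Your proposal is correct and matches the paper's proof essentially line for line: both verify unbiasedness by the importance-sampling form of \eqref{eq:estimate-vr}, compute the block-wise second moments for each distribution to obtain $(\sum_i\lones{\ai}^2)\ltwo{z\y-w_0\y}^2$ and $(\sum_j\lones{\aj}^2)\ltwo{z\x-w_0\x}^2$, and then combine via $\ltwo{z\x-w_0\x}^2+\ltwo{z\y-w_0\y}^2=2V_{w_0}(z)$. Your final combining step (bounding each block norm by the full norm and summing the coefficients) is algebraically equivalent to the paper's (bounding each coefficient by $(\lttco)^2$ and summing the block norms), yielding the identical constant.
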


\begin{proof}
	Unbiasedness holds by definition in both cases. We first show the variance bound on the $x$ block for distribution~\eqref{eq:vr-l2-prob-def}:
	\begin{align*}
	\E\left[\ltwo{\tilde{g}\x_{w_0}(z)-g\x(w_0)}^2\right]
	& =  
	\sum_{i\in[m],j\in[n]} p_{ij}(z;w_0)\cdot \left(\frac{A_{ij}[z\y - w_0\y]_i}{p_{ij}(z;w_0)}\right)^2 = \sum\limits_{i\in[m],j\in[n]}\frac{A_{ij}^2[z\y - w_0\y]_i^2}{p_{ij}(z;w_0)}\\
	& = \sum\limits_{i\in[m],j\in[n]}\frac{|A_{ij}|}{\lones{\ai}}[z\y - w_0\y]_i^2\cdot\left(\sum\limits_{k\in[m]}\lones{\ai[k]}^2\right) \\
	& = \left(\sum\limits_{i\in[m]}\lones{\ai}^2\right)\norm{z\y - w_0\y}_2^2.
	\end{align*}	
	Similarly, we have
	\begin{equation*}
	\E\left[\ltwo{\tilde{g}\y_{w_0}(z)-g\y(w_0)}^2\right] \le \left(\sum\limits_{j\in[n]}\lones{\aj}^2\right)\norm{z\x - w_0\x}_2^2.
	\end{equation*}
	Combining these and using $\norm{z\x - w_0\x}_2^2 + \norm{z\y - w_0\y}_2^2 = 2V_{w_0}(z)$ yields the desired variance bound. Now, we show the variance bound on the $x$ block for distribution~\eqref{eq:vr-l2-prob-def-2}:
	\begin{align*}
	\E\left[\ltwo{\tilde{g}\x_{w_0}(z)-g\x(w_0)}^2\right]
	& =  
	\sum_{i\in[m],j\in[n]} p_{ij}(z;w_0)\cdot \left(\frac{A_{ij}[z\y - w_0\y]_i}{p_{ij}(z;w_0)}\right)^2 = \sum\limits_{i\in[m],j\in[n]}\frac{A_{ij}^2[z\y - w_0\y]_i^2}{p_{ij}(z;w_0)}\\
	& = \sum\limits_{i\in[m],j\in[n]}|A_{ij}|\lones{\ai}\ltwo{z\y - w_0\y}^2\\
	& = \left(\sum\limits_{i\in[m]}\lones{\ai}^2\right)\norm{z\y - w_0\y}_2^2.
	\end{align*}
	and a similar bound holds on the $y$ block.
\end{proof}

Again, for algorithmic considerations (i.e.\ an additional logarithmic factor in the complexity of sampling from \eqref{eq:vr-l2-prob-def-2}), we will only discuss using the oblivious distribution \eqref{eq:vr-l2-prob-def} in our algorithm.

\subsubsection{Implementation details}
\label{ssec:implementvrltwo}
In this section, we discuss the details of how to leverage the $\IM_2$ data structure to implement the iterations of our algorithm. The algorithm we analyze is Algorithm~\ref{alg:outerloop} with $K = \alpha\Theta/\eps$, using Algorithm~\ref{alg:innerloop-approx} as an $(\alpha, 0)$-relaxed proximal oracle. In the implementation of Algorithm~\ref{alg:innerloop-approx}, we use the centered-local gradient estimator defined in \eqref{eq:vr-l2-prob-def}. For each use of Algorithm~\ref{alg:innerloop-approx}, we choose
\begin{align}
\eta =
\frac{\alpha}{20\lttcop^2} \text{ and } T =   
\left\lceil\frac{6}{\eta\alpha}\right\rceil\ge\frac{120\lttcop^2}{\alpha^2}.
\end{align}
Our discussion will follow in three steps: first, we discuss the complexity of all executions in Algorithm~\ref{alg:outerloop} other than the calls to the oracles, as well as the initialization procedure for each inner loop. Next, we discuss the complexity of each iteration of Algorithm~\ref{alg:innerloop-approx}. Finally, we discuss the complexity of computing the average iterate in each run of Algorithm~\ref{alg:innerloop-approx}.
For simplicity, when discussing Algorithm~\ref{alg:innerloop-approx}, we will only discuss implementation of the $x$-block, and the $y$-block will follow symmetrically. Altogether, the guarantees of Proposition~\ref{prop:outerloopproof} and Proposition~\ref{prop:innerloopproof} imply that if the guarantees required by the algorithm hold, the expected gap of the output is bounded by $\eps$.

\paragraph{Outer loop extragradient steps and inner loop data structures.} Overall, we execute $K = \alpha\Theta/\eps$ iterations of Algorithm~\ref{alg:outerloop}, and let $\vepsout=\vepsi=0$ to obtain the desired gap, where $\Theta = 1$ in the $\elltwo$ setup. We spend $O(\nnz)$ time executing each extragradient step in Algorithm~\ref{alg:outerloop} exactly, where the dominant term in the runtime is the computation of each $g(z_{k - 1/2})$, for $k \in [K]$. Also, we can maintain the average point $\bar{z}$ throughout the duration of the algorithm, in $O(m + n)$ time per iteration. At the beginning of each inner loop, we initialize a data structure $\IMS_2\x$ which does not support sampling, an instance of $\IM_2$, with $\IMS_2\x.\Initialize(w_0\x, v)$, for
\[v = (1 - \kappa) w_0\x - \eta\kappa g\x(w_0),\] 
where $\kappa \defeq \frac{1}{1 + \eta\alpha/2}$. The inner loop will preserve the invariant that the point maintained by $\IMS_2\x$ is the $x$ block of the current inner loop iterate $w_t$ in each iteration $t$. To motivate this initialization, we recall the form of the updates,
\begin{equation}\label{eq:updateformltwo}w_{t + 1}\x \gets \Pi_{\xset}\left( \kappa\left(w_t\x + \left(\frac{1}{\kappa} - 1\right)w_0\x - \eta\tilde{g}\x_{w_0}(w_t) \right)\right),\end{equation}
where $\Pi_{\xset}(w) = \frac{w}{\max\{1, \norm{w}_2\}}$, and the fixed dense part of $\tilde{g}\x_{w_0}(w_t)$ is $g\x(w_0)$. Therefore, in the following discussion we will be able to maintain this difference via a scaling by $\kappa$, an appropriate addition of the scaled dense vector, and a sparse update.

Finally, we also store the vector $w_0$ in full, supporting entry queries.

\paragraph{Inner loop iterations.} Each inner loop iteration consists of sampling indices for the computation of $\tilde{g}_{w_0}$, computing the sparse part of $\tilde{g}_{w_0}$, and performing the update to the iterate. We show that we can run each substep in constant time. Then, this implies that the total complexity of the inner loop, other than initializing the data structures and outputting the average iterate, is
\[O(T) = O\left(\frac{\lttcop^2}{\alpha^2}\right).\]
We discuss how to make appropriate modifications to the $x$-block. For simplicity we denote our current iterate as $z$, and the next iterate as $w$. Recall that the distribution is given by 
\[p_{ij}(z; w_0)\defeq \frac{\lones{\ai}^2}{\sum_{k\in[m]}\lones{\ai[k]}^2}\cdot\frac{|A_{ij}|}{\lones{\ai}}.\]

\noindent
\emph{Sampling.} By using precomputed distributions, we can sample $i \propto \lone{\ai}^2$ and then $j \mid i \propto |A_{ij}|$ in constant time.\\

\noindent
\emph{Computing the gradient estimator.} Computing the sparse component of the gradient estimator~\ref{eq:estimate-vr} requires computing $A_{ij}$, $[z\y - w_0\y]_i$, and $p_{ij}(z; w_0)$. Using appropriate use of precomputed access to entries and row norms (it is clear we may pay $O(m + n)$ at the beginning of the algorithm to store the sum $\sum_{k \in [m]} \lone{A_{k:}}^2$), entry $[w_0\y]_i$, and $\IMS_2\y.\Get(i)$ allows us to perform the required computation of the sparse component
$$c \defeq [\tilde{g}\x_{w_0}(z) - g(w_0)]_j$$ 
in constant time, by assumption.\\

\noindent
\emph{Performing the update.} In order to perform the update, we recall the form of the update given by \eqref{eq:updateformltwo}. Thus, it suffices to call
\begin{align*}
& \IMS_2\x.\Scale(\kappa);\\
& \IMS_2\x.\AddDense(1);\\
& \IMS_2\x.\AddSparse(j,-\kappa\eta c);\\
& \IMS_2\x.\Scale(\max\{\IMS_2\x.\GetNorm(), 1\}^{-1});\\
& \IMS_2\x.\SumUp()
\end{align*}
By assumption, each operation takes constant time. By the discussion in the data structure initialization section, it is clear that we preserve the invariant that the point maintained by $\IMS_2\x$ is the $x$ block of the current iterate.

\paragraph{Average iterate computation.}
At the end of each run of Algorithm~\ref{alg:innerloop-approx}, we spend $O(n)$ time computing and returning the average iterate via appropriate calls to $\IMS_2\x.\GetSum(j)$ for each $j \in [n]$, and scaling by $1/T$. This operation is asymptotically dominated by the $O(\nnz(A))$ cost of the extragradient step.

\subsubsection{Algorithm guarantee}

\begin{theorem}
\label{thm:l2l2}
In the $\elltwo$ setup, the implementation in Section~\ref{ssec:implementvrltwo} with the optimal choice of $\alpha = \max\{\epsilon, \lttco \sqrt{1/\nnz}\}$ has runtime
\[O\left(\left(\nnz + \frac{\lttcop^2}{\alpha^2}\right)\frac{\alpha}{\eps}\right) = O\left(\nnz + \frac{\sqrt{\nnz} \lttco}{\epsilon} \right)\]
and outputs a point $\bar{z}\in\zset$ such that
\begin{equation*}
\begin{aligned}
& \E\gap(\bar{z}) \le \epsilon.
\end{aligned}
\end{equation*}
\end{theorem}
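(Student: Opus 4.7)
The plan is to combine the outer-loop convergence guarantee (Proposition~\ref{prop:outerloopproof}) with the inner-loop guarantee (Proposition~\ref{prop:innerloopproof}), using the centered-local estimator of Section~\ref{ssec:l2-est}. First I would verify the prerequisites of Proposition~\ref{prop:innerloopproof}: by Lemma~\ref{lem:est-prop-vr-l2} the estimator \eqref{eq:estimate-vr} with distribution \eqref{eq:vr-l2-prob-def} is $L$-centered-local for $L = \sqrt{2}\lttco$; the gradient operator $g$ is linear with $\norm{g(z)-g(z')}_* \le \opnorm{A}\cdot\norm{z-z'}$ and $\opnorm{A}\le \lfro{A}\le \lttco \le L$, so $g$ is $L$-Lipschitz; the domain $\zset=\ball^n\times\ball^m$ has diameter $D\le\sqrt{2}$; and $\norm{g(z)}_*\le \opnorm{A}\le L\le LD$. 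In the Euclidean setting the iterates in Algorithm~\ref{alg:innerloop-approx} can be computed exactly through $\IM_2$ (as described in Section~\ref{ssec:implementvrltwo}), so the approximation tolerance $\epsaprx$ can be taken to be $0$, meaning Algorithm~\ref{alg:innerloop-approx} is an $(\alpha, 0)$-relaxed proximal oracle whenever the stepsize and iteration-count are set to $\eta=\alpha/(20\lttcop^2)$ and $T=\lceil 6/(\eta\alpha)\rceil$.

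Next I would invoke Proposition~\ref{prop:outerloopproof} with $\vepsi=\vepsout=0$ and $\Theta=1$ (the range of $r(z)=\tfrac12\norm{z}_2^2$ on $\zset$) to conclude that $K=\lceil \alpha/\eps\rceil$ outer iterations guarantee $\E\,\gap(\bar z)\le \alpha\cdot 1/K \le \eps$. This is the correctness half of the proof.

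For the runtime, I would accumulate costs across the $K$ outer iterations. Per Section~\ref{ssec:implementvrltwo}: (i) each outer extragradient step costs $O(\nnz)$, dominated by the exact gradient computation $g(z_{k-1/2})$; (ii) each inner-loop initialization of the $\IM_2$ instances and storage of $w_0$, $g(w_0)$ costs $O(\nnz)$; (iii) each of the $T=O(\lttcop^2/\alpha^2)$ inner iterations takes $O(1)$ amortized time (sampling from precomputed distributions, one entry query, and constant-time $\Scale/\AddDense/\AddSparse/\GetNorm/\SumUp$ calls); (iv) extraction of the average iterate at the end of each inner loop is $O(n+m)$, dominated by the $\nnz$ term. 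Summing, the total runtime is
\[
O\!\left(K\cdot\left(\nnz+\frac{\lttcop^2}{\alpha^2}\right)\right)
= O\!\left(\frac{\alpha}{\eps}\cdot\nnz + \frac{\lttcop^2}{\alpha\eps}\right).
\]

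Finally I would optimize over $\alpha$. Balancing the two terms gives $\alpha=\lttco/\sqrt{\nnz}$, producing runtime $O(\sqrt{\nnz}\cdot\lttco/\eps)$. To keep $K\ge 1$ meaningful when $\lttco/\sqrt{\nnz}<\eps$, I set $\alpha=\max\{\eps,\lttco/\sqrt{\nnz}\}$; in the regime $\alpha=\eps$ the bound collapses to $O(\nnz+\lttcop^2/\eps^2)$, and in either case the runtime is $O(\nnz + \sqrt{\nnz}\cdot\lttco/\eps)$. The main obstacle I foresee is not in the calculation itself but in carefully matching the assumptions of Proposition~\ref{prop:innerloopproof}---in particular verifying that $\epsaprx=0$ is attainable in the Euclidean setting, and that with $\vepsi=0$ the inner-loop guarantee \eqref{eq:innerloop-guarantee} still yields the required $(\alpha,0)$-relaxed proximal oracle for use in Proposition~\ref{prop:outerloopproof}.
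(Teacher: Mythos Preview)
Your proposal is correct and mirrors the paper's own argument: the paper's proof simply defers to the discussion in Section~\ref{ssec:implementvrltwo} together with Propositions~\ref{prop:outerloopproof} and~\ref{prop:innerloopproof}, and you have filled in exactly those details (estimator bound via Lemma~\ref{lem:est-prop-vr-l2}, exact Euclidean iterates permitting $\epsaprx=\vepsi=\vepsout=0$, the per-phase cost decomposition, and the balancing choice of $\alpha$). The only minor imprecision is the bound $\norm{g(z)}_*\le\opnorm{A}$, which should read $\norm{g(z)}_*\le\sqrt{2}\opnorm{A}\le LD$ since $\norm{z}_2\le\sqrt{2}$ on $\ball^n\times\ball^m$; this does not affect the conclusion.
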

\begin{proof}
The correctness of the algorithm is given by the discussion in Section~\ref{ssec:implementvrltwo} and the guarantees of Proposition~\ref{prop:outerloopproof} and Proposition~\ref{prop:innerloopproof}. The runtime bound is given by the discussion in Section~\ref{ssec:implementvrltwo}, and the optimal choice of $\alpha$ is clear.
\end{proof}

To better understand the strengths of our runtime guarantee, Proposition~\ref{prop:reg-improve} shows that Theorem~\ref{thm:l2l2} implies a universal improvement for $\elltwo$ games compared to accelerated gradient descent for matrices $A$ with nonnegative entries (or more generally, for $A$ with $\norm{|A|}_{\text{op}} = O(\norm{A}_{\text{op}})$).

\begin{proposition}\label{prop:reg-improve}
	For any $A\in\R^{m\times n}$, we have 
	\[
	\lttco\defeq \max\left\{\sqrt{\sum_i\lones{\ai}^2}, \sqrt{\sum_j\lones{\aj}^2}\right\}\le \sqrt{m+n}\cdot\opnorm{ |A| }.
	\]
\end{proposition}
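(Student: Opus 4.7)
The plan is to observe that $\|\ai\|_1 = \sum_j |A_{ij}| = (|A|\mathbf{1}_n)_i$, so the two quantities inside the max rewrite as norms of matrix-vector products involving $|A|$ and the all-ones vector. Concretely, I would start by writing
\[
\sum_{i \in [m]} \|\ai\|_1^2 = \||A|\mathbf{1}_n\|_2^2 \quad\text{and}\quad \sum_{j \in [n]} \|\aj\|_1^2 = \||A|^\top \mathbf{1}_m\|_2^2.
\]

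Next, I would apply the definition of the operator norm (and the fact that $|A|$ and $|A|^\top$ share the same operator norm) to obtain
\[
\||A|\mathbf{1}_n\|_2 \le \opnorm{|A|}\,\|\mathbf{1}_n\|_2 = \sqrt{n}\,\opnorm{|A|}
\]
and similarly $\||A|^\top \mathbf{1}_m\|_2 \le \sqrt{m}\,\opnorm{|A|}$. Taking the maximum of the two upper bounds yields
\[
\lttco \le \sqrt{\max(m,n)} \cdot \opnorm{|A|} \le \sqrt{m+n} \cdot \opnorm{|A|},
\]
which is the claim.

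There isn't really a main obstacle here: the identification of the $\ell_1$ row/column norms as coordinates of $|A|\mathbf{1}$ and $|A|^\top \mathbf{1}$ is immediate, and the remainder is a one-line application of the operator norm bound together with $\|\mathbf{1}_k\|_2 = \sqrt{k}$. The only thing to note for context is that, combined with $\opnorm{|A|} = \opnorm{A}$ whenever $A$ has nonnegative entries, this gives the advertised $\sqrt{\nnz/(m+n)}$ improvement over accelerated gradient descent in the entrywise nonnegative regime.
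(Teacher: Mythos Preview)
Your proposal is correct and matches the paper's proof essentially line for line: the paper also identifies $\sqrt{\sum_i \lones{\ai}^2}$ and $\sqrt{\sum_j \lones{\aj}^2}$ as $\ell_2$ norms of $|A|$ (or $|A|^\top$) applied to an all-ones vector, bounds each by $\opnorm{|A|}$ times $\sqrt{n}$ or $\sqrt{m}$ respectively, and then uses $\max\{\sqrt{m},\sqrt{n}\}\le\sqrt{m+n}$. The only cosmetic difference is that the paper routes the operator-norm bound through the variational identity $\ltwo{v}=\max_{x\in\ball}v^\top x$ and Cauchy--Schwarz, whereas you apply $\||A|w\|_2\le\opnorm{|A|}\ltwo{w}$ directly.
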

\begin{proof}
Denote $\boldsymbol{1}_k$ as the all $1$ vector in $\R^k$. We have the following sequence of inequalities:
\[\sqrt{\sum_{i\in[m]}\lones{\ai}^2}= \ltwo{|A|^\top \boldsymbol{1}_m} = \max_{x\in\ball^n}\boldsymbol{1}_m^\top |A|x \le \ltwo{\boldsymbol{1}_m}\max_{x\in\ball^n}\ltwo{|A|x}\le \sqrt{m}\opnorm{|A|}.\]

Similarly, bounding $\max_{y\in\ball^n}y^\top |A| \boldsymbol{1}_n$ implies $\sqrt{\sum_{j\in[n]}\lones{\aj}^2}\le \sqrt{n}\opnorm{|A|}$. Taking a maximum and using $\max\{\sqrt{m},\sqrt{n}\}\le\sqrt{m+n}$ implies the result.
\end{proof}

\begin{remark}

For matrix $A\in\R^{m\times n}$, combining the guarantees of Theorem~\ref{thm:l2l2} with the bound from Proposition~\ref{prop:reg-improve} implies a runtime bounded by
\[
O\left(\nnz+ \frac{\sqrt{\nnz\cdot (m+n)} \opnorm{|A|}}{\epsilon} \right).
\]
Whenever $\opnorm{A}\ge\opnorm{|A|}$, this is an improvement by a factor of $\sqrt{\nnz/(m+n)}$ compared to the accelerated full-gradient method (c.f. Table~\ref{table:runtimes}), which obtains a runtime of $O(\nnz\cdot\opnorm{A}/\eps)$. This applies without any sparsity or numerical sparsity assumptions, and is the same speedup factor as we obtained for $\ellone$ and $\elltwoone$ games using a variance reduction framework with row and column based gradient estimators in~\citet{CarmonJST19}. The $\elltwo$ variance reduction algorithms of \citet{CarmonJST19} and \citet{BalamuruganB16} do not offer such improvements, and our improvement stems from our coordinate-based gradient estimators and our data structure design.
\end{remark}

\subsection{$\elltwoone$ variance-reduced coordinate method}
\label{ssec:vr-l2l1}

\paragraph{Assumptions.}
The algorithm in this section will assume access to entry queries, $\ell_1$ norms of rows, $\ell_2$ sampling distributions for rows and columns, and the Frobenius norm of $A$. We use the $\ell_2$-$\ell_1$ local norm setup (cf.\ Table~\ref{tab:local-norm}). Again, we define
\begin{align}\ltooco &\defeq\sqrt{ \max_{i\in[m]}\lones{\ai}^2 + \norm{A}_\mathrm{F}^2},\label{eq:def-ltooco1}
\\ \ltotco &\defeq\sqrt{2\rcs\max_{i\in[m]}\ltwo{\ai}^2},\label{eq:def-ltooco2}
\\ \ltohco &\defeq\sqrt{\max_{i\in[m]}\lones{\ai}^2 + \left(\max_{i \in [m]} \norm{\ai}_1\right)\left(\max_{j \in [n]} \norm{\aj}_1\right) }.\label{eq:def-ltooco3}\end{align}
Finally, in this section we assume access to a centered variant of $\WIM_2$, which takes a 
point $x_0$ as a static parameter, where $x_0$ is in the space as the iterates $x$ maintained. $\CIM_2$ supports two additional
operations compared to the data structure $\WIM_2$: $\Sample()$ returns coordinate $j$ with probability 
proportional to $[w]_j[x - x_0]_j^2$ (cf.\ Section~\ref{ssec:interface-norm}) in $O(\log n)$ time, and we may query $\norm{x - x_0}_w^2$ in constant time, where $w$ is a specified weight vector. We give the implementation of this 
extension in Appendix~\ref{app:ds-proofs}.
\subsubsection{Gradient estimator}
\label{ssec:l2l1-est}

Given reference point $w_0 \in \ball^n \times \Delta^m$, for $z\in\ball^n \times \Delta^m$ and a parameter $\alpha>0$, as in Section~\ref{ssec:l2l1sub}, we specify three distinct choices of sampling distributions $p(z; w_0), q(z; w_0)$.

The first one is
\begin{equation}\label{eq:vr-l2l1-prob-def-1}
\begin{aligned}
p_{ij}(z;w_0) \defeq \frac{[z\y]_i+2[w_0\y]_i}{3}\cdot\frac{|A_{ij}|}{\lones{\ai}}
~~\mbox{and}~~\ 
q_{ij}(z;w_0) \defeq \frac{A_{ij}^2}{\norms{A}_\mathrm{F}^2}.
\end{aligned}
\end{equation}

The second one is
\begin{equation}\label{eq:vr-l2l1-prob-def-2}
\begin{aligned}
p_{ij}(z;w_0) \defeq \frac{[z\y]_i+2[w_0\y]_i}{3}\cdot\frac{|A_{ij}|}{\lones{\ai}}
~~\mbox{and}~~\ 
q_{ij}(z)\defeq \frac{[z\x - w_0\x]_j^2 \cdot \1_{\{A_{ij}\neq0\}}}{\sum_{l \in [n]} \cs_l \cdot [z\x - w_0\x]_l^2}.
\end{aligned}
\end{equation}

As in Section~\ref{ssec:l2l1sub}, $\cs_j$ is the number of nonzeros of $A_{:j}$. The third one is
\begin{equation}\label{eq:vr-l2l1-prob-def-3}
\begin{aligned}
p_{ij}(z;w_0) \defeq \frac{[z\y]_i+2[w_0\y]_i}{3}\cdot\frac{|A_{ij}|}{\lones{\ai}}
~~\mbox{and}~~\ 
q_{ij}(z)\defeq \frac{\left|A_{ij} \right| \cdot 
	[z\x- w_0\x]_j^2 }{\sum_{l \in [n]} \norm{A_{:l}}_1 \cdot [z\x - w_0\x]_l^2 }.
\end{aligned}
\end{equation}
We now state the local properties of each estimator.

\begin{restatable}{lemma}{restateestpropvrltwolone}
\label{lem:est-prop-vr-l2l1}
In the $\elltwoone$ setup, estimator \eqref{eq:estimate-vr} using the sampling distributions  in~\eqref{eq:vr-l2l1-prob-def-1},~\eqref{eq:vr-l2l1-prob-def-2}, or~\eqref{eq:vr-l2l1-prob-def-3} is respectively a $\sqrt{2}L_{\mathsf{co}}^{2, 1, (k)}$-centered-local estimator, for $k \in \{1, 2, 3\}$.
\end{restatable}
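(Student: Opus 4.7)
Unbiasedness for each of the three sampling distributions follows immediately by direct computation, since in all cases the estimator form \eqref{eq:estimate-vr} yields
$\E\tilde{g}\x_{w_0}(z) = \sum_{i,j} A_{ij}[z\y-w_0\y]_i e_j = A^\top(z\y - w_0\y) = g\x(z) - g\x(w_0)$ and symmetrically for the $\yset$-block; adding $g(w_0)$ then gives $g(z)$ as required.

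For the variance bound, since $\norm{\gamma}_w^2 = \norm{\gamma\x}_2^2 + \sum_i [w\y]_i[\gamma\y]_i^2$ in the $\elltwoone$ local norm and $V_{w_0}(z) = \tfrac{1}{2}\norm{z\x-w_0\x}_2^2 + V_{w_0\y}(z\y)$, the analysis splits into independent bounds on the two blocks. The $\xset$-block is identical across all three distributions since they share the same $p$. Expanding the second moment, for any $w\x$,
\[
\E\norm{\tilde{g}\x_{w_0}(z) - g\x(w_0)}_2^2 = \sum_{i,j}\frac{A_{ij}^2[z\y-w_0\y]_i^2}{p_{ij}} = \sum_i\frac{[z\y-w_0\y]_i^2\,\lone{\ai}^2}{\tfrac{1}{3}[z\y]_i + \tfrac{2}{3}[w_0\y]_i},
\]
which by Lemma~\ref{lem:local-norms} (applied with the roles of $y,y'$ chosen so the denominator matches) is bounded by $2\max_i\lone{\ai}^2 \cdot V_{w_0\y}(z\y)$; in particular, $\max_i\lone{\ai}^2$ appears in all three of $\ltooco, \ltotco, \ltohco$.

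The $\yset$-block requires a separate computation for each distribution. For arbitrary $w\y$, the key identity is
\[
\E\sum_i [w\y]_i[\tilde{g}\y_{w_0}(z)-g\y(w_0)]_i^2 = \sum_{i,j}\frac{[w\y]_i A_{ij}^2[z\x-w_0\x]_j^2}{q_{ij}},
\]
and plugging in each of \eqref{eq:vr-l2l1-prob-def-1}, \eqref{eq:vr-l2l1-prob-def-2}, \eqref{eq:vr-l2l1-prob-def-3} simplifies cleanly because $q_{ij}$ factors out a $[z\x-w_0\x]_j^2$ or $A_{ij}^2$ term. The resulting bounds are $\norm{A}_{\mathrm{F}}^2\norm{z\x-w_0\x}_2^2$ (distribution 1), $\left(\sum_l \cs_l[z\x-w_0\x]_l^2\right)\max_i\ltwo{\ai}^2 \le \rcs\max_i\ltwo{\ai}^2\norm{z\x-w_0\x}_2^2$ (distribution 2, where Cauchy--Schwarz absorbs the extra $\max_i\lone{\ai}^2$ term from the $\xset$-block into $\rcs\max_i\ltwo{\ai}^2$), and $\left(\max_i\lone{\ai}\right)\left(\max_j\lone{\aj}\right)\norm{z\x-w_0\x}_2^2$ (distribution 3). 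Using $\norm{z\x-w_0\x}_2^2 \le 2V_{w_0}(z)$, each bound matches $2(L_{\mathsf{co}}^{2,1,(k)})^2 V_{w_0}(z)$ respectively, as claimed. The only slightly subtle step is the Cauchy--Schwarz absorption for distribution 2, which leverages that $\lone{\ai}^2 \le \rs_i\ltwo{\ai}^2 \le \rcs\ltwo{\ai}^2$ to combine both block bounds into a single factor of $2\rcs\max_i\ltwo{\ai}^2 = (\ltotco)^2$.
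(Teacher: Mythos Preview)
Your proposal is correct and follows essentially the same approach as the paper: split the local-norm second moment into $\xset$ and $\yset$ blocks, handle the shared $\xset$-block via Lemma~\ref{lem:local-norms}, and compute the $\yset$-block directly for each of the three $q$ distributions. The only cosmetic difference is that you explicitly flag the Cauchy--Schwarz step $\lone{\ai}^2\le\rcs\ltwo{\ai}^2$ to absorb the $\xset$-block contribution into $(\ltotco)^2$, which the paper leaves implicit in the definition of $\ltotco$.
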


\begin{proof}
First, we give the proof for the sampling distribution \eqref{eq:vr-l2l1-prob-def-1}. Unbiasedness holds by definition. For the $x$ block, we have the variance bound:
	\begin{align*}
	\E\left[\ltwo{\tilde{g}\x_{w_0}(z)-g\x(w_0)}^2\right]
	& =
	\sum_{i\in[m],j\in[n]} p_{ij}(z;w_0) \left(\frac{A_{ij}[z\y-w_0\y]_i}{p_{ij}(z;w_0)}\right)^2  = \sum\limits_{i\in[m],j\in[n]} \frac{A_{ij}^2[z\y-w_0\y]_i^2}{p_{ij}(z;w_0)}\\
	& \le 2\max_{i\in[m]} \lone{\ai}^2 V_{w_0\y}(z\y),
	\end{align*}
	where in the last inequality we used Lemma~\ref{lem:local-norms}.
	
	For arbitrary $w\y$, we have the variance bound on the $y$ block:
	\begin{align*}
	\E\left[\norm{\tilde{g}\y_{w_0}(z)-g\y(w_0)}_{w\y}^2\right]
	&= \sum\limits_{i\in[m],j\in[n]}[w\y]_i \frac{A_{ij}^2[z\x-w_0\x]_j^2}{q_{ij}(z;w_0)}\\
	& = \sum\limits_{i\in[m],j\in[n]}[w\y]_i[z\x - w\x_0]_j^2\norm{A}_\mathrm{F}^2 \le 2\norm{A}_\mathrm{F}^2V_{w_0\x}(z\x).
	\end{align*}
	Combining these and using $$\norm{\tilde{g}_{w_0}(z)-g(w_0)}^2_w \defeq\ltwo{\tilde{g}_{w_0}(z)\x-g(w_0)\x}^2+\norm{\tilde{g}_{w_0}(z)\y-g(w_0)\y}^2_{w\y}$$ yields the desired variance bound.	For the remaining two distributions, the same argument demonstrates unbiasedness and the variance bound for the $x$ block. For sampling distribution \eqref{eq:vr-l2l1-prob-def-2} and arbitrary $w\y$, we have the variance bound on the $y$ block:
	\begin{align*}
	\E\left[\norm{\tilde{g}\y_{w_0}(z)-g\y(w_0)}_{w\y}^2\right]
	&= \sum\limits_{i\in[m],j\in[n]}[w\y]_i \frac{A_{ij}^2[z\x-w_0\x]_j^2}{q_{ij}(z;w_0)}\\
	&\le \left(\sum\limits_{i\in[m],j\in[n]}[w\y]_i A_{ij}^2\right)\left(\rcs \sum_{j \in [n]} [z\x-w_0\x]_j^2\right)\\
	&\le 2\rcs \max_{i \in [m]} \norm{\ai}_2^2V_{w_0\x}(z\x).
	\end{align*}
	Finally, for sampling distribution \eqref{eq:vr-l2l1-prob-def-3}, we have the variance bound on the $y$ block:
	\begin{align*}
	\E\left[\norm{\tilde{g}\y_{w_0}(z)-g\y(w_0)}_{w\y}^2\right]
	&= \sum\limits_{i\in[m],j\in[n]}[w\y]_i \frac{A_{ij}^2[z\x-w_0\x]_j^2}{q_{ij}(z;w_0)}\\
	&\le \left(\sum\limits_{i\in[m],j\in[n]}[w\y]_i |A_{ij}|\right)\left(\sum_{l \in [n]} \norm{A_{:l}}_1 \cdot [z\x - w_0\x]_l^2\right)  \\
	&\le 2\left(\max_{i \in [m]} \norm{\ai}_1\right)\left(\max_{j \in [n]} \norm{\aj}_1\right)V_{w_0\x}(z\x).
	\end{align*}
\end{proof}

Finally, as in Section~\ref{ssec:l2l1sub}, we define the constant
\[\ltoco \defeq \sqrt{\max_{i\in[m]}\lones{\ai}^2 + \min\left(\norm{A}_{\textrm{F}}^2, \rcs \max_{i \in [m]} \norm{\ai}_2^2, \left(\max_{i \in [m]} \norm{\ai}_1\right)\left(\max_{j \in [n]} \norm{\aj}_1\right)\right)},\]
and note that Lemma~\ref{lem:est-prop-vr-l2l1} implies that we can obtain a $\sqrt{2}\ltoco$-centered-local estimator by appropriately choosing a sampling distribution depending on the minimizing parameter. 

\subsubsection{Implementation details}
\label{ssec:implementvrltwoone}

The algorithm we analyze is Algorithm~\ref{alg:outerloop} with $K = 3\alpha\Theta/\eps$, $\vepsout=2\eps/3$ using Algorithm~\ref{alg:innerloop-approx} as an $(\alpha, \vepsi = \eps/3)$-relaxed proximal oracle with $\epsaprx=\eps/18$. In the implementation of Algorithm~\ref{alg:outerloop}, we again apply the $\truncate(\cdot, \delta)$ operation to each iterate $z_k^{\star}$, where the $\truncate$ operation only affects the $y$ block; choosing $\delta = \frac{\vepsout - \vepsi}{\alpha m}$ suffices for its guarantees (see Section~\ref{ssec:implementvrlone} for the relevant discussion). In the implementation of Algorithm~\ref{alg:innerloop-approx}, we use the centered-local gradient estimator defined in \eqref{eq:estimate-vr}, using the sampling distribution amongst~\eqref{eq:vr-l2l1-prob-def-1},~\eqref{eq:vr-l2l1-prob-def-2}, or~\eqref{eq:vr-l2l1-prob-def-3} which attains the variance bound $\ltoco$. For each use of Algorithm~\ref{alg:innerloop-approx}, we choose 
\begin{equation*}\label{eq:etatchoices}
\begin{aligned}\eta &=
 \frac{\alpha}{20\ltocop^2}
 \text{ and } T = 
\left\lceil\frac{6}{\eta\alpha}\right\rceil=\frac{120\ltocop^2}{\alpha^2}.
\end{aligned}\end{equation*}
For simplicity, because most of the algorithm implementation details are exactly the same as the discussion of Section~\ref{ssec:implementvrlone} for the simplex block $y \in \yset$, and exactly the same as the discussion of Section~\ref{ssec:implementvrltwo} for the ball block $x \in \xset$, we discuss the differences here. 

\paragraph{Outer loop extragradient steps.} We execute $3\alpha\log(2m)/\eps$ iterations of Algorithm~\ref{alg:outerloop} to obtain the desired gap. We spend $O(\nnz)$ time executing each extragradient step exactly, and then $O(m + n)$ time applying the $\truncate$ operation and maintaining the average point $\bar{z}$. When we initialize the inner loop, we also create a data structure supporting sampling from $w_0\y$ in constant time.

\paragraph{Data structure initializations and invariants.}

On the simplex block, we follow the strategy outlined in Section~\ref{ssec:implementvrlone}. We initialize our simplex maintenance data structure $\AEMS\y(w_0\y, v, \kappa, \tveps)$ with parameters
\[\kappa \defeq \frac{1}{1 + \eta\alpha/2},\; v \defeq (1 - \kappa)\log w_0\y - \eta\kappa g\y(w_0),\; \tveps \defeq (m + n)^{-8}.\] 
We will again maintain the invariant that the data structures maintain ``exact'' and ``approximate'' points corresponding to the iterates of our algorithm. The correctness of this setting with respect to the requirements of Proposition~\ref{prop:innerloopproof}, i.e.\ the approximation conditions in Line~\ref{line:inner-error-hat},~\ref{line:inner-error-star} and \ref{line:inner-average} in Algorithm~\ref{alg:innerloop-approx}, follows from the discussion of Section~\ref{ssec:implementvrlone}; we note that the condition $\min_j[w_0\x]_j \ge (m + n)^{-5} = \lambda$ again holds, and that $1 - \kappa \ge (m + n)^{-8}$. Thus, for the parameter $\omega$ used in the interface of $\AEM$, we have \[\log(\omega)=\log\left(\max\left(\frac{1}{1-\kappa},\frac{m}{\lambda \tveps}\right)\right)=O(\log(mn)).\]

On the ball block, we follow the strategy outlined in Section~\ref{ssec:implementvrltwo}, but instead of using an $\IM_2$ on the $x$-block, we use $\CIMS_2\x$, an instance of $\CIM_2$ data structure initialized with the point $w_0\x$, supporting the required sampling operation. For the sampling distribution \eqref{eq:vr-l2l1-prob-def-2}, we use the weight vector of column nonzero counts, and for \eqref{eq:vr-l2l1-prob-def-3} we use the weight vector of column $\ell_1$ norms. Overall, the complexity of the initializations on both blocks is bounded by $O(n+m\log^2(m)\log^2(mn))$. 

\paragraph{Inner loop iterations.}

We discuss how to sample from each of the distributions~\eqref{eq:vr-l2l1-prob-def-1},~\eqref{eq:vr-l2l1-prob-def-2}, and~\eqref{eq:vr-l2l1-prob-def-3} in  $O(\log(m)\log(mn))$. Combining with the discussions of implementing the inner loop in Sections~\ref{ssec:implementvrlone} and~\ref{ssec:implementvrltwo}, the total complexity of the inner loop, other than outputting the average iterate, is
\begin{align*}O\left(T\log^2(m)\log^2(mn) + \nnz + m\log(m)\log^2(mn)\right) \\
= O\left(\frac{\ltoocop^2\log^2(m)\log^2(mn)}{\alpha^2}+ \nnz + m\log(m)\log^2(mn)\right).\end{align*}
As in the variance-reduced $\ellone$ setting, the dominant term in the runtime is the complexity of calling $\AEMS\y.\AddSparse$ in each iteration. Recall that the distribution $p$ in every case is given by
\begin{equation*}
\begin{aligned}
p_{ij}(z;w_0) \defeq \frac{[z\y]_i+2[w_0\y]_i}{3}\cdot\frac{|A_{ij}|}{\lones{\ai}}
\end{aligned}
\end{equation*}
With probability $2/3$ we sample a coordinate $i$ from the precomputed data structure for sampling from $w_0\y$, and otherwise we sample $i$ via $\AEMS\y.\Sample()$. Then, we sample an entry $j$ proportional to its magnitude from the $\ell_1$ sampling oracle for $\ai$ in constant time. The runtime is dominated by $O(\log(m)\log(mn))$.

To sample from the distribution $q$ in \eqref{eq:vr-l2l1-prob-def-1}, we follow the outline in Section~\ref{ssec:vr-l2l1}. Similarly, for sampling from distributions \eqref{eq:vr-l2l1-prob-def-2} and \eqref{eq:vr-l2l1-prob-def-3}, we follow the outline in Section~\ref{ssec:vr-l2l1} but replace all calls to an $\IM$ instance with a call to $\CIMS_2\x$ initialized with an appropriate weight vector. In all cases, the runtime is $O(\log m)$ which does not dominate the iteration complexity. 

Finally, it is clear from discussions in previous sections that the iterate maintenance invariants of our data structures are preserved by the updates used in this implementation.

\subsubsection{Algorithm guarantee}

\begin{theorem}
	\label{thm:l2l1}
	In the $\elltwoone$ setup, let $\tnnz\defeq \nnz + m\log(m)\log^2(mn)$. The implementation in Section~\ref{ssec:implementvrltwoone} with the optimal choice of $\alpha = \max\left(\epsilon/3, \ltoco\log(m)\log\left(mn\right)/\sqrt{\tnnz}\right)$ has runtime
	\begin{align*}
	O\left(\left(\tnnz + \frac{\ltocop^2\log^2(m)\log^2(mn)}{\alpha^2}\right)\frac{\alpha\log(m)}{\eps}\right) 
	= O\left(\tnnz + \frac{\sqrt{\tnnz} \ltoco \log(mn)\log^2(m)}{\epsilon} \right)
	\end{align*}
	and outputs a point $\bar{z} \in \zset$ such that
\begin{equation*}
\begin{aligned}
\E\,\left[\gap(z) \right]\le \epsilon.
\end{aligned}
\end{equation*}
\end{theorem}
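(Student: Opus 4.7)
The plan is to follow the same template as the proof of Theorem~\ref{thm:l1l1}, combining the outer-loop guarantee of Proposition~\ref{prop:outerloopproof} with the inner-loop guarantee of Proposition~\ref{prop:innerloopproof}, and then reading off the total runtime from the per-iteration cost analysis already laid out in Section~\ref{ssec:implementvrltwoone}. For correctness, note that by Lemma~\ref{lem:est-prop-vr-l2l1} each of the three candidate distributions~\eqref{eq:vr-l2l1-prob-def-1}--\eqref{eq:vr-l2l1-prob-def-3} yields a $\sqrt{2}\ltoco$-centered-local estimator (when the one achieving the minimum of the three constants is selected), which meets the hypothesis of Proposition~\ref{prop:innerloopproof} with $L=\sqrt{2}\ltoco$; the Lipschitz and boundedness hypotheses on $g$ hold because we set $g(0)=0$ in the $\elltwoone$ setup and the domain has diameter $O(1)$. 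Choosing $\eta=\alpha/(20\ltocop^2)$ and $T=\lceil 6/(\eta\alpha)\rceil$ together with $\epsaprx=\eps/18$ then satisfies the conditions of Proposition~\ref{prop:innerloopproof}, so that $\InnerLoopApprox$ is an $(\alpha,\eps/3)$-relaxed proximal oracle.

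The one subtlety distinguishing this proof from the $\ell_1$-$\ell_1$ case is the presence of a Euclidean $x$-block: the $\truncate$ operation used to satisfy Line~\ref{line:outerloop_approx} of Algorithm~\ref{alg:outerloop} needs to be applied only to the simplex $y$-block, and $\delta=(\vepsout-\vepsi)/(\alpha m)$ suffices by the argument of Lemma~\ref{lem:projecteffect}. Also, the approximation tolerance $\tilde{\veps}=(m+n)^{-8}$ chosen for $\AEMS\y$ has to simultaneously satisfy (i) the divergence bound $V_{w_0\y}(\hat{y}_t)-V_{w_0\y}(y_t^\star)\le\epsaprx/\alpha$, (ii) the local-norm stability bound $\max_u\left[V_{y_t}(u)-V_{y_t^\star}(u)\right]\le\eta\epsaprx$, and (iii) the norm approximation for the averaging step; the verification of all three is identical to the one in Section~\ref{ssec:implementvrlone} (using Lemma~\ref{lem:stablediv} and $\norm{\log w_0\y}_\infty=O(\log(mn))$) because only the $y$-block feeds into $\AEM$ and the $x$-block is handled exactly by $\CIMS_2\x$. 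Thus Proposition~\ref{prop:outerloopproof} applied with $K=3\alpha\Theta/\eps$, $\vepsout=2\eps/3$, and $\vepsi=\eps/3$ (with $\Theta=O(\log m)$ in the $\elltwoone$ setup) yields $\E\,\gap(\bar{z})\le\eps$.

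For the runtime, by Section~\ref{ssec:implementvrltwoone}, each outer iteration costs an exact extragradient step ($O(\nnz)$), inner-loop initialization costing $O(\nnz+m\log(m)\log^2(mn))=O(\tnnz)$, and $T=O(\ltocop^2/\alpha^2)$ inner iterations at cost $O(\log^2 m\log^2(mn))$ apiece; the averaging and $\truncate$ costs are dominated by these. Summing over $K=O(\alpha\log(m)/\eps)$ outer iterations gives
\[
O\!\left(\left(\tnnz+\frac{\ltocop^2\log^2(m)\log^2(mn)}{\alpha^2}\right)\cdot\frac{\alpha\log(m)}{\eps}\right),
\]
and balancing the two terms in $\alpha$ yields the optimum $\alpha=\max(\eps/3,\,\ltoco\log(m)\log(mn)/\sqrt{\tnnz})$, reproducing the stated runtime. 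The only nontrivial piece of the plan is the bookkeeping on $\tilde{\veps}$ and the verification that the $\CIM_2$ invariants (exact maintenance of the $x$-iterate in the form $\kappa x + (1-\kappa)w_0\x - \eta\kappa g\x(w_0)e_{\text{sparse}}$) line up with the regularized mirror-descent step of Algorithm~\ref{alg:innerloop-approx}; this is essentially identical to the $\elltwo$ case in Section~\ref{ssec:implementvrltwo} and introduces no new difficulty.
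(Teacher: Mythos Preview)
Your proposal is correct and follows essentially the same approach as the paper's proof, which simply cites Propositions~\ref{prop:outerloopproof} and~\ref{prop:innerloopproof} with the parameter choices $K=3\alpha\Theta/\eps$, $\vepsout=2\eps/3$, $\vepsi=\eps/3$, $\epsaprx=\eps/18$, $\tilde{\veps}=(m+n)^{-8}$, and defers the runtime to the discussion in Section~\ref{ssec:implementvrltwoone}. Your write-up is in fact more detailed than the paper's own proof, correctly spelling out that $\truncate$ acts only on the $y$-block and that the $x$-block is handled exactly by $\CIMS_2\x$.
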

\begin{proof}
The correctness of the algorithm is given by the discussion in Section~\ref{ssec:implementvrltwoone} and the guarantees of Proposition~\ref{prop:outerloopproof} with $K=3\alpha\Theta/\eps$, $\vepsout=2\eps/3$, $\vepsi=\eps/3$, Proposition~\ref{prop:innerloopproof} with $\epsaprx=\eps/18$ and data structure $\AEM$ with our choice of \[ \tilde{\veps}\defeq (m + n)^{-8}\] to meet the approximation conditions in Line~\ref{line:inner-error-hat},~\ref{line:inner-error-star} and \ref{line:inner-average} in Algorithm~\ref{alg:innerloop-approx}. The runtime bound is given by the discussion in Section~\ref{ssec:implementvrltwoone}, and the optimal choice of $\alpha$ is clear.
\end{proof}

\section{Additional results on variance-reduced methods}
\label{app:rcs-vr}

\subsection{Row-column sparsity variance-reduced methods}\label{app:rcs}

By instantiating relaxed proximal oracles with row-column based gradient estimators developed in~\citep{CarmonJST19}, implemented with the data structures we develop in Section~\ref{sec:ds}, we obtain the improved complexities as stated in Table~\ref{table:runtimes}. Namely, up to logarithmic factors, we generically replace a dependence on $O(m+n)$ with $O(\rcs)$, where $\rcs$ is defined as the maximum number of nonzero entries for any row or column. In this section, we give implementation details.

The estimators $\tilde{g}_{w_0}$ of \cite{CarmonJST19}, parameterized by reference point $w_0$, sample a full column or row of the matrix (rather than a coordinate). To compute $\tilde{g}_{w_0}(z)$ we sample $i\sim p(z)$ and $j\sim q(z)$ 
independently according to a specified distribution depending on the setup, and use the estimator
\begin{equation}
\label{eq:tgdef-rcs}
\begin{aligned}
\tilde{g}_{w_0}(z)& \defeq \left(A^\top 
w_0\y+\ai\frac{[z\y]_i-[w_0\y]_i}{p_i(w)},
-Aw_0\x-\aj\frac{[z\x]_j-[w_0\x]_j}{q_j(w)}\right),
\end{aligned}
\end{equation}

The key difference between this estimator with that of Section~\ref{ssec:vr-estimator} is that its difference with $g(w_0)$ is $O(\rcs)$-sparse rather than $O(1)$-sparse, requiring $\MultSparse$ steps with $O(\rcs)$-sparse vectors. In all other respects, the implementation details are exactly the same as those in Section~\ref{ssec:vr-l1} and Appendix~\ref{app:vr-proofs}, so we omit them for brevity. We now state our sampling distributions used with the estimator form \eqref{eq:tgdef-rcs}, and the corresponding centered local variance bounds.

In the $\ellone$ setup, we use the sampling distribution (from reference point $w_0 \in \Delta^m \times \Delta^n$)
\begin{equation}\label{eq:l1-prob-def-rcs}
p_i(z)\defeq\frac{[z\y]_i+2[w_0\y]_i}{3}~~\mbox{and}~~\ 
q_j(z)\defeq\frac{[z\x]_j+2[w_0\x]_j}{3}.
\end{equation}
\begin{lemma}
In the $\ellone$ setup, gradient estimator \eqref{eq:tgdef-rcs} using the sampling distribution in \eqref{eq:l1-prob-def-rcs} is a $\sqrt{2}\norm{A}_{\max}$-centered-local estimator.
\end{lemma}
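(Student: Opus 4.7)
The proof will closely parallel Lemma~\ref{lem:est-prop-vr-l1}, with the key simplification that because the entire row $\ai$ (resp.\ column $\aj$) is revealed upon sampling, the matrix entries enter the variance only through $\norm{A}_{\max}^2$ after we pull out the simplex weights $[w\x]_j$ (resp.\ $[w\y]_i$). The main ingredients are (i) the fact that for $z\y, w_0\y \in \Delta^m$, the marginal distribution $p_i(z)$ is itself a valid probability on $[m]$, giving unbiasedness directly, and (ii) the local-norms divergence bound of Lemma~\ref{lem:local-norms}, which matches the convex combination $\tfrac{1}{3}z + \tfrac{2}{3}w_0$ appearing in the denominator of the importance weights.

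The plan is as follows. First I would verify unbiasedness: since $\sum_i p_i(z) = 1$ and $p_i(z)$ does not depend on the column index, $\E\bigl[\ai \tfrac{[z\y]_i-[w_0\y]_i}{p_i(z)}\bigr] = \sum_i \ai \bigl([z\y]_i - [w_0\y]_i\bigr) = A^\top(z\y - w_0\y)$, and adding the fixed term $A^\top w_0\y$ yields $A^\top z\y = \grad_x f(z)$; the $y$-block is symmetric. Second, I would compute the local-norm second moment of the $x$-block of $\tilde{g}_{w_0}(z) - g(w_0)$: for arbitrary $w\x \in \Delta^n$,
\begin{align*}
\E\Bigl[\bigl\|\ai \tfrac{[z\y]_i-[w_0\y]_i}{p_i(z)}\bigr\|_{w\x}^2\Bigr]
&= \sum_{i\in[m]} p_i(z) \sum_{j\in[n]} [w\x]_j \frac{A_{ij}^2([z\y]_i-[w_0\y]_i)^2}{p_i(z)^2} \\
&\le \norm{A}_{\max}^2 \Bigl(\sum_{j\in[n]} [w\x]_j\Bigr) \sum_{i\in[m]} \frac{([z\y]_i-[w_0\y]_i)^2}{\tfrac{2}{3}[w_0\y]_i + \tfrac{1}{3}[z\y]_i} \\
&\le 2\norm{A}_{\max}^2\, V_{w_0\y}(z\y),
\end{align*}
where the last step applies Lemma~\ref{lem:local-norms} with the roles of $y$ and $y'$ set to $w_0\y$ and $z\y$, respectively, and uses $\sum_j [w\x]_j = 1$. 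The $y$-block admits the symmetric bound $2\norm{A}_{\max}^2 V_{w_0\x}(z\x)$ by the same calculation with $q$ and $\aj$.

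Finally I would sum the two block bounds and use the separability $V_{w_0}(z) = V_{w_0\x}(z\x) + V_{w_0\y}(z\y)$ of the KL divergence on the product simplex to conclude $\E\bigl[\norm{\tilde{g}_{w_0}(z)-g(w_0)}_w^2\bigr] \le 2\norm{A}_{\max}^2 V_{w_0}(z)$, matching Definition~\ref{def:vr} with $L = \sqrt{2}\norm{A}_{\max}$. There is no serious obstacle: the step that most needs care is the denominator bookkeeping when invoking Lemma~\ref{lem:local-norms}, since the "sampling from the sum" distribution places weight $\tfrac{1}{3}$ on the current iterate and $\tfrac{2}{3}$ on the reference point, which must be matched to the $\tfrac{2}{3}y + \tfrac{1}{3}y'$ convention of that lemma by the appropriate identification of $y$ with $w_0\y$.
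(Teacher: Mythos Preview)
Your proposal is correct and follows essentially the same approach as the paper. The only cosmetic difference is that the paper bounds the stronger $\ell_\infty$ second moment $\E\norm{\tilde{g}_{w_0}\x(z)-g\x(w_0)}_\infty^2 \le \sum_i \norm{\ai}_\infty^2\,([z\y]_i-[w_0\y]_i)^2/p_i(z)$ and then remarks that this implies the local-norm bound (since $\norm{\gamma}_{w\x}\le\norm{\gamma}_\infty$ for $w\x\in\Delta^n$), whereas you compute the local norm directly and use $\sum_j [w\x]_j A_{ij}^2 \le \norm{A}_{\max}^2$; both routes land on the same application of Lemma~\ref{lem:local-norms}.
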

\begin{proof}
Unbiasedness holds by definition. For the variance bound, it suffices to show that 
\[\E \norm{\tilde{g}_{w_0}(z) - g(w_0)}_\infty^2 \le 2\norm{A}_{\max}^2 V_{w_0\x}(z\x);\]
clearly this implies the weaker relative variance bound statement (along with an analogous bound on the $y$ block). To this end, we have
\begin{align*}
\E\norm{\tilde{g}_{w_0}(z) - g(w_0)}_\infty^2 \le \sum_{i \in [m]} \frac{\norm{\ai}_\infty^2[z\y - w_0\y]^2_i}{p_i(z)} \le 2\norm{A}_{\max}^2V_{w_0\x}(z\x),
\end{align*}
where the last inequality used Lemma~\ref{lem:local-norms}. 
\end{proof}
In the $\elltwo$ setup, we use the oblivious sampling distribution
\begin{equation}
\label{eq:l2l2-probs-rcs}
\begin{aligned}
p_i=\frac{\ltwo{\ai}^2}{\lfro{A}^2}
 ~~\mbox{and}~~
 q_j=\frac{\ltwo{\aj}^2}{\lfro{A}^2}.
\end{aligned}
\end{equation}

We proved that gradient estimator \eqref{eq:tgdef-rcs} using the sampling distribution in \eqref{eq:l2l2-probs-rcs} admits a $\lfro{A}$-centered estimator in \cite{CarmonJST19}, which is an equivalent definition to Definition~\ref{def:vr} in the $\elltwo$ setup. In the $\elltwoone$ setup, we use the sampling distribution (from reference point $w_0 \in \ball^n \times \Delta^m$)
\begin{equation}
\label{eq:tgdef-l2-probs-rcs}
\begin{aligned}
 p_i(z)=\frac{[z\y]_i+2[w_0\y]_i}{3}~~\mbox{and}~~
 \ q_j(z)=\frac{([z\x]_j-[w_0\x]_j)^2}{\left\Vert z\x - 
 w_0\x\right\Vert^2_{2}}.
\end{aligned}
\end{equation}

\begin{restatable}{lemma}{restateClipped}
	\label{lem:l2-gradient-est-dynamic}
	In the $\elltwoone$ setup, gradient estimator~\eqref{eq:tgdef-rcs} using the sampling distribution in~\eqref{eq:tgdef-l2-probs-rcs} is a $\sqrt{2}L$-centered-local estimator with $L = 
	\max_{i \in [m]}\ltwo{\ai} = \norm{A}_{2\rightarrow\infty}$.
\end{restatable}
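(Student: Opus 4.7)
The plan is to verify the two defining properties of Definition~\ref{def:vr} for the estimator~\eqref{eq:tgdef-rcs} under distribution~\eqref{eq:tgdef-l2-probs-rcs}, following the same template as the proofs of Lemmas~\ref{lem:est-prop-vr-l1} and~\ref{lem:est-prop-vr-l2l1}. Unbiasedness is immediate from the fact that $p_i(z)>0$ whenever $[z\y]_i\neq [w_0\y]_i$ (and similarly for $q_j$), so the standard importance-sampling identity gives $\E\tilde{g}_{w_0}(z)=g(z)$. The real work is the relative variance bound with respect to the $\elltwoone$ local norm $\norm{\delta}_w^2 = \norm{\delta\x}_2^2 + \sum_i [w\y]_i[\delta\y]_i^2$, which splits into a bound on each block.

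For the $\xset$ block, the noise is $\tilde{g}\x_{w_0}(z)-g\x(w_0) = \ai(\,[z\y]_i-[w_0\y]_i\,)/p_i(z)$ where $i\sim p$. I would compute
\[
\E\norm{\tilde{g}\x_{w_0}(z)-g\x(w_0)}_2^2
= \sum_i \frac{\norm{\ai}_2^2\,([z\y]_i-[w_0\y]_i)^2}{p_i(z)}
\le \max_i\norm{\ai}_2^2 \sum_i\frac{([z\y]_i-[w_0\y]_i)^2}{\tfrac{1}{3}[z\y]_i+\tfrac{2}{3}[w_0\y]_i}.
\]
Applying Lemma~\ref{lem:local-norms} bounds the last sum by $2V_{w_0\y}(z\y)$, yielding $2L^2 V_{w_0\y}(z\y)$. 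This is exactly the ``sampling-from-the-sum'' trick already used in Section~\ref{ssec:l1-est}, so the only new ingredient is the factor $\norm{\ai}_2^2$ coming from the row being sampled in full.

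For the $\yset$ block, the noise is $\tilde{g}\y_{w_0}(z)-g\y(w_0) = -\aj([z\x]_j-[w_0\x]_j)/q_j(z)$ where $j\sim q$. Plugging in the dynamic distribution $q_j(z)=([z\x]_j-[w_0\x]_j)^2/\norm{z\x-w_0\x}_2^2$ produces a clean cancellation:
\[
\E\sum_i [w\y]_i [\tilde{g}\y_{w_0}(z)-g\y(w_0)]_i^2
= \sum_j \sum_i [w\y]_i A_{ij}^2\cdot\norm{z\x-w_0\x}_2^2
= \norm{z\x-w_0\x}_2^2 \sum_i [w\y]_i\norm{\ai}_2^2.
\]
Since $w\y\in\Delta^m$, the inner sum is bounded by $\max_i\norm{\ai}_2^2 = L^2$, and $\norm{z\x-w_0\x}_2^2 = 2V_{w_0\x}(z\x)$, giving $2L^2 V_{w_0\x}(z\x)$. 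Summing the two block bounds yields $\E\norm{\tilde{g}_{w_0}(z)-g(w_0)}_w^2 \le 2L^2\,V_{w_0}(z)$, which is the desired $(\sqrt{2}L)$-centered-local property.

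There is no substantial obstacle here: the argument is a direct variance calculation and the two nontrivial ingredients (Lemma~\ref{lem:local-norms} for the simplex block and the self-normalizing choice of $q_j$ for the ball block) are standard tools already established earlier in the paper. The only mild care needed is to observe that when $z\x=w_0\x$ the $\yset$-block noise vanishes identically so the definition of $q$ is inconsequential, ensuring the bound holds vacuously in that degenerate case.
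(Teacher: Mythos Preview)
Your proposal is correct and follows essentially the same approach as the paper's proof: both treat the two blocks separately, use Lemma~\ref{lem:local-norms} with the sampling-from-the-sum distribution for the $\xset$ block, and exploit the self-normalizing choice of $q_j$ for the $\yset$ block. The only cosmetic difference is that the paper bounds $\max_i \E[\tilde{g}\y_{w_0}(z)-g\y(w_0)]_i^2$ coordinatewise while you keep the weights $[w\y]_i$ explicit; both yield the same local-norm bound.
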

\begin{proof}
Unbiasedness holds by definition. For the variance bound, we first note
\begin{align*}
	\E\left[\norm{\tilde{g}\x_{w_0}(z)-g\x(w_0)}_{2}^2\right] & \le \sum_{i\in[m]}\ltwo{\ai}^2\frac{\left([z\y]_i-[w_0\y]_i\right)^2}{\tfrac{1}{3}[z\y]_i+\tfrac{2}{3}[w_0\y]_i}\le \max_{i\in[m]}\ltwo{\ai}^2\left(\sum_{i\in[m]}\frac{\left([z\y]_i-[w_0\y]_i\right)^2}{\tfrac{1}{3}[z\y]_i+\tfrac{2}{3}[w_0\y]_i}\right)\\
	& \le 2\max_{i\in[m]}\ltwo{\ai}^2V_{w_0\y}(z\y),
\end{align*}
where for the last inequality we use Lemma~\ref{lem:local-norms}.
On the other block, we have 
\[
\max_{i\in[m]}\E \left[\tilde{g}_{w_0}\y(w) - 
		g\y(w_0)\right]_i^2\le \max_{i\in[m]}\sum_{j\in[n]}\frac{A_{ij}^2[w\x-w_0\x]_j^2}{q_j(w)} =2\max_{i\in[m]}\ltwo{\ai}^2V_{w_0\x}(w\x).
\]
Summing these two bounds concludes the proof.
\end{proof}

\subsection{Extensions with composite terms}
\label{ssec:comp}

In this section, we give a brief discussion of how to change Proposition~\ref{prop:innerloopproof} and implementations of the procedures in Sections~\ref{sec:sublinear} and~\ref{sec:vr} to handle modified regularization in the context of Proposition~\ref{prop:outerloopproof-sm}, and composite regularization terms in the objective in the methods of Section~\ref{sec:app}. Specifically we consider a composite optimization problem of the form:
\[
\min_{x\in\xset}\max_{y\in\yset}y^\top Ax+\mu\x\phi(x)-\mu\y\psi(y)\text{ where }\phi= 
	V\x_{x'}
\text{ and }\psi=
	V\y_{y'}.
\]

For simplicity of notation we define $\Upsilon(x,y)\defeq \mu\x\phi(x)+\mu\y\psi(y)$. We remark that $x'=0$ recovers the case of $\phi= r\x$ when $\xset=\ball^n$, and $x'=\tfrac{1}{n}\1$ recovers the case of $\phi=r\x$ when $\xset=\Delta^n$ (similarly setting $y'$ allows us to recover this for the $y$ block).

\subsubsection{Changes to inner loop}

In this section, we first discuss the necessary changes to Algorithm~\ref{alg:innerloop-approx} and Proposition~\ref{prop:innerloopproof}. For simplicity of notation, we denote $\rho\defeq\sqrt{\mu\x/\mu\y}$, $\hat{V}\x \defeq \rho V\x$, $\hat{V}\y\defeq\tfrac{1}{\rho}V\y$, $\hat{V} \defeq \hat{V}\x+\hat{V}\y$.

\begin{algorithm}
	\label{alg:innerloop-approx-comp}
	\DontPrintSemicolon
	\KwInput{Initial $w_0\in\zset$, $(L, 
	\alpha)$-centered-local gradient estimator $\tilde{g}_{w_0}$,
	oracle quality $\alpha>0$}
	\Parameter{Step size $\eta$, number of iterations $T$, approximation tolerance $\epsaprx$}
	\KwOutput{Point $\tilde{w}$ satisfying Definition~\ref{def:alphaprox}}
	\For{$t = 1, \ldots, T$}
	{
	$\hat{w}_{t - 1}\approx w_{t - 1}$ satisfying $\hat{V}_{w_0}(\hat{w}_{t-1})-\hat{V}_{w_0}(w_{t-1})\le\tfrac{\epsaprx}{\alpha}$ and $\norm{\hat{w}_{t - 1}-w_{t - 1}}\le\tfrac{\epsaprx}{LD}$\;\label{line:inner-error-hat-comp}
	$w_t^\star \leftarrow 
				\argmin\left\{\inner{\clip(\eta \tilde{g}_{w_0}(\hat{w}_{t - 
						1}) - \eta g(w_0))}{w} + \eta \Upsilon(w)+ \frac{\eta\alpha}{2}\hat{V}_{w_0}(w) + \hat{V}_{w_{t - 1}}(w) 
		\right\}$\; 
	$w_t\approx w_t^\star$ satisfying $\max_u \left[\hat{V}_{w_t}(u)-\hat{V}_{w_t^\star}(u)\right]\le \tfrac{\epsaprx}{1+\sqrt{\mu\x\mu\y}}$, $\hat{V}_{w_0}(w_t)-\hat{V}_{w_0}(w_t^\star)\le\tfrac{\epsaprx}{\alpha}$, and $\hat{V}_{z'}(w_t)-\hat{V}_{z'}(w_t^\star)\le\tfrac{\epsaprx}{\sqrt{\mu\x\mu\y}}$\;\label{line:inner-error-star-comp}
	}
	\Return 
	$\tilde{w}\approx\frac{1}{T}\sum_{t=1}^T w_t$ satisfying $\norm{\tilde{w} - \frac{1}{T}\sum_{t=1}^T w_t} \le \tfrac{\epsaprx}{LD}$, $\max_u\left[\hat{V}_{\tilde{w}}(u)- \hat{V}_{\bar{w}} (u)\right]\le \tfrac{\epsaprx}{\sqrt{\mu\x\mu\y}}$, $\hat{V}_{w'}(\tilde{w})- \hat{V}_{w'} (\bar{w})\le \tfrac{\epsaprx}{\sqrt{\mu\x\mu\y}}$, and $\norm{w_t-w_t^\star}\le \tfrac{\epsaprx}{2LD}$\label{line:inner-average-comp}
	\caption{$\InnerLoopApprox(w_0, \tilde{g}_{w_0}, \epsaprx)$}
\end{algorithm}

\begin{restatable}{corollary}{restateInnerLoop}
\label{prop:innerloopproof-comp}
	Let ($\zset$, $\norm{\cdot}_{\cdot}$, 
	$r$, $\Theta$, $\clip$) be any local norm setup.
	Let $w_0 \in \zset$, $\vepsi>0$, and 
	$\tilde{g}_{w_0}$ be an $L$-centered-local estimator for some $L \ge \alpha \ge \vepsi$. Assume the problem has bounded domain size $\max_{z\in\zset}\|z\|\le D$, $g$ is $L$-Lipschitz, i.e.\ $\norm{g(z)-g(z')}_* \le L \norm{z-z'}$, that $g$ is $LD$-bounded, i.e. $\max_{z\in\zset}\norm{g(z)}_*\le LD$, and $\hat{w}_0 = w_0$. Then, for $\eta = \frac{\alpha}{10L^2}$, 
	$T \geq \frac{8}{\eta\alpha} \ge\frac{60L^2}{\alpha^2}$, $\epsaprx = \frac{\vepsi}{10}$,
	Algorithm~\ref{alg:innerloop-approx-comp} outputs a point $\hat{w} \in 
	\zset$ such that
	\begin{equation}\label{eq:innerloop-guarantee}
	\Ex{}\max\limits_{u\in\zset}
	\left[\inner{g(\tilde{w})+\nabla \Upsilon (\tilde{w})}{\tilde{w} - u} - \alpha V_{w_0}(u)\right]
	\le \vepsi,
	\end{equation}
	i.e. Algorithm~\ref{alg:innerloop-approx-comp} is an 
	$(\alpha,\vepsi)$-relaxed proximal oracle.
\end{restatable}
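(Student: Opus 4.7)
The plan is to emulate the proof of Proposition~\ref{prop:innerloopproof} while tracking the composite term $\Upsilon$ and the rescaled divergence $\hat V = \rho V\x + \rho^{-1} V\y$. A key simplification I would use from the outset is the identity $\Upsilon(\cdot) = \sqrt{\mu\x\mu\y}\,\hat V_{w'}(\cdot)$, which holds because $\rho\sqrt{\mu\x\mu\y} = \mu\x$ and $\rho^{-1}\sqrt{\mu\x\mu\y} = \mu\y$; this turns $\Upsilon$ into just another Bregman-divergence term in the rescaled geometry, and makes it amenable to the three-point identity. I would also observe that in the rescaled geometry $\hat V$ plays the role of the distance-generating divergence: by construction of the local norm setup, $\hat V$ remains $1$-strongly convex in the norm induced by the analogous local norm $\sqrt{\rho\norm{\cdot\x}^2_\cdot + \rho^{-1}\norm{\cdot\y}^2_\cdot}$, so properties~\eqref{eq:strong-convexity} and~\eqref{eq:strong-convexity-local} carry over with $\hat V$ in place of $V$.

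First, I would split the regret at $u$ as in~\eqref{eq:boundthreeterms}, but with target operator $g + \nabla\Upsilon$:
\[
\sum_{t\in[T]} \eta\inner{g(w_t)+\nabla\Upsilon(w_t)}{w_t-u}
\]
decomposed into a mirror-descent-like term involving $\clip(\eta\tilde\Delta_t)+\eta g(w_0)+\eta\nabla\Upsilon(w_t)$, a variance term $\eta\Delta_t - \clip(\eta\tilde\Delta_t)$, and a Lipschitz error term $\eta(g(w_t)-g(\hat w_t))$ accounting for the $w_t\!\to\!\hat w_t$ substitution made by the gradient estimator. For the first term, applying the optimality conditions of $w_t^\star$ (which now contains $\eta\Upsilon(\cdot)$, $\frac{\alpha\eta}{2}\hat V_{w_0}$, and $\hat V_{w_{t-1}}$) in conjunction with Lemma~\ref{lem:mirror-descent} produces a regret bound analogous to~\eqref{eq:idealmirror}, except with $\hat V$-divergences throughout and an additional $\eta[\Upsilon(w_t^\star)-\Upsilon(u)]$ term. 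Using convexity of $\Upsilon$ and the identity $\Upsilon=\sqrt{\mu\x\mu\y}\,\hat V_{w'}$, this extra term telescopes as $\sqrt{\mu\x\mu\y}\,\eta[\hat V_{w'}(w_t^\star)-\hat V_{w'}(u)]$, which we absorb into an $\alpha\hat V_{w_0}(u)$ budget once combined with the three-point identity $\inner{-\nabla\hat V_{w_0}(w)}{w-u} = \hat V_{w_0}(u)-\hat V_w(u)-\hat V_{w_0}(w)$. The approximation slack from Line~\ref{line:inner-error-star-comp}, namely $\max_u[\hat V_{w_t}(u)-\hat V_{w_t^\star}(u)]\le \epsaprx/(1+\sqrt{\mu\x\mu\y})$, $\hat V_{w_0}(w_t)-\hat V_{w_0}(w_t^\star)\le\epsaprx/\alpha$, and $\hat V_{w'}(w_t)-\hat V_{w'}(w_t^\star)\le\epsaprx/\sqrt{\mu\x\mu\y}$, is exactly what is needed to control each of these substitutions.

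For the variance term I would reuse the ghost-iterate argument verbatim, replacing $V_{s_{t-1}}$ with $\hat V_{s_{t-1}}$ in the definition of $s_t$; the centered-local property of $\tilde g_{w_0}$ (Definition~\ref{def:vr}, which is norm-agnostic since it involves only local norms) together with Remark~\ref{rem:vr-jensen} yields the same three-term variance bound $\eta L^2\hat V_{w_0}(\hat w_t)$ per iteration (absorbing factors of $\rho$ into the local norm, which does not change since local norms in our setups are either rescaled diagonal quadratics or fixed). The Lipschitz term is bounded by $2\eta\epsaprx T$ as before, using the second condition in Line~\ref{line:inner-error-hat-comp}. Finally, passing from $\bar w=\frac{1}{T}\sum_t w_t$ to $\tilde w$ uses the additional guarantees in Line~\ref{line:inner-average-comp}: the norm guarantee handles $\inner{g(\bar w)-g(\tilde w)}{u}$ via Lipschitzness, while $\hat V_{w'}(\tilde w)-\hat V_{w'}(\bar w)\le\epsaprx/\sqrt{\mu\x\mu\y}$ and the max-over-$u$ divergence guarantee handle the $\Upsilon$-dependent and $\hat V_{w_0}$-dependent pieces, respectively, using $\inner{\nabla\Upsilon(\bar w)-\nabla\Upsilon(\tilde w)}{\bar w-u}=\sqrt{\mu\x\mu\y}[\inner{\nabla\hat V_{w'}(\bar w)}{u-\tilde w}-\inner{\nabla\hat V_{w'}(\tilde w)}{u-\tilde w}]$ expanded via the three-point property.

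The main obstacle I anticipate is bookkeeping: there are now three simultaneous ``anchor'' divergences to manage ($\hat V_{w_0}$ for the prox term, $\hat V_{w_{t-1}}$ for the step divergence, and $\hat V_{w'}$ for the composite $\Upsilon$-contribution), and Line~\ref{line:inner-average-comp} requires tracking all three approximation tolerances through both the extragradient substitution and the averaging step. Choosing $\epsaprx=\vepsi/10$ and $T\ge 8/(\eta\alpha)$ (rather than $6/(\eta\alpha)$ as in Proposition~\ref{prop:innerloopproof}) should give just enough slack for each of these tolerances to accumulate to at most $\vepsi$ in the final bound, yielding the desired $(\alpha,\vepsi)$-relaxed proximal oracle guarantee.
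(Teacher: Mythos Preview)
Your overall strategy is correct and matches the paper's: treat $\Upsilon=\sqrt{\mu\x\mu\y}\,\hat V_{w'}$, include it in the $Q$ part of the mirror-descent step, use the three-point identity to control the $w_t^\star\to w_t$ substitution, and use convexity of $w\mapsto\inner{\nabla\Upsilon(w)}{w-u}$ for the averaging step. The paper's proof sketch is more economical---it treats the result as a small delta from Proposition~\ref{prop:innerloopproof}, isolating exactly two additional error terms (one from the $w_t^\star\to w_t$ substitution in the $\Upsilon$ contribution, one from averaging) rather than re-deriving the whole argument in the $\hat V$ geometry---but your more systematic route should work as well.

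There is one place where your description goes wrong. You write that the mirror-descent lemma produces ``an additional $\eta[\Upsilon(w_t^\star)-\Upsilon(u)]$ term'' which ``telescopes'' and is ``absorbed into an $\alpha\hat V_{w_0}(u)$ budget.'' This is not what happens. The optimality condition for $w_t^\star$ yields $\eta\inner{\nabla\Upsilon(w_t^\star)}{w_t^\star-u}$ on the left-hand side, not $\eta[\Upsilon(w_t^\star)-\Upsilon(u)]$; there is no telescoping across $t$; and the composite term is \emph{not} absorbed into the divergence budget---it must remain in the regret because the oracle guarantee~\eqref{eq:innerloop-guarantee} has $\nabla\Upsilon(\tilde w)$ as part of the operator. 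The correct handling (as the paper does) is to keep $\inner{\nabla\Upsilon(w_t^\star)}{w_t^\star-u}$ and bound the substitution error
\[
\inner{\nabla\Upsilon(w_t)}{w_t-u}-\inner{\nabla\Upsilon(w_t^\star)}{w_t^\star-u}
=\sqrt{\mu\x\mu\y}\bigl[\hat V_{w_t}(u)-\hat V_{w_t^\star}(u)+\hat V_{w'}(w_t)-\hat V_{w'}(w_t^\star)\bigr]
\]
directly via the three-point identity and the tolerances in Line~\ref{line:inner-error-star-comp}. Then for averaging, convexity of $w\mapsto\inner{\nabla\Upsilon(w)}{w-u}$ gives $\inner{\nabla\Upsilon(\bar w)}{\bar w-u}\le\frac{1}{T}\sum_t\inner{\nabla\Upsilon(w_t)}{w_t-u}$, and the $\bar w\to\tilde w$ step uses the three-point identity once more with the Line~\ref{line:inner-average-comp} tolerances. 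Once you fix this step, your proof goes through.
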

\begin{psketch}
	
Note that the only change is in the definition of the regularized mirror descent step with extra composite terms
\[w_t^\star \leftarrow 
				\argmin\left\{\inner{\clip(\eta \tilde{g}_{w_0}(\hat{w}_{t - 
						1}) - \eta g(w_0))}{w} + \eta \Upsilon(w)+ \frac{\eta\alpha}{2}\hat{V}_{w_0}(w) + \hat{V}_{w_{t - 1}}(w) 
				\right\}.\]
Denote $\grad\Upsilon(w)=(\mu\x\nabla \phi(w\x),\mu\y\nabla \psi(w\y))$, so that for the final regret bound there are two additional error terms. The first term comes from the error in regularized mirror descent steps via (denoting $z'=(x',y')$)
\begin{align*}
	& \frac{1}{T}\sum_{t\in[T]}\left[-\langle\grad\Upsilon(w_t^\star),w_t^\star-u\rangle+\langle\grad\Upsilon(w_t),w_t-u\rangle\right]\\
\le & \frac{\sqrt{\mu\x\mu\y}}{T}\sum_{t\in[T]}\left(\hat{V}_{z'}(w_t)-\hat{V}_{z'}(w_t^\star)+\hat{V}_{w_t}(u)-\hat{V}_{w_t^\star}(u)\right)\le 2\epsaprx
\end{align*}
following the approximation guarantee in Line~\ref{line:inner-error-star-comp}. The other term comes from averaging error. Denote the true average iterate by $\bar{w}\defeq\frac{1}{T}\sum_{t\in[T]}w_t$. We have $\forall u\in\zset$,
\begin{align*}
\inner{g(\tilde{w})}{\tilde{w} - u} - \frac{1}{T}\sum_{t\in[T]}\inner{g(w_t)}{w_t-u}& = -\inner{g(\tilde{w})}{u} - \inner{g(\bar{w})}{\bar{w}-u}\\
&= \inner{g(\bar{w})-g(\tilde{w})}{u} \le \epsaprx,
\end{align*}
and also 
\begin{align*}
\inner{\grad \Upsilon (\tilde{w})}{\tilde{w} - u} & = \langle\grad \Upsilon(\tilde{w})-\grad \Upsilon(\bar{w}),\tilde{w}-u\rangle + \langle\nabla \Upsilon(\bar{w}),\tilde{w}-\bar{w}\rangle + \langle\nabla \Upsilon(\bar{w}),\bar{w}-u\rangle\\
& \stackrel{(i)}{=} \sqrt{\mu\x\mu\y}\left(- \hat{V}_{\bar{w}} (u)+\hat{V}_{\tilde{w}}(u)+\hat{V}_{\bar{w}}(\tilde{w})\right) +\langle\nabla \Upsilon(\bar{w}),\tilde{w}-\bar{w}\rangle+\langle\nabla \Upsilon(\bar{w}),\bar{w}-u\rangle\\
& \stackrel{(ii)}{=} \sqrt{\mu\x\mu\y}\left(- \hat{V}_{\bar{w}} (u)+\hat{V}_{\tilde{w}}(u)+\hat{V}_{w'}(\tilde{w}) - \hat{V}_{w'}(\bar{w})\right)+\langle\nabla \Upsilon(\bar{w}),\bar{w}-u\rangle,\\
& \stackrel{(iii)}{\le} 2\epsaprx+\langle\nabla \Upsilon(\bar{w}),\bar{w}-u\rangle\\
& \stackrel{(iv)}{\le} 2\epsaprx+\frac{1}{T}\sum_{t\in[T]}\langle\nabla \Upsilon(w_t),w_t-u\rangle.
\end{align*}
where we use $(i)$ the three-point property of Bregman divergence, $(ii)$ the fact that $\hat{V}_{\bar{w}}(\tilde{w}) +\langle\nabla \Upsilon(\bar{w}),\tilde{w}-\bar{w}\rangle = \hat{V}_{w'}(\tilde{w}) - \hat{V}_{w'}(\bar{w})$ again by the three-point property, $(iii)$ the approximation guarantee of Line~\ref{line:inner-average-comp}, and $(iv)$ the fact that $\langle\nabla \Upsilon(w),w-u\rangle$ is convex in $w$ for our choices of $\Upsilon$. Hence incorporating the above extra error terms into the regret bound yields the conclusion, as $10\epsaprx = \vepsi$ by our choice of $\epsaprx$.
\end{psketch}

\subsubsection{Changes to implementation}\label{ssec:imp-comp}

Broadly speaking, all of these modifications can easily be handled via appropriate changes to the initial data given to our data structures $\CIM_2$ and $\AEM$. We discuss general formulations of iterations with these modifications in both simplices and Euclidean balls, and provide appropriate modifications to the inital data given to our data structures. Finally, it is simple to check that all relevant parameters are still bounded by a polynomial in the dimensions of variables, so no additional cost due to the data structure is incurred. For simplicity here we only considerfor the $x$-block when $\phi\x(x)=\mu r(x)$ and remark that the case when $\phi\x(x)=\mu V_{x'}(x)$ for some $x'$ follows similarly.

\paragraph{$\ell_1$ domains.} For this section, define a domain $\xset = \Delta^n$, let $r(x) = \sum_{j \in [n]} x_j \log x_j$ be entropy, and let $\mu$, $\alpha$, $\eta$, $\rho$ be nonnegative scalar parameters. Consider a sequence of iterates of the form
\[x_{t + 1} \gets \argmin_{x \in \xset}\inner{\tilde{g}_{x_0}(x_t)}{x} + \mu r(x) + \frac{\alpha \rho}{2} V_{x_0}(x) + \frac{\rho}{\eta} V_{x_t}(x).\]
This update sequence, for the form of gradient estimator
\[\tilde{g}_{x_0}(x) = g(x_0) + b + g'(x),\]
where $g'(x)$ is a vector with suitable sparsity assumptions depending on the point $x$, and $b$ is some fixed vector, generalizes all of the settings described above used in our various relaxed proximal oracle implementations. Optimality conditions imply that the update may be rewritten as
\[x_{t + 1} \gets \Pi_{\Delta}\left(\exp\left(\frac{\frac{\rho}{\eta}\log x_t + 
\frac{\alpha\rho}{2}\log x_0 - g(x_0) - b - g'(x_t)}{\mu + 
\frac{\alpha\rho}{2} + \frac{\rho}{\eta}}\right)\right).\]
Thus, initializing an $\AEM$ instance with
\[\kappa = \frac{1}{\frac{\mu\eta}{\rho} + \frac{\alpha\eta}{2} + 1},\; v = \frac{\frac{\alpha\rho}{2}\log x_0 - g(x_0) - b}{\mu + \frac{\alpha\rho}{2} + \frac{\rho}{\eta}}\]
enables $\DenseStep$ to propagate the necessary changes to the iterate; we propagate changes due to $g'(x_t)$ via $\AddSparse$ and the appropriate sparsity assumptions.

\paragraph{$\ell_2$ domains.} For this section, define a domain $\xset = \ball^n$, let $r(x) = \half\norm{x}_2^2$ be entropy, and let $\mu$, $\alpha$, $\eta$, $\rho$ be nonnegative scalar parameters. Consider a sequence of iterates of the form
\[x_{t + 1} \gets \argmin_{x \in \xset}\inner{\tilde{g}_{x_0}(x_t)}{x} + \mu r(x) + \frac{\alpha \rho}{2} V_{x_0}(x) + \frac{\rho}{\eta} V_{x_t}(x).\]
This update sequence, for the form of gradient estimator
\[\tilde{g}_{x_0}(x) = g(x_0) + b + g'(x),\]
where $g'(x)$ is a vector with suitable sparsity assumptions depending on the point $x$, and $b$ is some fixed vector, generalizes all of the settings described above used in our various relaxed proximal oracle implementations. Optimality conditions imply that the update may be rewritten as
\[x_{t + 1} \gets \Pi_{\ball^n}\left(\frac{\frac{\rho}{\eta}x_t + \frac{\alpha\rho}{2}x_0 - g(x_0) - b - g'(x_t)}{\mu + \frac{\alpha\rho}{2} + \frac{\rho}{\eta}}\right).\]
Thus, initializing an $\CIM$ instance with
\[v = \frac{\frac{\alpha\rho}{2}x_0 - g(x_0) - b}{\mu + \frac{\alpha\rho}{2} + \frac{\rho}{\eta}}\]
enables $\AddDense$, $\Scale$, and $\GetNorm$ to propagate the necessary changes to the iterate; we propagate changes due to $g'(x_t)$ via $\AddSparse$ and the appropriate sparsity assumptions.

\section{Deferred proofs from Section~\ref{sec:app}}
\label{sec:deferred_application_proofs}

\subsection{Proofs from Section~\ref{app:maxIB}}\label{app:maxIBproofs}

\begin{proof}[Proof of Lemma~\ref{lem:maxIB-preprocess}]
We consider the following $(\mu, \mu)$-strongly monotone problem, for various levels of $\mu$:
\[\max_{x\in\ball^n}\min_{y\in\Delta_m} f_\mu(x,y)\defeq y^\top \tilde{A}x+y^\top b +\mu\sum_{i\in[m]}[y]_i\log[y]_i-\frac{\mu}{2}\ltwo{x}^2.\]
We claim we can implement an $(\alpha, \veps)$-relaxed proximal oracle for this problem in time
\[\Otil{\frac{\ltocop^2}{\alpha^2}}.\]
The oracle is a composite implementation of Algorithm~\ref{alg:innerloop-approx} as in Algorithm~\ref{alg:innerloop-approx-comp}, using the estimator of Appendix~\ref{ssec:vr-l2l1}.
By an application of Proposition~\ref{prop:outerloopproof-sm}, the overall complexity of solving this problem is (by choosing the optimal $\alpha$, and overloading the constant $\ltoco$ to be with respect to $\tilde{A}$):
\[\Otil{\left(\nnz + \frac{\ltocop^2}{\alpha^2}\right)\frac{\alpha}{\mu}} = \Otil{\nnz + \frac{\sqrt{\nnz} \cdot \ltoco}{\mu}}.\]
By conducting a line search over the parameter $\mu$ via repeatedly halving, the total cost of solving each of these problems is dominated by the last setting, wherein $\mu = \Theta(r^*/ \log m)$, and $R/\mu = \Otil{\rho}$; here, we recall that we rescaled $\tilde{A}$ so that $\ltoco = O(R)$. We defer details of the line search procedure to Lemma C.3 of \citet{AllenLO16}.
\end{proof}

\begin{proof}[Proof of Theorem~\ref{thm:MaxIB}]
	We solve the problem \eqref{eq:MaxIB} to duality gap $\eps \hat{r}/8 \le \eps r^*$, using the algorithm of Appendix~\ref{ssec:vr-l2l1} for $\elltwoone$ games. The complexity of this algorithm is (choosing $\alpha$ optimally)
	\[\Otil{\left(\nnz(\tilde{A}) + \frac{\ltocop^2}{\alpha^2}\right) \cdot \frac{\alpha}{\eps\hat{r}}} = \Otil{\nnz + \frac{\rho\sqrt{\nnz } \cdot \ltoco}{\eps}},\]
	as claimed. Here, we used that $\tilde{A}$ is a rescaling of $A$ by $2R$, and $\hat{r}$ is a constant multiplicative approximation of $r$. The approximate solution $(x^*_{\eps'},y^*_{\eps'})$  obtains the requisite duality gap in expectation; Markov's inequality implies that with logarithmic overhead in the runtime, we can obtain a pair of points satisfying with high probability
		\[\max_x f(x,y^*_{\eps'})-\min_y f(x^*_{\eps'},y)=\max_x f(x,y^*_{\eps'})-f(x^*,y^*)+f(x^*,y^*)-\min_y f(x^*_{\eps'},y)\le\eps'.\]
	
Because $y^*$ is the best response to $x^*$, we have $f(x^*, y^*_{\eps'}) \ge f(x^*, y^*)$, which implies
\[\max_x f(x,y^*_{\eps'})-f(x^*,y^*)=\max_x f(x,y^*_{\eps'})-f(x^*,y^*_{\eps'})+f(x^*,y^*_{\eps'})-f(x^*,y^*)\ge 0.\] 
Combining yields $f(x^*,y^*)-\min_y f(x^*_{\eps'},y)\le\eps'\le \eps r^*$, so since $f(x^*, y^*) = r^*$, rearranging implies $\min_y f(x^*_{\eps'},y)\ge r^*-\eps'\ge (1-\eps)r^*$. Thus, $x^*_{\eps'}$ is an $\eps$-approximate solution for Max-IB. 
\end{proof}

\subsection{Proofs from Section~\ref{app:minEB}}
\label{sec:proofs_from_62}

\begin{proof}[Proof of Lemma~\ref{lem:minEB-reg}]
If $(x', y')$ is an approximately optimal solution with duality gap $\eps/16$ for (\ref{eq:minEB-adapted}), by definition
\[	\max_{y \in \Delta^m} f_{\eps'}(x', y)-\min_{x \in \R^n} f_{\eps'}(x, y')\le \frac{\eps}{16}.\]

Therefore, the following sequence of inequalities hold:
\begin{align*}
& \max_{y \in \Delta^m} f(x',y)-\min_{x \in \R^n} f(x,y')
=  \left(\max_{y \in \Delta^m} f(x',y)-\max_{y \in \Delta^m} f_{\eps'}(x',y)\right)\\
+&\left(\max_{y \in \Delta^m} f_{\eps'}(x', y)-\min_{x \in \R^n} f_{\eps'}(x, y')\right)+\left(\min_{x \in \R^n} f_{\eps'}(x, y')-\min_{x \in \R^n} f(x, y')\right)\\
\stackrel{(i)}{\le}  &  \left(\max_{y \in \Delta^m} f(x',y)-\max_{y \in \Delta^m} f_{\eps'}(x', y)\right) + \frac{\eps}{16} + \left(\min_{x \in \R^n} f_{\eps'}(x, y')-\min_{x \in \R^n} f(x, y')\right)\\
\stackrel{(ii)}{\le} & \frac{\eps}{32} + \frac{\eps}{16} + \frac{\eps}{32} = \frac{\eps}{8}.
\end{align*}
In $(i)$, we used the fact that the pair $(x', y')$ has good duality gap with respect to $f_{\eps'}$, and in $(ii)$ we used that for the first summand, $f_{\eps'}(x', \cdot)$ approximates $f(x', \cdot)$ to an additive $\eps/32$, and for the third summand, $-\eps' \sum_{i \in [m]} [y']_i \log [y']_i$ is bounded by $\eps/32$, and all other terms cancel.
\end{proof}

\subsection{Proofs from Section~\ref{app:reg}}
\label{ssec:app-reg}

\begin{proof}[Proof of Lemma~\ref{lem:reg-opt-rel}]
	At optimality for~\eqref{def:minimax-reg}, it holds that
\[
	\begin{cases}
	y_{x'}^*=\frac{1}{\beta}(Ax^*_{x'}-b)\\
	x_{x'}^*=x'-\frac{1}{\beta}A^\top y_{x'}^*
	\end{cases}.
\]	
By substituting $y^*_{x'}$ and rearranging terms we get 
\[
\left(I+\frac{1}{\beta^2}A^\top A\right)(x^*_{x'}-x^*)=x'-x^*,
\]
which in turn gives
\[
\ltwo{x^*_{x'}-x^*} = \ltwo{\left(I+\frac{1}{\beta^2}A^\top A\right)^{-1}(x'-x^*)}\le \frac{1}{1+\mu/\beta^2}\ltwo{x'-x^*}.
\]
For the last inequality we use the fact that
\[\norm{I + \frac{1}{\beta^2}A^\top A}_2^{-1} = \lambda_{\min}\left(I + \frac{1}{\beta^2}A^\top A\right)^{-1} = \frac{1}{1 + \mu/\beta^2}, \] 
by the definition of $\mu$ and since $I$ and $A^\top A$ commute.
\end{proof}

\begin{algorithm}[htbp] 
	\DontPrintSemicolon
	\KwInput{Matrix $A\in\R^{m\times n}$ with $i$th row $\ai$ and $j$th 
		column $\aj$, vector $b\in\R^m$, accuracy $\epsilon$}
	\KwOutput{A point $\tilde{x}$ with $\|\tilde{x}-x^*\|_2\le\epsilon$}
	
	\vspace{3pt}
	$L\gets\max\left\{\sqrt{\sum_i\lones{\ai}^2}, \sqrt{\sum_j\lones{\aj}^2}\right\}$,~$\alpha \gets 
	L/\sqrt{\nnz}$,~$\beta=\sqrt{\mu}$,~$\eta \gets 
	\frac{\alpha}{4 L^2}$\;
	$T\gets\ceil{\frac{4}{\eta\alpha}}$,~$K\gets 
	\Otil{\alpha/\beta}$,~$H=\Otil{1}$,~$z^{(0)}=(x^{(0)},y^{(0)})
	\leftarrow (\boldsymbol{0}_n,\boldsymbol{0}_m )$, 
	$(z_0\x, z_0\y) \gets (\boldsymbol{0}_n, 
	\boldsymbol{0}_m)$\;
	\vspace{3pt}
	\For{$h=1,2,\cdots,H$}
	{
		\For{$k=1,\ldots,K$}
		{
			\vspace{3pt}%
			\Comment{\emph{Relaxed oracle query:}}
			
			$\displaystyle (x_0, y_0) \gets (z_{k-1}\x,z_{k-1}\y)$, $(g\x_0, g\y_0) 
			\gets (A^\top y_0+\beta(x_0-x^{(h-1)}), -A x_0+\beta y_0)$\;
			
			\vspace{3pt}
			
			\For{$t=1,\ldots,T$}
			{
				\vspace{3pt}%
				\Comment{\emph{Gradient estimation:}}
				\vspace{-4pt}%
				
				Sample $i\sim p$ where $\displaystyle p_i = \frac{\left( [y_{t-1}]_i 
					- [y_{0}]_i \right)^2}{\ltwo{y_{t-1}-y_0}^2}$\;
				Sample $j\sim q$ where 
				$\displaystyle q_j = \frac{ \left( [x_{t-1}]_j - [x_{0}]_j 
					\right)^2}{\ltwo{x_{t-1}-x_0}^2}$\;
				
				Set $\displaystyle\tilde{g}_{t-1} = g_0 +  \left(
				\ai \frac{[y_{t-1}]_i - [y_{0}]_i}{p_i}, 
				-\aj \frac{[x_{t-1}]_j - [x_{0}]_j}{q_j}\right)$
				\;

				\Comment{\emph{Mirror descent step:}}
				
				$\displaystyle x_t \gets  \frac{1}{1+\eta\alpha/2} 
				\left(  x_{t 
					- 1} + \frac{\eta\alpha}{2} x_0 - \eta \tilde{g}_{t-1}\x 
				\right)$

				$\displaystyle y_t \gets \Pi_{\yset}\left( \frac{1}{1+\eta\alpha/2} 
				\left(  y_{t 
					- 1} + \frac{\eta\alpha}{2} y_0 - \eta \tilde{g}_{t-1}\y 
				\right)\right)$
				\Comment*[f]{$\Pi_{\yset}(v)=\frac{v}{\max\{1,\ltwo{v}\}}$}
				\; 
			}
			$\displaystyle z_{k - 1/2} \gets \frac{1}{T}\sum_{t=1}^T (x_t, y_t)$\;
			
			\Comment{\emph{Extragradient step:}}
			
			$\displaystyle z_k\x \gets  \frac{\alpha}{\alpha+2\beta} z_{k-1}\x + \frac{2\beta}{\alpha+2\beta} z_{k-1/2}\x  
			-\frac{1}{\alpha+2\beta} \left(A^\top z_{k-1/2}\y +\beta(z\x_{k-1/2}-x^{(h-1)})\right)$\;
			$\displaystyle z_k\y \gets \Pi_{\yset}\left( \frac{\alpha}{\alpha+2\beta} z_{k-1}\y + \frac{2\beta}{\alpha+2\beta} z_{k-1/2}\y  
			+\frac{1}{\alpha+2\beta} \left(A z_{k-1/2}\x -\beta z_{k-1/2}\y \right) \right)$ \;
			
		}
		\Comment{\emph{Reshifting the oracle:}}
		$z^{(h)}=(x^{(h)},y^{(h)})\leftarrow z_K=(z_K\x,z_K\y)$\;
	}
	\Return $\tilde{x}\leftarrow x^{(H)}$
	\caption{Coordinate variance reduced method for linear regression}
	\label{alg:reg}
\end{algorithm}

\restateregthm*

\begin{proof}
	
	We first prove correctness. We bound the progress from $x^{(h)}$ to $x^{(h+1)}$, for some $h \in [H]$, by
	\begin{equation}\label{reg:sum}
	\frac{1}{2}\ltwo{x^{(h+1)}-x^*}^2 \le \ltwo{x^{(h+1)}-x_{x^{(h)}}^*}^2+\ltwo{x_{x^{(h)}}^*-x^*}^2	\le 2V_{z^{(h+1)}}(z_{x^{(h)}}^*)+\ltwo{x_{x^{(h)}}^*-x^*}^2.
	\end{equation}
	The first inequality used $\norm{a + b}_2^2 \le 2\norm{a}_2^2 + 2\norm{b}_2^2$, and the second used the definition of the divergence in the $\elltwo$ setup. Next, choosing a sufficiently large value of $K= \Otil{\beta/\mu}$, we use Proposition~\ref{prop:outerloopproof-sm} to obtain a point $z^{(h + 1)}$ satisfying
	\begin{equation}\label{reg:term-1}
	V_{z^{(h+1)}}(z^*_{x^{(h)}})\le\frac{\eps^2}{80}V_{z^{(h)}}(z^*_{x^{(h)}})\le\frac{\eps^2}{40}V_{z^{(h)}}(z^*)+\frac{\eps^2}{40}V_{z^*}(z^*_{x^{(h)}}).
	\end{equation}
	Further, using Lemma~\ref{lem:reg-opt-rel} with $x'=x^{(h)},\beta=\sqrt{\mu}$ yields
	\begin{equation}\label{reg:term-2}
	\ltwo{x^*_{x^{(h)}}-x^*}\le\frac{1}{2}\ltwo{x^{(h)}-x^*}.
	\end{equation}
	Plugging these two bounds into~\eqref{reg:sum}, and using the form of the divergence in the $\elltwo$ setup,
	\begin{equation}
	\begin{aligned}
	\frac{1}{2}\ltwo{x^{(h+1)}-x^*}^2 & \stackrel{\eqref{reg:term-1}}{\le}\frac{\eps^2}{20}V_{z^{(h)}}(z^*) + \frac{\eps^2}{20}V_{z^*}(z^*_{x^{(h)}}) + \ltwo{x^*_{x^{(h)}}-x^*}^2 \\
	& \stackrel{\eqref{reg:term-2}}{\le}\frac{1}{2}\left(\frac{\eps^2}{20} + \frac{\eps^2}{20}+\frac{1}{2}\right)\ltwo{x^{(h)}-x^*}^2 + \frac{\eps^2}{40}\left(\ltwo{y^{(h)}-y^*}^2+\ltwo{y^*_{x^{(h)}}-y^*}^2\right) \\
	&\le \frac{3}{4}\cdot\frac{1}{2}\ltwo{x^{(h)}-x^*}^2+\frac{\eps^2}{5}.
	\end{aligned}
	\end{equation}
	In the last inequality we use the conditions that $\eps\in(0,1)$ and $\yset=\ball^m$. 	
	Recursively applying this bound for $h\in [H]$, and for a sufficiently large value of $H = \Otil{1}$, we have the desired
	\[
	\ltwo{x^{(H)}-x^*}^2\le \left(\frac{3}{4}\right)^{H}\ltwo{x^{(0)}-x^*}^2+\frac{4\eps^2}{5} \le\eps^2.
	\]
	
	To bound the runtime, recall the inner loop runs for $T=O((\lttco)^2/\alpha^2)$ iterations, each costing constant time, and the outer loop runs for $K=\Otil{\alpha/\beta}$ iterations, each  costing $O(T+\nnz)$. Finally, since $H=\Otil{1}$, the overall complexity of the algorithm is 
	\[\Otilb{\left(\nnz + \frac{\lttcop^2 }{\alpha^2}\right)\frac{\alpha}{\beta}}. \]
	Choosing $\alpha=\max\{
	\lttco/\sqrt{\nnz},\beta\}$ optimally and substituting \[\beta=\sqrt{\mu},\; \lttco=\max\left\{\sqrt{\sum_i\lones{\ai}^2}, \sqrt{\sum_j\lones{\aj}^2}\right\},\]
	we have the desired runtime bound on Algorithm~\ref{alg:reg}.
\end{proof}

\section{$\IM_2$: numerical stability and variations}
\label{app:ds-proofs}

\subsection{Numerical stability of $\IM_1$.} %
We discuss the implementation of a numerically stable version of $\IM_1$, and the complexity of its operations, for use in our sublinear algorithms in Section~\ref{ssec:l1l1sub} and Section~\ref{ssec:l2l1sub}. We discuss this implementation for a simplex block, e.g.\ a simplex variable of dimension $n$, as for an $\ell_2$ geometry numerical stability is clear. The main modifications we make are as follow.
\begin{itemize}
	\item We reinitialize the data structure whenever the field $\nu$ grows larger than some fixed polynomial in $n$, or if $n/2$ iterations have passed.
	\item We track the coordinates modified between restarts.
	\item Every time we reinitialize, we maintain the invariant that the multiplicative range of coordinates of $x$ is bounded by a polynomial in $n$, i.e. $\max_j x_j / \min_j x_j$ is bounded by some fixed polynomial in $n$. We will implement this via an explicit truncation, and argue that such an operation gives negligible additive error compared to the accuracy of the algorithm.
	\item We implicitly track the set of truncated coordinates at each data 
	structure restart. We do so by explicitly tracking the set of 
	non-truncated coordinates whenever a truncation operation happens (see 
	the discussion below), in constant amortized time.
\end{itemize}
We now discuss the complexity and implementation of these restarts. First, note that $\nu$ can never decrease by more than a multiplicative polynomial in $n$ between restarts, because of nonnegativity of the exponential, the fact that the original range at the time of the last restart is multiplicatively bounded, and we restart every time half the coordinates have been touched. Thus, the only source of numerical instability comes from when $\nu$ grows by more than a multiplicative polynomial in $n$. Suppose this happens in $\tau$ iterations after the restart. Then,
\begin{itemize}
	\item If $\tau < n/2$, we claim we can implement the restart in 
	$O(\tau)$, so the amortized cost per iteration is $O(1)$. To see this, for 
	every coordinate touched in these $\tau$ iterations, we either keep or 
	explicitly truncate if the coordinate is too small. For every coordinate not 
	touched in these $\tau$ iterations, the relative contribution is at most 
	inverse polynomial in $n$; we truncate all such coordinates. Then, we 
	compute the normalization constant according to all non-truncated 
	coordinates, such that the value of all truncated coordinates is set to a 
	fixed inverse polynomial in $n$. We can implement this by implicitly 
	keeping track of the set of truncated coordinates as well as their 
	contribution to the normalization factor, and explicitly setting their value 
	in the data structure when they are updated by $\AddSparse$. Overall, 
	this does not affect the value of the problem by more than a small 
	multiple of $\eps$, by our assumptions on $\Lrc/\eps$. To see that we 
	can track the non-truncated coordinates explicitly, we note that it is a 
	subset of the at most $\tau$ coordinates that were touched, so this can 
	be done in constant amortized time.
	\item If $\tau = n/2$, we claim we can implement the restart in $O(n)$, 
	so the amortized cost per iteration is $O(1)$. This is clear: we can do so 
	by explicitly recomputing all coordinates, and truncating any coordinates 
	which have become too small.
\end{itemize}
We describe how the data structure implements this through its maintained 
fields: for non-truncated coordinates, we do not do anything other than 
change the scaling factor $\nu$, and for truncated coordinates, we reset 
the values of $u, u'$ in that coordinate appropriately once they have been 
sparsely updated. Overall, this does not affect the amortized runtime of our 
algorithm.

\subsection{$\WIM_2$}
In this section, we give implementation details for a weighted generalization of $\IM_2$, which we will call $\WIM_2$. It is used in Section~\ref{ssec:l2l1sub}, when using the sampling distribution~\eqref{eq:l2l1-prob3-def}. At initialization, $\WIM_2$ is passed an additional parameter $w \in \R_{\geq 0}^n$, a nonnegative weight vector. We let
$$\inner{u}{v}_w \defeq \sum_{j \in [n]} [w]_j [u]_j [v]_j, \norm{v}_w \defeq \sqrt{\inner{v}{v}_w}.$$
$\WIM_2$ supports all the same operations as $\IM_2$, with two differences:

\begin{itemize}
	\item For the current iterate $x$, $\WIM_2.\Norm()$ returns weighted norm $\norm{x}_w$.
	\item For the current iterate $x$, $\WIM_2.\Sample()$ returns a coordinate $j$ with probability proportional to $[w]_j [x]_j^2$.
\end{itemize}

Similarly to $\IM_2$, $\WIM_2$ maintains the following fields.
\begin{itemize}
	\item Scalars $\xi_u$, $\xi_v$, $\sigma_u$, $\sigma_v$, $\iota$, $\nu$
	\item Vectors $u, u', v, w$ 
	\item Precomputed value $\norm{v}_w^2$.
\end{itemize}
We maintain the following invariants on the data structure fields at the end of every operation:
\begin{itemize}
	\item $x = \xi_u u + \xi_v v$, the internal representation of $x$
	\item $s = v + \sigma_u u + \sigma_v v$, the internal representation of running sum $s$
	\item $\iota = \inner{x}{v}_w$, the weighted inner product of the iterate with fixed vector $v$
	\item $\nu = \norm{x}_w$, the weighted norm of the iterate
\end{itemize}

To support sampling, our data structure also maintains a binary tree $\dist_x$ of depth $O(\log n)$. For the node corresponding to $S \subseteq [n]$ (where $S$ may be a singleton), we maintain
\begin{itemize}
	\item $\sum_{j \in S} [w]_j [u]_j^2$, $\sum_{j \in S} [w]_j [u]_j [v]_j$, $\sum_{j \in S} [w]_j [v]_j^2$ 
\end{itemize}

We now give the implementation of the necessary operations for $\WIM_2$, giving additional proofs of correctness when applicable.

\paragraph{Initialization.} 

\begin{itemize}\item$\Initialize(x_0, v, w)$. Runs in time $O(n)$.
	\begin{enumerate}
		\item $(\xi_u, \xi_v, u) \leftarrow (1, 0, x_0)$.
		\item $(\sigma_u, \sigma_v, u') \leftarrow (0, 0, \mathbf{0}_n)$.
		\item $(\iota, \nu) \leftarrow (\inner{x_0}{v}_w, \norm{x_0}_w)$.
		\item Compute and store $\norm{v}_w^2$.
		\item Initialize $\dist_x$, storing the relevant sums in each internal node.
	\end{enumerate}
\end{itemize}

\paragraph{Updates.}

$\Scale(c)$ and $\SumUp()$ follow identically to the analysis of $\IM_2$.

\begin{itemize}
	\item $\AddSparse(j, c)$: $[x]_j \leftarrow [x]_j + c$. Runs in time $O(\log n)$.
	\begin{enumerate}
		\item $u \leftarrow u + \frac{c}{\xi_u} e_j$.
		\item $u' \leftarrow u' - \frac{c\sigma_u}{\xi_u} e_j$.
		\item $\nu \leftarrow \sqrt{\nu^2 + 2c[w]_j[\xi_u u + \xi_v v]_j + c^2[w]_j}$. 
		\item $\iota \leftarrow \iota + c[w]_j [v]_j$.
		\item For internal nodes of $\dist_x$ on the path from leaf $j$ to the root, update $\sum_{j \in S} [w]_j [u]_j^2$, $\sum_{j \in S} [w]_j [u]_j [v]_j$ appropriately.
	\end{enumerate}
	\item $\AddDense(c)$: $x \leftarrow x + cv$. Runs in time $O(1)$.
	\begin{enumerate}
		\item $\xi_v \leftarrow \xi_v + c$.
		\item $\nu \leftarrow \sqrt{\nu^2 + 2c\iota + c^2\norm{v}_w^2}$.
		\item $\iota \leftarrow \iota + c\norm{v}_w^2$.
	\end{enumerate}
\end{itemize}

We demonstrate that the necessary invariants on $\iota, \nu$ are preserved. Regarding correctness of $\AddSparse$, the updates to $u$ and $u'$ are identical to in the analysis of $\IM_2$. Next, because only $[x]_j$ changes, the updates to $\nu, \iota$ are correct respectively by
\begin{align*}
[w]_j \cdot [\xi_u u + \xi_v v + c]_j^2 &= [w]_j \cdot\left([\xi_u u + \xi_v v]_j^2 + 2c[\xi_u u + \xi_v v]_j + c^2\right),\\
[w]_j \cdot \left([\xi_u u + \xi_v v + c]_j\right) \cdot [v]_j &= [w]_j\cdot\left([\xi_u u + \xi_v v]_j \cdot [v]_j + c[v]_j\right).
\end{align*}
Regarding correctness of $\AddDense$,
\begin{align*}
\norm{x + cv}_w^2 &= \nu^2 + 2c\iota + c^2 \norm{v}_w^2, \\
\inner{x + cv}{v}_w &= \iota + c\norm{v}_w^2.
\end{align*}
Here, we used that the invariants $\nu = \norm{x}_w$ and $\iota = \inner{x}{v}_w$ held.
\paragraph{Queries.}
$\Get(j)$ and $\GetSum(j)$ follow identically to the analysis of $\IM_2$.
\begin{itemize}
	\item $\Norm()$: Return $\norm{x}_w$. Runs in time $O(1)$.
	\begin{enumerate}
		\item Return $\nu$.
	\end{enumerate}
\end{itemize}

\paragraph{Sampling.}

To support $\Sample$, we must produce a coordinate $j$ with probability proportional to $[w]_j [x]_j^2$. To do so, we recursively perform the following procedure, where the recursion depth is at most $O(\log n)$, starting at the root node and setting $S = [n]$: the proof of correctness is identical to the proof in the analysis of $\IM_2.\Sample()$.
\begin{enumerate}
	\item Let $S_1, S_2$ be the subsets of coordinates corresponding to the children of the current node.
	\item Using scalars $\xi_u, \xi_v$, and the maintained $\sum_{j \in S_i} [w]_j [u]_j^2$, $\sum_{j \in S_i} [w]_j [u]_j [v]_j$, $\sum_{j \in S_i} [w]_j [v]_j^2$, compute $\sum_{j \in S_i} [w]_j [x]_j^2 = \sum_{j \in S_i} [w]_j [\xi_u u + \xi_v v]_j^2$ for $i \in \{1, 2\}$.
	\item Sample a child $i \in \{1, 2\}$ of the current node proportional to $\sum_{j \in S_i} [w]_j [x]_j^2$ by flipping an appropriately biased coin. Set $S \leftarrow S_i$.
\end{enumerate}

\subsection{$\CIM_2$}
\label{app:CIM}

In this section, we give implementation details for a generalization of $\WIM_2$, which we call $\CIM_2$. It is used in Section~\ref{ssec:vr-l2l1}, when using the sampling distributions~\eqref{eq:vr-l2l1-prob-def-2} and~\eqref{eq:vr-l2l1-prob-def-3}. At initialization, $\CIM_2$ is passed an additional parameter $x_0 \in \R^n$, a reference point. $\CIM_2$ supports all the same operations as $\WIM_2$, with two differences:
\begin{itemize}
	\item For the current iterate $x$, $\CIM_2.\Sample()$ returns a coordinate $j$ with probability proportional to $[w]_j[x - x_0]_j^2$.
	\item $\CIM_2$ supports querying $\norm{x - x_0}_w^2$ in constant time.
\end{itemize}
Because all the other operations, fields, and invariants supported and maintained by the data structure are exactly the same as $\IM_2$, we only discuss the changes made to the binary tree $\dist_{x}$ in this section for brevity. In particular, to support sampling, our data structure also maintains a binary tree $\dist_x$ of depth $O(\log n)$. For the node corresponding to $S \subseteq [n]$ (where $S$ may be a singleton), we maintain
\begin{itemize}
	\item $\sum_{j \in S} [w]_j[u]_j^2$, $\sum_{j \in S} [w]_j[u]_j [v]_j$, $\sum_{j \in S} [w]_j[v]_j^2$ 
	\item $\sum_{j \in S} [w]_j[x_0]_j^2$, $\sum_{j \in S} [w]_j[u]_j [x_0]_j$, $\sum_{j \in S} [w]_j[v]_j [x_0]_j$
\end{itemize}
At initialization, $\CIM_2$ creates this data structure and stores the relevant sums in each internal node. Upon modifications to $u$ due to updates of the form $\AddSparse(j, c)$, $\CIM_2$ propagates the changes along internal nodes of $\dist_x$ on the path from leaf $j$ to the root. Thus, using these maintained values and the stored values $\xi_u, \xi_v$, it is clear that for any appropriate subset $S$, we are able to compute the quantity
\[\sum_{j \in S} [w]_j[\xi_u + \xi_v v - x_0]_j^2 = \sum_{j \in S} [w]_j\left(\xi_u^2 [u]_j^2 + \xi_v^2 [v]_j^2 + 2\xi_u \xi_v [u]_j[v]_j + [x_0]_j^2 + 2\xi_u[u]_j[x_0]_j + 2\xi_v[v]_j[x_0]_j\right)\]
in constant time, admitting the sampling oracle in time $O(\log n)$ by propagating down the tree maintained by $\dist_x$. This proves the desired sampling complexity. Finally, by appropriately querying the stored values in the root node, we can return $\norm{x - x_0}_w^2$ in constant time.

\end{document}